\newcommand{\pred}[1]{\ensuremath{\mathsf{#1}}}
\definecolor{almostblack}{rgb}{0.6, 0.0, 0.0}
\definecolor{almostblackk}{rgb}{0.3, 0, 0.0}
\mathchardef\mhyphen="2D
\newtheorem{theorem}{Theorem}[section]
\newtheorem{claim}{Claim}[section]
\newtheorem{subclaim}{SubClaim}[section]
\newtheorem{definition}{Definition}[section]
\newtheorem{corollary}{Corollary}[section]
\newtheorem{lemma}{Lemma}[section]
\newtheorem{assumption}{Assumption}
\newtheorem{conjecture}{Conjecture}
\newlength{\protowidth}
\definecolor{blue(ncs)}{rgb}{0.0, 0.53, 0.74}
\newcommand{\io}{\mathsf{io}}
\newcommand{\sdotfill}{\textcolor[rgb]{0.2,0.2,0.2}{\dotfill}} %change to \cdotfill later
\newcommand{\namedref}[2]{\hyperref[#2]{#1~\ref*{#2}}\xspace}
\def\poly{\ensuremath{\mathsf{poly}}\xspace}
\def\negl{\ensuremath{\mathsf{negl}}\xspace}
\newcommand{\secpar}{\ensuremath{{n}}\xspace}
\newcommand{\sR}{\ensuremath{\mathsf{R}}}
  \newcommand{\cA}{\ensuremath{{\mathcal A}}\xspace}
  \newcommand{\cB}{\ensuremath{{\mathcal B}}\xspace}
  \newcommand{\cC}{\ensuremath{{\mathcal C}}\xspace}
  \newcommand{\cD}{\ensuremath{{\mathcal D}}\xspace}
  \newcommand{\cE}{\ensuremath{{\mathcal E}}\xspace}
  \newcommand{\cG}{\ensuremath{{\mathcal G}}\xspace}
  \newcommand{\cH}{\ensuremath{{\mathcal H}}\xspace}
  \newcommand{\cL}{\ensuremath{{\mathcal L}}\xspace}
  \newcommand{\cM}{\ensuremath{{\mathcal M}}\xspace}
  \newcommand{\cN}{\ensuremath{{\mathcal N}}\xspace}
  \newcommand{\cO}{\ensuremath{{\mathcal O}}\xspace}
  \newcommand{\cQ}{\ensuremath{{\mathcal Q}}\xspace}
  \newcommand{\cR}{\ensuremath{{\mathcal R}}\xspace}
  \newcommand{\cS}{\ensuremath{{\mathcal S}}\xspace}
  \newcommand{\cU}{\ensuremath{{\mathcal U}}\xspace}
  \newcommand{\cX}{\ensuremath{{\mathcal X}}\xspace}
  \newcommand{\cY}{\ensuremath{{\mathcal Y}}\xspace}
  \newcommand{\bbA}{\ensuremath{{\mathbb A}}\xspace}
  \newcommand{\bbB}{\ensuremath{{\mathbb B}}\xspace}
  \newcommand{\bbC}{\ensuremath{{\mathbb C}}\xspace}
  \newcommand{\bbD}{\ensuremath{{\mathbb D}}\xspace}
  \newcommand{\bbE}{\ensuremath{{\mathbb E}}\xspace}
  \newcommand{\bbF}{\ensuremath{{\mathbb F}}\xspace}
  \newcommand{\bbG}{\ensuremath{{\mathbb G}}\xspace}
  \newcommand{\bbI}{\ensuremath{{\mathbb I}}\xspace}
  \newcommand{\bbL}{\ensuremath{{\mathbb L}}\xspace}
  \newcommand{\bbN}{\ensuremath{{\mathbb N}}\xspace}
  \newcommand{\bbO}{\ensuremath{{\mathbb O}}\xspace}
  \newcommand{\bbR}{\ensuremath{{\mathbb R}}\xspace}
  \newcommand{\bbS}{\ensuremath{{\mathbb S}}\xspace}
    \newcommand{\bbU}{\ensuremath{{\mathbb U}}\xspace}
    \newcommand{\bbV}{\ensuremath{{\mathbb V}}\xspace}
  \renewcommand{\Pr}[0]{\mathrm{Pr}\xspace}
\newcommand{\kabir}[1]{{{\textcolor{red}{}}}}
\newcommand{\dakshita}[1]{{{\textcolor{blue}{}}}}
\newcommand{\bin}{\{0,1\}}
\newcommand{\sample}{\leftarrow}
\definecolor{lg}{gray}{0.89}
\DeclareMathOperator*{\Prr}{Pr}
\newcommand{\s}{\mathsf{s}}
\newcommand{\ct}{\mathsf{c}}
\newcommand{\Gen}{\ensuremath{\mathsf{Samp}}\xspace}
\newcommand{\Ver}{\ensuremath{\pred{Ver}}\xspace}
\newcommand{\TD}{\ensuremath{\pred{TD}}\xspace}
\newcommand{\SD}{\ensuremath{\pred{SD}}\xspace}
\newcommand{\x}{\ensuremath{k}\xspace}
\newcommand{\bbZ}{\ensuremath{\mathbb{Z}}\xspace}
\newcommand{\sA}{\ensuremath{\mathsf{A}}\xspace}
\newcommand{\sK}{\ensuremath{\mathsf{K}}\xspace}
\newcommand{\sS}{\ensuremath{\mathsf{S}}\xspace}
\newcommand{\sV}{\ensuremath{\mathsf{V}}\xspace}
\newcommand{\sX}{\ensuremath{\mathsf{X}}\xspace}
\newcommand{\sY}{\ensuremath{\mathsf{Y}}\xspace}
\newcommand{\sZ}{\ensuremath{\mathsf{Z}}\xspace}
\newcommand{\Supp}{\ensuremath{\mathsf{Supp}}\xspace}
\newcommand{\tr}{\mathsf{Tr}}
\newcommand{\ket}[1]{|#1\rangle}
\newcommand{\bra}[1]{\langle #1 |}
\newcommand{\ketbra}[1]{\ket{#1}\bra{#1}}
\newcommand{\tM}{\widetilde{M}}
\newcommand{\ta}{\widetilde{a}}
\newcommand{\ttt}{\widetilde{t}}
\newcommand{\tp}{\widetilde{p}}
\newcommand{\tA}{\widetilde{A}}
\newcommand{\tB}{\widetilde{B}}
\newcommand{\tD}{\widetilde{D}}
\newcommand{\thres}{t}
\newcommand{\adv}[1]{\ket{\tau_\mathsf{#1}}}
\newcommand{\junk}[1]{\ket{\mathsf{junk}_{#1}}}
\newcommand{\arctanb}{\mathsf{arctan2}}
\newcommand{\sol}{\mathsf{sol}}
\title{
{Founding Quantum Cryptography on Quantum Advantage\\ \vspace{2mm}
\Large{\em or, 
Towards 
Cryptography from 
$\#\mathsf{P}$-Hardness}}
} 
\author{Dakshita Khurana\thanks{UIUC and NTT Research. Email:~\texttt{dakshita@illinois.edu}}
\and
Kabir Tomer\thanks{UIUC. Email:~\texttt{ktomer2@illinois.edu}}
}
\date{}
\begin{document}

\maketitle

\thispagestyle{empty}
\begin{abstract}
Recent oracle separations [Kretschmer, TQC'21, Kretschmer et. al., STOC'23]
have raised the tantalizing possibility of 
building quantum cryptography from sources of hardness that persist even if the polynomial hierarchy collapses.
We realize this possibility by building quantum bit commitments and secure computation from {\em unrelativized}, well-studied mathematical problems that are conjectured to be hard for $\mathsf{P}^{\#\mathsf{P}}$ -- such as approximating the permanents of complex Gaussian matrices, or approximating the output probabilities of random quantum circuits. 
Indeed, we show that as long as {\em any one of the conjectures} underlying sampling-based quantum advantage (e.g., BosonSampling [Aaronson-Arkhipov, STOC'11], Random Circuit Sampling [Boixo et. al., Nature Physics 2018], IQP [Bremner, Jozsa and Shepherd, Proc. Royal Society of London 2010]) is true, quantum cryptography can be based on the extremely mild assumption that $\mathsf{P}^{\#\mathsf{P}} \not\subseteq (\io)\mathsf{BQP}/\mathsf{qpoly}$.

Our techniques uncover strong connections between the hardness of approximating the probabilities of outcomes of quantum processes, the existence of ``one-way'' state synthesis problems, and the existence of useful cryptographic primitives such as one-way puzzles and quantum bit commitments.
Specifically, we prove that the following hardness assumptions are equivalent under $\mathsf{BQP}$ reductions.
\begin{itemize}
    \item The {\bf hardness of approximating the probabilities} of outcomes of certain efficiently sampleable distributions. 
    That is, there exist 
    quantumly efficiently sampleable distributions for which it is hard to approximate the probability assigned to a randomly chosen string in the support of the distribution (upto inverse polynomial relative error).
    \item The existence of {\bf one-way puzzles}, where a quantum sampler outputs a pair of classical strings -- a puzzle and its key -- and where the hardness lies in finding the key corresponding to a random puzzle. These are known to imply quantum bit commitments [Khurana and Tomer, STOC'24].
    \item The existence of {\bf state puzzles}, or one-way state synthesis, 
    where it is hard to synthesize a secret quantum state given a public classical identifier. These capture the hardness of search problems with quantum secrets and classical challenges.   
\end{itemize}
These are the first constructions of quantum cryptographic primitives (one-way puzzles, quantum bit commitments, state puzzles) from well-studied mathematical assumptions that do not imply the existence of classical cryptography.

Along the way, we also show that distributions that admit efficient quantum samplers but cannot be pseudo-deterministically efficiently sampled imply quantum commitments. 
\end{abstract}

\newpage

\thispagestyle{empty}
\tableofcontents
\newpage

\pagenumbering{arabic}
\section{Introduction}
Nearly all of modern classical cryptography relies on unproven computational hardness. 
Decades of studying cryptosystems based on various concrete mathematical problems led to the emergence of complexity-based cryptography. Modern cryptography allows us to categorize the hardness offered by mathematical problems into {\em generic} cryptographic primitives, studying how abstract primitives are reducible to one another, and setting aside which concrete implementation of the generic primitive is used. For example, a one-way function is a classically efficiently computable function that is hard to invert. 
Concrete candidates for one-way functions are known based on a variety of algebraic problems such as the hardness of discrete logarithms, quadratic residuosity or learning with errors.
The existence of one-way functions is a fundamental hardness assumption, necessary for the existence of modern classical cryptography~\cite{STOC:LubRac86,FOCS:ImpLub89,STOC:ImpLevLub89}. \\

\noindent{\bf Hardness in Quantum Cryptography.}
At the same time, the breakthrough ideas of Weisner~\cite{Wiesner1983ConjugateC}, Bennett and Brassard~\cite{BenBra84} demonstrated the possibility of quantum cryptography -- specifically key agreement -- without the need for unproven assumptions, and based solely on the nature of quantum information. Unfortunately, it was soon discovered~\cite{LoChau97,Mayers97} that other fundamental quantum cryptographic primitives, including bit commitments and secure computation, {\em require} some form of computational hardness. It is now known~\cite{GLSV,C:BCKM21b} that (post-quantum) {\em one-way} functions suffice to build bit commitments and quantum secure computation. However, it is plausible that one-way functions are not {\em necessary} for quantum cryptography, and that sources of hardness even milder than the existence of one-way functions could suffice. Can we precisely quantify the amount of hardness that is necessary for quantum cryptography?

Recent works have attempted to address this problem by introducing quantum relaxations of generic classical primitives, and building quantum cryptography from these relaxations. 
Some examples are pseudorandom quantum states~\cite{C:JiLiuSon18} and one-way state generators/one-way puzzles~\cite{C:MorYam22,KT24}, that have been introduced as quantum analogues of 
pseudorandom generators and one-way functions respectively. 
All of these primitives are cryptographically ``useful'' in that they imply quantum bit commitments and secure computation~\cite{AQY21,C:MorYam22,KT24, BJ24}.
Furthermore, {\em relative to appropriately chosen oracles}, these primitives are weaker than their classical counterparts --  more precisely, there exist oracles relative to which secure instantiations of one-way states/pseudorandom states exist, even when $\mathsf{BQP = QMA}$ (respectively, $\mathsf{P=NP}$)~\cite{Kretschmer21,KQST}.
So while classical cryptography would completely break down if $\mathsf{P} = \mathsf{NP}$, there is the exciting  possibility that quantum cryptography would continue to exist!\\

\noindent{\bf Beyond Oracle Worlds.}
At this point it is natural to ask if there are {\em any} concrete candidates for quantum cryptosystems in the {\em real} (unrelativized) world, based on mathematical assumptions that do not imply classical cryptography. Unfortunately, the answer so far has been a resounding no.

Let us explain what we're looking for in more detail.
The gold standard in modern cryptography is to base security on mathematical conjectures whose statement is scientifically interesting independently of the cryptographic application itself\footnote{For instance, random circuits have been conjectured to output pseudorandom states~\cite{AQY21}--- but this essentially assumes that a given construction is secure, and to our knowledge, has not been cryptanalyzed or theoretically studied. It is also known to not be true for certain architectures (e.g., BosonSampling outputs are not pseudorandom~\cite{AA14}), and it is unclear why this conjecture would be separated from the existence of one-way functions.}. We want to avoid assuming that a suggested scheme itself is secure; since such assumptions are construction-dependent and the proclaimed guarantee of provable security essentially loses its meaning (we refer the reader to a survey by Goldwasser and Kalai~\cite{GoldKal} for a more thought-provoking analysis).
Underlying modern classical cryptography is a bedrock of concrete mathematical problems that have been introduced and cryptanalyzed extensively and often independently of their cryptographic application.

Going back to the state of affairs in quantum cryptography: so far, all provably secure constructions of quantum cryptosystems, from well-studied assumptions, rely on the existence of one-way functions.
%\footnote{It has been conjectured that random circuits output pseudorandom states~\cite{AQY21} --  but this essentially assumes that a given construction is secure, and to our knowledge, has not been cryptanalyzed or theoretically studied.}.
At the same time, building on all the excitement about the possibility of cryptography without one-way functions, there is a large body of work conceptualizing  quantum generalizations of classical cryptographic primitives, and reducing them to one another. Besides quantum commitments, secure computation, pseudorandom and one-way states discussed above, other examples include quantum public-key and private-key encryption, signature schemes, etc.\footnote{We refer the reader to the graph at \url{https://sattath.github.io/qcrypto-graph/} for several additional examples.}
In fact, there is even a name for a world in which one-way functions do not exist and yet secure quantum cryptography is possible: {\em Microcrypt}.

However, in the absence of real instantiations, a sensible objection to this body of work is that we may just be building fictional castles in the air --  with no hope of securely realizing these cryptosystems without assuming the existence of one-way functions. Our work fundamentally refutes this objection.

We provide the first {\em unrelativized} constructions of quantum bit commitments and secure computation from well-studied mathematical assumptions that do not imply classical cryptography.
This addresses a major unresolved problem in the area, namely offering unrelativized evidence that the hardness required for quantum cryptography is weaker than that required for classical cryptography. \\

\noindent{\bf Hardness beyond the Polynomial Hierarchy.}
Recall that relative to certain oracles, quantum bit commitments exist even if $\mathsf{P} = \mathsf{NP}$, i.e., even if all languages in the polynomial hierarchy ($\mathsf{PH}$) can be efficiently decided. 
This indicates that we may be able to build quantum cryptography from problems that lie outside the polynomial hierarchy, and plausibly continue to be hard even if $\mathsf{PH}$ collapses. One natural complexity class that is believed to not be contained in $\mathsf{PH}$ is ${\#\mathsf{P}}$, known to contain problems such as finding the permanent of a given real or complex valued matrix. Besides being $\#\mathsf{P}$-hard to compute, permanents also have worst-case to average-case reductions, opening up the splendid possibility of basing quantum cryptography directly on worst-case hardness.

But actually building cryptosystems from the hardness of computing permanents turns out to be tricky due to the following conceptual barrier.
In most natural constructions of (quantum) cryptosystems, honest parties need to be able to sample hard problems together with their solutions -- whereas it appears unlikely that random matrices can be quantumly efficiently sampled together with their permanents. 
By exploiting connections with sampling-based quantum advantage and using some indirection, we overcome this barrier. This allows us to obtain quantum cryptography from the average-case hardness of {\em approximating} permanents of complex-valued matrices, or the hardness of  approximating probabilities of outcomes of random circuits.

\subsection{Our Results}
Our first conceptual contribution is a connection between {\em sampling-based quantum advantage} and a quantum cryptographic primitive called a {\em one-way puzzle}. One-way puzzles are notable for implying quantum commitments~\cite{KT24}, which in turn imply quantum secure computation~\cite{C:BCKM21b,GLSV,AQY21}.

A one-way puzzle~\cite{KT24} is a simple, cryptographically useful primitive with classical outputs -- analogous to a randomized variant of a one-way function. It consists of a quantum polynomial time algorithm $\mathsf{Samp}$ and a Boolean function $\mathsf{Ver}$. $\mathsf{Samp}$ outputs a pair of classical strings -- a puzzle and key $(s,k)$ -- satisfying $\Ver(s,k) = 1$. The security guarantee is that given a ``puzzle'' $s$, it is (quantum) computationally infeasible to find a key $k$ such that $\Ver(s,k) = 1$, except with negligible probability.

Through a connection with quantum advantage, our first result yields multiple unrelativized instantiations of one-way puzzles and quantum commitments (and therefore also secure computation) without one-way functions, based on well-studied mathematical problems that are conjectured to be $\mathsf{P}^{\#\mathsf{P}}$-hard. 
We discuss these in the following subsection, where we begin by briefly describing what we mean by quantum advantage.

\subsubsection{One-Way Puzzles and Sampling-Based Quantum Advantage}
More than two decades ago, it was observed~\cite{TV} that quantum computers can sample from distributions that likely cannot be reproduced on any classical device.
Subsequent works have solidified complexity-theoretic evidence that the output distributions of a variety of quantum computations -- including several types of non-universal computations -- may be computationally intractable to simulate on a classical device (see, e.g., ~\cite{shepherd2009temporally,BJS,AA11,bremneraverage,fujiicommuting,boixoetal,BFNV19,kondofocs,boulandnoise,krovi22,movassagh,Zlokapa2023} and a recent survey~\cite{qsurvey}).
We show that any of these quantum computations also yield quantum cryptography without one-way functions, under the same complexity conjectures that have been studied in context of quantum advantage and the 
%, under the 
mild additional assumption that $\mathsf{P}^{\#\mathsf{P}}\not\subseteq\mathsf{(io)BQP}/\mathsf{qpoly}$.

At first, building quantum cryptography only from quantum advantage may appear surprising or even unlikely -- quantum advantage is all about tasks that are hard for {\em classical} computers, whereas quantum cryptography asks for hardness against {\em quantum} computers. Perhaps one may be able to obtain (only) {\em classically secure} one-way puzzles from advantage, but how would one possibly obtain hardness against quantum machines from sampling tasks that are hard only for classical machines?

To understand why, we will peel back the layers a little bit. A key idea in sampling-based quantum advantage (originating in~\cite{AA11}) is to relate the hardness of classical sampling to the hardness of computing the probabilities of outcomes of quantumly efficiently sampled distributions. 
Specifically, major existing proposals for sampling-based quantum advantage assume that the following average-case problem is $\#\mathsf{P}$-hard: 
\begin{center}
{\em Given the description of a quantum sampler, approximate the probability assigned by\\ the sampler's output distribution to a uniformly chosen string in its support.}
\end{center}
Here, the approximation is required to have inverse polynomially small relative error. In addition, outputs of the sampler are assumed to satisfy a natural {\em anti-concentration} property -- requiring that not all of the hardness of approximation should lie on points that have extremely tiny (e.g., doubly exponentially small) probability mass.
We capture this with the following assumption that admits instantiations from many concrete conjectures corresponding to different frameworks for quantum advantage such as BosonSampling, universal random circuits, IQP circuit sampling, etc.
 
\begin{assumption}[$\#\mathsf{P}$-Hardness of Approximating Probabilities]
\label{con1}
There is a family of (uniform) efficiently sampleable distributions $\cC = \{\cC_n\}_{n\in\bbN}$ over quantum circuits $C$ where each $C$ has $n$-qubit outputs (and potentially additional junk qubits), such that there exist polynomials $p(\cdot)$ and $\gamma(\cdot)$ satisfying:
    \begin{enumerate}
        \item \textbf{Anticoncentration. }  For all large enough $n\in\bbN$$$\Prr_{\substack{C \leftarrow \cC_n\\x\leftarrow\bin^n}}\left[\Pr_C[x] \geq \frac{1}{p(n)\cdot2^n}\right] \geq \frac{1}{\gamma(n)}$$
        \item \textbf{ Hardness of Approximating Probabilities. } For any oracle $\cO$ satisfying that for all large enough $n \in \bbN$, 
        \[
            \Prr_{\substack{C \leftarrow \cC_n\\x\leftarrow\bin^n}}\left[ \left|\cO(C, x) - \Pr_{C}[x]\right| \leq \frac{\Pr_{C}[x]}{p(n)} \right] %\leq 1 - \frac{1}{p(n)} 
            \geq \frac{1}{\gamma(n)}-\frac{1}{p(n)}
        \]
        we have that $\mathsf{P}^{\#\mathsf{P}} \subseteq \mathsf{BPP}^{\cO}$.\footnote{Our theorem statements remain unchanged if the hardness reduction is a $\mathsf{BQP}^{\cO}$ (instead of a $\mathsf{BPP}^{\cO}$) machine. In fact, we could even allow the hardness reduction to be a $\mathsf{BPP}^{\mathsf{NP}^\cO}$ machine -- all this would change for our results is that the worst-case assumption that we make, which currently says $\mathsf{P}^{\#\mathsf{P}} \not\subseteq \mathsf{BQP}$ would change to $\mathsf{P}^{\#\mathsf{P}} \not\subseteq \mathsf{BPP}^{\mathsf{NP}^\mathsf{BQP}}$.} 
Here, $\Pr_{C}[x]$ denotes the probability of obtaining outcome $x$ when the output register of $C\ket{0}$ is measured in the standard basis.
    \end{enumerate}
\end{assumption}
As described above, different models of quantum advantage are based on the conjectured $\#\mathsf{P}$-hardness of computing output probabilities for different circuit families; all of these imply Assumption \ref{con1}. Some concrete, well-studied candidates include:
\begin{itemize}
\item {\bf Random Circuit Sampling.} One of the leading candidates for quantum advantage today assumes the hardness of    approximating output probabilities of random (universal) quantum circuits (see e.g.~\cite{boixoetal,BFNV19}). Here, the distibution $\cC$ corresponds to a circuit with gates drawn from some universal set and according to some specified architecture, and quantum advantage follows as long as Assumption \ref{con1} holds for $\cC$. 
Understanding the underlying architectures and attempting to prove/disprove the corresponding conjecture is now the focus of a large, and quickly growing, body of work (see e.g.~\cite{boixoetal,BFNV19,kondofocs,boulandnoise,krovi22,movassagh,Zlokapa2023}).

\item {\bf BosonSampling.} Aaronson and Arkhipov~\cite{AA11} related the task of finding probabilities of outcomes of a BosonSampling experiment to computing the permanents of appropriate random matrices. 
    They formulated the following two conjectures that together, form the complexity-theoretic basis for advantage based on BosonSampling.
    The {\em Permanent of Gaussians Conjecture (PGC)} posits that it is $\#\mathsf{P}$-hard to approximate the permanent of a matrix of independent random $\mathcal{N} (0, 1)$ Gaussian entries, and the {\em Permanent Anti-Concentration Conjecture (PACC)} says that with high probability over a matrix $A$ sampled randomly as above, $|\mathsf{Per}(A)| \geq \sqrt{n!}/\mathsf{poly}(n)$. The PGC and PACC  imply Assumption \ref{con1}.

\item {\bf IQP, and beyond.} Other non-universal models of quantum computation, such as Instantaneous Quantum Polynomial (IQP) and Deterministic Quantum Computation with one quantum bit (DQC1) are also candidates for advantage due to their potential ease of implementation on near-term quantum devices~\cite{shepherd2009temporally,knill,mori}. Again, quantum advantage assumes the hardness of approximating probabilities when $\cC$ corresponds to these types of circuits, along with anti-concentration of circuit outputs, which implies Assumption \ref{con1}. 
\end{itemize}

One way to state our main theorem is:
\begin{theorem} [Informal]
\label{thm1}
    Suppose Assumption \ref{con1} is true. Then, one-way puzzles (which imply quantum commitments) exist if and only if $\mathsf{P}^{\#\mathsf{P}} \not\subseteq \mathsf{(io)BQP}/\mathsf{qpoly}$. 
    
\end{theorem}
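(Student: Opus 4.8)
The plan is to prove the two directions of the ``iff'' separately, in both cases routing through the equivalence between one-way puzzles and the hardness of approximating probabilities of efficiently quantumly sampleable distributions that the paper establishes; only the ``if'' direction uses Assumption~\ref{con1}.

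\textbf{``Only if'' (one-way puzzles $\Rightarrow \mathsf{P}^{\#\mathsf{P}}\not\subseteq\mathsf{(io)BQP}/\mathsf{qpoly}$, no use of Assumption~\ref{con1}).} I would prove the contrapositive: assume $\mathsf{P}^{\#\mathsf{P}}\subseteq\mathsf{(io)BQP}/\mathsf{qpoly}$ and break an arbitrary candidate $(\Samp,\Ver)$. Deferring all measurements, $\Samp(1^n)$ is a unitary over a universal algebraic gate set followed by a terminal measurement, so $\Pr[\Samp(1^n)=(s,k)]$ --- and hence the marginal $\Pr[\mathrm{puzzle}=s]$ and the next-bit conditionals $\Pr[k_i=1\mid s,k_{<i}]$ --- are all $\mathsf{FP}^{\#\mathsf{P}}$-computable (the standard fact that amplitude magnitudes of polynomial-size quantum circuits are $\mathsf{FP}^{\#\mathsf{P}}$-computable, extended to marginals and ratios). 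Thus there is a language $L\in\mathsf{P}^{\#\mathsf{P}}$ that, on inputs $(1^n,s,i,\omega)$, outputs the $i$-th bit of a key sampled --- using coins $\omega$ --- from the true conditional distribution of $\Samp$'s key given $s$; by correctness, such a sampled key verifies with overwhelming probability over a real puzzle $s$. Padding $L$ so its instances occupy one input length per security parameter, the hypothesis gives a $\QPT$ machine with quantum advice deciding $L$ on infinitely many input lengths; picking $\omega$ at random and reading off the bits yields a non-uniform $\QPT$ adversary with quantum advice finding a valid key for infinitely many $n$, contradicting security.

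\textbf{``If'' (Assumption~\ref{con1} $+\ \mathsf{P}^{\#\mathsf{P}}\not\subseteq\mathsf{(io)BQP}/\mathsf{qpoly} \Rightarrow$ one-way puzzles).} By the equivalence it suffices to exhibit an efficiently quantumly sampleable family $\{\cD_n\}$ whose probabilities no $\QPT$ algorithm can approximate to within inverse-polynomial multiplicative error on a random sample, except with negligible probability. I would take the circuit family $\cC$ of Assumption~\ref{con1} and let $\cD_n$ sample $C\leftarrow\cC_n$, run $C\ket{0}$, measure the $n$-qubit output register to get $x$, and output $(C,x)$; since $\Pr_{\cD_n}[(C,x)]=q(C)\cdot\Pr_C[x]$ with $q(C)$ (the probability $\cC_n$ outputs $C$) efficiently computable, this is the same problem as approximating $\Pr_C[x]$ multiplicatively. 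Suppose a $\QPT$ algorithm $\cA$ did so with inverse-polynomial probability over $C\leftarrow\cC_n,\ x\leftarrow C$, for infinitely many $n$. Using the anti-concentration clause of Assumption~\ref{con1} --- that a $1/\gamma(n)$ fraction of $(C,x)$ with $x$ \emph{uniform} are heavy --- I would convert $\cA$ into an oracle $\cO$ meeting the success requirement of clause~$2$; clause~$2$ then yields $\mathsf{P}^{\#\mathsf{P}}\subseteq\mathsf{BPP}^{\cO}$, and since $\cO$ is built from the $\QPT$ algorithm $\cA$ with polynomial overhead (possibly with quantum advice, possibly only infinitely often), this gives $\mathsf{P}^{\#\mathsf{P}}\subseteq\mathsf{(io)BQP}/\mathsf{qpoly}$, a contradiction. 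Hence $\{\cD_n\}$ is hard, and the equivalence produces the desired one-way puzzle (and thus quantum commitments).

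\textbf{Main obstacle.} The crux is the ``if'' direction's passage from a puzzle-breaker --- which only ever sees $x$ drawn from the circuit's own output distribution $\Pr_C$, i.e.\ effectively only the heavy part of the support --- to an oracle approximating $\Pr_C[x]$ for $x$ drawn \emph{uniformly} from $\{0,1\}^n$, as clause~$2$ of Assumption~\ref{con1} demands. When $\Pr_C$ is allowed to concentrate these distributions are not interchangeable, so one cannot merely ``run $\cA$''; the reduction must genuinely exploit anti-concentration (together, where needed, with the precise formulation of the intermediate hardness assumption and an amplification step) --- the property inserted for exactly this purpose in the quantum-advantage literature. Tracking the ``infinitely often'' quantifier through clause~$2$'s worst-case-to-average-case reduction, and keeping the one-way-puzzle security notion (non-uniform, with quantum advice) consistent with the $/\mathsf{qpoly}$ in the statement, are the remaining points that need care.
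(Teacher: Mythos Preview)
Your ``only if'' direction is essentially correct and matches what the paper (implicitly) does; this direction is standard and does not use Assumption~\ref{con1}.

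Your ``if'' direction has two concrete gaps.

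\textbf{First, $q(C)$ is not efficiently computable in general.} Assumption~\ref{con1} only says $\cC_n$ is efficiently \emph{sampleable}; nothing guarantees that $\Pr_{\cC_n}[C]$ can be computed (and for typical instantiations, e.g.\ Haar-random gates on a fixed architecture, it cannot). So your reduction, which needs to strip off the $q(C)$ factor to recover $\Pr_C[x]$ from $\Pr_{\cD_n}[(C,x)]$, does not go through. The paper fixes this by building $\cD_n$ with a mode bit: with probability $1/2$ output $(0,C,0^\ell)$, with probability $1/2$ output $(1,C,x)$ where $x\leftarrow C$. An approximator for $\cD_n$ then yields estimates of both $\Pr_{\cC_\ell}[C]$ and $\Pr_{\cC_\ell}[C]\cdot\Pr_C[x]$, and their ratio gives $\Pr_C[x]$ --- no need to compute $q(C)$ directly.

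\textbf{Second, the infinitely-often issue is not merely a bookkeeping point.} You acknowledge it but do not resolve it. The reduction $R_L^{(\cdot)}$ promised by clause~2 of Assumption~\ref{con1}, on an input of length $m$, may query its oracle on many \emph{different} input lengths up to $t_R(m)$; an adversary that approximates probabilities for only one value of $n$ cannot instantiate such an oracle. The paper handles this by having $\cD_n$ \emph{additionally} sample a length parameter $\ell\leftarrow[t(n)]$ (where $t(n)\geq t_R(t_\cM(n))$ bounds all query lengths the full reduction will ever make) and emit an instance drawn from $\cC_\ell$. Then an adversary breaking $\cD_n$ for a single $n$ simultaneously provides a probability estimator for \emph{every} $\ell\leq t(n)$, which suffices to answer all of $R_L$'s queries and hence to place the target $\mathsf{P}^{\#\mathsf{P}}$ language in $\mathsf{BQP}/\mathsf{qpoly}$ for that $n$; since there are infinitely many such $n$, one gets $\mathsf{(io)BQP}/\mathsf{qpoly}$.

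Your identification of the native-versus-uniform challenge is correct, and your instinct to use anticoncentration is right: the paper observes that on the anticoncentrated set $\{(C,x):\Pr_C[x]\geq 1/(p(n)2^n)\}$, the density of the uniform measure is at most $p(n)$ times the native measure, so a native-success bound transfers to a uniform-success bound on that set with a $p(n)$ loss --- enough to meet the threshold $1/\gamma(n)-1/p(n)$ in clause~2.
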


We also provide a slightly different statement of our main theorem. Assumption \ref{con1}, together with the (extremely mild) assumption that $\mathsf{P}^{\#\mathsf{P}} \not\subseteq \mathsf{ioBQP}/\mathsf{qpoly}$\footnote{We note that $\mathsf{P}^{\#\mathsf{P}} = \mathsf{P}^{\mathsf{PP}}$ and $\mathsf{PP} \subseteq \mathsf{BQP}/\mathsf{qpoly}$ implies a collapse of the counting hierarchy to $\mathsf{QMA}$~\cite{Aar06,Yirka}.} implies the following assumption, which is a reformulation of Assumption \ref{con1} to (1) require the adversary to succeed only on \textit{infinitely many} $n$ instead of \textit{all} large enough $n$, (2) only ask that the probabilities be hard to approximate for non-uniform QPT machines (instead of requiring the approximation to be $\#\mathsf{P}$-hard), and (3) remove the anti-concentration requirement (i.e., $\Pr_{C_n}[x] \geq \frac{1}{p(n)2^n}$) while instead, sampling challenge instances $x$ according to the output distribution of $C_n$. 
We state the assumption below.

\setulcolor{red}

\begin{assumption}[Native Approximation Hardness]
\label{asmptn1}
    There exists a family of (uniform) efficiently sampleable distributions $\cD = \{\cD_n\}_{n\in\bbN}$ over classical strings 
    such that
    there exists a polynomial $p(\cdot)$ such that 
    for all QPT $\cA = \{\cA_\secpar\}_{\secpar \in \bbN}$, every (non-uniform, quantum) advice ensemble $\ket{\tau} = \{\ket{\tau_n}\}_{n \in \mathbb{N}}$, and large enough $n \in \bbN$, 
\[
    \Prr_{x\leftarrow \cD_n}\left[ \left( \left|\cA(\ket{\tau},x) - \Pr_{\cD_n}[x]\right| > \frac{\Pr_{\cD_n}[x]}{p(n)} \right) \right] %\leq 1 - \frac{1}{p(n)} 
    \geq \frac{1}{p(n)}
\]
\end{assumption}

Our main theorem can be restated as:
\begin{theorem}
\label{infthm1}
(Informal)
    Assumption \ref{asmptn1} is equivalent to the existence of one-way puzzles, and implies the existence of quantum bit commitments.
\end{theorem}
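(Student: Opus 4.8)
The plan is to prove Theorem~\ref{infthm1} in two halves: the equivalence between Assumption~\ref{asmptn1} and one-way puzzles, and then the implication to quantum bit commitments. The latter is immediate --- ``one-way puzzles imply quantum bit commitments'' is exactly the theorem of \cite{KT24} cited in the excerpt --- so all the work is in the equivalence. Conceptually, the two objects are almost the same: an efficiently sampleable distribution whose outcome probabilities are hard to approximate is a search problem (``estimate the probability of this sample'') with a quantumly sampleable planted instance, which is essentially what a one-way puzzle is. I would prove (A) one-way puzzles $\Rightarrow$ Assumption~\ref{asmptn1} and (B) Assumption~\ref{asmptn1} $\Rightarrow$ one-way puzzles, and I expect (B) to be the main obstacle.

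For direction (A), given a one-way puzzle $(\Samp,\Ver)$ whose sampler outputs a pair $(s,k)$ of total length $\ell=\ell(n)$, I would let $\cD_n$ be the distribution of $(j,(s\|k)_{\le j})$ for $(s,k)\leftarrow\Samp(1^n)$ and uniform $j\in\{0,1,\dots,\ell\}$; this is QPT-sampleable. If this $\cD$ failed to witness Assumption~\ref{asmptn1}, then for a sufficiently large polynomial $p$ there would be a non-uniform QPT algorithm that for infinitely many $n$ outputs a $(1\pm 1/p(n))$-multiplicative approximation of $\Pr_{\cD_n}[\cdot]$ on a $1-1/p(n)$ fraction of inputs. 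I would turn this into an inverter for $\Samp$ by a standard search-to-decision (conditional-sampling) step: given a puzzle $s$, reconstruct a key bit by bit, choosing $k_i$ according to the ratio of the approximated probabilities of the prefixes $s\|k_{<i}0$ and $s\|k_{<i}1$. The analysis has the usual two pieces: multiplicative errors compound over $\ell$ steps but are harmless once $p\gg\ell$, and a hybrid/drift argument shows that as long as the reconstructed prefix stays close to the true marginal of $\Samp$ on $(s,k_{\le i})$, the queries stay inside the approximator's good set, so only an $O(\ell/p)$ fraction of mass is lost. Since every output of $\Samp$ satisfies $\Ver(s,k)=1$, the recovered $k$ is a valid key with probability $\ge 1-O(\ell/p)$, for infinitely many $n$, contradicting one-wayness.

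For direction (B) --- the ``indirection'' alluded to in the introduction --- the most natural attempt, given $\cD$ as in Assumption~\ref{asmptn1}, is to build a \emph{weak} one-way puzzle via a Stockmeyer-style approximate-counting gadget and then appeal to hardness amplification for one-way puzzles. Concretely: a puzzle is a target $x\leftarrow\cD_n$ together with a fine family of (linear) hash functions and thresholds, and a valid key is a witness that pins down $\log(\text{number of ``preimages'' of }x)$ to the additive precision needed to recover $\Pr_{\cD_n}[x]$ within a $(1\pm 1/p(n))$ factor; the sampler can produce such a key because it can sample uniformly from the relevant affine subspace of its own randomness. If some QPT adversary broke this puzzle with probability close to $1-1/p'(n)$ for a sufficiently large polynomial $p'$, then an averaging step over the target versus the hash randomness, followed by amplification over the hash, would yield a non-uniform QPT algorithm approximating $\Pr_{\cD_n}[x]$ within a $(1\pm 1/p(n))$ factor on a $1-o(1/p(n))$ fraction of $x\leftarrow\cD_n$ for infinitely many $n$ --- contradicting Assumption~\ref{asmptn1} --- so the construction would be a weak one-way puzzle.

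The hard part will be making this last step rigorous, and it is the crux of the whole theorem. The sampler of Assumption~\ref{asmptn1} is \emph{quantum}: $\Pr_{\cD_n}[x]$ is a signed sum over exponentially many computation paths rather than a count of preimages of a classical function, so it is not obvious that approximate counting by hashing-then-search captures it. Making the argument go through may require amplifying the sampler circuit so that the probability becomes a genuine classical counting quantity, or instead routing the implication through the intermediate primitive of a \emph{state puzzle} --- taking the secret to be the residual register of the sampler's purified state conditioned on the public output --- where the quantum structure is native and no hashing is needed. Beyond this, there are two pieces of bookkeeping I would keep in view: bridging ``hard to approximate on an inverse-polynomial fraction of inputs'' (what Assumption~\ref{asmptn1} gives) and ``easy to invert on an overwhelming fraction'' (what a counting reduction would consume), which I would handle by building only a weak one-way puzzle and amplifying it rather than amplifying the distributional assumption directly; and tracking the infinitely-often quantifier and the non-uniform quantum advice consistently through both reductions.
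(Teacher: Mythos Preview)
Your direction (B) is where the real gap is, and the fix is much simpler than anything you propose. You correctly identify that Stockmeyer-by-hashing cannot see quantum amplitudes, and routing through state puzzles would be a large detour. The paper's construction is direct: from $\cD$, define the \emph{distributional} one-way puzzle whose sampler draws $x\leftarrow\cD_n$, $i\leftarrow[0,n-1]$, and outputs puzzle $x_{1\ldots i}$ and key $x_{i+1}$. If this is not a $1/p$-distributional puzzle for sufficiently large $p$, then there is a QPT $\cA$ whose output on a prefix is statistically close to the true next-bit distribution; running $\cA$ many times on each prefix of a given $x$ and taking the empirical fraction yields good additive estimates of the conditional next-bit probabilities $p_{x_{j+1}\mid x_{1\ldots j}}$, and their product is $\Pr_{\cD_n}[x]$. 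A short Chernoff-plus-chaining analysis (controlling the $x$ for which some conditional is too small, and the $x$ for which the inverter's bias is too large at some step) turns this into a $(1\pm 1/q)$ multiplicative approximator on a $1-1/q$ fraction of $x$, contradicting Assumption~\ref{asmptn1}. The key idea you are missing is that a \emph{distributional} inverter is literally a conditional sampler, so probability estimation falls out by empirical frequency---no counting, no hashing, no issue with quantum randomness. Amplification (Theorem~\ref{thm:owp-amplification}) then upgrades the distributional puzzle to a standard one.

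Your direction (A) is closer, but has a concrete bug. With your $\cD_n=(j,(s\|k)_{\le j})$, the approximator's guarantee is only over $x\leftarrow\cD_n$; a branch $(i,s\|k_{<i}\|b)$ that has zero (or tiny) mass---say because every valid key continues with $1-b$ at that position---is never sampled, so the approximator may return an arbitrarily large value on it and steer your bit-by-bit search down a dead end. Your ``hybrid/drift'' remark does not address this: the bad queries are precisely the off-path ones you must make to decide which branch to take. The paper repairs this by adding a mode bit to the distribution: with probability $1/2$ the last bit is the true $k_{i+1}$, and with probability $1/2$ it is a fresh uniform bit. This forces both $(s\|k_{<i}\|0)$ and $(s\|k_{<i}\|1)$ to carry at least a quarter of the prefix's mass, so the approximator must be accurate on both; the true conditional $p_{1\mid s k_{<i}}$ is then recovered from the two approximations by a simple linear formula. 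Incidentally, your expectation that (B) is the hard direction is inverted: once you see the distributional-inverter trick, (B) is routine, and it is (A) that needs the extra idea.
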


One may wonder whether Assumptions \ref{con1} and \ref{asmptn1} really are {\em weaker} than the existence of (post-quantum) one-way functions. 
First, our equivalence in Theorem \ref{infthm1} also shows that Assumption \ref{asmptn1} is implied by the existence of one-way functions (which trivially imply puzzles).

Next, the 
the hardness of approximating probabilities in Assumption \ref{con1}
is based on conjectured hardness beyond the polynomial hierarchy. The resulting one-way puzzles and commitments therefore lie squarely in Microcrypt. 
%Indeed, the conjectured $\#\mathsf{P}$ hardness says that we do not expect $\mathsf{BPP}^{\mathsf{NP}}$ or even $\mathsf{PH}$ machines to successfully approximate these probabilities.

Additionally, suppose that Assumption \ref{con1} is stated as a problem: ``for a particular quantumly sampleable distribution (say RCS/BosonSampling), approximate the probabilities of outputs upto low relative error in the average case''. Do we expect the hardness of this problem to imply the existence of one-way functions? 
Suppose there existed a one-way function $f$ and a BPP reduction that used an inverter for $f$ as a black-box to approximate probabilities for the distribution in Assumption~\ref{con1}. This would imply that these probabilities can be approximated in $\mathsf{BPP}^{\mathsf{NP}}$.
However, all existing strategies to {\em prove} that these distributions exhibit sampling based advantage (e.g.,~\cite{AA11}) rely on these probabilities being {\em hard to approximate, on average} by a $\mathsf{BPP}^{\mathsf{NP}}$ machine. Thus, conjectures made in context of quantum advantage already imply that these assumptions will not yield one-way functions (under BPP reductions)\footnote{Note that the typical way we build classical cryptography from hard mathematical problems {\em is} via BPP reductions. But as already discussed, the conjectured $\#\mathsf{P}$ hardness of the problem implies that we do not expect $\mathsf{BQP}$ or even $\mathsf{PH}$ reductions to be able to use an inverter for a one-way function to approximate these probabilities.}.

%Furthermore, under Assumption \ref{con1} and assuming that $\mathsf{P}^{\#\mathsf{P}} \not\subseteq \mathsf{BPP}^{\mathsf{NP}}$ (which by Toda's theorem, follows from the assumption that $\mathsf{PH}$ does not collapse), the resulting one-way puzzle is secure against a $\mathsf{BPP}^{\mathsf{NP}}$ machine. This in turn implies that (under the conjecture) the one-way puzzle sampler cannot even be weakly classically simulated. This is because if one could classically sample from a distribution $\cD$ that has a ${1}/{\mathsf{poly}(n)}$ overlap with the sampler's output distribution, then with access to an $\mathsf{NP}$ oracle, it would be possible for a BPP machine to sample keys for any puzzle according to distribution $\cD$ (following~\cite{FOCS:ImpLub89}), contradicting security of the puzzle against $\mathsf{BPP}^{\mathsf{NP}}$.

Furthermore, our results are tight -- quantum advantage is {\em necessary} for one-way puzzles in Microcrypt.
Namely, for any one-way puzzle that does not also imply a one-way function, the sampler $Q$ outputting puzzle and solution pairs cannot be (uniformly) classically simulated. 
Otherwise, a classical machine $C$ that samples from a distribution that is close in statistical distance to the output of $Q$ (for all large enough input lengths) will yield a one-way function as follows: on input $x$, return the puzzle output by $C(x)$. We formally prove that this yields a one-way function in Appendix \ref{app:owpowf}.
%To see why this defines a one-way function, note that any distributional inverter for this function will return $x$ such that the output of the deterministic sampler on $x$ immediately inverts the puzzle.
This also means that any improvements to our results, e.g., basing one-way puzzles {\em only} on $\#\mathsf{P}$-hardness (without the need for additional conjectures) would yield sampling-based quantum advantage from only minimal worst-case complexity assumptions such as the non-collapse of PH, and without the need for unproven conjectures. 

Finally, a powerful consequence of these results is that they serve as a tool to reduce the existence of other cryptosystems that become insecure in the presence of $\#\mathsf{P}$ oracles, to the existence of quantum commitments, as we will demonstrate with state synthesis hardness in Section \ref{sec:owpss}.

\subsubsection{One-way Puzzles and the Hardness of (Pseudo)-Deterministic Sampling}
Given the connection developed above between quantum advantage conjectures and quantum cryptography, it is natural to wonder how far we can push this connection.
For example, how generically can one claim that the existence of quantum advantage implies the existence of quantum commitments?

While we do not know how to build quantum cryptography in Microcrypt generically from the assumption that $\mathsf{(Samp)BQP} \neq \mathsf{(Samp)BPP}$, we do in fact obtain one-way puzzles if we assume the existence of distributions that cannot be {\em pseudo-deterministically} sampled\footnote{By pseudo-deterministic, we mean that the sampler when run multiple times on the same input, with high probability outputs the same result on all executions.
}.

We show that distributions that admit a quantum sampler but do not admit a {\em pseudo-deterministic} quantum sampler imply the existence of quantum commitments. 

In more detail, assume that there exists a distribution $\cD$ that can be efficiently quantumly sampled, such that every {\em pseudo-deterministic} efficient quantum sampler fails to sample from any distribution $\cD'$ with $\SD(\cD,\cD') \leq \epsilon$ (where $\SD$ denotes statistical distance), and $\epsilon = \frac{1}{p(n)}$ for some fixed polynomial $p(\cdot)$). We show that the existence of any such distribution implies the existence of one-way puzzles. 
\begin{theorem}\label{thm:psd}
[Informal]
    If there exists a quantumly sampleable distribution that does not admit a pseudo-deterministic sampler, then one-way puzzles and quantum commitments exist.
\end{theorem}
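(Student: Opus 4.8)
\medskip\noindent\textbf{Proof plan.}\quad The plan is to prove the contrapositive and route everything through Theorem~\ref{infthm1}. Suppose one-way puzzles do not exist. Then by Theorem~\ref{infthm1}, Assumption~\ref{asmptn1} fails: for \emph{every} (uniform) efficiently sampleable family and \emph{every} polynomial $p$, there is a \QPT{} algorithm with (non-uniform, quantum) advice that, with probability $\ge 1-1/p$ over a random sample (and its own coins), outputs a multiplicative-$(1\pm 1/p)$ approximation to the probability mass the distribution assigns to that sample. Given any quantumly sampleable $\cD$ and any target $\epsilon = 1/q(n)$, I will feed such a ``probability approximator'' into an inverse-CDF (binary-tree-traversal) sampler to build a \emph{pseudodeterministic} sampler whose output distribution is within $\epsilon$ of $\cD$ -- contradicting the assumed hardness of $\cD$. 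Hence one-way puzzles exist, and quantum commitments follow from~\cite{KT24}.

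First I would reduce approximating \emph{full-string} probabilities of $\cD$ to approximating \emph{prefix} probabilities. Writing $\ell = \ell(n)$ for the output length of $\cD_n$, define the prefix-augmented family $\cE = \{\cE_n\}$: sample $s\leftarrow\cD_n$, sample $i\leftarrow[\ell]$ uniformly, output $(i,s_1\cdots s_i)$. This is (uniformly) quantumly sampleable and $\Pr_{\cE_n}[(i,w)] = \tfrac1\ell\Pr_{\cD_n^{(i)}}[w]$, where $\cD_n^{(i)}$ is the marginal of $\cD_n$ on its first $i$ bits. Applying the negation of Assumption~\ref{asmptn1} to $\cE$ (with a polynomial $p$ fixed below) yields a single advised \QPT{} algorithm $\cB$ that, on a $\ge 1-1/p$ fraction of $(i,w)\leftarrow\cE_n$ (and coins), multiplicatively $(1\pm 1/p)$-approximates $\Pr_{\cD_n^{(i)}}[w]$; rescaling by $\ell$ recovers approximate prefix probabilities at every level. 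Packaging all prefix lengths into one distribution $\cE$ is the point: it gives a single adversary good for a common infinite set of security parameters, rather than $\ell$ adversaries with possibly disjoint ``good'' sets.

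Next I would describe the sampler $\mathsf{S}$ (which inherits $\cB$'s advice). On randomness $r\in\bin^N$ with $N=\poly(n)$, read $r$ as a dyadic rational in $[0,1)$ and walk the prefix tree: having fixed $w=s_1\cdots s_i$, estimate (via an amplified $\cB$, see below) the split ratio $\theta_w = \Pr_{\cD_n^{(i+1)}}[w0]/\Pr_{\cD_n^{(i)}}[w]\in[0,1]$ from approximations of $\Pr_{\cD_n^{(i+1)}}[w0]$ and $\Pr_{\cD_n^{(i)}}[w]$, set $s_{i+1}=0$ if $r$ lands in the left sub-interval and $1$ otherwise, rescale, and recurse for $\ell$ levels; output $s_1\cdots s_\ell$. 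For output-distribution closeness: each estimated split differs multiplicatively from the true one by $1\pm O(1/p)$, so the compounded statistical distance from $\cD_n$ over $\ell$ levels is $O(\ell/p)$, plus the probability that $r$'s true path passes through a node where $\cB$ is inaccurate. Since $\cE_n$ weights each level by $1/\ell$, that extra term is $\sum_i\Pr_{w\sim\cD_n^{(i)}}[(i,w)\text{ bad}]\le O(\ell/p)$ (after a standard averaging step it is $O(\ell/\sqrt{p})$, still polynomially small). Choosing $p$ a large enough polynomial in $\ell$ and $q$ makes the total at most $\epsilon$.

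The main obstacle, and where I expect the real work to be, is making $\mathsf{S}$ genuinely \emph{pseudodeterministic} despite $\cB$'s randomness. The idea is to canonicalize each left/right \emph{decision} rather than the real-valued estimates: run $\cB$ independently $T=\poly(n)$ times, obtain estimates $\widehat\theta_1,\dots,\widehat\theta_T$ of $\theta_w$, and output $\mathrm{maj}_j[\,r<\widehat\theta_j\,]$. For a good node, every $\widehat\theta_j$ lies within the approximation slack $\eta=O(1/p)$ of $\theta_w$ with probability $\ge 2/3$, so whenever $r$ is at distance more than $\eta$ from $\theta_w$ the majority equals the canonical bit $[\,r<\theta_w\,]$ except with probability $2^{-\Omega(T)}$; a union bound over the $\ell$ nodes on $r$'s path shows $\mathsf{S}(r)$ equals a canonical leaf $v(r)$ with overwhelming probability. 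The residual ``ambiguous'' $r$'s -- those within $\eta$ of some split point along their path -- form an $O(\ell\eta)=O(\ell/p)$ fraction, which I fold into the $\epsilon$ budget; since $p$ is taken $\gg \ell q$, this fraction is $\ll\epsilon$, within the tolerance of the (non-uniform) pseudodeterministic sampler notion. The delicate parts are the coupled choice of $T$, $N$, and especially $p$ against $\ell$ and $q$, the compounding multiplicative-error accounting, and the boundary bookkeeping; the infinitely-often quantifiers are handled exactly as in Theorem~\ref{infthm1}.
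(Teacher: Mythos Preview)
Your high-level shape---walk the prefix tree and use an adversary to supply conditional next-bit probabilities, then argue correctness and pseudodeterminism---matches the paper. But the specific route you chose, going through Theorem~\ref{infthm1} and using the probability approximator of (the negation of) Assumption~\ref{asmptn1}, has a real gap that the paper's direct argument avoids.

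The problem is in how you estimate the split ratio. You query the approximator $\cB$ at $(i,w)$ \emph{and at} $(i{+}1,w0)$, then set $\hat\theta_w=\hat\Pr[w0]/\hat\Pr[w]$. The guarantee you extract from the failure of Assumption~\ref{asmptn1} is only that $\cB$ is multiplicatively accurate on a $1-1/p$ mass of inputs \emph{drawn from $\cE_n$}. The parent $(i,w)$ is on the sample path and is therefore $\cE_n$-typical, but the left child $(i{+}1,w0)$ need not be: whenever the true path goes right (i.e.\ $\theta_w$ is small), $(i{+}1,w0)$ can have arbitrarily small---even zero---$\cE_n$-mass, and $\cB$ may return anything there. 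In that case $\hat\theta_w$ can be close to $1$ even though $\theta_w\approx 0$, and your sampler goes the wrong way. Your bad-node accounting, $\sum_i \Pr_{w\sim\cD_n^{(i)}}[(i,w)\text{ bad}]$, only controls on-path queries and says nothing about these sibling queries; the sibling distribution is \emph{not} $\cE_n$ (it weights $w0$ by $\Pr[w]$, not by $\Pr[w0]$), so the $O(\ell/p)$ bound does not transfer.

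The paper sidesteps this entirely by not routing through Assumption~\ref{asmptn1}. It directly posits the puzzle $(\text{prefix }x_{1\ldots i},\text{ key }x_{i+1})$ and, assuming it is not distributionally one-way, obtains a \emph{next-bit sampler} $\cA$ whose output distribution on input a prefix $z$ is close (in the distributional sense) to the true conditional law of $x_{i+1}$. Crucially, $\cA$ is queried only at on-path prefixes, and an empirical average of its single-bit output directly yields an estimate of $p_{1|z}=\theta_z$---no division of two marginal estimates, no off-path query. Your approach \emph{can} be repaired, but it needs the same noise trick the paper uses in Theorem~\ref{thm:puzzles-imply-type1}: redefine $\cE_n$ to output $(i,\,s_{1\ldots i-1},\,\beta)$ with $\beta$ equal to $s_i$ or a fresh random bit with probability $1/2$ each, so that both children of every reachable node carry at least a $1/4$ fraction of the parent's mass; then the approximator is reliable on both children and your ratio recovers $\theta_w$.
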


Thus, if one-way puzzles do not exist, then every quantumly sampleable distribution can essentially be treated as being deterministically sampleable. 
We believe that this outlook may in the future yield simple ways to build cryptography from hardness assumptions associated with quantum sampleable distributions.
%(e.g., the hardness of computing quantum variants of Kolmogorov complexity for the outputs of these distributions). 

Our next set of results relates the hardness of state synthesis with one-way puzzles and commitments. This also builds on the conceptual connection between the hardness of approximating probabilities and quantum cryptography.

\subsubsection {One-way Puzzles and the Hardness of State Synthesis}
\label{sec:owpss}
Another fundamental ``quantum'' search problem is what we will call a {\em state puzzle} -- this is like a one-way puzzle except that the hard-to-find `key' is not classical, but an arbitrary quantum state. A state puzzle consists of a quantum polynomial time algorithm $\cG$ that samples a (secret) quantum state $\ket{\psi_s}$ along with a (public) string $s$ -- here w.l.o.g. we may assume that for every $s$, $\ket{\psi_s}$ is pure. The security guarantee is that given the ``puzzle'' $s$, it is infeasible for non-uniform QPT machines to synthesize any state that noticeably overlaps with $\ket{\psi_s}$ in expectation.
An extremely natural question, that we address in this work, is whether quantum search problems such as state puzzles also imply one-way puzzles and quantum commitments.

The complexity of synthesizing known
quantum states has been studied in several works~\cite{aaronsonsynth,irani,Rosenthal24}, and it is known that a state puzzle can be inverted by a $\mathsf{BQP}$ machine with access to a $\#\mathsf{P}$ oracle~\cite{aaronsonsynth,irani}.
Thus, the existence of state puzzles implies that 
$\mathsf{BQP}^{\#\mathsf{P}} \not\subseteq \mathsf{(io)BQP}/\mathsf{qpoly}$.

By our previous result (Informal Theorem \ref{thm1}), we know that Assumption \ref{con1} implies the existence of quantum commitments, as long as $\mathsf{BQP}^{\#\mathsf{P}} \not\subseteq \mathsf{(io)BQP}/\mathsf{qpoly}$\footnote{While we state Theorem \ref{thm1} as assuming that $\mathsf{P}^{\#\mathsf{P}} \not\subseteq \mathsf{(io)BQP}/\mathsf{qpoly}$, we note that this is infact equivalent to $\mathsf{BQP}^{\#\mathsf{P}} \not\subseteq \mathsf{(io)BQP}/\mathsf{qpoly}$.}. 
From the discussion in the previous paragraph, we can replace  $\mathsf{BQP}^{\#\mathsf{P}} \not\subseteq \mathsf{(io)BQP}/\mathsf{qpoly}$ with the existence of state puzzles, in the previous sentence.
Thus, if Assumption \ref{con1} holds, then state puzzles imply one-way puzzles.  
However, Assumption \ref{con1} is about average-case hardness of approximating probabilities, and state puzzles capture the average-case hardness of synthesizing states. So can state puzzles imply one-way puzzles and quantum commitments unconditionally? We show that this is indeed the case, and that Assumption \ref{con1} is not needed for this implication.

\begin{theorem}[Informal]
\label{inf3}
    State puzzles imply quantum bit commitments.
\end{theorem}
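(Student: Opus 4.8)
The plan is to derive a one-way puzzle from a state puzzle, after which quantum bit commitments follow from the result of \cite{KT24} that one-way puzzles imply quantum bit commitments. By the equivalences developed earlier in the paper — in particular Theorem \ref{infthm1} — it suffices to extract, from a state puzzle $\cG$ (with $\cG(1^n)$ producing a classical string $s$ together with a pure state $\ket{\psi_s}=\sum_k\alpha_k^{(s)}\ket k$ on $m=m(n)$ qubits), an efficiently sampleable distribution $\cD=\{\cD_n\}$ over classical strings whose outcome probabilities are hard to approximate to inverse-polynomial multiplicative error, \ie to show that security of the state puzzle implies Assumption \ref{asmptn1} for $\cD$.

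I would let $\cD_n$ be a mixture of samples obtained from a single run of $\cG$. Alongside ``marginal'' samples that expose $\Pr_\cG[s]$ up to a global scale, $\cD_n$ would include \emph{prefix} samples — draw a length $j$, measure the first $j$ qubits of $\ket{\psi_s}$ to obtain $y$, output $(s,j,y)$, so the probability is proportional to $\Pr_\cG[s]\cdot A_s(y)$ with $A_s(y)=\sum_z|\alpha^{(s)}_{yz}|^2$ — and \emph{phase-test} samples: measure $\ket{\psi_s}$ in an orthonormal basis that on a two-coordinate block uses a slight rotation of $\frac1{\sqrt2}(\ket k\pm b\ket{k'})$, $b\in\{1,i\}$ (the small rotation only avoids exactly-zero probabilities on phase-type states), recording the block, $b$, and the outcome. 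The construction is arranged so that everything one needs in order to re-synthesize $\ket{\psi_s}$ appears as a \emph{ratio} of two $\cD_n$-probabilities: the conditional magnitudes $A_s(y0)/A_s(y)$ that drive a Grover--Rudolph amplitude-loading circuit, and, from the $b=1$ versus $b=i$ variants, the cosine and sine of the relative phase of $\alpha^{(s)}_k$ against $\alpha^{(s)}_{k'}$. Passing to ratios makes the unknown normalizations — $\Pr_\cG[s]$, $1/(m{+}1)$, $2^{-\Theta(m)}$, and so on — cancel.

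The core would be to turn a good approximator into a state-puzzle breaker, in the spirit of the $\#\mathsf{P}$-oracle state-synthesis procedures of \cite{aaronsonsynth,irani} but driven by approximate probabilities instead of exact counting. Suppose, toward a contradiction, that for some polynomial $p$ (which we may take as large as desired) there is a QPT $\cA$ with quantum advice that, for infinitely many $n$, approximates $\Pr_{\cD_n}[x]$ within a factor $1\pm1/p$ on a $1-1/p$ fraction of $x\sample\cD_n$. Compiling $\cA$ into an (essentially unitary) oracle $\cO$ that answers computational-basis queries, the breaker $\cR$, on input $s$, would first use $\cO$ and the prefix-ratios to load the magnitude state $\ket{M_s}:=\sum_k|\alpha^{(s)}_k|\ket k$ through $O(m)$ controlled rotations; then sample $\ket{M_s}$ a few times to identify a \emph{dyadic amplitude shell} $S=\{k:|\alpha^{(s)}_k|^2\in[2^{-i^*-1},2^{-i^*})\}$ of maximal weight $W_S\ge 1/\mathrm{poly}(n)$; re-prepare $\ket{M_s}$ and project onto $S$ (succeeding with probability $W_S\ge 1/\mathrm{poly}$, so a handful of tries suffice); and finally apply the diagonal correction $\ket k\mapsto e^{i\tilde\theta_k}\ket k$ on $S$, where $\tilde\theta_k$ is the relative phase of $\alpha^{(s)}_k$ against a fixed reference $k^\star\in S$, read off from the phase-test ratios. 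The point of the shell restriction is that all magnitudes inside a single dyadic shell agree up to a factor $2$, so those ratios are $\Theta(1)$ and genuinely move with the phase; the recovered phases are then accurate to $O(1/p)$, and the output coincides with $\ket{\psi_s}$ restricted to $S$ and renormalized (up to $O(1/p)$), which has fidelity $W_S\ge 1/\mathrm{poly}$ with $\ket{\psi_s}$. (When some $|\alpha^{(s)}_k|^2$ is already inverse-polynomial — detectable from $\cO$ — $\cR$ would just output that basis state.) Averaging over $s\sample\cG$, $\cR$ would attain inverse-polynomial expected overlap on infinitely many $n$, contradicting security of the state puzzle; hence Assumption \ref{asmptn1} holds for $\cD$, and the conclusion follows — and, notably, without Assumption \ref{con1}.

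The step I expect to be the main obstacle is justifying the \emph{superposition access} to $\cO$. The guarantee on $\cA$ is only average-case over $x\sample\cD_n$, whereas $\cR$ queries $\cO$ on registers whose branch weights are those of the Grover--Rudolph partial states, the shell-restricted state, and the phase-fixup superposition. To be able to conclude that at most an $O(1/p)$ fraction of the amplitude mass hits a badly-answered query, one must co-design how $\cD_n$ samples its prefixes $y$ and its phase-test blocks $(k,k')$ with how $\cR$ queries them, so that the induced query distributions have polynomially-bounded density relative to $\cD_n$ — and this is exactly what the dyadic-shell restriction (equalizing the relevant magnitudes) and the passage to ratios are for. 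Together with tracking how errors compound across the $O(m)$ rotations and the post-selection, this is what pins $p$ down to a sufficiently large polynomial — harmless, since Assumption \ref{asmptn1} only asserts the existence of \emph{some} polynomial.
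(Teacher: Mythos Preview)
Your overall route—state puzzle $\Rightarrow$ distributional one-way puzzle $\Rightarrow$ commitments via \cite{KT24}—and your amplitude step (prefix samples feeding a Grover--Rudolph loader for $\ket{M_s}$) match the paper; the density bookkeeping for the amplitude queries goes through because both the prefix samples in $\cD_n$ and the branch weights in the coherent query are proportional to the same conditional magnitudes $A_s(y)$.

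The phase step, however, has a genuine gap: the dyadic-shell restriction does \emph{not} make the phase-test query distribution polynomially dense relative to $\cD_n$. Any efficient sampler that produces a phase-test block $(k,k')$ from a \emph{single} copy of $\ket{\psi_s}$ (and you only get one per $s$, since distinct runs of $\cG$ give distinct $s$) must commit to a pairing—say a random shift $r$—before collapsing, so $\Pr_{\cD_n}[(k,k')]\approx 2^{-n}\bigl(|\alpha_k|^2+|\alpha_{k'}|^2\bigr)$. Your reconstruction fixes an anchor $k^\star$ and queries pairs $(k,k^\star)$ coherently with branch weight $|\alpha_k|^2/W_S$; even averaging over $k^\star$ drawn from the shell, the joint query weight on $(k,k^\star)$ is $\approx 1/|S|^2$, giving a density ratio $\approx 2^n/|S|$. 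For a state supported on a single dyadic shell of size $|S|=2^{n/2}$ (all $|\alpha_k|^2=2^{-n/2}$, $W_S=1$) this ratio is $2^{n/2}$, so the approximator's average-case guarantee says nothing about your queries. Equalizing magnitudes inside the shell lets you read the phase from the ratio \emph{if} the probability is supplied accurately, but does not oblige the approximator to supply it. The paper's remedy is to apply a random Clifford (a unitary $2$-design) to $\ket{\psi_s}$ \emph{inside the sampler}, before any measurement; a $2$-design flatness argument then shows that with high probability all but a $1/\poly$ fraction of the mass lies on basis vectors with $|\alpha_z|^2\in[2^{-n}/\poly,\,\poly\cdot 2^{-n}]$, forcing $|S|\approx 2^n$ and making the density ratio polynomial. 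The flattening must happen in $\cD_n$, not as a post-hoc restriction in the reconstruction—that is the missing idea.
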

As an intermediate step, we again build one-way puzzles unconditionally from state puzzles. 
To obtain one-way puzzles, we introduce novel techniques that reduce phase estimation to calculating probabilities of outcomes of a distribution generated using the state itself.
This in fact proves the following equivalence:
\begin{theorem}
[Informal]
\label{inf4}
 The existence of state puzzles is equivalent to the existence of one-way puzzles.
\end{theorem}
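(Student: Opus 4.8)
The plan is to prove the two directions separately; ``one-way puzzles $\Rightarrow$ state puzzles'' is essentially syntactic and ``state puzzles $\Rightarrow$ one-way puzzles'' carries all the weight. For the easy direction, given a one-way puzzle $(\Samp,\Ver)$ let the state-puzzle generator $\cG(1^n)$ run $\Samp(1^n)$, output the classical puzzle $s$, and keep as its secret the residual pure state $\ket{\psi_s}$ of $\Samp$'s execution conditioned on the outcome $s$. Since $\Samp$ always outputs a valid key, the key-register of $\ket{\psi_s}$ is supported on strings $k$ with $\Ver(s,k)=1$, so $\ketbra{\psi_s}\le\Pi_s$ where $\Pi_s$ projects that register onto valid keys. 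Hence if a non-uniform $\QPT$ adversary on input $s$ produced $\rho$ with $\bra{\psi_s}\rho\ket{\psi_s}\ge 1/\poly(n)$, measuring the key-register of $\rho$ would return a valid key with probability $\ge\Tr(\Pi_s\rho)\ge\bra{\psi_s}\rho\ket{\psi_s}\ge 1/\poly(n)$, contradicting one-wayness.

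For the converse, by Theorem~\ref{infthm1} (the equivalence of Assumption~\ref{asmptn1} with one-way puzzles) it suffices to turn a state puzzle $\cG$ into an efficiently sampleable distribution family $\cD=\{\cD_n\}$ witnessing Assumption~\ref{asmptn1}. I would take $\cD_n$ to be ``$\cG$ together with a random local test of its secret state'': sample $(s,\ket{\psi_s})\gets\cG(1^n)$, with $\ket{\psi_s}$ on $m=\poly(n)$ qubits; pick a uniformly random test $T$ from a fixed polynomial-size family; apply $T$ to $\ket{\psi_s}$; and output $(s,T,o)$, where $o$ is the outcome. The family is chosen so that $\Pr_{\cD_n}[(s,T,o)]$ is exactly the acceptance probability of an explicit short circuit acting on $\ket{\psi_s}$, and so that the whole collection $\{\Pr_{\cD_n}[(s,\cdot,\cdot)]\}$ pins down all amplitudes of $\ket{\psi_s}$: magnitudes and prefix-marginals are read from computational-basis measurements of prefixes of the first $i$ qubits, and relative phases from Hadamard-test-style measurements of qubit $i{+}1$ in the $X$- and $Y$-bases conditioned on a length-$i$ computational-basis prefix. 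In this sense an oracle for $\Pr_{\cD_n}$ does exactly the job of the amplitude/phase-estimation subroutine one would otherwise have to run on $\ket{\psi_s}$ -- this is the ``reduce phase estimation to calculating probabilities'' step.

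It then remains to show this $\cD$ really witnesses Assumption~\ref{asmptn1}. Suppose not: some non-uniform $\QPT$ $\cA$ with quantum advice $(1\pm 1/p)$-multiplicatively approximates $\Pr_{\cD_n}[\cdot]$ on a $(1-1/p(n))$-fraction of the mass, for all large $n$. I would build a non-uniform $\QPT$ that on input $s$ synthesizes a state of $1/\poly(n)$ overlap with $\ket{\psi_s}$, contradicting security of $\cG$, via a layered Grover--Rudolph-style preparation: at layer $i$ it holds a superposition over length-$i$ prefixes and applies, controlled on the prefix, the one-qubit rotation whose magnitude is a ratio of prefix-marginals and whose phase is the relative phase, obtaining both by invoking (an amplified copy of) $\cA$ \emph{coherently} on the corresponding query strings and rotating by the contents of the answer register, then uncomputing. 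A Markov argument over $\cG$'s randomness shows that for a $(1-o(1))$-fraction of challenges $s$ the queries relevant to $s$ lie in $\cA$'s ``good'' region with overwhelming weight, so the coherent queries are accurate except on negligible-weight branches; propagating the per-layer multiplicative errors over all $m$ layers leaves a state of overlap $\ge 1/\poly(n)$ with $\ket{\psi_s}$.

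The crux is twofold. First, \emph{phases}: a state built only from amplitude magnitudes (all-positive amplitudes) can have exponentially small overlap with $\ket{\psi_s}$, so the test family must be expressive enough to expose relative phases from probabilities alone -- this is the delicate part of designing $\cD$. Second, \emph{error control across layers}: $\cA$ only promises a fixed multiplicative slack $1/p$, and only on \emph{typical} inputs, whereas the Grover--Rudolph angles are ratios of possibly tiny marginals applied in superposition over $m$ layers; one must argue via averaging that almost every challenge has its entire query tree well approximated, amplify $\cA$ per query to near-certainty so it behaves coherently, and ensure the per-layer errors compose to only a $1/\poly$ fidelity loss -- using the freedom to choose the polynomial in Assumption~\ref{asmptn1} (gap amplification inside the equivalence of Theorem~\ref{infthm1}) and, if needed, to pad $\cG$. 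The remaining ingredients -- the easy direction, the coherent simulation of $\cA$, and the appeal to Theorem~\ref{infthm1} -- are routine.
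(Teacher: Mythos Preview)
Your easy direction (one-way puzzles $\Rightarrow$ state puzzles) is essentially the paper's proof: purify $\Samp$, measure only the puzzle register, and note that any state overlapping $\ket{\psi_s}$ yields a valid key upon measuring the key register. (One nitpick: $\Samp$ only outputs valid keys with overwhelming probability, not always, so $\ketbra{\psi_s}\le\Pi_s$ holds only approximately; the paper handles this with a trace-distance argument rather than an operator inequality.)

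For the hard direction your high-level plan---build a distribution whose probabilities encode amplitudes and phases of $\ket{\psi_s}$, then synthesize via a layered Aaronson/Grover--Rudolph procedure using a probability approximator---matches the paper's spirit, and you correctly flag both cruxes. But there is a genuine gap on each.

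\textbf{Flattening via a random $2$-design.} Your error analysis hinges on the approximator being accurate on the queries the synthesizer actually makes, and on phase estimates having small \emph{relative} error. Both fail badly when $\ket{\psi_s}$ has wildly unbalanced amplitudes: pairs $(x,y)$ with $|\alpha_x|/|\alpha_y|$ super-polynomial make the $X$/$Y$-basis outcome probabilities exponentially close to $1/2$, so a $1/\poly$ multiplicative approximator gives no useful phase information; and a Markov argument over $\cG$'s randomness cannot save you, because the imbalance is a property of $\ket{\psi_s}$ itself, not of the sampling. ``Padding $\cG$'' does not help either---a padded state can still be $\ket{0^m}$. The paper's fix, which your proposal omits entirely, is to first apply a uniformly random Clifford $c$ to $\ket{\psi_s}$ and include $c$ in the puzzle. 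A flatness lemma for $2$-designs then guarantees that with high probability over $c$, all but a $1/\poly$ fraction of the mass of $c\ket{\psi_s}$ sits on basis vectors with $|\alpha_z|^2\in[n^{-3k}2^{-n},\,n^{3k}2^{-n}]$, which is exactly what makes the relative-error phase estimates meaningful.

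\textbf{Phase recovery mechanism.} Measuring qubit $i{+}1$ in the $X$/$Y$ basis after fixing a length-$i$ prefix does not give the per-leaf relative phases $\phi_x-\phi_y$ needed to reconstruct $\ket{\psi_s}$; it gives an aggregate over all suffixes, and there is no well-defined ``phase of a subtree'' that composes layer by layer. The paper instead separates amplitude and phase: first synthesize the real-amplitude state $\ket{\psi^*_{s,c}}=\sum_z a_z\ket{z}$ using prefix measurements (your $b_{\text{mode}}=0$), then pick a single \emph{anchor} $y$ by measuring a copy of $\ket{\psi^*_{s,c}}$, and recover each $\phi_x-\phi_y$ by applying a random two-to-one shift $f_r$ with $r=x\oplus y$, collapsing to a two-term superposition $\propto\alpha_x\ket{x}+\alpha_y\ket{y}$, and measuring it in the $(\ket{x}\pm\ket{y})$ and $(\ket{x}\pm i\ket{y})$ bases. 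This yields $\sin\theta\cos\phi_{xy}$ and $\sin\theta\sin\phi_{xy}$, from which $\phi_{xy}$ is recovered via $\arctan$---and the flatness from the Clifford step is precisely what guarantees $\sin\theta$ is bounded away from $0$. Without both the anchor structure and the $2$-design, your ``Hadamard-test-style'' tests do not determine the phases to the accuracy you need.
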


\noindent {\bf Public-Key Quantum Money.}
Quantum money aims to use states as banknotes, leveraging the no-cloning principle to prevent counterfeiting.
In a simplified model (often called a ``public-key quantum money mini-scheme''~\cite{AarChr}), a sampler outputs a banknote $\ket{\psi_s}$ together with a classical serial number $s$ that can be efficiently obtained from $\ket{\psi_s}$ without disturbing it. Furthermore, it is computationally intractable to generate ``clones'' of $\ket{\psi_s}$.
It is easy to see that a mini-scheme is also a state puzzle, since if synthesizing $\ket{\psi_s}$ were easy given $s$, then one could clone a banknote $\ket{\psi_s}$ efficiently by first computing $s$ and synthesizing $\ket{\psi_s}$ from $s$.
This observation combined with Theorems \ref{inf3} and \ref{inf4} yields the following corollary.
\begin{corollary}
\label{cor:one-one}
    Quantum money mini-schemes imply one-way puzzles and quantum bit commitments.
\end{corollary}
We stress that in general state puzzles appear to be a weaker primitive than quantum money -- unlike money, they (1) do not require unclonability, only the hardness of generating $\ket{\psi_s}$ given $s$, and (2) do not require $\ket{\psi_s}$ to be efficiently verifiable with respect to $s$.

\paragraph{Amplifying State Puzzles.}
It is also natural to consider a `weak' variant of a state puzzle (analogous to weak one-way functions), where 
it is computationally intractable to synthesize any state that overwhelmingly overlaps with $\ket{\psi_s}$ in expectation.
Our implication from state puzzles to one-way puzzles holds even when starting with a weak state puzzle, which combined with Theorem \ref{inf4} yields the following amplification theorem for state puzzles. 
\begin{theorem}
    Weak state puzzles are equivalent to (strong) one-way puzzles and state puzzles.
\end{theorem}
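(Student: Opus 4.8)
I would prove the three implications
\[
\text{weak state puzzles} \;\Rightarrow\; \text{one-way puzzles} \;\Rightarrow\; \text{(strong) state puzzles} \;\Rightarrow\; \text{weak state puzzles},
\]
which closes into a cycle and hence gives all the claimed equivalences (and, composing with Theorem~\ref{inf3}, also quantum bit commitments). The middle implication is exactly the ``one-way puzzles imply state puzzles'' direction of Theorem~\ref{inf4}, which I invoke as a black box. The last implication is immediate from the definitions: a (strong) state puzzle $\cG$ guarantees that no non-uniform QPT algorithm, given the public identifier $s$, can synthesize a state with \emph{noticeable} expected overlap with $\ket{\psi_s}$; a fortiori no such algorithm can synthesize a state with \emph{overwhelming} expected overlap, so the very same $\cG$ is already a weak state puzzle. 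Thus the entire burden falls on the first implication, which is the genuine ``amplification'' content, analogous to weak-to-strong amplification for one-way functions.

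\textbf{Building one-way puzzles from a weak state puzzle.} For this I would revisit the construction underlying the ``state puzzles imply one-way puzzles'' direction of Theorem~\ref{inf4} and argue that its analysis never uses more than weak security of the state puzzle. Concretely, recall that the one-way puzzle sampler runs the state-puzzle sampler to obtain $(\ket{\psi_s},s)$ and then, using the phase-estimation-to-probability-approximation technique, extracts from $\ket{\psi_s}$ a \emph{classical} key $k$ that pins down the relevant amplitudes of $\ket{\psi_s}$ accurately enough that a state of near-unit fidelity with $\ket{\psi_s}$ can be re-prepared from $(s,k)$ alone; the puzzle is $s$, and $\Ver$ checks consistency of $(s,k)$. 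An inverter for this one-way puzzle produces, from $s$ only, such a key, and hence a \emph{synthesizer} for $\ket{\psi_s}$; the point to verify is that this synthesizer achieves \emph{overwhelming} expected fidelity with $\ket{\psi_s}$, so that it contradicts even the weak state puzzle.

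\textbf{Main obstacle.} The delicate step is precisely this last one: ``no one-way puzzles'' only affords an inverter that succeeds with inverse-polynomial probability, whereas contradicting a weak state puzzle requires a synthesizer whose expected fidelity with $\ket{\psi_s}$ is $1-\negl$, not merely $1/\poly$. Classically one closes such a gap by running many independent instances and recombining, but here the secret is a quantum state that cannot be cloned and the synthesizer is handed only a single copy per identifier, so a naive direct-product argument does not apply. I expect the resolution to either build the redundancy into the one-way puzzle itself — packing polynomially many independent state-puzzle identifiers into a single puzzle and using a direct-product/XOR-style recombination to extract one high-fidelity copy of some $\ket{\psi_{s_i}}$ — or route the amplification through a separate weak-to-strong amplification for one-way puzzles before applying Theorem~\ref{inf4}. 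Making this quantum hardness-amplification argument go through, and in particular certifying that the final expected fidelity is genuinely $1-\negl$ (the feature that distinguishes contradicting a \emph{weak} state puzzle from contradicting a strong one), is where I expect the real work to lie.
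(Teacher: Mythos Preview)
Your cycle is the right one, and your option~(b) --- routing through an amplification for one-way puzzles --- is what the paper does. But several details are off in ways that matter.

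First, your ``main obstacle'' is misstated. Contradicting a $1/q(n)$-weak state puzzle only requires a synthesizer with expected overlap exceeding $1 - 1/q(n)$, i.e.\ $1 - 1/\poly$, \emph{not} $1-\negl$. That is exactly what the reduction in the paper achieves, so the quantum direct-product heroics of your option~(a) are unnecessary.

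Second --- and this is the real gap --- the intermediate primitive is not a search-type weak one-way puzzle but a \emph{distributional} one-way puzzle (Definition~\ref{def:dist-owp}). The paper proves weak state puzzle $\Rightarrow$ $1/p(n)$-distributional OWP for a suitably large polynomial $p$ (Theorem~\ref{thm:state-puzzles-imply-owp}), and then invokes the amplification of~\cite{CGG24} (Theorem~\ref{thm:owp-amplification}) to obtain a strong one-way puzzle. The point of going through the distributional notion is that its negation hands you an adversary whose output is within $1/p(n)$ in statistical distance of the true conditional distribution --- a \emph{near-perfect} sampler, not one that merely succeeds with probability $1/\poly$. That is precisely the oracle quality needed to estimate conditional probabilities and relative phases accurately enough to run Aaronson-style synthesis up to $1-1/q(n)$ fidelity. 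A search-type inverter with $1/\poly$ success would not give you this.

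Third, your description of the puzzle construction would not work: the key $k$ cannot ``pin down the relevant amplitudes'' of $\ket{\psi_s}$, since there are exponentially many of them. In the paper's construction the key is a \emph{single bit}: either the next bit of a computational-basis measurement of $C\ket{\psi_s}$ (for a random Clifford $C$) after a random-length prefix is revealed as the puzzle, or the outcome of a two-outcome phase measurement on a random pair of basis elements. The distributional inverter is then queried coherently, many times and on many prefixes, to estimate next-bit probabilities and pairwise relative phases; the Clifford twirl keeps amplitude ratios polynomially bounded so that additive estimation errors translate to small relative errors. None of this is captured by a one-shot key that encodes the state.
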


\subsection{Perspective}

Finally, we reflect on the implications of these results in the broader context of understanding hardness in quantum cryptography.\\

\noindent {\bf Microcrypt is ``Real''.}
As already discussed above, this work provides the strongest evidence so far for the existence of Microcrypt: there are real, unrelativized constructions of quantum cryptosystems based on well-studied mathematical assumptions that do not imply the existence of one-way functions (as otherwise, quantum advantage claims break down). Under mild complexity assumptions ($\mathsf{P}^{\#\mathsf{P}} \not\subseteq \mathsf{ioBQP}^{\mathsf{NP}}$), the resulting constructions remain secure even against an adversary that can access an $\mathsf{NP}$ oracle. 
This also rules out {\em non-black-box constructions} of one-way functions from one-way puzzles or quantum commitments (with a black-box reduction), assuming any of the probability approximation conjectures. For such a reduction would be able to use an $\mathsf{NP}$ oracle --  that inverts any one-way function -- to also solve a $\#\mathsf{P}$-hard problem, which cannot happen unless $\mathsf{P}^{\#\mathsf{P}} \subseteq \mathsf{ioBQP}^{\mathsf{NP}}$.
Previous oracle results inherently only ruled out {\em black-box/relativizing constructions} (with black-box reductions) of one-way functions from quantum commitments and puzzles.

We also note that the mathematical assumptions/conjectures underlying our primitives are not new and have previously been extensively studied in the completely different context of quantum advantage. It is only their application to the realm of cryptography that is new. \\

\noindent {\bf A Sharper Understanding of Microcrypt.}
Since disproving the probability approximation conjectures will have far-ranging consequences in quantum advantage, let us assume that at least one of these conjectures (originally made in the context of quantum advantage) is true. Then the existence of one-way puzzles is equivalent to $\mathsf{P}^{\#\mathsf{P}} \not\subseteq \mathsf{(io)BQP}/\mathsf{poly}$.

This gives us a sharper perspective on the hardness of primitives in Microcrypt. 
For one, any primitives that become insecure in the presence of a $\#\mathsf{P}$ oracle (e.g. state puzzles, but also any new primitives we come up with in the coming years) will imply one-way puzzles and therefore also, commitments.

This equivalence also enables new insights into separations. For instance, one-way puzzles (and commitments) should not imply quantum cryptographic primitives that break in the presence of a $\mathsf{QMA}$ oracle, assuming $\mathsf{P}^{\#\mathsf{P}} \not\subseteq \mathsf{(io)BQP}^{\mathsf{QMA}}$ (see e.g.~\cite{Vyalyi} for some evidence in support of this assumption). Concretely, a public-key quantum money (PKQM) mini-scheme implies one-way puzzles/commitments (by Corollary \ref{cor:one-one}) but becomes insecure given a search-$\mathsf{QMA}$ oracle, and therefore likely should not be implied by one-way puzzles/commitments.
Besides PKQM, other natural primitives that break in the presence of a search-$\mathsf{QMA}$ oracle include public key encryption (PKE) with classical public keys (and quantum secret keys/ciphertexts),  digital signatures (DS) with classical verification keys (and quantum secret keys/signatures). Under any of the hardness of approximation conjectures, $\mathsf{P}^{\#\mathsf{P}} \not \subseteq \mathsf{(io)}\mathsf{BQP}^{\mathsf{search}\text{-}\mathsf{QMA}}$ will imply a separation between these forms of PKQM/PKE/DS and one-way puzzles/commitments.\\

\noindent {\bf On Minimal Assumptions for Quantum Cryptography.}
How hard is it to break quantum cryptography? This work strengthens evidence that it is at least $\mathsf{P}^{\#\mathsf{P}}$-hard to break one-way puzzles (and therefore also, quantum commitments). 

While the bound of $\mathsf{P}^{\#\mathsf{P}}$ is tight for one-way puzzles\footnote{One-way puzzles can be broken by a $\mathsf{P}^{\#\mathsf{P}}$ machine~\cite{CGGHLP}.}, it is not known to be tight for quantum commitments. In fact, recent work~\cite{LMW} demonstrated (relative to a random oracle) the existence of quantum commitments that remain secure against all efficient adversaries that make only a single query to an arbitrary Boolean oracle. 
They also conjecture that their single query lower bound extends to polynomially many queries, and therefore that there is {\em no} classical oracle $\cO$ such that $\mathsf{BQP}^{\cO}$ will break a commitment.
%, there is no other evidence (even relative to quantum oracles) that commitments can exist even when $\mathsf{P} = \mathsf{P}^{\#\mathsf{P}}$.
%However, there is still a small possibility that all quantum commitments may be broken by a $\mathsf{BQP}^{\#\mathsf{P}}$ machine.
Proving or disproving this conjecture remains an open problem -- in the small chance that $\mathsf{BQP}^{\#\mathsf{P}}$ does successfully break commitments, then our results would say that commitments also imply one-way puzzles (assuming the probability approximation conjectures, although we suspect that the conjectures may not be necessary just like the case of state puzzles). 
If their conjecture is true, then there is also the fascinating possibility that (computationally secure) quantum commitments can be constructed {\em unconditionally}, i.e. without the need for unproven assumptions. However, so far, unconditionally secure commitments are only known in models where participants have access to (inefficiently prepared) quantum auxiliary input~\cite{Qian24,MNY24}.\\

\noindent{\bf Beyond Cryptography.}
This work shows that BosonSampling and Random Circuit Sampling are not as ``useless'' as they are often made out to be. Our constructions use the ability to efficiently sample from these distributions in a crucial way, although we also need some additional quantum processing to finally obtain useful cryptography.
Our work also uncovers a connection between sampling-based quantum advantage and the complexity of decoding Hawking radiation. In more detail, Assumption \ref{con1} together with the mild assumption that $\mathsf{P}^\mathsf{\#\mathsf{P}} \not\subseteq \mathsf{BQP}/\mathsf{qpoly}$ implies the existence of hard-to-decode Hawking radiation.
This follows by combining this work with a theorem of~\cite{C:Brakerski23} demonstrating an equivalence between the existence of quantum commitments and the hardness of black-hole radiation decoding. 

\subsection{Open Problems}
We will now examine some remaining technical obstacles to gaining a complete understanding of hardness in quantum cryptography.
\begin{enumerate}
\item{\bf Concrete Instantiations of Other Microcrypt Primitives.}
An important open challenge is to base other quantum primitives like pseudorandom states and unitaries, digital signatures, quantum money, etc. on concrete, ideally well-studied, mathematical assumptions weaker than one-way functions. 
For example: can we prove, under quantum advantage conjectures, that the output of random circuits are pseudorandom states? Note that such a claim can only hold for specific circuit architectures: for instance, BosonSampling outputs are trivially distinguishable from random~\cite{AA14}. 
Other primitives such as quantum money mini-schemes will likely require different mathematical assumptions that may not be as hard as $\mathsf{P}^{\#\mathsf{P}}$, but also do not necessarily imply the existence of one-way functions.
\item {\bf Connections between Quantum Advantage and Cryptography.} 
Ours is not the first work to use tools developed in the context of quantum advantage to obtain evidence for quantum cryptography in Microcrypt. Previously,~\cite{KQST} built on the oracle separation of $\mathsf{BQP}$ and $\mathsf{PH}$~\cite{Raz-Tal} to demonstrate, {\em relative to an oracle}, that quantum commitments exist even when $\mathsf{P} = \mathsf{NP}$. This work develops similar connections without relying on oracles.
How far does this relationship between quantum advantage and cryptography in Microcrypt extend? For example, it may not be outrageous to imagine that the classical hardness of factoring implies a one-way puzzle, although we do not (yet) know how to prove this.

\item {\bf Fully Quantum Search Problems.} 
We now know how to build commitments from various classical-quantum search problems: one-way state generators~\cite{C:MorYam22,My22} -- where the challenge is a quantum state and the solution is a classical key -- imply commitments~\cite{KT24,BJ24}; and so do state puzzles, where the challenge is classical and the solution is a quantum state. These types of search problems are meaningful because they are often easily implied by other cryptographic primitives and protocols, and are much easier to cryptanalyze than decision problems. For example as demonstrated by this work, one-way puzzles form a useful link between conjectures in quantum advantage and the existence of quantum commitments.
A useful next step towards understanding the relationships between cryptographic primitives
is to study computational search problems where both the challenge and the solution are quantum states. 
\end{enumerate}

\section{Related Work}
\setulcolor{black}
We discuss related independent work below, in decreasing order of technical overlap.

Qian, Raizes and Zhandry~\cite{MRQZ} independently obtain commitments from state puzzles via completely different techniques\footnote{QRZ verbally announced their results, to certain private audiences, before us. 
However, we only learned that they proved an implication from state puzzles to commitments towards the completion of our manuscript. Techniques in both works were developed independently and the manuscripts were publicly posted concurrently.}.
Their techniques generalize to also imply commitments from a variant of state puzzles where the classical challenge is replaced by a clonable basis state\footnote{We suspect that our techniques
can also be extended (e.g., using efficient shadow tomography~\cite{HKP20}) to
yield one-way puzzles from such primitives, but we leave a formal exploration to future work.}. 
    On the other hand, our techniques yield a stronger theorem statement about state puzzles: we show that \ul{even weakly hard} state puzzles imply \ul{one-way puzzles, strongly hard state puzzles} and commitments. Using measurements to obtain a ``classical handle'' on quantum states is a technique that is unique to our work, and this allows us to obtain an implication to one-way puzzles as well as an amplification theorem for state puzzles. We also present candidates for state puzzles assuming $\#\mathsf{P}$-hardness and conjectures from quantum advantage. Other parts of our paper, including building cryptography from $\#\mathsf{P}$ hardness and well-studied conjectures, have no technical overlap with their work.

Two other concurrent and independent works connect quantum cryptography with meta-complexity~\cite{Eli-concurrent,morimae-another-concurrent}. Their techniques also establish an equivalence between the hardness of approximating probabilities with the existence of one-way puzzles (~\cite{morimae-another-concurrent} prove only one direction of the equivalence). The proof of our first result also establishes such an equivalence,
and this is the only part of our paper that overlaps with~\cite{Eli-concurrent,morimae-another-concurrent}. 
We note that in order to build puzzles from quantum advantage conjectures, we require (and obtain) a stronger implication: we build puzzles from the hardness of approximating probabilities upto $(1 \pm 1/\mathsf{poly})$ multiplicative approximation factors for arbitrary polynomials $\mathsf{poly}$,  whereas techniques in~\cite{Eli-concurrent,morimae-another-concurrent} appear to be limited to constant factor approximations only.
Additionally, unlike the assumptions we use, there is no independent evidence that the meta-complexity assumptions studied in~\cite{Eli-concurrent,morimae-another-concurrent} are separated from the existence of one-way functions.

Two additional works~\cite{mori-other,John-et-al} are thematically related, but do not have technical overlap with our work. The first,~\cite{mori-other}, proves an equivalence between interactive (inefficiently verifiable/sampling-based) quantum advantage and (only) {\em classically-secure} one-way puzzles. 
Classically secure puzzles are a new primitive introduced in~\cite{mori-other} and are {\em not} known to imply classical or quantum cryptographic primitives (such as commitments). Therefore,~\cite{mori-other} do not build useful cryptography from the existence of quantum advantage, or from conjectures in quantum advantage.
Additionally in this work, we observe that if one-way puzzles exist but one-way functions do not, this immediately implies advantage (see Appendix \ref{app:owpowf}). Combining this with prior work~\cite{MYadv2} obtaining interactive advantage from one-way functions, makes it easy to see why puzzles imply advantage.
The other work~\cite{John-et-al} introduces {\em new} hardness assumptions that have not been previously scrutinized, and uses these to build pseudoentangled states, quantum trapdoor functions and pseudorandom unitaries (on the other hand our goal is to rely on well-founded assumptions). Unlike our work, it is not clear whether their assumptions are separated from the existence of one-way functions.

In summary, our primary conceptual contribution -- building quantum cryptography from well studied assumptions weaker than one-way functions, and our primary technical contri\-bu\-tion -- converting state puzzles to a classical search problem by carefully replacing a $\#\mathsf{P}$ oracle, do not overlap with any other work.
%Their manuscript is under preparation, and we will update this paper to add a comparison section once we know more about their work.

\section{Technical Overview}
We now provide an overview of our techniques. 

Recall that our dream goal is to build quantum cryptography from the hardness of $\#\mathsf{P}$, for which computing permanents is a complete problem. An immediate obstacle to building cryptography from the hardness of computing permanents is that it appears difficult to efficiently sample random matrices together with their permanents. If such sampling were (quantumly) efficiently possible, we would be done: we would set our one-way puzzle to be the matrix, and the corresponding solution to its permanent. 

In the absence of such samplers, we turn to the rich literature on sampling based quantum advantage (e.g.,~\cite{shepherd2009temporally,BJS,AA11,bremneraverage,fujiicommuting,boixoetal,BFNV19,kondofocs,boulandnoise,krovi22,movassagh,Zlokapa2023}). This line of work obtains quantum advantage from $\#\mathsf{P}$ hardness by building on the following observations:
\begin{enumerate} 
\item There exist quantumly sampleable distributions $\cX$ such that $\Pr_{\cX}[x]$  -- i.e., the probability assigned by $\cX$ to any string $x$ -- equals the square of the permanent of a corresponding unitary matrix. 
Moreover, permanents are known to be $\#\mathsf{P}$-hard to compute on average, and even $\#\mathsf{P}$-hard to {\em approximate}, upto inverse polynomial relative error, in the worst case. 
\item If there existed a {\em classical sampler} that was able to sample {\em exactly} from $\cX$, then $\mathsf{P}^{\#\mathsf{P}} = \mathsf{BPP}^{\mathsf{NP}}$ which is highly implausible because it would collapse the polynomial heirarchy (by Toda's theorem). 

Why would the existence of an exact classical sampler imply that $\mathsf{P}^{\#\mathsf{P}} = \mathsf{BPP}^{\mathsf{NP}}$? This is because given any deterministic sampler $\cO$ for $\cX$, universal hashing makes it possible to approximate $\Pr_{\cX}[x]$ for every $x$ to within a multiplicative constant in $\mathsf{BPP}^{\mathsf{NP}^{\cO}}$~\cite{stockmeyer}. Since there is at least one $x$ for which approximating $\Pr_{\cX}[x]$ is $\#\mathsf{P}$-hard, this puts $\mathsf{P}^{\#\mathsf{P}} \subseteq \mathsf{BPP}^{\mathsf{NP}^{\cO}}$.
\end{enumerate}
Ruling out the possibility of classically {\em approximately} sampling from $\cX$ is not as straightforward. Suppose there exists a classical sampler that samples from a distribution $\cY$ such that $\SD(\cX,\cY) \leq \epsilon$ for some small constant $\epsilon$. The arguments above break down because this sampler may lead to large errors in approximating $\Pr_{\cX}[x]$ for certain $x$, and it is no longer possible to rely only on the {\em worst case} hardness of approximating $\Pr_{\cX}[x]$. What is done instead is that permanents are conjectured to be $\#\mathsf{P}$-hard to approximate even in the average case (see, e.g.,~\cite{AA11}). This conjecture, combined with a type of rerandomization or {\em hiding} property 
of the sampler implies that probabilities $\Pr_{\cX}[x]$ are $\#\mathsf{P}$-hard to {\em approximate on average for uniform choice of $x$}. 

Furthermore, $\Pr_{\cX}[x]$ are assumed to {\em anti-concentrate}, i.e. they must not be too small too often -- suppose that an overwhelming fraction of strings $x$ had $\Pr_{\cX}[x] < \frac{1}{2^{2n}}$, then a classical sampler could sample from a statistically close distribution while still assigning incorrect probabilities (say  $\frac{1}{2^{2n}}$) to an overwhelming fraction of the $x$ values.

In particular, the following assumption cleanly captures the  hardness implied by a variety of known sampling-based quantum advantage conjectures for RCS, BosonSampling, IQP, DQC, etc.

\newtheorem*{T1}{Assumption~\ref{con1}}

\newtheorem*{T2}{Assumption~\ref{asmptn1}}

\begin{T1}
[$\#\mathsf{P}$-Hardness of Approximating Probabilities]
There is a family of (uniform) efficiently sampleable distributions $\cC = \{\cC_n\}_{n\in\bbN}$ over quantum circuits $C$ where each $C$ has $n$-qubit outputs (and potentially additional junk qubits), such that there exist polynomials $p(\cdot)$ and $\gamma(\cdot)$ satisfying:
    \begin{enumerate}
        \item \textbf{Anticoncentration. }  For all large enough $n\in\bbN$$$\Prr_{\substack{C \leftarrow \cC_n\\x\leftarrow\bin^n}}\left[\Pr_C[x] \geq \frac{1}{(p(n)\cdot2^n)}\right] \geq \frac{1}{\gamma(n)}$$
        \item \textbf{Hardness of Approximating Probabilities. } For any oracle $\cO$ satisfying that for all large enough $n \in \bbN$, 
        \[
            \Prr_{\substack{C \leftarrow \cC_n\\x\leftarrow\bin^n}}\left[ \left|\cO(C, x) - \Pr_{C}[x]\right| \leq \frac{\Pr_{C}[x]}{p(n)} \right] %\leq 1 - \frac{1}{p(n)} 
            \geq \frac{1}{\gamma(n)}-\frac{1}{p(n)}
        \]
    \end{enumerate}
we have that $\mathsf{P}^{\#\mathsf{P}} \subseteq \mathsf{BPP}^{\cO}$. 
Here, $\Pr_{C}[x]$ denotes the probability of obtaining outcome $x$ when the output register of $C\ket{0}$ is measured in the standard basis.
\end{T1}

Assumption \ref{con1} yields a new route to obtaining one-way puzzles, as we describe next. For conceptual reasons, we will begin by reformulating the one-way puzzle primitive in terms of the hardness of post-selected sampling.

\paragraph{The Hardness of Post-Selected Sampling implies One-way Puzzles.}
For efficiently sampleable distribution $\cX$, consider the task of {\em post-selected} sampling from $\cX$. Namely, a challenger samples $x^* \leftarrow \cX$, $i \in [n]$, and outputs $x^*$ truncated to its first $i-1$ bits: $(x^*_{[1\ldots i-1]})$. The post-selected sampling task is to sample from $x \leftarrow \cX$ conditioned on their first $(i-1)$ bits of $x$ being $x^*_{[1\ldots i-1]}$.

Observe that the hardness of post-selected sampling immediately implies the existence of a (distributional) one-way puzzle:
the puzzle sampler samples $x = x_{[1\ldots n]} \leftarrow \cX$, $i \leftarrow [n]$ and outputs $\mathsf{puz} = i, x_{[1\ldots i-1]})$ and $\sol = (x_{[i\ldots n]})$.
Sampling from the distribution over  $\sol$ corresponding to a puzzle $\mathsf{puz}$ is exactly the task of post-selected sampling.
We will make use of this implication (together with the fact that distributional one-way puzzles imply one-way puzzles \cite{CGG24}) in the upcoming subsections.

Finally, we note that the hardness of post-selected sampling is related to the hardness of {\em universal extrapolation}~\cite{FOCS:ImpLev90}, and the hardness of approximating probabilities is related to the hardness of {\em universal approximation}~\cite{OstWig,FOCS:ImpLev90}. The existence of universal extrapolators and universal approximators for classically sampleable distributions is known to be equivalent to the existence of one-way functions~\cite{FOCS:ImpLev90}. This work will implicitly show and exploit analogous equivalences between one-way puzzles and the hardness of universal approximation/extrapolation for quantumly sampled distributions.

\subsection{One-way Puzzles from the Hardness of Approximating Probabilities} 
Our first key insight is that an {\em exact} post selected sampler makes it easy to compute probabilities of strings, upto inverse polynomial multiplcative error. 
In more detail, given an {\em exact} post-selected sampler $\sS_{\cX}$ for $\cX$, there is a polynomial-time machine $\sR$ parameterized by a polynomial $p(\cdot)$ -- that with oracle access to $\sS_{\cX}$ and on input a string $v \in \{0,1\}^n$ -- outputs an approximation of $\Pr_{\cX}[v]$ that is accurate upto inverse polynomial relative error.\\

\underline{$\sR^{\sS_\cX}(v):$}
\begin{itemize}
\item Parse $v$ as a sequence of bits $v[1] v[2] \ldots v[n]$.
\item Set $\mathsf{prefix} = \bot$, and set $\mathsf{pr} = 1$.
\item For $i \in [n]$, do the following:
\begin{itemize}
    \item Run $\sS$ on input $\mathsf{prefix}$ $p(n)$ times, and let $\mathsf{pr}_{i}$ denote the fraction of times that the first bit of the output is $v[i]$. 
    \item Set $\mathsf{pr} = \mathsf{pr} \cdot \mathsf{pr}_{i}$, and $\mathsf{prefix} = \mathsf{prefix}||v[i]$.
\end{itemize}
\item Output $\mathsf{pr}$.
\end{itemize}
By Chernoff bounds, as long as for each prefix (denoted by $\mathsf{prefix}_{v}$) of $v$, $\Pr_{\cX}[\mathsf{prefix}_v] > \frac{1}{\mathsf{q}(n)} \cdot \frac{1}{2^{|\mathsf{prefix}_v|}}$ for some fixed polynomial $q(\cdot) < p(\cdot)$, the reduction $\sR$ above outputs an approximation to $Pr_{\cX}[v]$ with small inverse polynomial relative error.
On the other hand, if $\Pr_{\cX}[v] \ll \frac{1}{\mathsf{q}(n)} \cdot \frac{1}{2^n}$ the reduction can fail, so we do not get a good approximation of probabilities in the worst case, and are only able to contradict {\em average-case hardness} of approximating $\Pr_{\cX}[x]$ (Assumption \ref{asmptn1}). In the actual analysis, we crucially use the fact that challenges $v$ that are sent to $\sR$, are sampled according to the distribution $\cX$, and therefore strings $v$ for which $\Pr_{\cX}[v] \ll \frac{1}{\mathsf{q}(n)} \cdot \frac{1}{2^n}$ are sampled with relatively low probability.

Note that the analysis described so far applies if we had a {\em perfect} post-selected sampler. But in the definition of a (distributional) one-way puzzle, not only do we want the hardness of sampling {\em exactly} from the target distribution, we also need it to be hard for adversaries to sample from a distribution $\cD'$ that is inverse-polynomially close (in statistical distance) to the target distribution.
This gives an adversary/post-selected sampler the flexibility to introduce arbitrary errors in sampling, while maintaining overall low statistical distance from the target distribution.
However, note that this latter requirement enforces that the adversary can only introduce high relative errors on values $v$ (and their prefixes) for which $\Pr_{\cX}[v]$ is low.
We use this observation together with a more sophisticated analysis to show that the reduction $\sR$ described above will still output a low relative error approximation to $\Pr_{\cX}[x]$ on average, as long as $\Pr_{\cX}[x]$ is not too small (i.e., $\Pr_{\cX}[v] \geq \frac{1}{\mathsf{q}(n)} \cdot \frac{1}{2^n}$).

This, combined with Assumption \ref{con1} yields a quantum polynomial time machine that solves $\#\mathsf{P}$-hard problems, contradicting the assumption that $\mathsf{P}^{\#\mathsf{P}} \not\subseteq \mathsf{(io)BQP}/\mathsf{qpoly}$.
This completes a sketch of our proof that Assumption \ref{con1} implies one-way puzzles.
Our actual proof first further abstracts out the properties we need from Assumption \ref{con1} along with $\mathsf{P}^{\#\mathsf{P}} \not\subseteq \mathsf{(io)BQP}/\mathsf{qpoly}$ into a different assumption (described below). The analysis above is then applied to prove that Assumption \ref{asmptn1} implies one-way puzzles.

\begin{T2}
 [Native Approximation Hardness]
    There exists a family of (uniform) efficiently sampleable distributions $\cD = \{\cD_n\}_{n\in\bbN}$ over classical strings 
    such that
    there exists a polynomial $p(\cdot)$ such that 
    for all QPT $\cA = \{\cA_\secpar\}_{\secpar \in \bbN}$, every (non-uniform, quantum) advice ensemble $\ket{\tau} = \{\ket{\tau_n}\}_{n \in \mathbb{N}}$, and large enough $n \in \bbN$, 
\[
    \Prr_{x\leftarrow \cD_n}\left[ \left( \left|\cA(\ket{\tau},x) - \Pr_{\cD_n}[x]\right| > \frac{\Pr_{\cD_n}[x]}{p(n)} \right) \right] %\leq 1 - \frac{1}{p(n)} 
    \geq \frac{1}{p(n)}
\]
\end{T2}

We point out one important technical issue that arises from the mismatch between the complexity-theoretic style of Assumption \ref{con1} and the cryptographic style of Assumption \ref{asmptn1}.
We would like to use a $\mathsf{BQP}/\mathsf{qpoly}$ adversary that contradicts Assumption \ref{asmptn1} to show that $\mathsf{P}^{\#\mathsf{P}} \subseteq \mathsf{(io)BQP}/\mathsf{qpoly}$, as long as Assumption \ref{con1} holds.
To contradict Assumption \ref{asmptn1}, the $\mathsf{BQP}/\mathsf{qpoly}$ adversary only needs to succeed in approximating probabilities on infinitely many $n \in \bbN$. On the other hand, Assumption \ref{con1} converts any adversary that approximates probabilities {\em for every large enough $n \in \bbN$} into an oracle that can solve $\#\mathsf{P}$-hard problems for large enough $n$. It is at first unclear why these two statements can be put together to obtain the implication we want, i.e., $\mathsf{P}^{\#\mathsf{P}} \subseteq \mathsf{(io)BQP}/\mathsf{qpoly}$. But on opening things up, we observe that Assumption 1 guarantees a black-box reduction that on input length $n$, queries an approximator adversary on polynomially many input lengths, each polynomially related to $n$. By modifying our distribution for Assumption \ref{asmptn1} to output samples for each of these input lengths, we ensure that the approximator adversary answers all of the reductions queries correctly, infinitely often. This allows us to conclude that $\mathsf{P}^{\#\mathsf{P}} \subseteq \mathsf{(io)BQP}/\mathsf{qpoly}$, as desired.

\paragraph{An Equivalence between Puzzles and the Hardness of Approximating Probabilities.} 
We also prove a reverse implication, i.e., the existence of one-way puzzles implies that Assumption \ref{asmptn1} holds. 
To prove this, we would like to use one-way puzzles to define a distribution $\cD$ such that we can invert the puzzle given a probability estimator for strings in the support of $\cD$.

Defining the distribution to be the same as the output of the one-way puzzle sampler does not work. This is because even given an estimator that perfectly computes probabilities $100\%$ of the time, it is unclear how to find a key $k$ corresponding to a puzzle $s$ by using an oracle that on input any $(s||k)$ outputs $\Pr_{\mathsf{Samp}}[(s||k)]$.

Instead, following~\cite{CGGHLP}, we will aim to perform a binary search for $k$, given $s$.
Indeed as a first attempt, our distribution $\cD$ will be defined as follows: run the puzzle sampler $\mathsf{Samp}$ to obtain puzzle and key $(s,k)$, sample $i \leftarrow [n]$ and then output $(i,s,k_{1\ldots i})$ where $k_{1\ldots i}$ denotes a truncation of $k$ to the first $i$ bits. 

Now given $s$, it may at first appear easy to search for consecutive bits of $k$ using a probability estimator for $\cD$: first, run the estimator on $(1,s,0)$ to obtain $\mathsf{pr}_0$ and $(1, s, 1)$ to obtain $\mathsf{pr}_1$. Pick bit $b$ for which $\mathsf{pr}_{b} \geq \mathsf{pr}_{1-b}$, and set the first bit of $k$, i.e. $k_1$ to $b$. Next, run the estimator on $(1,s,k_10)$ to obtain $\mathsf{pr}_{k_1,0}$ and $(1, s, k_11)$ to obtain $\mathsf{pr}_{k_1,1}$. Pick bit $b'$ for which $\mathsf{pr}_{k_1,b'} \geq \mathsf{pr}_{k_1,1-b'}$, set the second bit of $k$ to $b'$, and continue the process to iteratively find a key $k$. 
Indeed, this works if the estimator always returns correct probabilities, even on strings that are not in support of the distribution.

However, our assumption only requires the adversarial probability approximator to return (approximately) correct probabilities on strings that have non-zero probability mass. The construction and analysis above breaks down given such an approximator: in particular, we began by running the estimator on $(1, s, 0)$ and $(1, s, 1)$ to see if keys for $s$ had a higher probability of beginning with $0$ or with $1$. Note that if every actual preimage key of $s$ had first bit $0$, the point $(1, s, 1)$ would be assigned zero probability mass in the distribution, meaning the adversary may return arbitrary values on $(1, s, 1)$ to confuse our search algorithm, without any penalty. The same problem can arise even if we have extremely low, but non-zero probability mass on certain points.
We resolve this by modifying the distribution, as we describe next.

We will run the puzzle sampler $\mathsf{Samp}$ to obtain $(s, k)$, sample $i \leftarrow [n]$, and sample bit $c \leftarrow \{0,1\}$.
If $c = 0$, set $\widetilde{k} = k_{1 \ldots i}$ and if $c=1$ set $\widetilde{k} = k_{1 \ldots {i-1}}||\beta$ for $\beta \leftarrow \{0,1\}$.
Output $(i,s,\widetilde{k})$. Intuitively adding some probability mass to both $0$ and $1$ on the last bit, we force an adversary to pay a penalty in statistical distance whenever they send the binary search algorithm described above down an incorrect path.
The analysis requires some additional care to account for various types of errors, but we are able to conclude that any probability approximator for the above distribution implies an inverter for the one-way puzzle.

We refer the reader to Section \ref{sec:dist} for additional details and a formal proof of the equivalence.

\subsection{One-way Puzzles from the Hardness of Pseudo-Deterministic Sampling} Next, we discuss why the existence of a distribution $\cX$ that is quantumly efficiently sampleable, but is not efficiently {\em pseudo-deterministically} sampleable implies the existence of one-way puzzles. 

An $\epsilon$-pseudo-deterministic sampler is a QPT machine $\cQ$ along with (non-uniform, quantum) advice ensemble $\ket{\tau} = \{\ket{\tau_n}\}_{n \in \mathbb{N}}$, that satisfies the following property for all large enough $n \in\bbN$: for at least $1 - \epsilon(n)$ fraction of random strings $r \in \{0,1\}^n$,  
    $$\exists y \text{ s.t. }\Pr\left[\cQ(\ket{\tau};r) = y \right] = 1- \negl(n)$$

A distribution $\cX$ is $\epsilon$-{\em pseudo-deterministically} sampleable with $\epsilon(\cdot)$ error if there exists an efficient quantum $\epsilon$-pseudo-deterministic sampler that outputs distribution $\cD$ such that $\SD(\cX, \cD) \leq \epsilon(n)$. 

We prove that if one-way puzzles do not exist, then for every polynomial $q(\cdot)$, every quantumly sampleable distribution is also $\frac{1}{q(n)}$-pseudo-deterministicaly sampleable.

Our key insight here is that if post-selected sampling is easy, then every distribution can be pseudo-deterministically sampled by using the post-selected sampler to approximate probabilities. We describe a reduction, parameterized by a polynomial $p(\cdot)$ that with access to  post-selected sampler $\cS_\cX$ for $\cX$, samples pseudo-deterministically from $\cX$.\\

\underline{$\sR^{\sS_\cX}(r):$}
\begin{itemize}
\item Parse $r$ as a sequence of blocks of randomness $r[1] r[2] \ldots r[n]$, each block of size $n$ bits. 
\item Set $\mathsf{prefix} = \bot$.
\item For $i \in [1,n]$, do the following:
\begin{itemize}
    \item Run $\sS$ on input $\mathsf{prefix}$ $p(n)$ times, and let $\mathsf{pr}_{i}$ denote the fraction of times that the first bit of the output is $0$.
    \item If $r[i] \leq \mathsf{pr}_i \cdot 2^{n}$, set $x[i] = 0$. Otherwise set $x[i] = 1$.
    \item Set $\mathsf{prefix} = \mathsf{prefix}||x[i]$.
\end{itemize}
\item Output $\mathsf{prefix}$.
\end{itemize}
This insight can be turned into a formal proof that for every polynomial $q(\cdot)$, there is a polynomial $p(\cdot)$ such that $\cR$ parameterized with $p(\cdot)$ samples $\frac{1}{q(n)}$-pseudo-deterministically from $\cX$.
Using analysis that is similar to the previous section, we prove that the next-bit probabilities computed by $\cR$ are approximately correct (on average). This lets us show that except for some bad choices of randomness (which are close to actual probability thresholds), the output of $\cR$ is (almost) deterministic. Moreover, the distribution output by $\cR$ has inverse polynomial statistical distance from $\cX$ as long as $\cS$ is a perfect post-selected sampler. 

As before, the non-existence of one-way puzzles only guarantees that the adversary can sample from a distribution that is (arbitrarily) inverse-polynomially close to the post-selected distribution. With some more care, we are able to show that the reduction $\cR$ described above, even given access to such an adversary, will pseudodeterministically sample from $\cX$. This completes an overview of our technique, and we encourage the reader to see Section \ref{sec:pseudo} for a complete proof.

\subsection{One-Way Puzzles from Hard State Synthesis Problems}
Finally, we use the conceptual equivalence between the existence of one-way puzzles and the hardness of approximating probabilities to demonstrate an equivalence between one-way puzzles and a natural notion of hard state synthesis problems, which we call state puzzles. 

A state puzzle consists of a QPT sampler $\cG$ that outputs pairs $(s, \ket{\psi_s})$ such that given $s$, it is quantum computationally infeasible to output a state that overlaps noticeably with $\ket{\psi_s}$.\\

\noindent{\bf State Puzzles with Real, Positive Amplitudes.}
Consider any state $\ket{\psi} = \sum_x \alpha_x \ket{x}$ where all amplitudes $\alpha_x$ are real and positive. Measuring this state results in a distribution over computational basis terms, i.e. a distribution $D_\psi$ where $\Pr_{D_\psi}[x] = |\alpha_x|^2$.
Intuitively, the hardness of computing probabilities in $D_\psi$ should be related to the hardness of synthesizing $\ket{\psi}$. 
We use this conceptual connection to obtain an equivalence between state puzzles and one-way puzzles, as follows.

Let us begin by recalling a procedure due Aaronson~\cite{aaronsonsynth} that calls a classical ($\mathsf{PP}$) oracle to synthesize a state. Let $m(\cdot)$ be a large enough polynomial, and $\cO$ be an oracle that on input $(x,i)$ outputs the value $|\alpha_{x_{1\ldots i}\|1}|^2/(|\alpha_{x_{1\ldots i}\|0}|^2+|\alpha_{x_{1\ldots i}\|1}|^2)$, where $x_{1\ldots i}\|b$ denotes $x$ truncated to its first $i$ bits then concatenated with $b$, and for any $\ell \leq [n], t\in\bin^\ell$, $|\alpha_{t}|^2 = \sum_{z\in \{0,1\}^{n-\ell}} |\alpha_{t||z}|^2$. 
Intuitively, the oracle outputs the probability that a string sampled from $D_\psi$ will have $1$ as its $(i+1)^{th}$ bit given that the first $i$ bits are $x_{1\ldots i}$. Call this probability $p_{1|x_{1\ldots i}}$.

\begin{itemize}
    \item Set $i = 0$. Initialize register $\mathsf{A}$ to $\ket{0^n,i}$ and initialize an auxiliary register $\mathsf{B}$ to $\ket{0^{m(n)}}$.
    \item While $i \leq n$,
    \begin{itemize}
        \item Query the oracle $\cO$ on the contents of the $\mathsf{A}$ register  and CNOT the result on the $\mathsf{B}$ register. Denote the the resulting state by $$\sum_{x \in \{0,1\}^{i-1}} \beta_x \ket{x,0^{n-i+1},i}_{\mathsf{A}} \ket{p_{1|x_{1\ldots i}}}_{\mathsf{B}}.$$
        \item Apply an efficient unitary to the previous state to obtain state 
        \[ \sum_{x \in \{0,1\}^{i}} \beta_x \ket{x} ( \beta_{x0}\ket{0} + \beta_{x1} \ket{1})  \ket{0^{n-i},i}_{\mathsf{A}}\ket{p_{1|x_{1\ldots i}}}_{\mathsf{B}}.
       \] 
       where
      $\beta_{x0} = \sqrt{1-p_{1|x_{1\ldots i}}}$ and 
      $\beta_{x1} = \sqrt{p_{1|x_{1\ldots i}}}$.
    \item Use a related call to uncompute the auxiliary information and obtain 
        $$\sum_{x \in \{0,1\}^{i+1}} \beta_x \ket{x,0^{n-i},i}_{\mathsf{A}}\ket{0^{m(n)}}_{\mathsf{B}}$$ 
        \item Set $i = i +1$, also update the last part of $\mathsf{A}$ to $\ket{i+1}$.
     \end{itemize}
\end{itemize}
It is straightforward to observe that this process results in a state close to the desired state, upto precision errors in the probabilities.
Our goal will be to simulate this procedure with access to an adversary breaking an appropriately defined distributional one-way puzzle.

The one-way puzzle sampler runs the state puzzle sampler $\cG$ to obtain $(s, \ket{\psi_s})$. 
It then measures $\ket{\psi_s}$ in the standard basis to obtain string $x$,  and samples $i \leftarrow [0,n-1]$. Finally, it outputs $(s, i, x_{1 \ldots i})$ as the puzzle, and $x_{i+1}$ as the key.

Assume there exists a perfect distributional inverter for this one-way puzzle. On input $s$, the reduction queries the one-way puzzle inverter iteratively for $i \in [n]$, starting with a state initialized to $\ket{0^n, 0} \otimes \ket{0^{m(n)}}$. At each step, the reduction queries the one-way puzzle inverter to obtain (coherently) for each basis string $x_{1 \ldots i}$, multiple samples of the next bit $x_{i+1}$ distributed according to the target distribution. These samples are then used to obtain an estimate $p_{x_i}$ of the probability $|\alpha_{x_i||0}|^2/(|\alpha_{x_i||0}|^2+\alpha_{x_i||1}|^2)$, which is then used to build the state iteratively for each $i$
by applying the same unitary as the one in Aaronson's algorithm described above. 

Here, the one-way puzzle inverter may entangle its output on every query with arbitrary junk states on an auxiliary register; and we need to be able to uncompute these junk states. We cannot apply the standard trick of CNOT-ing our probability estimates on a fresh register and uncomputing, since the CNOT will end up disturbing the adversary's state and uncomputing may not remove junk.
However, since the probability estimate $p_{1|x_{1 \ldots i}}$ is guaranteed to be close to the actual probability $|\alpha_{x_{1\ldots i}||0}|^2/(|\alpha_{x_{1 \ldots i}||0}|^2+\alpha_{x_{1 \ldots i}||1}|^2)$, we are able to use this estimate to compute each step of the synthesis algorithm (i.e., use the probability estimate to insert appropriate amplitudes on $\ket{x_{1\ldots i}0}$ and $\ket{x_{1\ldots i}1}$ coherently for each $x_{1\ldots i}$) and remove junk at the end of each step. 

This process applied iteratively for $i \in [0,n-1]$ yields a state whose trace distance from $\ket{\psi}$ is at most $1/q(n)$ for arbitrary polynomial $q(\cdot)$, as long as the one-way puzzle inverter is $1/p(n)$-close to the target distribution for some polynomial $p(\cdot)$ related to $q(\cdot)$.\\

\noindent {\bf Recovering Phase Information.}
The technique described above is limited to states with real, positive amplitudes. We need to work harder when the states to be synthesized carry non-trivial phase information. In particular, the one-way puzzle cannot be based only on measuring the state $\ket{\psi_s}$ in the standard basis.
Instead, our one-way puzzle will be obtained by randomly choosing to either measure the state in the standard basis as before, or measuring phase information. 

Given any state puzzle sampler $\cG$, the one-way puzzle sampler does the following.

\begin{itemize}
    \item Obtain $(s, \ket{\zeta_s}) \leftarrow \cG(1^n)$.
    \item Sample a random 2-design $C$ and let $\ket{\psi_s} = C(\ket{\zeta_s})$.
    \item Sample $c \leftarrow \{0,1\}$.
    \item If $c = 0$, then as before, measure $\ket{\psi_s}$ in the standard basis to obtain string $x$. Sample $i \leftarrow [0,n-1]$. Output $(s,C,c,i,x_{1\ldots i})$ as the puzzle and $x_{i+1}$ as the key, where $x_i$ denotes the first $i$ bits of $x$ and $x[i+1]$ denotes the $(i+1)^{th}$ bit of $x$.
    \item If $c = 1$, 
    choose a two-to-one function $f$ defined by a random shift $\Delta$, i.e. $f(x) = f(x \oplus \Delta)$, then apply $f$ to the register containing $\ket{\psi_s}$ and measure the output to obtain a residual state of the form $$\left(\cos (\theta/2) \ket{x_0} + \sin (\theta/2) e^{-i\phi}\ket{x_1}\right)_{\mathsf{A}} \otimes \ket{f, f(x_0)}_\mathsf{B},$$
    for some $x_0, x_1 = x_0 \oplus \Delta$, and some $\theta \in [0,\pi), \phi \in [0,2\pi)$.
    Next sample bit $d \leftarrow \{0,1\}$, and
    \begin{itemize}
        \item If $d = 0$, measure the $\mathsf{A}$ register in basis $(\ket{x_0} + \ket{x_1}, \ket{x_0} - \ket{x_1})$ to obtain bit $z$. 
        \item If $d = 1$, measure the $\mathsf{A}$ register in basis $(\ket{x_0} + i \ket{x_1}, \ket{x_0} - i \ket{x_1})$ to obtain bit $z$. 
    \end{itemize}
    Output $(s,C,c,x_0,x_1,d)$ as the puzzle and $z$ as the key.
\end{itemize}

Denote by $\ket{\widetilde{\psi}_s}$ the state that corresponds to  $\ket{\psi_s}$ with its phase information removed. That is, 
$$\ket{\widetilde{\psi}_s} = \sum_x a_x \ket{x}$$ and 
$$\ket{\psi_s} = \sum_x \alpha_x \ket{x}$$ 
where every $a_x$ is real and positive and
$\alpha_x = a_x e^{i \phi_x}$ for
$\phi_x \in [0,2\pi)$. 

As already described above, a perfect distributional puzzle inverter for the corresponding input can be queried on $c=0$ to synthesize a state close to $\ket{\widetilde{\psi}_s}$. We discuss how the puzzle inverter queried on $c=1$ can be used to find and insert phases $e^{i \phi_x}$ on every basis term $\ket{x}$ in $\ket{\widetilde{\psi}_s}$.

Recall that when $c=1$, the puzzle is generated by applying a two-to-one function and measuring its output, which results in state $$\ket{\psi}_{f,x_0} :=(\cos (\theta/2) \ket{x_0} + \sin (\theta/2) e^{-i\phi}\ket{x_1})_{\mathsf{A}} \otimes f, f(x_0)_\mathsf{B}.$$
Measuring $\ket{\psi}_{f,x_0}$ in basis 
$$\ket{x_0} + \ket{x_1}, \ket{x_0} - \ket{x_1}$$
results in outcome $\ket{x_0} + \ket{x_1}$ with probability $(1+\sin \theta \cos \phi)/{2}$; and in basis 
$$\ket{x_0} + i\ket{x_1}, \ket{x_0} - i\ket{x_1}$$
results in $\ket{x_0} + i\ket{x_1}$ with probability $(1+\sin \theta \sin \phi)/{2}$.

The puzzle itself is $(s, C, c, x_0, x_1, d)$ where $d \leftarrow \{0,1\}$. If $d=0$, the key is the outcome of measuring $\ket{\psi}_{f,x_0}$ in 
the first basis,
which is $0$ with probability $(1+\sin \theta \cos \phi)/{2}$ and $1$ otherwise. If $d=1$, the key is the outcome of measuring $\ket{\psi}_{f,x_0}$ in 
the second basis, which is $0$ with probability $(1+\sin \theta \sin \phi)/{2}$ and $1$ otherwise.

The unifying technique in this work is to use a one-way puzzle inverter to approximate probabilities, and that is exactly what we will do at this point. For fixed $f$ and $x_0$, we will use the one-way puzzle inverter to estimate the probability values $(1+\sin \theta \cos \phi)/{2}$ and $(1+\sin \theta \sin \phi)/{2}$ corresponding to the state $\ket{\psi}_{f, x_0}$; which will then help us approximate the relative phase $e^{i\phi}$ between the basis terms $\ket{x_0}$ and $\ket{x_1}$ in $\ket{\psi}$. Here,  $x_1 = x_0 \oplus \Delta$ for $\Delta$ indicating the shift chosen by function $f$.
This gives us a way to use the puzzle inverter to compute the relative phase between pairs of terms in $\ket{\psi_s}$. Next, we would like to  ``insert'' the resulting phases in the state $\ket{\widetilde{\psi}_s}$ to recover our state $\ket{\psi_s}$.

For this, let us first imagine sampling and fixing a uniformly random anchor $y \in \{0,1\}^n$.
We will then compute the phase on every standard basis term $\ket{x}$ in $\ket{\psi_s}$, relative to $\ket{y}$. Namely, we coherently estimate the relative phases $\phi_{x_0,y}$ for all $\ket{\psi}_{x_0, f}$ -- where $f$ applies the shift $\Delta = x_0 \oplus y$. 
This yields the state
$$\ket{\widetilde{\psi}_s} = \sum_x \alpha_x \ket{x} \ket{\widetilde{\phi}_{xy}}$$
which can be transformed into 
$$\ket{\widetilde{\psi}_s} = \sum_x \alpha_x e^{i\widetilde{\phi}_{xy}}\ket{x}$$ by applying an efficient unitary that applies the phase and then uncomputes $\widetilde{\phi}_{xy}$.
As long as the estimates $\widetilde{\phi}_{xy}$ were approximately correct, this state is close (upto global phase) to the state $\ket{\widetilde{\psi}_s}$.\\

\noindent {\bf On Errors in Phase Estimation.} 
Note that the choice of anchor $y$ in the process above affects errors: for instance, if $\langle y | \psi_s \rangle = 0$, the puzzle inverter is allowed to return arbitrary values that may be completely uncorrelated with actual phases in $\ket{\psi_s}$. So if we picked an anchor $y$ for which $\langle y | \psi_s \rangle = 0$, we could end up synthesizing a state that is orthogonal to $\ket{\psi_s}$. The same problem persists even for anchors $y$ for which $\langle y | \psi_s \rangle$ is extremely small, since the puzzle inverter is not perfect, and is allowed to sample from a distribution that has inverse polynomial statistical distance from the target distribution.
To synthesize a state close to $\ket{\psi_s}$ in the presence of these errors, we sample our anchor $y$ by measuring the state $\ket{\widetilde{\psi}_s}$ in the standard basis, which outputs $y$ proportionally to the  probability mass of $\ket{y}$ in $\ket{\psi_s}$.

The accuracy of the phase estimate $\widetilde{\phi}_{xy}$ depends on the relative probability mass on $\ket{x}$ and $\ket{y}$ in the state. 
In particular, to meaningfully recover an estimate of the  phase $\widetilde{\phi}_{xy}$ with small relative error, we require the ratio $\alpha_x/\alpha_y$ to be at most $p(n)$ for some polynomial $p(\cdot)$. This is where the Clifford operator helps: since $\ket{\psi_s}$ was obtained by applying a random Clifford operator to $\ket{\zeta_s}$, we can use properties of $2$-designs along with a careful step-wise Chebyshev bound to argue that the total probability mass on basis terms $\ket{y}$ for which $|\alpha_y|^2 > p(n) \cdot 2^n$ for some polynomial $p(\cdot)$, is less than $1/q(n)$ for a related polynomial $q(\cdot)$.
This allows us to condition our analysis on obtaining  anchors $y$ for which $|\alpha_y|^2 \leq p(n) \cdot 2^n$.
With some additional care in the analysis, we are able to bound the error in reconstructing the state $\ket{\psi_s}$ as a function of the error allowed to the one-way puzzle inverter. In particular, we can reconstruct $\ket{\psi_s}$ to arbitrary small inverse polynomial error ${1}/{p_1(n)}$ as long as the one-way puzzle inverter samples from a distribution that is at most ${1}/{p_2(n)}$-far from the target distribution, for a related polynomial $p_2(\cdot)$.

This gives us an equivalence between weak state puzzles and distributional one-way puzzles.
Noting that distributional one-way puzzles are equivalent to (strong) one-way puzzles, this implies an equivalence between weak state puzzles and one-way puzzles.
In the following subsection, we describe why one-way puzzles imply state puzzles (with pure state secrets), thereby yielding an amplification theorem for state puzzles.\\

\noindent{\bf One-way Puzzles Imply (Strong) State Puzzles.}
By definition, one-way puzzles imply a form of state puzzles where the state $\ket{\psi_s}$ is replaced with a mixture over classical keys of a one-way puzzle. By purifying this mixture, we obtain a strong state puzzle with pure states. Combined with the results described above, this shows that weak state puzzles are equivalent to strong state puzzles.
We refer the reader to Section \ref{sec:owpsp} for a formal statement of the equivalence, and proofs of results related to state puzzles.

\section{Preliminaries}
In this section, we discuss some notation and preliminary information, including definitions, that will be useful in the rest of the exposition.

\subsection{Notation and Conventions}

We write $\negl(\cdot)$ to denote any \emph{negligible} function, which is a function $f$ such that for every constant $c \in \mathbb{N}$ there exists $N \in \mathbb{N}$ such that for all $n > N$, $f(n) < n^{-c}$.
% For any $k$, we will denote by $U_k$ the uniform distribution supported on $k$ bits.
We will use $\mathsf{SD}(A,B)$ to denote the statistical distance between (classical) distributions $A$ and $B$.\\

\noindent{\bf Quantum conventions.} A register $\sX$ is a named Hilbert space $\bbC^{2^n}$. A pure state on register $\sX$ is a unit vector $\ket{\psi} \in \bbC^{2^n}$, and we say that $\ket{\psi}$ consists of $n$ qubits. A mixed state on register $\sX$ is described by a density matrix $\rho \in \bbC^{2^n \times 2^n}$, which is a positive semi-definite Hermitian operator with trace 1. 

A \emph{quantum operation} $F$ is a completely-positive trace-preserving (CPTP) map from a register $\sX$ to a register $\sY$, which in general may have different dimensions. That is, on input a density matrix $\rho$, the operation $F$ produces $F(\rho) = \tau$ a mixed state on register $\sY$.
A \emph{unitary} $U: \sX \to \sX$ is a special case of a quantum operation that satisfies $U^\dagger U = U U^\dagger = \bbI^{\sX}$, where $\bbI^{\sX}$ is the identity matrix on register $\sX$. A \emph{projector} $\Pi$ is a Hermitian operator such that $\Pi^2 = \Pi$, and a \emph{projective measurement} is a collection of projectors $\{\Pi_i\}_i$ such that $\sum_i \Pi_i = \bbI$.

We say a quantum circuit $C$ outputs strings in $\bin^n$ if $C$ acts on $\ket{0}$ to produce an $n$-qubit output register (potentially along with a junk register). The output of the circuit is the outcome of measuring the output register of $C\ket{0}$ in the computational basis.

\begin{theorem}[Additive Chernoff Bound]\label{thm:chernoff-additive} For every $i \in [n]$, let $X_i$ be an independent Bernoulli random variable that takes value $1$ with probability $p$. 
Let $X:= \sum_i X_i/n$. Then for $\delta>0$:
\[
\Pr[|X - p| \geq \delta/\sqrt{n}] \leq 2e^{-\delta^2}
\]
\end{theorem}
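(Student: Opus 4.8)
The plan is to obtain this as the special case $\epsilon = \delta/\sqrt{n}$ of Hoeffding's inequality, via the standard exponential-moment (Chernoff--Bernstein) method. Write $S := nX = \sum_{i} X_i$, so the target event is $|S - np| \geq \delta\sqrt{n}$. First I would split it into the two one-sided tails $S - np \geq \delta\sqrt n$ and $S - np \leq -\delta\sqrt n$ and bound each by $e^{-\delta^2}$; a union bound then yields the factor $2$. For the upper tail, Markov's inequality applied to the nonnegative random variable $e^{t(S-np)}$ gives, for every $t > 0$,
\[
\Pr\!\left[S - np \geq \delta\sqrt n\right] \;\leq\; e^{-t\delta\sqrt n}\,\E\!\left[e^{t(S-np)}\right] \;=\; e^{-t\delta\sqrt n}\prod_{i=1}^{n}\E\!\left[e^{t(X_i - p)}\right],
\]
using independence of the $X_i$; the lower tail is handled identically with $-t$ in place of $t$.

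Next I would bound each factor. Since $X_i - p$ is mean-zero and supported in the interval $[-p,\,1-p]$, which has length $1$, Hoeffding's lemma gives $\E[e^{t(X_i-p)}] \leq e^{t^2/8}$, hence the product is at most $e^{nt^2/8}$ and
\[
\Pr\!\left[S - np \geq \delta\sqrt n\right] \;\leq\; e^{-t\delta\sqrt n + nt^2/8}.
\]
Optimizing the exponent over $t$ (the minimizer is $t = 4\delta/\sqrt n$) makes the right-hand side $e^{-2\delta^2}$. The same bound holds for the lower tail, so altogether $\Pr[|X - p| \geq \delta/\sqrt n] \leq 2e^{-2\delta^2} \leq 2e^{-\delta^2}$, which is the claimed inequality (in fact slightly stronger, which is all the theorem needs).

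There is essentially no obstacle in this argument; the only genuinely non-mechanical ingredient is Hoeffding's lemma itself. If one wants the excerpt to be self-contained, I would include its short proof: set $\psi(t) := \log\E[e^{t(X_i-p)}]$, observe $\psi(0) = \psi'(0) = 0$, and note that $\psi''(t)$ equals the variance of a random variable obtained by exponentially tilting $X_i - p$, which is supported in an interval of length $1$ and therefore has variance at most $1/4$; integrating twice gives $\psi(t) \leq t^2/8$. Alternatively, since only the weak constant $e^{-\delta^2}$ is required, one can simply cite the additive Chernoff/Hoeffding bound from any standard reference on concentration inequalities and substitute $\epsilon = \delta/\sqrt n$.
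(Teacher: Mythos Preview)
Your proof is correct and is the standard Hoeffding/exponential-moment argument. The paper does not actually prove this theorem: it is stated in the preliminaries as a known concentration inequality and invoked without proof, so there is no ``paper's own proof'' to compare against. Your derivation (and the remark that one in fact gets the stronger $2e^{-2\delta^2}$) is exactly what one would supply if a self-contained proof were desired.
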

\begin{definition}[Unitary 2-design (from \cite{haar-intro})] \label{def:unitary-2-design}
Let $D$ be a distribution over $n$-qubit unitaries. $D$ is a unitary 2-design if and only if for all $O \in \cL((\bbC^2)^{\otimes 2})$
\[
    \underset{U \leftarrow D}{\bbE}\left[U^{\otimes 2} O U^{\dag\otimes 2}\right] =  \underset{U \leftarrow \mu_H}{\bbE}\left[U^{\otimes 2} O U^{\dag\otimes 2}\right] 
\]  where $\mu_H$ is the Haar measure.
\end{definition}

We also use the following theorem showing that the trace distance of pure states is upper bounded by their Euclidean distance.
\begin{theorem}
\label{thm:trace-dist-and-euclidean-dist}
    Let $\ket{\psi}$ and $\ket{\phi}$ be two pure states such that $|\ket{\psi} - \ket{\phi}| \leq \epsilon$. Then 
    \[
    \TD\left(\ketbra{\psi}, \ketbra{\phi}\right) \leq \epsilon
    \] 
\end{theorem}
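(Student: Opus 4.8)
The plan is to reduce to the standard closed-form expression for the trace distance between two pure states, namely $\TD\left(\ketbra{\psi},\ketbra{\phi}\right) = \sqrt{1 - \abs{\langle \psi | \phi\rangle}^2}$, and then carry out a two-line estimate relating the inner product $\langle\psi|\phi\rangle$ to the Euclidean distance. The closed form itself I would justify by noting that the Hermitian operator $\ketbra{\psi} - \ketbra{\phi}$ is supported on the (at most) two-dimensional span of $\ket{\psi}$ and $\ket{\phi}$, where a direct computation of its characteristic polynomial shows its nonzero eigenvalues are $\pm\sqrt{1 - \abs{\langle\psi|\phi\rangle}^2}$; hence $\frac12\|\ketbra{\psi} - \ketbra{\phi}\|_1 = \sqrt{1-\abs{\langle\psi|\phi\rangle}^2}$.

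Next I would dispose of the trivial regime: if $\epsilon \geq 1$ the inequality holds because trace distance is always at most $1$, so we may assume $\epsilon < 1$. Expanding the hypothesis, $\epsilon^2 \geq \abs{\ket{\psi} - \ket{\phi}}^2 = 2 - 2\,\mathrm{Re}\langle\psi|\phi\rangle$, so $\mathrm{Re}\langle\psi|\phi\rangle \geq 1 - \epsilon^2/2 > 0$, and therefore $\abs{\langle\psi|\phi\rangle} \geq \mathrm{Re}\langle\psi|\phi\rangle \geq 1 - \epsilon^2/2 > 0$. Squaring this nonnegative inequality gives $\abs{\langle\psi|\phi\rangle}^2 \geq (1-\epsilon^2/2)^2 = 1 - \epsilon^2 + \epsilon^4/4 \geq 1 - \epsilon^2$, so that $1 - \abs{\langle\psi|\phi\rangle}^2 \leq \epsilon^2$. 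Plugging back into the closed form, $\TD\left(\ketbra{\psi},\ketbra{\phi}\right) = \sqrt{1 - \abs{\langle\psi|\phi\rangle}^2} \leq \epsilon$, which is the claim.

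There is no genuine obstacle in this argument; the only points requiring (minor) care are that the step $\mathrm{Re}\langle\psi|\phi\rangle \leq \abs{\langle\psi|\phi\rangle}$ is precisely what allows the phase-insensitive trace distance to be bounded by the phase-sensitive Euclidean distance (a converse direction would be false, e.g.\ for states differing only by a global phase), and that the case split $\epsilon < 1$ keeps the intermediate quantity $1 - \epsilon^2/2$ nonnegative so that squaring preserves the inequality.
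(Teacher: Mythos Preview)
Your proof is correct and follows essentially the same route as the paper: both use the closed form $\TD(\ketbra{\psi},\ketbra{\phi}) = \sqrt{1 - \abs{\langle\psi|\phi\rangle}^2}$, expand $\abs{\ket{\psi}-\ket{\phi}}^2 = 2 - 2\,\mathrm{Re}\langle\psi|\phi\rangle$ to get $\abs{\langle\psi|\phi\rangle} \geq 1 - \epsilon^2/2$, and then bound the square root. You are slightly more careful than the paper in explicitly splitting off the case $\epsilon \geq 1$ so that $1 - \epsilon^2/2$ stays nonnegative before squaring, whereas the paper glosses over this.
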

\begin{proof}
    We have that
    \begin{align*}
        \epsilon^2 &\geq |\ket{\psi} - \ket{\phi}|^2 \\
        &= (\bra{\psi} - \bra{\phi}) (\ket{\psi} - \ket{\phi}) \\
        &= 2 - (\langle \phi | \psi \rangle + \langle \phi | \psi \rangle) \\
        &= 2 - 2 \text{Re} (\langle \phi | \psi \rangle) \\
        &\geq 2 - 2 |\langle \phi | \psi \rangle|,
    \end{align*}
    which can be rearranged to 
    \begin{align*}
        &|\langle \phi | \psi \rangle| \geq 1 - \frac{\epsilon^2}{2}.
    \end{align*}
    By the identity for trace distance of pure states,
    \begin{align*}
        \TD( \ket{\psi} \bra{\psi}, \ket{\phi} \bra{\phi}) &= \sqrt{1 - |\langle \psi | \phi \rangle|^2} \\
        &= \sqrt{1 - \left(1 - \frac{\epsilon^2}{2} \right)^2} \\
        &= \sqrt{\epsilon^2 - \frac{\epsilon^4}{4}} \\
        &\leq \epsilon.
    \end{align*}
    This completes the proof.
\end{proof}

\subsection{Quantum Cryptographic Primitives}
\begin{definition}[One-way Puzzles]
\label{def:owp}
A one-way puzzle is a pair of sampling and verification algorithms $(\Gen,\mathsf{Ver})$
with the following syntax. 
\begin{itemize}
\item $\Gen(1^n) \rightarrow (s,k)$, is a QPT algorithm that outputs a pair of classical strings $(s,k)$.
We refer to $s$ as the puzzle and $k$ as its key. Without loss of generality we may assume that $k\in\bin^n$.
\item $\mathsf{Ver}(s,k) \rightarrow \top$ or $\bot$,
is a Boolean function that maps every pair of classical strings $(k,s)$ to either $\top$ or $\bot$.
\end{itemize}
These satisfy the following properties.
\begin{itemize}
\item {\bf Correctness.} Outputs of the sampler pass verification with overwhelming probability, i.e., 
$$\Prr_{(s,k) \leftarrow \Gen(1^n)} [\Ver(s,k) = \top] = 1 - \negl(n)$$ 
\item {\bf Security.}
Given $s$, it is (quantum) computationally infeasible to find $k$ satisfying $\Ver(s,k) = \top$, i.e., for every quantum polynomial-sized adversary $\cA$ and every quantum advice state $\ket{\tau} = \{\ket{\tau_n}\}_{n \in \mathbb{N}}$,
 $$\Prr_{(s,k) \leftarrow \Gen(1^n)}[\Ver(s,\mathcal{A}(\ket{\tau},s)) = \top] = \negl(n)$$
\end{itemize}
%Note that since puzzles are efficiently sampleable, there exists a polynomial $p(\cdot)$ such that $|s| \leq p(n)$.
\end{definition}

\begin{definition}[$\varepsilon$-Distributional One-way Puzzles]
\label{def:dist-owp}
    For $\varepsilon: \bbN \rightarrow \bbR$, a $\varepsilon$-distributional one-way puzzle is defined by a quantum polynomial-time generator $\Gen(1^n)$ that outputs a pair of classical strings $(s,k)$ such that 
    %given $s$, it is (quantum) computationally infeasible to output $\rho$ that overlaps with $\ket{\psi_s}$.
    %Suppose for every polynomial $p(\cdot)$, 
    for every quantum polynomial-time adversary $\cA$, every (non-uniform, quantum) advice ensemble $\ket{\tau} = \{ \ket{\tau_n}\}_{n \in \mathbb{N}}$, for large enough $n \in \mathbb{N}$, 
    % \[ \mathsf{SD} \left(
    % \{(s,k)\}_{(k,s)\leftarrow \Gen(1^n)}, \{(s,x)\}_{(k,s)\leftarrow \Gen(1^n),x\leftarrow\cA(\ket{\tau},s)} \right)
    % \leq \frac{1}{p(n)}.\]
     \[ \mathsf{SD} \left(
    \{s,k\}\, \{s,\cA(\ket{\tau},s)\} \right)
    \geq \varepsilon(n)\] where $(s,k)\leftarrow \Gen(1^n)$.
\end{definition}
We will sometimes simply refer to distributional one-way puzzles. This is taken to mean $1/p(n)$-distributional one-way puzzles for some non-zero polynomial $p$. 

The following theorem shows that distributional one-way puzzles can be amplified to (standard) one-way puzzles.
\begin{theorem}[Theorem 33 from \cite{CGG24}, rephrased]\label{thm:owp-amplification}
If there exists a polynomial $p(\cdot)$ for which $1/p(n)$-distributional one-way puzzles exist, then one-way puzzles exist.
\end{theorem}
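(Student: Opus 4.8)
The plan is to port the classical theorem of Impagliazzo and Luby---distributional one-way functions imply one-way functions---to quantum samplers with classical outputs, in two stages: first distill a \emph{weak} one-way puzzle from the distributional one, then amplify it to a strong one.

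\emph{Stage 1.} Start from a $1/p(n)$-distributional one-way puzzle $\Gen$ with keys in $\bin^n$, and for a puzzle $s$ write $D_s$ for the conditional distribution of the key given $s$. I would build a candidate $(\Gen_w,\Ver_w)$ that hashes the key down to roughly its own ``mass level'': $\Gen_w(1^n)$ runs $\Gen(1^n)\to(s,k)$, samples an output length $\ell$ (uniformly from a polynomially large range, or calibrated so that $2^{-\ell}$ is comparable to $D_s(k)$), samples a pairwise-independent $h\colon\bin^n\to\bin^\ell$, and outputs puzzle $(s,\ell,h,h(k))$ and key $k$; $\Ver_w$ accepts $((s,\ell,h,u),k)$ iff $k$ lies in the support of $D_s$ (and, in the calibrated variant, sits at the correct mass level) and $h(k)=u$. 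This is admissible since the one-way-puzzle definition permits an inefficient $\Ver$, and correctness is immediate for the calibrated choice of $\ell$. For security I would show that an adversary winning $\Gen_w$ with probability too close to $1$ can be turned---via the Impagliazzo--Luby-style approximate-counting and near-uniform-generation technique, applying the hash at geometrically growing output lengths---into an algorithm that on input $s$ outputs a key whose distribution is within $1/p(n)$ of $D_s$, contradicting distributional one-wayness. Tuning the length grid and the closeness parameters so the accumulated errors stay below $1/p(n)$ then yields a polynomial $q$ such that no efficient adversary (with quantum advice) wins $\Gen_w$ with probability $\ge 1-1/q(n)$ for all large $n$.

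\emph{Stage 2.} Amplify: take $t(n)$ independent copies of $\Gen_w$, with the verifier requiring all $t$ returned keys to be valid. A Yao-style hardness-amplification argument for search puzzles converts an adversary solving all copies with non-negligible probability into one solving a single copy with probability $\ge 1-1/q(n)$, contradicting Stage 1; taking $t=\omega(q(n)\log n)$ produces a one-way puzzle in the sense of \defref{owp}.

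I expect Stage 1 to be the main obstacle, and specifically its quantum aspects. Classically one fixes the sampler's randomness tape and thereby works with a function whose preimage distribution is uniform, which is exactly what lets the counting/generation machinery run cleanly; a QPT sampler has no randomness tape, and $D_s$ is in general far from uniform over its support (it may even place exponentially small mass on some keys, so even \emph{naming} a key's mass level has to be done approximately). So the counting and generation steps must be carried out directly on $\Gen$'s non-flat output distribution---via a weighted variant of pairwise-independent hashing, or a bit-by-bit binary-search estimation of conditional masses in the style used elsewhere in this paper---while every auxiliary routine treats the adversary's non-uniform quantum advice as read-only and the overall $1/p(n)$ statistical-distance budget is never exceeded.
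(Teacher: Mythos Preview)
The paper does not prove this theorem; it is imported verbatim from \cite{CGG24} (their Theorem~33) and used as a black box throughout. There is no proof in this paper to compare your proposal against.

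That said, your two-stage outline---hash the key at a range of output lengths to obtain a weak puzzle, then amplify by parallel repetition---is the expected Impagliazzo--Luby template, and you correctly isolate the genuine obstacle: a QPT sampler has no randomness tape to fix, so $D_s$ is not flat and the counting/near-uniform-generation step must be carried out on a weighted distribution. One small point to watch: your ``calibrated'' variant, where $\Gen_w$ picks $\ell$ so that $2^{-\ell}\approx D_s(k)$, asks the sampler to know $D_s(k)$, which it cannot compute; the workable route is the uniform-$\ell$ variant, with the (inefficient) verifier checking that the returned key sits at the mass level matching the sampled $\ell$. But since the paper supplies no argument of its own here, the correct move for this manuscript is simply to cite \cite{CGG24}.
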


\begin{definition}[$\varepsilon$-Pseudo-deterministic Hard Distributions] \label{def:pd-hardness}
    An algorithm $\cB$ is $\varepsilon$-pseudo-deterministic for $\varepsilon: \bbN \rightarrow \bbR$ if 
    \[
         \Prr_{r}[\exists y \text{ s.t. }\Pr\left[\cB(r) \neq y \right] \leq \negl(n)] > 1-\varepsilon(n)
    \] where $r$ is a uniformly random string.
    
    A family of efficiently sampleable distributions $D = \{D_n\}$ is $\varepsilon$-pseudo-deterministic hard if for all quantum polynomial-time $\varepsilon$-pseudodeterministic adversaries $\cA$ that take (non-uniform, quantum) advice ensemble $\ket{\tau} = \{\ket{\tau_n}\}_{n \in \bbN}$, for all
    large enough $n \in \mathbb{N}$
    \[
        \SD(\cA(\ket{\tau};r), D_n) > \varepsilon(n)
    \] where $r$ is a uniformly random string.                                                                                                                                                                                                                                                                                                                                                                               
\end{definition}

\section{Hardness of Approximating Probabilities implies One-way Puzzles}
\label{sec:dist}
In this section we define notions of distributions with hard to approximate probabilities and prove their equivalence with one-way puzzles. 
\subsection{Definitions}
We begin by presenting a hardness assumption that has been extensively studied in the literature on sampling based quantum advantage~\cite{AA11,bremneraverage,fujiicommuting,boixoetal,BFNV19,kondofocs,krovi22,movassagh,qsurvey}). Here, the adversary is given uniformly sampled $x\leftarrow\bin^n$ together with a quantum circuit $C$ on $n$ qubits (with possibly $m = m(n)$ ancillas). The adversary's task is to estimate $\Pr_C[x]$, i.e. the probability that the output state of $C\ket{0^{n+m}}$ results in outcome $x$ when measured in the standard basis.
The adversary is required to output a low \textit{relative error} approximation to $\Pr_C[x]$, i.e. for some polynomial $p(\cdot)$ and for sufficiently many $x$, the adversary must output a value $y$ such that \[\left(1 + \frac{1}{p(n)}\right)\Pr_C[x] \leq y \leq \left(1 - \frac{1}{p(n)}\right) \Pr_C[x]\]

The outputs of circuit $C$ are required to satisfy an additional property called anticoncentration, which says that a noticeable fraction of strings $x$ have probability mass in $C$ above a particular threshold of $1/(p(n)\cdot 2^n)$.

\begin{definition}
[Uniform Approximation Hardness]
\label{def:type-2}
A family of (uniform) efficiently sampleable distributions $\cC = \{\cC_n\}_{n\in\bbN}$ over quantum circuits $C$ that output classical strings in $\bin^n$ has uniform approximation hardness if it has the following properties. There exist polynomials $p(\cdot)$ and $\gamma(\cdot)$ such that:
    \begin{enumerate}
        \item \textbf{Anticoncentration. }  For all large enough $n\in\bbN$$$\Prr_{\substack{C \leftarrow \cC_n\\x\leftarrow\bin^n}}[\Pr_C[x] \geq 1/(p(n)\cdot2^n)] \geq 1/\gamma(n)$$
        \item \textbf{Approximate Hardness of Sampling. } For any oracle $\cO$ satisfying that for all large enough $n \in \bbN$, 
        \[
            \Prr_{\substack{C \leftarrow \cC_n\\x\leftarrow\bin^n}}\left[ \left|\cO(C, x) - \Pr_{C}[x]\right| \leq \frac{\Pr_{C}[x]}{p(n)} \right] %\leq 1 - \frac{1}{p(n)} 
            \geq 1/\gamma(n)-\frac{1}{p(n)}
        \]
    \end{enumerate}
we have that $\mathsf{P}^{\#\mathsf{P}} \subseteq \mathsf{BPP}^{\cO}$. 
Here, $\Pr_{C}[x]$ denotes that probability the $C$ outputs $x$.
\end{definition}
While Definition \ref{def:type-2} captures the hardness conjectures proposed in the quantum advantage literature, it is slightly inconvenient for cryptographic applications. 
There are a few reasons for this: the biggest one is that only adversaries that succeed for \textit{all} large enough $n$ will help solve $\#\mathsf{P}$. The standard adversarial model for cryptography allows for adversaries that succeed on infinitely many $n$. 
Second, hardness is defined for strings sampled uniformly, whereas for our applications, it will be more suitable to have hardness defined for strings sampled from the output of the circuit itself.
Finally, we only require hardness against all polynomial-sized quantum machines, whereas the definition above requires $\#\mathsf{P}$ hardness of the approximation task.

We will therefore define and build from Definition \ref{def:type-2} (along with the assumption that $\mathsf{P}^{\#\mathsf{P}} \not\subseteq \mathsf{ioBQP}/\mathsf{qpoly}$) a related but more ``crypto-friendly'' primitive that will simplify the implication to puzzles (Theorem \ref{thm:type2-implies-type1}).

\begin{definition}[Native Approximation Hardness]\label{def:type-1} 
A family of (uniform) efficiently sampleable distributions $\cD = \{\cD_n\}_{n\in\bbN}$ over classical strings 
has native approximation hardness if there exists a polynomial $p$ such that for all QPT $\cA = \{\cA_\secpar\}_{\secpar \in \bbN}$, every (non-uniform, quantum) advice ensemble $\ket{\tau} = \{\ket{\tau_n}\}_{n \in \mathbb{N}}$, and large enough $n \in \bbN$, 
\[
    \Prr_{x\leftarrow \cD_n}\left[\left|\cA(\ket{\tau}, x) - \Pr_{\cD_n}[x]\right| \leq \frac{\Pr_{\cD_n}[x]}{p(n)}\right] \leq 1 - \frac{1}{p(n)}
\]
\end{definition}

Next we will show (Theorem \ref{thm:type1-implies-puzzles}) that the existence of distribution families satisfying Definition \ref{def:type-1} implies the existence of one-way puzzles (Definition \ref{def:owp}). Finally we will show the reverse implication (Theorem \ref{thm:puzzles-imply-type1}), namely that the existence of one-way puzzles implies the existence of distribution families satisfying Definition \ref{def:type-1}.

\subsection{Uniform Approximation Hardness Implies Native Approximation Hardness}
\begin{theorem}
\label{thm:type2-implies-type1}
     If $\mathsf{P^{\#P}/qpoly} \not\subseteq \mathsf{ioBQP/qpoly}$ and there exists a family of distributions $\cC = \{\cC_n\}_{n\in\bbN}$ that satisfies Definition $\ref{def:type-2}$, then there exists a family of distributions $\cD = \{\cD_n\}_{n\in\bbN}$ that satisfies Definition $\ref{def:type-1}$.
\end{theorem}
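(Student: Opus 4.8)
The plan is to prove the contrapositive: build $\cD$ from $\cC$ and show that if $\cD$ fails to have native approximation hardness (Definition~\ref{def:type-1}), then $\mathsf{P}^{\#\mathsf{P}} \subseteq \mathsf{ioBQP}/\mathsf{qpoly}$, hence $\mathsf{P}^{\#\mathsf{P}}/\mathsf{qpoly} \subseteq \mathsf{ioBQP}/\mathsf{qpoly}$, contradicting the hypothesis. Fix the (uniform, black-box) reduction $M$ guaranteed by Definition~\ref{def:type-2} that turns any oracle meeting the approximation guarantee there into a solver, i.e. witnesses $\mathsf{P}^{\#\mathsf{P}} \subseteq \mathsf{BPP}^{\cO}$; by uniformity we may regard the target language as part of $M$'s input, so that on inputs of length $N$ the machine $M$ runs in time $T(N)$ and queries $\cO$ only on circuits whose output length $m$ lies in a set $Q(N)\subseteq\{1,\dots,T(N)\}$ of size $t(N)\le T(N)$. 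Define $\cD_N$ to \emph{bundle} exactly these lengths: sample $m\leftarrow Q(N)$ and $C\leftarrow\cC_m$, measure $C\ket{0}$ to obtain $x$, and output $(m,C,x)$. (We use w.l.o.g., as holds in all concrete instantiations, that $C$ and the randomness used to sample it are efficiently inter-computable, so $\Pr_{\cC_m}[C]$ is efficiently computable.) Then $\Pr_{\cD_N}[(m,C,x)]=\frac{1}{t(N)}\Pr_{\cC_m}[C]\Pr_C[x]$, so up to an efficiently computable rescaling, approximating $\Pr_{\cD_N}$ at $(m,C,x)$ is exactly approximating $\Pr_C[x]$. We will take the polynomial $p_1$ of Definition~\ref{def:type-1} to be a suitable polynomial in $T(N)$, fixed by the next step.

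Next comes the transfer through anticoncentration. Suppose $\cD$ fails Definition~\ref{def:type-1} for this $p_1$: some QPT $\cA$ with advice $\ket{\tau}$ is good on a $>1-1/p_1(N)$ fraction of $\cD_N$ for all $N$ in an infinite set $I$. Fixing $N\in I$ and $m\in Q(N)$, the $m$-slice carries $1/t(N)$ of $\cD_N$, so $\cA$ is good on a $\ge 1-t(N)/p_1(N)$ fraction of $(C\leftarrow\cC_m,\ x\leftarrow C)$; rescaling gives $\mathcal{O}^{\star}(C,x):=\cA(\ket{\tau},(m,C,x))\cdot t(N)/\Pr_{\cC_m}[C]$, which approximates $\Pr_C[x]$ with relative error $\le 1/p_1(N)$ on that fraction. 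To convert ``$x$ drawn from $C$'' into ``$x$ uniform'' I use that any \emph{heavy} $x$ (with $\Pr_C[x]\ge 1/(p(m)2^m)$, where $p,\gamma$ are $\cC$'s polynomials) has density ratio $\Pr_C[x]/2^{-m}\ge 1/p(m)$, so the uniform measure of any set of heavy strings is at most $p(m)$ times its $C$-measure; hence the uniform measure of ``$x$ heavy and $\mathcal{O}^{\star}$ bad'' pairs is $\le p(m)\,t(N)/p_1(N)$. Combined with anticoncentration, $\mathcal{O}^{\star}$ is good on a $\ge 1/\gamma(m)-p(m)t(N)/p_1(N)$ fraction of uniform $(C,x)$, which is $\ge 1/\gamma(m)-1/p(m)$ once $p_1(N)\ge\max\{\,p(T(N))^2\,t(N),\ p(T(N))\,\}$ (a polynomial in $N$). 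So for this $p_1$, the machine $\mathcal{O}^{\star}$ meets Definition~\ref{def:type-2}'s approximation guarantee at every length $m\in Q(N)$, for every $N\in I$.

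The last step — and I expect it to be the main obstacle — is the infinitely-often bookkeeping: turning ``good on $Q(N)$ for infinitely many $N$'' into an actual containment. Stitch the $\mathcal{O}^{\star}$'s into one oracle $\widehat{\mathcal{O}}$ by answering a query about a circuit of output length $m$ using $\mathcal{O}^{\star}$ for the least $N\in I$ with $T(N)\ge m$ (which exists for all large $m$ since $I$ is infinite); then $\widehat{\mathcal{O}}$ satisfies Definition~\ref{def:type-2}'s guarantee at all large lengths, so by Definition~\ref{def:type-2} the reduction $M^{\widehat{\mathcal{O}}}$ correctly decides every language in $\mathsf{P}^{\#\mathsf{P}}$ on all large inputs. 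The point of bundling \emph{exactly} $Q(N)$ into $\cD_N$ is that, on any input of length $N\in I$, $M$ queries $\widehat{\mathcal{O}}$ only at lengths in $Q(N)$, where $\widehat{\mathcal{O}}$ is computed by an honest advice-aided evaluation of $\cA$ (the advice being $\mathsf{poly}(N)$ copies of the relevant advice states together with classical descriptions of $Q(N)$ and the rescaling constants); so on those inputs $M^{\widehat{\mathcal{O}}}$ is a $\mathsf{BQP}/\mathsf{qpoly}$ computation, and since for each fixed language the input lengths lying in $I$ form an infinite set, every language in $\mathsf{P}^{\#\mathsf{P}}$ lands in $\mathsf{ioBQP}/\mathsf{qpoly}$. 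The delicate points are keeping the advice polynomial and consistently indexed, ensuring the ``least $N\in I$'' used to define $\widehat{\mathcal{O}}$ agrees with the advice the efficient machine is handed, and verifying that $M$'s query lengths are genuinely covered by $\cD_N$'s bundle — all of which is exactly what the bundled construction is engineered to guarantee, but it is where the care in the proof concentrates.
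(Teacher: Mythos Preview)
Your overall architecture---bundle all query lengths the reduction needs into a single distribution, convert native-sampled $x$ to uniform $x$ via anticoncentration, then argue infinitely-often---matches the paper's. But there is one genuine gap.

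You assume ``w.l.o.g.'' that $\Pr_{\cC_m}[C]$ is efficiently computable from $C$, so that you can rescale $\cA$'s output by $t(N)/\Pr_{\cC_m}[C]$ to recover an estimate of $\Pr_C[x]$. Definition~\ref{def:type-2} does \emph{not} assume this: $\cC_m$ is only required to be efficiently sampleable, and the circuit description need not determine the sampling randomness. The paper closes this gap with a two-mode trick you are missing: $\cD_n$ flips a bit $b_{\mathrm{mode}}$; if $b_{\mathrm{mode}}=0$ it outputs $(0,C,0^\ell)$, and if $b_{\mathrm{mode}}=1$ it outputs $(1,C,x)$ with $x\leftarrow C$. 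A relative-error approximator for $\cD_n$ then gives, from the first mode, an estimate of $\Pr_{\cC_\ell}[C]$ (up to the known factor $1/(2t(n))$), and from the second mode, an estimate of $\Pr_{\cC_\ell}[C]\cdot\Pr_C[x]$; taking the \emph{ratio} $\cA(1,C,x)/\cA(0,C,0^\ell)$ recovers $\Pr_C[x]$ with controlled relative error, without ever needing to compute $\Pr_{\cC_\ell}[C]$ directly.

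A secondary point: your ``stitching'' argument for the infinitely-often step is more delicate than necessary. The paper avoids it by fixing upfront a single witness language $L'\in\mathsf{BQP}^{\#\mathsf{P}}/\mathsf{qpoly}\setminus\mathsf{ioBQP}/\mathsf{qpoly}$ (together with the machine $\cM$ deciding it and its advice), then bundling all lengths up to $t_R(t_\cM(n))$ into $\cD_n$; for each $n\in I$ the adversary directly simulates all of $\cM$'s oracle calls, putting $L'\in\mathsf{ioBQP}/\mathsf{qpoly}$---a contradiction. This sidesteps the need to build a single global $\widehat{\cO}$ and the attendant indexing issues you flagged.
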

\begin{proof}
    Let $L$ be a $\mathsf{PP}$-complete language. $\cC$ satisfies Definition $\ref{def:type-2}$, so there must exist  exist polynomials $q$ and $\gamma$ along with oracle PPT $R^{(\cdot)}_L$ such that: 
    \begin{enumerate}
        \item For all large enough $n\in\bbN$
        \[
            \Prr_{\substack{C\leftarrow \cC_n\\x\leftarrow \bin^n}}[\Pr_C[x] \geq 1/(q(n)\cdot 2^n)] \geq 1/\gamma(n)
        \]
        \item Let $\cO$ be any oracle such that for all large enough $n\in\bbN$
      \[
            \Prr_{\substack{C \leftarrow \cC_n\\x\leftarrow\bin^n}}\left[ \left|\cO(C,x) - \Pr_{C}[x]\right| \leq \frac{\Pr_{C}[x]}{q(n)} \right] %\leq 1 - \frac{1}{p(n)} 
            \geq \frac{1}{\gamma(n)}-\frac{1}{q(n)}
        \]
        then for all $n\in\bbN$ and for all $x\in\bin^n$, $\Prr[R^\cO_L(x) = L(x)] \geq 1-\negl(n)$. We call the set of all such oracles $\bbO$.
    \end{enumerate}
    Let ${t_R}$ be a polynomial such that $R^{(\cdot)}_L$ runs in time ${t_R}(n)$.
    
    Since $L$ is $\mathsf{PP}$-complete, $\mathsf{BQP}^L/\mathsf{qpoly} = \mathsf{BQP}^\mathsf{PP}/\mathsf{qpoly} = \mathsf{BQP^{\#P}/qpoly} $. Let $L' \in \mathsf{BQP}^L/\mathsf{qpoly}$ such that $L' \notin \mathsf{ioBQP/qpoly}$. Then there exists an oracle QPT $\cM^{(\cdot)} = \{\cM^{(\cdot)}_\secpar\}_{\secpar \in \bbN}$ and (non-uniform, quantum) advice ensemble $\ket{\sigma} = \{\ket{\sigma_n}\}_{n \in \mathbb{N}}$ such that for all $n \in \bbN$, for all $x\in \bin^n$, 
    \[
        \Prr[\cM^L(\ket{\sigma}, x) = L'(x)] \geq 1-\negl(n)
    \]
    We will drop the advice from the notation since it is always implicitly provided to $\cM$.  Additionally, let ${t_\cM}$ be a polynomial such that $|\cM_n|\leq {t_\cM}(n)$. 

Our overall proof strategy will be to show that (given an adversary that estimates probabilities for a carefully constructed distribution $\cD$) we can replace the oracle to $L$ in $\cM^L$ with an efficient quantum algorithm (with advice). This is accomplished by using $R_L^{(\cdot)}$ with an appropriately constructed oracle to decide $L$ instead. We wish to define $\cD_n$ in such a way that any adversary that estimates the probabililities of $y\leftarrow \cD_n$ will also allow us to estimate $\Pr_C[x]$ for $C\leftarrow \cC_n$ and $x\leftarrow \bin^n$. We may initially try setting $\cD_n$ to be the induced distribution on $(C,x)$. The adversary will in this case provide an estimate of $\Pr_{\cC_n}[C]\cdot\Pr_C[x]$. To compute $\Pr_C[x]$ we therefore also need to compute $\Pr_{\cC_n}[C]$. This is accomplished by having $\cD_n$ consist of two modes, one in which the output is $(C,x)$ and one in which the output is $C$ alone. This allows us to compute estimates for both $\Pr_{\cC_n}[C]\cdot\Pr_C[x]$ and $\Pr_{\cC_n}[C]$ and therefore estimate $\Pr_C[x]$. 

The above strategy is still insufficient for instantiatiating an oracle for $R^{(\cdot)}_L$. This is because an adversary that breaks the security of $\cD_n$ may only do so for infinitely many values of $n$. The reduction $R^{(\cdot)}_L$, on the other hand, requires an oracle that breaks security for every large enough value of $n$. While it is tempting to think that instantiating $R^{(\cdot)}_L$ with an infinitely-often oracle would lead to a reduction that also succeeds infinitely often, this is not necessarily the case. 
$R^{(\cdot)}_L$ may query its oracle on a variety of input sizes, and only succeed if the oracle performs well on all of them. We must therefore be able to answer queries for all input sizes up to $t_R(m)$ when running $R^{(\cdot)}_L$ on inputs of size $m$. Since $R^{(\cdot)}_L$ will ultimately be queried by $\cM_n(x)$ for $x\in\bin^n$, $m$ can be as large as $t_\cM(n)$. 
We therefore modify $\cD_n$ in such a way that we can use an adversary breaking its security to answer queries of all sizes upto $t_R(t_\cM(n))$.

Let ${\thres}$ be any polynomial such that ${\thres}(n) > {t_R}({t_\cM}(n))$ and let $p$ be any polynomial such that $p(n) > nq^4({\thres}(n))\cdot ({\thres}(n))^{3}$.
We define $\cD_n$ as follows:
\begin{itemize}
    \item Sample $\ell \leftarrow [{\thres}(n)]$
    \item Sample $C \leftarrow \cC_\ell$
    \item Sample $x \leftarrow C$
    \item Sample $b_{\text{mode}} \leftarrow \bin$
    \item If $b_\text{mode} = 0$:
    \begin{itemize}
        \item Output $(b_\text{mode}, C, 0^\ell)$
    \end{itemize}
    \item Else:
    \begin{itemize}
        \item Output $(b_\text{mode}, C, x)$
    \end{itemize}
\end{itemize}

We will prove that $\cD$ satisfies Definition \ref{def:type-1}. Suppose for the sake of contradiction that this is not the case. Then there exists a QPT $\cA = \{\cA_\secpar\}_{\secpar \in \bbN}$ and (non-uniform, quantum) advice ensemble $\ket{\tau} = \{\ket{\tau_n}\}_{n \in \mathbb{N}}$ such that for infinitely many $n \in \bbN$, 
   \[
    \Prr_{y\leftarrow \cD_n}\left[\left|\cA(\ket{\tau}, y) - \Pr_{\cD_n}[y]\right| \leq \frac{\Pr_{\cD_n}[y]}{p(n)}\right] > 1 - \frac{1}{p(n)}
\]
Fix any such adversary $\cA$, any such advice ensemble $\ket{\tau}$, and any such large enough $n\in \bbN$. We will drop the advice from the notation since it is always implicitly provided to the adversary. We will now use $\cA$ to build a QPT  $\cB$ that calls $\cA$ internally such that for all $x\in\bin^n$, with high probability, $R^{\cB}(x) = L(x)$. Specifically, we define $\cB$ as the algorithm that takes input $(C,x)$ and returns $\cA(1,C,x)/\cA(0,C,0^{|x|})$.

First we show that we can replace the oracle $\cO$ with $\cB$ for queries of size less than $\thres(n)$. We may view the above inequality as bounding the probability that $\cA(y)$ is too far from $\Pr_{\cD_n}[y]$, i.e.
\[
    \Prr_{y\leftarrow \cD_n}\left[\left|\cA( y) - \Pr_{\cD_n}[y]\right|  > \frac{\Pr_{\cD_n}[y]}{p(n)}\right] < \frac{1}{p(n)}
\]
We can split this error probability into cases indexed by the values of $b_\text{mode}$ and $\ell$. First consider when $b_\text{mode}=0$. For all $\ell\in [{\thres}(n)]$ 
\begin{align*}
    \Prr[b_\text{mode}=0] \cdot \Prr[\ell]\cdot\Prr_{C\leftarrow \cC_\ell}\left[\left|\cA(0, C, 0^\ell) - \Pr_{\cD_n}[0, C, 0^\ell]\right|  > \frac{\Pr_{\cD_n}[0, C, 0^\ell]}{p(n)}\right] < \frac{1}{p(n)}
\end{align*}
which can be simplified to 
\begin{align}
\label{eq:estimating-circuit-prob-1}
\forall \ell\in [{\thres}(n)],     \Prr_{C\leftarrow \cC_\ell}\left[\left|2{\thres}(n)\cdot\cA(0, C, 0^\ell) - \Pr_{\cC_\ell}[C]\right|  > \frac{\Pr_{\cC_\ell}[C]}{p(n)}\right] < \frac{2{\thres}(n)}{p(n)}
\end{align}
Similarly, consider when $b_\text{mode}=1$. For all $\ell\in [{\thres}(n)]$
\begin{align*}
    \Prr[b_\text{mode}=1]\cdot\Prr[\ell] \cdot{\Prr_{\substack{C\leftarrow \cC_\ell\\x\leftarrow C}}}\left[\left|\cA(1, C, x) - \Pr_{\cD_n
    }[1, C, x]\right|  > \frac{\Pr_{\cD_n}[1, C, x]}{p(n)}\right] < \frac{1}{p(n)}
\end{align*}
which can be simplified to 
\begin{align}
\label{eq:estimating-circuit-prob-2}
   \forall \ell \in [{\thres}(n)], \Prr_{\substack{C\leftarrow \cC_\ell\\x\leftarrow C}}\left[\left|2{\thres}(n)\cdot \frac{\cA(1, C, x)}{\Pr_{\cC_\ell}[C]} - \Pr_{C}[x]\right|  > \frac{\Pr_{C}[x]}{p(n)}\right] < \frac{2{\thres}(n)}{p(n)}
\end{align}
Let $\delta:= \sqrt{\frac{2{\thres}(n)}{p(n)}}$. We will now show for every large enough $\ell$ the existence of a ``good'' set of $(C,x)$ that is sampled with high enough probability and for which $\cB(C,x)$ gives a good estimate of $\Pr_C[x]$ with high probability. For all $\ell\in[\thres(n)]$ let $\bbC_\ell$ be defined as follows.
\begin{align*}
    \bbC_\ell:=\left\{C \text{ s.t } \Prr\left[\left|2{\thres}(n)\cdot\cA(0, C, 0^\ell) - \Pr_{\cC_\ell}[C]\right|  > \frac{\Pr_{\cC_\ell}[C]}{p(n)}\right] > \delta\right\}
\end{align*}
Intuitively, $\bbC_\ell$ is the set of $C$ output by $\cC_\ell$ such that with all but $\delta$ probability, $2{\thres}(n)\cdot\cA(0, C, 0^\ell)$ is a good estimate for $\Pr_{\cC_\ell}[C]$.
By a Markov argument on \eqref{eq:estimating-circuit-prob-1}, for all $\ell\in[\thres(n)]$
\[
\Prr_{C\leftarrow \cC_\ell}[C\in \bbC_\ell] \leq \delta
\]
Similarly, for all $\ell\in[\thres(n)]$ let $\bbG_\ell$ be defined as follows.
\begin{align*}
    \bbG_\ell:=\left\{(C,x) \text{ s.t }  \Prr\left[\left|2{\thres}(n)\cdot \frac{\cA(1, C, x)}{\Pr_{\cC_\ell}[C]} - \Pr_{C}[x]\right|  > \frac{\Pr_{C}[x]}{p(n)}\right]  > \delta\right\}
\end{align*}
Intuitively, $\bbG_\ell$ is the set of $(C,x)$ where $C$ is output by $\cC_\ell$ and $x$ is an $\ell$-bit string such that with all but $\delta$ probability, $2{\thres}(n)\cdot \frac{\cA(1, C, x)}{\Pr_{\cC_\ell}[C]}$ is a good estimate for $\Pr_{C}[x]$. By a Markov argument on \eqref{eq:estimating-circuit-prob-2}, for all $\ell\in[\thres(n)]$
\[
\Prr_{\substack{C\leftarrow \cC_\ell\\x\leftarrow C}}[(C,x)\in \bbG_\ell] \leq \delta
\]

\begin{claim}
\label{clm:ioreduction}
    For all large enough $\ell\in[{\thres}(n)]$, let $\bbB_\ell := \left\{(C,x) \text{ s.t } C\notin\bbC_\ell \wedge (C,x) \notin \bbG_\ell\right\}$
    \begin{itemize}
        \item For all $(C,x) \in \bbB_\ell$, $\Prr\left[\left|\cB(C,x) - \Pr_{C}(x)\right| \leq \frac{3\Pr_{C}(x)}{p(n)}\right] \geq 1-2\delta$
        \item $\Pr_{\substack{C\leftarrow \cC_\ell\\x\leftarrow \bin^\ell}}[(C,x) \in \bbB_\ell] \geq 1/\gamma(\ell) - 2\delta\cdot q(\ell)$
    \end{itemize}
\end{claim}
\begin{proof}
 For the first part of the claim, note that by the definitions of $\bbC_\ell$, for all $(C,x)\in\bbB_\ell$, with probability atleast $(1-\delta)$
\[
\left|2{\thres}(n)\cdot\cA(0, C, 0^\ell) - \Pr_{\cC_\ell}[C]\right|  > \frac{\Pr_{\cC_\ell}[C]}{p(n)}
\]
which can be rearranged as
\[
   \frac{\cA(0, C, 0^\ell)}{(1-1/p(n))} < \frac{\Pr_{\cC_\ell}[C]}{2{\thres}(n)} < \frac{\cA(0, C, 0^\ell)}{(1+1/p(n))}
\]
Additionally, by the definitions of $\bbG_\ell$, for all $(C,x)\in\bbB_\ell$, with probability atleast $(1-\delta)$
\[\left|2{\thres}(n)\cdot \frac{\cA(1, C, x)}{\Pr_{\cC_\ell}[C]} - \Pr_{C}[x]\right|  > \frac{\Pr_{C}[x]}{p(n)}\]
which can similarly be rewritten as
\[
    \Pr_{C}[x] \cdot (1-1/p(n)) < 2{\thres}(n)\cdot \frac{\cA(1, C, x)}{\Pr_{\cC_\ell}[C]} < \Pr_{C}[x] \cdot (1+1/p(n))
\]
Both events occur simultanaeously with probability atleast $1-2\delta$, in which case we may apply the bounds for $\frac{\Pr_{\cC_\ell}[C]}{2{\thres}(n)}$ to the previous inequality.
\[
\Pr_{C}[x] \cdot (1-1/p(n))^2 < \frac{\cA(1, C, x)}{\cA(0, C, 0^\ell)} < \Pr_{C}[x] \cdot (1+1/p(n))^2
\]
which for large enough $n$ gives
\[
\left|\Pr_{C}[x] - \frac{\cA(1, C, x)}{\cA(0, C, 0^\ell)}\right| < \frac{3\Pr_{C}[x]}{p(n)} 
\]
which concludes the proof of part 1.

For the second part of the claim, first note that since $\Prr_{\substack{C\leftarrow \cC_\ell\\x\leftarrow C}}[(C,x)\in \bbG_\ell] \leq \delta$ and $\Prr_{C\leftarrow \cC_\ell}[C\in \bbC_\ell] \leq \delta$
\[
\Prr_{C\leftarrow \cC_\ell}[C\in \overline{\bbB}_\ell] \leq 2\delta
\] where $\overline{\bbB}_\ell$ represents the complement of $\bbB_\ell$.
Let $\bbA_\ell := \left\{(C,x) : \Pr_C[x] \geq 1/(q(\ell)\cdot 2^\ell)\right\}$.
By the anticoncentration property we know that for large enough $\ell \in [{\thres}(n)]$
\[
    \Prr_{\substack{C\leftarrow \cC_\ell\\x\leftarrow \bin^\ell}}[(C,x) \in \bbA] \geq 1/\gamma(\ell)
\]
We aim to bound $\Prr_{\substack{C\leftarrow \cC_\ell\\x\leftarrow \bin^\ell}}[(C,x) \in \bbA_\ell \cap \overline{\bbB}_\ell]$ as follows
\begin{align*}
   \Prr_{\substack{C\leftarrow \cC_\ell\\x\leftarrow \bin^\ell}}[(C,x) \in \bbA_\ell \cap \overline{\bbB}_\ell]&= \sum_{(C,x)\in\bbA_\ell \cap \overline{\bbB}_\ell} \Pr_{\cC_\ell}[C] \cdot 1/2^\ell\\
    &\leq \sum_{(C,x)\in\bbA_\ell \cap \overline{\bbB}_\ell} \Pr_{\cC_\ell}[C] \cdot q(\ell)\cdot\Pr_C[x]\\
    &= {q(\ell)} \cdot \Prr_{\substack{C\leftarrow \cC_\ell\\x\leftarrow C}}[(C,x) \in \bbA_\ell \cap \overline{\bbB}_\ell]\\
     &\leq{q(\ell)} \cdot \Prr_{\substack{C\leftarrow \cC_\ell\\x\leftarrow C}}[(C,x) \in \overline{\bbB}_\ell]\\
    &\leq 2\delta\cdot q(\ell)
\end{align*}
where the first step follows from the definition of $\bbA$ and the last step follows from Claim \ref{clm:ioreduction}. We can now bound $\Prr_{\substack{C\leftarrow \cC_\ell\\x\leftarrow \bin^\ell}}[(C,x) \in  \bbB_\ell]$ as follows
\begin{align*}
\Prr_{\substack{C\leftarrow \cC_\ell\\x\leftarrow \bin^\ell}}[(C,x) \in  \bbB_\ell]  &\geq \Prr_{\substack{C\leftarrow \cC_\ell\\x\leftarrow \bin^\ell}}[(C,x) \in \bbA_\ell \cap \bbB_\ell] \\
   &=   \Prr_{\substack{C\leftarrow \cC_\ell\\x\leftarrow \bin^\ell}}[(C,x) \in \bbA_\ell] - \Prr_{\substack{C\leftarrow \cC_\ell\\x\leftarrow \bin^\ell}}[(C,x) \in \bbA_\ell \cap \overline{\bbB}_\ell]\\
   &\geq 1/\gamma(\ell) - 2\delta\cdot q(\ell)
\end{align*}
which concludes the proof.
\end{proof}

\begin{claim}
    For all $m$ such that ${t_R}(m) \leq {\thres}(n)$, for all $x\in \bin^m$
    \[
        \Prr[R_L^{\cB}(x) = L(x)] \geq 1-\negl(m)-2\delta\cdot {\thres}(n)
    \]
\end{claim} 
\begin{proof}
   
    Since $R_L^{(\cdot)}(x)$ runs in time atmost ${t_R}(|x|)$, the size of the longest query to made is atmost ${t_R}(m) \leq {\thres}(n)$. Additionally, the number of queries made is also atmost ${t_R}(m) \leq {\thres}(n)$. By Claim \ref{clm:ioreduction} and by noting that $p(n) > nq^4({\thres}(n))\cdot ({\thres}(n))^{3}$ and $\delta= \sqrt{\frac{2{\thres}(n)}{p(n)}}$, for every $\ell \leq {\thres}(n)$, there exists a set $\bbB_{\ell}$
     \begin{itemize}
        \item For all $(C,x) \in \bbB_{\ell}$, $\Prr\left[\left|\cB(C,x) - \Pr_{C}(x)\right| \leq \frac{\Pr_{C}(x)}{q(\thres(n))}\right] \geq 1-2\delta$
        \item $\Prr_{\substack{C\leftarrow {\cC_\ell}\\x\leftarrow \bin^{\ell}}}[(C,x) \in \bbB_{\ell}] \geq 1/\gamma({\ell}) - 1/q(\thres(n))$
    \end{itemize}
Let $\bbO'$ be the set of oracles such that for every $\ell \leq {\thres}(n)$, for all $(C,x) \in \bbB_{\ell}$,
$$\left|\cO(C,x) - \Pr_{C}(x)\right| \leq \frac{\Pr_{C}(x)}{q(\ell)}$$
Since $\ell \leq t(n)$ it is easy to see that $\bbO' \subseteq \bbO$.
$R^{(\cdot)}_L(x)$ succeeds with probability $1-\negl(m)$ for every $\cO \in \bbO'$, so it succeeds with probability atleast $1-\negl(m)$ given any random distribution over oracles in $\bbO'$. We will now show that with high probability, $\cB$ is such a distribution over oracles. 

Assume WLOG that $R^{\cB}_L(x)$  queries any string atmost once during its execution. We only need to consider queries of length atmost ${\thres}(n)$ since no query is longer than ${\thres}(n)$. If for every query $(C,x) \in \bbB_\ell$ made to $\cB$ where $x\in\bin^\ell$, it was the case that $\left|\cB(C,x) - \Pr_{C}(x)\right| \leq \frac{\Pr_{C}(x)}{q(n)}$ then the outputs of $\cB$ are distributed according to some distribution over $\bbO'$. For each query this occurs independently with probability atleast $1-2\delta$ and there are at most ${\thres}(n)$ queries, so this event occurs with probability atleast $1-2\delta\cdot {\thres}(n)$. Therefore, $\Pr[R_L^\cB(x) = L(x)] \geq 1-\negl(m) - 2\delta\cdot {\thres}(n)$.\footnote{A slightly different version of Definition \ref{def:type-2} requires the oracle to approximate probabilities upto arbitrary relative error $\epsilon$ (given $1^{1/\epsilon}$ as input) and with probability  $1/\gamma({\ell}) - \delta$ (given $1^{1/\delta}$ as input) over the randomness of the input. In this case, note that since the size of the largest query is atmost $\thres(n)$, $\epsilon$ and $\delta$ are atleast $1/\thres(n)$. $\cB$ can therefore still be used to answer such oracle queries by setting $p(n)$ to be large enough to achieve relative error $1/\thres(n)$ with probability $1/\gamma({\ell}) - 1/\thres(n)$ over the randomness of the input, and Theorem \ref{thm:type2-implies-type1} will be unaffected by the change in the definition.}
\end{proof}
Noting that $p(n) > nq^4({\thres}(n))\cdot ({\thres}(n))^{3}$ and $n$ is large enough, we get $2\delta\cdot {\thres}(n) \leq 1/3$. Therefore, by repeating in parallel and taking the majority outcome we obtain an oracle PPT $S^{(\cdot)}$ where for all $m$ such that ${t_R}(m) \leq {\thres}(n)$, for all $y\in \bin^m$, $\Pr[S^\cB(y) = L(y)] \geq 1-\negl(m)$. Finally, we will use $\cM^{\cS^\cB}$ to decide $L'$ on strings of length $n$. For $x\in\bin^n$, $M^{\cS^\cB}_n(x)$ can only query strings of length at most ${t_\cM}(n)$. Since ${t_R}({t_\cM}(n)) \leq {\thres}(n)$, this means that $\cS^\cB$ will correctly answer the queries with all but negligible probability. As a result, for all $x\in\bin^n$, $\Pr[\cM_n^{\cS^\cB}(x) = L'(x)] \geq 1-\negl(n)$.

Finally, since $\cS^\cB$ is a polynomial size quantum circuit with quantum advice,  $\cM^{\cS^\cB}$ may also be expressed as a polynomial size quantum circuit with quantum advice that decides $L'$ for infinitely many input lengths. This contradicts the assumption that $L' \notin \mathsf{ioBQP/qpoly}$ and concludes the proof of the theorem.
\end{proof}

\subsection{Native Approximation Hardness Implies One-Way Puzzles}

\begin{theorem} 
\label{thm:type1-implies-puzzles}The existence of families of distributions that satisfy Definition $\ref{def:type-1}$ implies the existence of one-way puzzles (Definition \ref{def:owp})
\end{theorem}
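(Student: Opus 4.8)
The plan is to construct a distributional one-way puzzle directly from $\cD$ and then amplify via Theorem~\ref{thm:owp-amplification}. Let $m(\cdot)$ be the (polynomial) output length of $\cD_n$, w.l.o.g.\ fixed for each $n$, and let $p_0(\cdot)$ be the polynomial witnessing native approximation hardness of $\cD$. Define $\Gen(1^n)$: sample $x\leftarrow\cD_n$ and $i\leftarrow[m(n)]$, and output the puzzle $s=(i,x_1\cdots x_{i-1})$ together with the key $k=x_i\cdots x_{m(n)}$. I will show $\Gen$ is a $1/p'$-distributional one-way puzzle (Definition~\ref{def:dist-owp}) for a polynomial $p'(n)=\Theta(m(n)\,p_0(n)^2)$, so that Theorem~\ref{thm:owp-amplification} yields one-way puzzles. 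Towards a contradiction, assume a QPT $\cA$ with quantum advice $\ket{\tau}$ satisfies $\SD\big(\{s,k\},\{s,\cA(\ket{\tau},s)\}\big)<1/p'(n)$ for infinitely many $n$, where $(s,k)\leftarrow\Gen(1^n)$. For a prefix $u$, write $D_u$ for the conditional distribution of the suffix $x_{|u|+1}\cdots x_{m(n)}$ under $x\leftarrow\cD_n$ given $x_{1\cdots|u|}=u$, and $\widetilde D_u$ for the output distribution of $\cA(\ket\tau,(|u|{+}1,u))$. Since $s$ has the same marginal in both distributions, $\SD\big(\{s,k\},\{s,\cA(\ket\tau,s)\}\big)=\frac{1}{m(n)}\sum_{j=1}^{m(n)}\E_{u\sim\mathsf{pre}_{j-1}(\cD_n)}\big[\SD(D_u,\widetilde D_u)\big]$, where $\mathsf{pre}_{j-1}(\cD_n)$ denotes the length-$(j{-}1)$ prefix of a fresh sample from $\cD_n$.

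Next I use $\cA$ to build the probability approximator $\sR$ from the technical overview: on input $v\in\bin^{m(n)}$, for each $j\in[m(n)]$ run $\cA(\ket\tau,(j,v_1\cdots v_{j-1}))$ independently $T(n)$ times, set $\mathsf{pr}_j$ to the fraction of runs whose first output bit equals $v_j$, and output $\prod_j\mathsf{pr}_j$. Fix $n$ with the $\SD$ bound, draw $v\leftarrow\cD_n$, and write $u_j=v_1\cdots v_{j-1}$, $r_j=\Pr_{\cD_n}[x_j=v_j\mid x_{1\cdots j-1}=u_j]$ (so $\Pr_{\cD_n}[v]=\prod_j r_j>0$), $\delta_j=\SD(D_{u_j},\widetilde D_{u_j})$, and $q_j=\Pr[\text{first bit of }\cA(\ket\tau,(j,u_j))=v_j]$, noting $|q_j-r_j|\le\delta_j$. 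The crucial point is that $v_j$ is a single bit, so for every prefix $u$ the conditional $x_j\mid u$ has support of size at most $2$, giving $\E_{v_j\mid u}[1/r_j]\le 2$; since $\delta_j$ depends only on $u_j$, taking expectations over $v\leftarrow\cD_n$ gives $\E_v\big[\sum_j \delta_j/r_j\big]\le 2\sum_j\E_{u_j}[\delta_j]\le 2m(n)/p'(n)$ and $\E_v\big[\sum_j 1/r_j\big]\le 2m(n)$.

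The remainder is a routine error budget. By the additive Chernoff bound (Theorem~\ref{thm:chernoff-additive}) and a union bound over $j$, a large enough polynomial $T(n)$ ensures that, except with probability $\le 1/(10p_0(n))$ over $\sR$'s coins, $|\mathsf{pr}_j-q_j|\le\theta$ for all $j$ with $\theta=\Theta(1/(m(n)p_0(n)^2))$, hence $|\mathsf{pr}_j-r_j|\le\theta+\delta_j$. Markov applied to the two expectations above gives, except with probability $\le 1/(10p_0(n))$ each over $v\leftarrow\cD_n$, that $\sum_j\delta_j/r_j=O(m(n)p_0(n)/p'(n))$ and $\sum_j 1/r_j=O(m(n)p_0(n))$; on the intersection of these three events, $\sum_j|\mathsf{pr}_j-r_j|/r_j\le\theta\sum_j 1/r_j+\sum_j\delta_j/r_j<1/(2p_0(n))$ for suitable hidden constants, so $\prod_j\mathsf{pr}_j=\prod_j r_j\cdot\prod_j(1+a_j)$ with $\sum_j|a_j|<1/(2p_0(n))<1/2$, whence $|\sR^{\cA}(v)-\Pr_{\cD_n}[v]|\le\Pr_{\cD_n}[v]/p_0(n)$. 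Thus for infinitely many $n$, $\Prr_{v\leftarrow\cD_n,\,\sR}\big[|\sR^{\cA}(v)-\Pr_{\cD_n}[v]|\le\Pr_{\cD_n}[v]/p_0(n)\big]>1-1/p_0(n)$. Since $\sR$ is QPT and needs only $m(n)T(n)=\poly(n)$ copies of $\ket{\tau_n}$ as advice (still a valid non-uniform quantum advice ensemble), this contradicts Definition~\ref{def:type-1}, whose bound must hold for \emph{all} large enough $n$. Hence $\Gen$ is a $1/p'$-distributional one-way puzzle, and one-way puzzles exist by Theorem~\ref{thm:owp-amplification}.

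I expect the main obstacle to be precisely this error analysis, which bridges two mismatched notions of error: the distributional inverter is only guaranteed $1/p'(n)$-close in \emph{statistical distance} to the true conditional-key distribution, so it may be arbitrarily wrong on prefixes of vanishing mass, whereas Definition~\ref{def:type-1} demands a \emph{multiplicative} $(1\pm 1/p_0(n))$ estimate of $\Pr_{\cD_n}[v]$. The two ingredients that resolve this are (i) drawing the challenge $v$ from $\cD_n$ itself, so that low-mass prefixes are seen with correspondingly low probability, and (ii) the fact that a next-bit conditional has support at most $2$, which is what keeps $\E_v[\sum_j 1/r_j]$ and $\E_v[\sum_j \delta_j/r_j]$ small and hence controls the relative error of the telescoping product of approximate next-bit probabilities for all but a $1/p_0(n)$-fraction of $v$. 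One also has to check that the ``infinitely many $n$'' for the inverter carries over to ``infinitely many $n$'' for $\sR^{\cA}$ (it does, since $\sR$ is built generically from $\cA$ with $n$-dependent but polynomial parameters), and that copying $\cA$'s quantum advice costs only a polynomial blow-up.
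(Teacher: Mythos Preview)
Your proposal is correct and follows essentially the same approach as the paper: the same prefix/next-bit puzzle construction (the paper takes only the next bit as the key, but you only use the first bit of the suffix anyway), the same reduction that multiplies empirical next-bit estimates, and the same appeal to Theorem~\ref{thm:owp-amplification}. The only real difference is stylistic: the paper controls the multiplicative error by restricting to sets $\bbB=\{x:\forall j,\,p_{x_{j+1}|x_{1\ldots j}}\ge 1/n^2q(n)\}$ and $\bbD=\{x:\forall j,\,|\tp_{1|x_{1\ldots j}}-p_{1|x_{1\ldots j}}|\le 1/4n^3q(n)^2\}$ and bounding their masses separately, whereas you apply Markov directly to $\sum_j 1/r_j$ and $\sum_j \delta_j/r_j$ using the clean identity $\E_{v_j\mid u}[1/r_j]\le 2$.
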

\begin{proof}
Let $\cD = \{ \cD_\secpar\}_{\secpar \in \bbN}$ be a family of distributions that satisfies Definition \ref{def:type-1}.  Therefore there exists a polynomial $q$ such that for all QPT $\cA = \{\cA_\secpar\}_{\secpar \in \bbN}$, every (non-uniform, quantum) advice ensemble $\ket{\tau} = \{\ket{\tau_n}\}_{n \in \mathbb{N}}$, and large enough $n \in \bbN$, 
\begin{align*}
\label{eq:type1-instantiation}
    \Prr_{x\leftarrow \cD_n}\left[\left|\cA(\ket{\tau},x) - \Pr_{\cD_n}[x]\right| \leq \frac{\Pr_{\cD_n}[x]}{q(n)} \right] \leq 1 - \frac{1}{q(n)} 
\end{align*}
We may also assume w.l.o.g. that the outputs of $\cD_n$ are of $n$ bits. Define $\Gen(1^n)$ as follows:
    \begin{itemize}
        \item Sample $i \leftarrow [0,n-1]$
        \item Sample $x \leftarrow \cD_n$
        \item Output puzzle $x_{1\ldots i}$ and key $x_{i+1}$
    \end{itemize}
Let $p(n)$ be any polynomial such that $p(n) > n^6q(n)^3$.
We will prove that $\Gen(1^n)$ is a $1/p(n)$-distributional one-way puzzle. Since Theorem \ref{thm:owp-amplification} shows that distributional one-way puzzles can be amplified to obtain (strong) one-way puzzles, this suffices to prove the theorem.

Suppose for the sake of contradiction that $\Gen(1^n)$ is \textit{not} a $1/p(n)$-distributional one-way puzzle. Then there exists a QPT $\cA = \{\cA_\secpar\}_{\secpar \in \bbN}$ and (non-uniform, quantum) advice ensemble $\ket{\tau} = \{\ket{\tau_n}\}_{n \in \mathbb{N}}$ such that for infinitely many $n \in \bbN$, 
   \[
 \SD\left(\{x_{1\ldots i}, x_{i+1}\}, \{x_{1\ldots i}, \cA(\ket{\tau}, x_{1\ldots i})\}\right) \leq 1/p(n)
\] where $x_{1\ldots i}, x_{i+1} \leftarrow \Gen(1^n)$.
Fix any such adversary $\cA$, any such advice ensemble $\ket{\tau}$, and any such large enough $n\in \bbN$. We will drop the advice from the notation since it is always implicitly provided to the adversary. 

First we define some useful terms. For $i \in [0,n-1]$, $z\in\bin^i$, $z' \in \bin^{ n - i}$ and $b \in \bin$:
\begin{itemize}
    \item Define $p_z:= \Pr_{x\leftarrow \cD_n}[x_{1\ldots i} = z]$, i.e. the probability that the first $i$ bits of $x$ sampled from $\cD_n$ match $z$.
    \item Define $p_{z'|z}:= \Prr_{x\leftarrow \cD_n}\left[x_{i+1 \ldots n} = z'\ \middle|\  x_{1\ldots i} = z\right]$ i.e. the probability that the last $n-i$ bits of $x$ sampled from $\cD_n$ match $z'$ conditioned on the first $i$ bits of $x$  matching $z$.
    \item Similarly define $p_{b|z}:= \Prr_{x\leftarrow \cD_n}\left[x_{i+1} = b\ \middle|\  x_{1\ldots i} = z\right]$ i.e. the probability that the $i+1$-th bit of $x$ sampled from $\cD_n$ is $b$ conditioned on the first $i$ bits of $x$  matching $z$.
    \item Define $\tp_{b|z} := \Pr[\cA(z) = b]$
\end{itemize}
We will use the adversary to build $\cA'$ that contradicts the security of $\cC$. We will first build an estimator $\cE_b$ that takes input $z$ and estimates $\tp_{b|z}$.

For $i \in [0,n-1]$, $z \in \bin^{i}, b\in \bin$ define the algorithm $\cE_b(z)$ as:
\begin{itemize}
    \item For $j=1$ to $16n^7q(n)^4$:
    \begin{itemize}
        \item $X_j \leftarrow \bbI\{A(z) = b\}$
    \end{itemize}
    \item Return $\sum_j X_j/16n^7q(n)^4$
\end{itemize}
where $\bbI$ is an indicator function. We now define the algorithm $\cA'$ that takes input $x$ and computes an approximation to $p_x$. Define $\cA'(x)$ as
\begin{itemize}
    \item Return $\Pi_{j\in[0,n-1]} \cE_{x_{j+1}}(x_{1\ldots j})$
\end{itemize}
Observe that $p_x = \prod_{j=0}^{n-1} p_{x_{j+1}|x_{1\ldots j}}$. On a high level, to approximate $p_x$ it therefore suffices to approximate $p_{x_{j+1}|x_{1\ldots j}}$ for every $j$ and multiply the approximations. We cannot directly estimate $p_{x_{j+1}|x_{1\ldots j}}$ however $\cA'$ uses $\cE_{x_{j+1}}(x_{1\ldots j})$  to estimate $\tp_{x_{j+1}|x_{1\ldots j}}$ instead. We will use the fact that $\cA$ distributionally inverts $\Gen$ to argue that this is sufficient to (on average) obtain an approximation to $p_x$.

First, we show that $\cE_b(z)$ is a good approximation of $\tp_{b|z}$ with high probability.
\begin{claim}\label{clm:chernoff-prob-est} For all $i \in [0,n-1]$, $z \in \bin^{i}, b\in \bin$
    \[
    \Pr\left[\left|\cE_b(z) - \tp_{b|z}\right|\leq \frac{1}{4n^3q(n)^2}\right] \geq 1- 2e^{-n}
    \]
\end{claim}
\begin{proof}
    Follows from setting $\delta = \sqrt{n}$ in the additive Chernoff bound (Theorem \ref{thm:chernoff-additive}).
\end{proof}
The estimate we obtain has an inverse polynomial additive error. If the value being estimated is too small, then this leads to a large relative error in the estimate. We define the set $\bbB$ as containing all $x$ such that $p_{x_{j+1}|x_{1\ldots j}}$ is atleast $1/n^2q(n)$ for every index $j$.  Formally
\[
\bbB:= \left\{x\in \bin^n \text{ s.t.} \forall j \in [0,n-1], p_{x_{j+1}|x_{1\ldots j}} \geq 1/n^2q(n)\right\}
\]
 Intuitively, $\bbB$ contains strings for which the additive error induced by Claim \ref{clm:chernoff-prob-est} only leads to a small relative error. Next we show that with high probability $x$ sampled from $\cD_n$ is in $\bbB$.
\begin{claim}
    \label{clm:bbB-prob-est}
    \[
    \Prr_{x\leftarrow \cD_n}[x\in\bbB] \geq 1-2/nq(n)
    \]
\end{claim}
\begin{proof}
    For each index $j$ we define the set $\bbS_j$ as the set of strings $x$ such that $p_{x_{j+1}|x_{1\ldots j}}$ is less than $1/n^2q(n)$. Intuitively, if a string $x$ is not in $\bbB$, it must be in $\bbS_j$ for some $j$. Therefore, we can prove the claim by bounding the probability of sampling a string in $\bbS_j$ for every index $j$. Formally, for all $j \in [0,n-1]$
    \[
    \bbS_j:= \left\{x \text{ s.t. } p_{x_{j+1}|x_{1\ldots j}} < 1/n^2q(n)\right\}
    \]
    Now, since $p_x = \prod_{j=0}^{n-1} p_{x_{j+1}|x_{1\ldots j}}$, the probability of sampling $x$ depends on $p_{x_{j+1}|x_{1\ldots j}}$ for every index $j$. If $x\in\bbS_j$ then $p_{x_{j+1}|x_{1\ldots j}}$ is small, which allows us to bound the probability of sampling such an $x$, i.e.
    \begin{align*} 
    \Pr_{x\leftarrow {\cD_n}}\left[ x \in \bbS_j\right] &= \sum_{x\in\bbS_j} \Pr_{\cD_n}[x]\\
    &= \sum_{\substack{x\in\bbS_j}} p_{x_{1\ldots j}} \cdot p_{x_{j+1}|x_{1\ldots j+1}} \cdot p_{x_{j+2 \ldots n}|x_{1\ldots j+1}}\\
    &< \sum_{\substack{x\in\bbS_j}} p_{x_{1\ldots j}} \cdot 1/n^2q(n) \cdot p_{x_{j+2 \ldots n}|x_{1\ldots j+1}}\\
    &\leq \sum_{\substack{x\in\bin^n}} p_{x_{1\ldots j}} \cdot 1/n^2q(n) \cdot p_{x_{j+2 \ldots n}|x_{1\ldots j+1}}\\
    &= 2/n^2q(n)
    \end{align*}
    By a union bound, this implies
    \[
    \Pr_{x\leftarrow \cD_n}\left[ \exists j \text{ s.t. } x \in \bbS_j\right] \leq 2/nq(n)
    \] 
    Since for all $x\notin \bbB$, $\exists j \text{ s.t. } x\in \bbS_j$
    \[
    \Pr_{x\leftarrow {\cD_n}}\left[ x \in \bbB\right] \geq 1- 2/nq(n)
    \] which concludes the proof of the claim.
\end{proof}
Next we define the set of strings $x$ such that for all $j$ a good estimate of $\tp_{x_{j+1}|x_{1\ldots j}}$ is also a good estimate of $p_{x_{j+1}|x_{1\ldots j}}$. Define the set $\bbD$ as follows.
\[
\bbD:= \left\{x\in \bin^n \text{ s.t.} \forall j \in [0,n-1], \left|\tp_{1|x_{1\ldots j}}-p_{1|x_{1\ldots j}}\right|\leq 1/4n^3q(n)^2\right\}
\]
We now use the fact that $\cA$ is a distributional inverter to show  that with high probability $x$ sampled from $\cD_n$ is in $\bbD$.
\begin{claim}
    \label{clm:bbD-prob-est} 
    \[
    \Prr_{x\leftarrow {\cD_n}} [x\in\bbD] \geq 1 - 4n^5q(n)^2/p(n)
    \]
\end{claim}
\begin{proof}
    Recall that
     \[
 \SD\left(\{x_{1\ldots i}, x_{i+1}\}, \{x_{1\ldots i}, \cA(x_{1\ldots i})\}\right) \leq 1/p(n)
\] where $(x_{1\ldots i}, x_{i+1}) \leftarrow \Gen(1^n)$. Since $i$ is chosen uniformly, we may split the statistical distance into terms for each value of $i$, i.e. for $x\leftarrow \cD_n$
\begin{gather*}
 \sum_{j\in[0,n-1]}\Pr_{i\leftarrow[0,n-1]}[i=j]\cdot \SD\left(\{x_{1\ldots j}, x_{j+1}\}, \{x_{1\ldots j}, \cA(x_{1\ldots j})\}\right) \leq 1/p(n)\\
 \implies \forall j\in[0,n-1], \SD\left(\{x_{1\ldots j}, x_{j+1}\}, \{x_{1\ldots j}, \cA(x_{1\ldots j})\}\right) \leq n/p(n)
\end{gather*}
Expanding the statistical distance term, we may rewrite the expression as
\[
\forall j, \bbE_{x\leftarrow {\cD_n}}\left[\left|\tp_{1|x_{1\ldots j}}-p_{1|x_{1\ldots j}}\right|\right] \leq n/p(n)
\] 
By a Markov argument, 
\[
\forall j, \Prr_{x\leftarrow {\cD_n}}\left[\left|\tp_{1|x_{1\ldots j}}-p_{1|x_{1\ldots j}}\right| \geq \frac{1}{4n^3q(n)^2}\right] \leq 4n^4q(n)^2/p(n)
\] 
By a union bound
\[
\Prr_{x\leftarrow {\cD_n}}\left[\forall j, \left|\tp_{1|x_{1\ldots j}}-p_{1|x_{1\ldots j}}\right| \geq \frac{1}{4n^3q(n)^2}\right] \leq 4n^5q(n)^2/p(n)
\] 
which by the definition of $\bbD$ implies
\[
    \Prr_{x\leftarrow {\cD_n}} [x\in\bbD] \geq 1 - 4n^5q(n)^2/p(n)
    \]
    concluding the proof of the claim.
\end{proof}
To complete the proof we will first show that for all $x \in\bbB \cap \bbD$ , $\cA'(x)$ is a good approximation to $p_x$ with high probability. Then we show that with high probability $x \in \bbB \cap \bbD$ when $x$ is sampled from $\cD_n$. To show the former, we will need the following lemma. The lemma shows that for some real values $\{a_i, b_i\}_{i\in [0,n-1]}$, if $b_i$ is a good (i.e. low relative error) estimate of $a_i$ for all $i$ then $\prod_i b_i$ is a good estimate of $\prod_i a_i$.
\begin{lemma}
\label{lem:relative-error-prob-est}
    For all $i\in[0,n-1]$, let $a_i, b_i, \delta$ be values such that 
    \begin{itemize}
        \item $0<a_i\leq 1$
        \item $0\leq\delta < 1/n$
        \item $\frac{\left|a_i -b_i\right|}{a_i} \leq \delta$
    \end{itemize} 
    Then
    \[
    \frac{\left|\Pi_i a_i - \Pi_i b_i\right|}{\Pi_i a_i} \leq 2n\delta
    \]
\end{lemma}
\begin{proof}
    Let $a := \Pi_i a_i$ and $b:= \Pi_i b_i$. Then 
    \[
    b = \Pi_i b_i = \Pi_i a_i\cdot\frac{b_i}{a_i} = a\cdot\Pi_i \left(1 + \frac{a_i - b_i}{a_i}\right)
    \]
    Since $\left|\frac{a_i - b_i}{a_i}\right| \leq \delta \leq 1/n$ and $a>0$
    \begin{gather*} 
    a\left(1-\delta\right)^n \leq b \leq a(1+\delta)^n\\
    \implies a\left(1-n\delta\right) \leq b \leq a/(1-n\delta)
    \end{gather*}
    Therefore,
    \begin{align*}
        \left|1 -\frac{b}{a}\right| &\leq \max\left(n\delta,\frac{n\delta}{1-n\delta}\right)\\
        &\leq 2n\delta
    \end{align*}
    which concludes the proof of the lemma.
\end{proof}
Now we can show that for all $x \in\bbB \cap \bbD$ , with high probability $\cA'(x)$ is a good approximation of $p_x$.
\begin{claim}
\label{clm:final-error-prob-est}
    $\forall x \in \bbB\cap \bbD$
    \[\Prr\left[\left|\cA'(x) - \Pr_{\cD_n}[x]\right|\leq \frac{\Pr_{\cD_n}[x]}{q(n)}\right] \geq 1-2ne^{-n}
    \]
\end{claim}
\begin{proof}
    We start by noting that for all $j$, $\cE_{x_{j+1}}(x_{1\ldots j})$ is close $\tp_{x_{j+1}|x_{1\ldots j}}$ with high probability. Formally, by Claim \ref{clm:chernoff-prob-est}, for all $j\in[0,n-1]$
    \[
    \Pr\left[\left|\cE_{x_{j+1}}(x_{1\ldots j}) - \tp_{x_{j+1}|x_{1\ldots j}}]\right|\leq \frac{1}{4n^3q(n)^2}\right] \geq 1- 2e^{-n}
    \]
    By a union bound
     \[
    \Pr\left[\forall j\in[0,n-1],\left| \cE_{x_{j+1}}(x_{1\ldots j}) - \tp_{x_{j+1}|x_{1\ldots j}}\right|\leq \frac{1}{4n^3q(n)^2}\right] \geq 1- 2ne^{-n}
    \]
    Since $x\in \bbD$, $\tp_{x_{j+1}|x_{1\ldots j}}$ is close to $p_{x_{j+1}|x_{1\ldots j}}$. Formally, $\forall j\in[0,n-1]$
    \[
    \left|  p_{x_{j+1}|x_{1\ldots j}} - \tp_{x_{j+1}|x_{1\ldots j}}\right| \leq \frac{1}{4n^3q(n)^2}
    \]
    By the triangle inequality, this shows that $\cE_{x_{j+1}}(x_{1\ldots j})$ is close $p_{x_{j+1}|x_{1\ldots j}}$ with high probability.
    \[
    \Pr\left[\forall j\in[0,n-1],\left| \cE_{x_{j+1}}(x_{1\ldots j}) - p_{x_{j+1}|x_{1\ldots j}}\right|\leq \frac{1}{2n^3q(n)^2}\right] \geq 1- 2ne^{-n}
    \]
    This gives us a bound on the additive error. To obtain a bound on relative error, we note that for all $x\in \bbB$, $\forall j\in[0,n-1],  p_{x_{j+1}|x_{1\ldots j}} \geq 1/n^2q(n)$. Therefore we can divide by $p_{x_{j+1}|x_{1\ldots j}}$
    \begin{equation*}
    \Pr\left[\forall j\in[0,n-1],\frac{\left| \cE_{x_{j+1}}(x_{1\ldots j}) - p_{x_{j+1}|x_{1\ldots j}}\right|}{p_{x_{j+1}|x_{1\ldots j}}}\leq \frac{1}{2nq(n)}\right] \geq 1- 2ne^{-n}
    \end{equation*}
    Finally we use Lemma \ref{lem:relative-error-prob-est} to show that the product of good estimates for $p_{x_{j+1}|x_{1\ldots j}}$ is a good estimate for $p_x$. For all $j \in [0,n-1]$, let $a_j := p_{x_{j+1}|x_{1\ldots j}}$ and $b_j:=\cE_{x_{j+1}}(x_{1\ldots j})$. Let $\delta:= 1/2nq(n)$. Note that $\Pi_j a_j = \Pi_j p_{x_{j+1}|x_{1\ldots j}} = p_x$ and $\Pi_j b_j = \cA'(x)$. Then applying Lemma \ref{lem:relative-error-prob-est} to the above
\[
    \Pr\left[\frac{\left|\cA'(x) - p_x\right|}{p_x} \leq 1/q(n)\right] \geq 1- 2ne^{-n}    
\]
which after rearranging gives
\[
    \Pr\left[\left|\cA'(x) - p_x\right| \leq p_x/q(n)\right] \geq 1- 2ne^{-n}    
\]
concluding the proof of the claim.
\end{proof}
Finally, combining Claim $\ref{clm:bbB-prob-est}$ and Claim $\ref{clm:bbD-prob-est}$ we can show that 
\[
\Prr_{x\leftarrow {\cD_n}}\left[x\in \bbB \cap \bbD\right] \geq 1-4n^5q(n)^2/p(n) -2/nq(n)
\]
By Claim \ref{clm:final-error-prob-est}
\begin{align*}
\Prr_{x\leftarrow {\cD_n}}\left[\left|\cA'(x) - \Pr_{\cD_n}[x]\right|\leq \frac{\Pr_{\cD_n}[x]}{q(n)}\right] &\geq \Prr_{x\leftarrow {\cD_n}}[x\in\bbB\cap\bbD] \cdot \left(1-2e^{-n}\right)\\
&\geq \left(1-4n^5q(n)^2/p(n) -2/nq(n)\right)\left(1-2e^{-n}\right)
\end{align*}
Since $p(n)>n^6q(n)^3$ and $n$ is large enough,
\begin{align*}
\Prr_{x\leftarrow {\cD_n}}\left[\left|\cA'(x) - \Pr_{\cD_n}[x]\right|\leq \frac{\Pr_{\cD_n}[x]}{p(n)}\right] &\geq \left(1-4n^5q(n)^2/p(n) -2/nq(n)\right)\left(1-2e^{-n}\right)\\
&> \left(1-4/nq(n) -2/nq(n)\right)\left(1-2e^{-n}\right)\\
&>(1-1/q(n))
\end{align*}
which contradicts Definition $\ref{def:type-1}$, concluding the proof of the theorem.
\end{proof}

\subsection{One-Way Puzzles Imply Native Approximation Hardness}
\begin{theorem}
    \label{thm:puzzles-imply-type1}
     The existence of distibutional one-way puzzles (Definition \ref{def:dist-owp}) implies the existence of families of distributions that satisfy Definition $\ref{def:type-1}$.
\end{theorem}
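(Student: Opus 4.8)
The plan is to build the required distribution family directly from a $1/p_0(\cdot)$-distributional one-way puzzle $\Gen$ (which we are given to exist), and then argue that a good probability approximator for it yields a distributional inverter for $\Gen$. Following the binary-search idea of~\cite{CGGHLP}, the inverter should recover a key bit-by-bit, using estimated probabilities to decide each next bit. Accordingly I would set $\cD_n$ to: run $\Gen(1^n)\to(s,k)$ (WLOG $k\in\bin^n$), sample $i\leftarrow[n]$ and $c\leftarrow\bin$, and output $(i,s,k_{1\ldots i})$ if $c=0$ and $(i,s,k_{1\ldots i-1}\|\beta)$ for a fresh $\beta\leftarrow\bin$ if $c=1$. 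The $c=1$ ``noise mode'' is the crucial design choice: it ensures that for every prefix $z$ of a valid key for $s$, \emph{both} children $(i,s,z\|0)$ and $(i,s,z\|1)$ carry $\cD_n$-mass at least $\tfrac{1}{4n}\cdot\Prr[\,\mathrm{puzzle}=s,\ k_{1\ldots i-1}=z\,]$, so an approximator accurate on a $1-1/p(n)$ fraction of $\cD_n$'s mass cannot misbehave with impunity at exactly the two points the search queries.

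\textbf{The inverter.} On input $s$, $\cA'$ keeps a prefix $z$ (initially empty) and, for $i=1,\dots,n$: queries the assumed approximator $\cA$ on $(i,s,z\|0)$ and $(i,s,z\|1)$ to get $\widehat p_0,\widehat p_1$, forms a next-bit estimate $\widehat q_1 := 2\widehat p_1/(\widehat p_0+\widehat p_1)-\tfrac12$ clipped to $[0,1]$, samples the $i$-th key bit to be $1$ with probability $\widehat q_1$, and appends it to $z$. A short calculation shows that when both queries have relative error at most $1/p(n)$, then $\widehat q_1$ is within $O(1/p(n))$ of the true conditional $\mu_1(s,z)=\Prr[k_i=1\mid \mathrm{puzzle}=s,\ k_{1\ldots i-1}=z]$: this uses that the two children's $\cD_n$-masses are $\tfrac{1}{2n}(P_b+S/2)$ with $P_0+P_1=S$ (so they lie within a factor $3$ of each other), and that the ratio $\widehat p_b/(\widehat p_0+\widehat p_1)$ recovers $\mu_b$ up to the fixed normalization constant.

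\textbf{Bounding the inversion error.} The heart of the argument is to bound $\SD\big(\{s,k\},\{s,\cA'(s)\}\big)$, and here I would deliberately avoid a naive step-by-step hybrid, which would accumulate error exponentially (an early mistake pushes the search onto prefixes of tiny $\Gen$-mass, where $\cA$'s guarantee is vacuous). Instead, couple $\cA'(s)$ with the \emph{ideal} sampler that uses the exact conditionals $\mu_1$: share the puzzle $s$ and, at each step, a single uniform threshold in $[0,1]$, so the two processes follow the same prefix until the first step at which their sampled bits disagree. The point of the coupling is that \emph{before divergence the walk is exactly on the true key-prefix distribution}, so the probability of first diverging at step $j$ is at most $\mathbb{E}_{(s,z)}\,\mathbb{E}_{\cA}\big|\widehat q_1(s,z)-\mu_1(s,z)\big|$, where $(s,z)$ is distributed as $(s,k_{1\ldots j-1})$ under $\Gen(1^n)$. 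Summing over $j$ and splitting each term into the ``$O(1/p(n))$ when both queries are good'' part and the ``some query is bad'' part, and observing that the step-$j$ query points (prefix from the true distribution, child chosen uniformly) are exactly $\cD_n$ conditioned on $c=1$ and $i=j$ — so their total failure probability across all $j$ is at most $O(n)\cdot 1/p(n)$ by $\cA$'s overall accuracy — yields $\SD\big(\{s,k\},\{s,\cA'(s)\}\big)=O(n/p(n))$. Taking $p(n)\ge c\,n\,p_0(n)$ for a suitable constant $c$ makes this $<1/p_0(n)$, contradicting the $1/p_0(\cdot)$-distributional security of $\Gen$; hence $\cD$ satisfies Definition~\ref{def:type-1} with polynomial $p$.

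\textbf{Main obstacle and loose ends.} The main obstacle is precisely this error-propagation issue in the binary search, and the two ideas above are what resolve it: (i) the $c=1$ mode guarantees the specific points the search touches are never ``invisible'' to $\cA$, and (ii) the coupling (staying on the true-prefix distribution until divergence) converts what looks like exponential error accumulation into linear accumulation. Some routine points also need a bit of care: $\widehat q_1$ may land slightly outside $[0,1]$ and must be clipped (harmless since the true value lies in $[0,1]$); $\cA$ is randomized, but no repetition or median trick is needed because only the \emph{average} behavior of $\cA$ over the step-$j$ query distribution enters the bound; and one should check that $\cA'$ is a genuine QPT algorithm with quantum advice, which it is, since it just invokes $\cA$ $2n$ times with classical post-processing.
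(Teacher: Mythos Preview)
Your proposal is correct and takes essentially the same approach as the paper: the distribution $\cD_n$ (prefix plus one bit, with a uniform ``noise mode'' on the last bit), the next-bit recovery formula (your $2\widehat p_1/(\widehat p_0+\widehat p_1)-\tfrac12$ is algebraically identical to the paper's $(3\widetilde a_1-\widetilde a_0)/(2(\widetilde a_0+\widetilde a_1))$), and the error analysis all match. Your coupling argument is exactly the paper's hybrid $\tD_j$ (first $j$ bits from the true conditionals, remaining from $\cA'$) stated in coupling language; both hinge on the same observation that the prefix at the switch/divergence point is weighted by the \emph{true} $\Gen$-distribution, together with the pointwise bound $\Pr_{\cD_n}[(i,s,z\|b)]\ge \tfrac{1}{4n}\Pr_{\Gen}[s,z]$ to convert $\cA$'s overall $1/p(n)$ accuracy into an $O(n/p(n))$ total error.
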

Let $\Gen(1^n)$ be a $1/q(n)$-distributional one-way puzzle for some polynomial $q$ that samples $n$ bit puzzles and $n$ bit keys. Let $\cD = \{\cD_n\}_{n\in\bbN}$ be a family of distributions where $\cD_n$ is defined as follows:
\begin{enumerate}
    \item $(s,k) \sample \Gen(1^n)$
    \item $i \leftarrow [0,n-1]$
    \item $x \leftarrow s \| k_{1\dots i}$ 
    \item $b_\text{mode} \leftarrow \{0, 1\}$
    \item If $b_\text{mode} = 0$, then $\beta \leftarrow \{0, 1\}$. If $b_\text{mode} = 1$, then $\beta \leftarrow k_{i + 1}$
    \item Output $(x, \beta)$
\end{enumerate}

Let $p$  be a polynomial such that $p(n) > 5nq(n)$. We will prove that $\cD$ satisfies Definition $\ref{def:type-1}$, i.e.
for all QPT $\cA = \{\cA_\secpar\}_{\secpar \in \bbN}$, every (non-uniform, quantum) advice ensemble $\ket{\tau} = \{\ket{\tau_n}\}_{n \in \mathbb{N}}$, and large enough $n \in \bbN$, 
\[
    \Prr_{x\leftarrow \cD_n}\left[\left|\cA(\ket{\tau}, x) - \Pr_{\cD_n}[x]\right| \leq \frac{\Pr_{\cD_n}[x]}{p(n)}\right] \leq 1 - \frac{1}{p(n)}
\] Note that this suffices to prove the theorem.

     Suppose for the sake of contradiction that $\cD$ \textit{does not} satisfy Definition $\ref{def:type-1}$. Therefore there exists a QPT $\cA = \{\cA_\secpar\}_{\secpar \in \bbN}$ and (non-uniform, quantum) advice ensemble $\ket{\tau} = \{\ket{\tau_n}\}_{n \in \mathbb{N}}$ such that for infinitely many $n \in \bbN$, 
      \begin{align*}
        & \underset{x \leftarrow \cD_n}{\Prr}\left[ \left|\cA(\ket{\tau}, x) - \Pr_{\cD_n}[x]\right| \leq \frac{\Pr_{\cD_n[x]}}{p(n)} \right] > 1 - 1/{p(n)}.
    \end{align*} 
Fix any such adversary $\cA$, any such advice ensemble $\ket{\tau}$, and any such large enough $n\in \bbN$. We will drop the advice from the notation since it is always implicitly provided to the adversary.  
    We will show that we can use $\cA$ to contradict one-wayness of $\Gen$. 
    
    First we define some useful terms. 
        For all $j \in [0,n-1]$, $s \in \{0, 1\}^n$, $z \in \{0, 1\}^j$, and $b \in \bin$ we define:
\begin{itemize}
        \item Define $p_{s} := {\Prr_{s',k \leftarrow \Gen(1^n)}}[s = s']$ i.e. the probability that $\Gen(1^n)$ samples puzzle $s$.
        \item Define $p_{b|sz}:= \Prr_{s',k \leftarrow \Gen(1^n)}\left[k_{i+1} = b\ \middle|\  s' = s \wedge k_{1\ldots i} = z\right]$ i.e. the probability that the $i+1$-th bit of key $k$ sampled by $\Gen(1^n)$ equals $b$ conditioned on $\Gen(1^n)$ sampling puzzle $s$ and the first $i$ bits of $k$  matching $z$.
\end{itemize}
For all $s \in \{0, 1\}^n$, $j \in [0,n-1]$, $k \in \{0, 1\}^n$, define ${\cS}(s, k_{1\dots j}) :=$
    \begin{enumerate}
        \item $\ta_1 \leftarrow \cA(s\|k_{1\dots j}\|1)$
        \item $\ta_0 \leftarrow \cA(s\|k_{1\dots j}\|0)$
        \item $\pi := \frac{3\ta_1 - \ta_0}{2(\ta_1 + \ta_0)}$ \\
        If $\pi > 1$, set $\pi \leftarrow 1$ \\
        If $\pi < 0$, set $\pi \leftarrow 0$
        \item Sample $k_{j+1} \leftarrow \text{Bern}(\pi)$, return $k_{j+1}$.
    \end{enumerate}
    Intuitively, ${\cS}(s, k_{1\dots j})$ aims to sample from the distribution on $k_{j + 1}$ induced by sampling from $\Gen(1^n)$ conditioned on $s$ and $k_{1\dots j}$, i.e. to sample $k_{j+1}$ with probability $p_{k_{j+1}|sk_{1\ldots j}}$. Define $\tp_{b|sz} := \Pr[\cS(s,z) = b]$ .\\

    \noindent For all $s \in \{0, 1\}^n$, define ${\cA'}(s) :=$
    \begin{enumerate}
        \item For $j = 0$ to $n - 1$: 
        \begin{itemize}
            \item $k_{j + 1} \leftarrow {\cS}(s,k_{1\dots j})$ \hspace{2em}(i.e. if $j = 0$, $k_1 \leftarrow {\cS}(s)$)
        \end{itemize}
        \item Return $k$.
    \end{enumerate}
    Intuitively, $\cA'(s)$ aims to use the sampler $\cS$ to sample bit-by-bit from the distribution induced on keys $k$ sampled by $\Gen(1^n)$ conditioned on sampling puzzle $s$. More formally, we will prove that
% \begin{claim}
    \begin{align*}
    &\SD(\{s,k\}, \{s, {\cA'}(s)\}) \leq \frac{1}{q(n)},
    \end{align*}
    where $(s, k )\leftarrow \Gen(1^n)$.
% \end{claim}

% \begin{proof}
Note that $\Pr_{\Gen(1^n)}[(s,k)]$ may be expressed in terms of the probability of sampling $s$ and the probability of sampling $k_{j+1}$ conditioned on having sampled $s$ and $k_{1\ldots j} $.
\[
    \Pr_{\Gen(1^n)}[(s,k)] = p_s \cdot \prod_{i=1}^{n-1} p_{k_{i+1}|sk_{1\ldots i}}
\]
The probability distribution $\{s, {\cA'}(s)\}_{s,k\leftarrow\Gen(1^n)}$ may also be expressed similarly
\[
    \Prr_{(s',k') \leftarrow \Gen(1^n)}[s'=s] \cdot  \Pr_{\cA'(s)}[k] = p_s \cdot \prod_{i=1}^{n-1} \tp_{k_{i+1}|sk_{1\ldots i}}
\]
To argue that the two distributions are close, we will define a series of hybrid distributions interpolating between them. 
    For all $j \in [0,n-1]$, define distribution $\tD_j$ on $\bin^n \times \bin^n$ as follows. For all $s \in \bin^n$ and $k \in \bin^n$
    \begin{align*}
        \Pr_{\tD_j}[s, k] :=&\ p_s \cdot p_{k_1|s} \cdot p_{k_2|sk_1} \cdot p_{k_3| sk_{1,2}} \ldots p_{k_j|sk_{1\ldots j-1}} \cdot \tp_{k_{j+1}|sk_{1\ldots j}} \ldots \tp_{k_n|sk_{1\ldots n-1}}\\
        =&\ p_s \cdot \prod_{i = 0}^{j-1} p_{k_{i+1}|sk_{1\dots i}} \prod_{i = j}^{n-1} \tp_{k_{j+1}|sk_{1\dots j}}
    \end{align*}
    Note that $\tD_0 = \{s, {\cA'}(s)\}_{s,k\leftarrow\Gen(1^n)}$ and $\tD_n = \{s, k\}_{s,k\leftarrow\Gen(1^n)}$.
    % It is simple to see that $\sum_{s,k} \Prr_{D_t}(s,k) = 1$.
    % Now,
    % \begin{align*}
    %     \Prr_{D_{n-1}}(s, k) &= P_s \prod_{j = 0}^{n - 1} P_{k_{j+1}|sk_{1\dots j}} \\
    %     &= P_s \underset{s',k' \leftarrow y}{\Prr}[k'=k|s'=s] \\
    %     &= \Prr_y[s, k].
    % \end{align*}
    % Therefore, $D_{n-1}$ is the output distribution of $y$. At the other extreme, by the definition of ${{\cA'}}$,
    % \begin{align*}
    %     \Prr_{D_0}(s,k) &= P_s \prod_{j=0}^{n-1} \tilde{P}_{k_{j+1}|sk_{1\dots j}} \\
    %     &= P_s \underset{s', k' 
    %     \leftarrow y}{\Prr}[{\cA'}(s') = k].
    % \end{align*}
    Therefore, by the triangle inequality for statistical distance
    \begin{align}
        \label{eq:SD-sum}
        \SD(\{s,k\}, \{s, {\cA'}(s)\}) &= \SD(\tD_0, \tD_n)\nonumber\\
        &\leq \sum_{j=0}^{n-1} \SD(\tD_{j+1}, \tD_{j})
    \end{align}
    where $(s,k)\leftarrow\Gen(1^n)$. To upper bound $\SD(\{s,k\}, \{s, {\cA'}(s)\})$, it therefore suffices to upper bound $\SD(\tD_{j+1}, \tD_j)$. For any $j \in [0,n-1]$
    \begin{align}
    \label{eq:SD-in-puzzles-imply-type1}
    \SD(\tD_{j+1}, \tD_{j}) &= \frac{1}{2}\cdot\sum_{s,k} \left|\Pr_{\tD_{j+1}}[s,k]-\Pr_{\tD_{j}}[s,k]\right|\nonumber\\
        &=\frac{1}{2}\cdot\sum_{s,k} p_s\cdot\prod_{i = 0}^{j-1} p_{k_{i+1}|sk_{1\dots i}} \left|p_{k_{j+1}|sk_{1\dots j}} - \tp_{k_{j+1}|sk_{1\dots j}}\right| \prod_{i = j }^{n-1} \tp_{k_{i+1}|sk_{1\dots i}}\nonumber\\
        &=\frac{1}{2}\cdot\sum_{s,k_{1\ldots j+1}}p_s\cdot\prod_{i = 0}^{j-1} p_{k_{i+1}|sk_{1\dots i}} \left|p_{k_{j+1}|sk_{1\dots j}} - \tp_{k_{j+1}|sk_{1\dots j}}\right| \cdot \sum_{k_{j+2\ldots n}}\prod_{i = j }^{n-1} \tp_{k_{i+1}|sk_{1\dots i}}\nonumber\\
       &=\frac{1}{2}\cdot\sum_{s,k_{1\ldots j+1}}p_s\cdot\prod_{i = 0}^{j-1} p_{k_{i+1}|sk_{1\dots i}} \left|p_{k_{j+1}|sk_{1\dots j}} - \tp_{k_{j+1}|sk_{1\dots j}}\right|
       \end{align}
    %    which can be written as the expectation over $(s,k)\leftarrow \Gen(1^n)$ of $\frac{1}{2}\cdot \left|p_{k_{j+1}|sk_{1\dots j}} - \tp_{k_{j+1}|sk_{1\dots j}}\right|$. 
    %    \begin{align*}
    %         \SD(D_{j+1}, D_{j}) &=\frac{1}{2}\cdot  \bbE_{s,k\leftarrow \Gen(1^n)}\left[\left|p_{k_{j+1}|sk_{1\dots j}} - \tp_{k_{j+1}|sk_{1\dots j}}\right|\right]
    % \end{align*}
    The value of $\left|p_{k_{j+1}|sk_{1\dots j}} - \tp_{k_{j+1}|sk_{1\dots j}}\right|$ expresses how far the output distribution of $\cS(s, k_{1\dots j})$ is from the distribution of $k_{j+1}$ conditioned on $s, k_{1\dots j}$. In the next subclaim we show that this term is small if for all $b\in\bin$, $\cA(s, k_{1\dots j}, b)$ is close to $\Prr_{\cD_n}[s\|k_{1\dots j}\| b]$ with high probability.
    \begin{claim} For all $s\in\bin^n, k\in \bin^n, j\in[0,n-1]$, define $\tau_{s,k_{1\ldots j}}$ as follows.

    \[
        \tau_{s,k_{1\ldots j}} := \sum_{b\in\bin}\Pr\left[\left|\cA(s,k_{1\ldots j},b) - \Pr_{\cD_n}[s\|k_{1\ldots j}\|b]\right| > \frac{\Pr_{\cD_n}[s\|k_{1\ldots j}\|b]}{p(n)}\right]
    \]
        then
    \[
        \left|p_{k_{j+1}|sk_{1\dots j}} - \tp_{k_{j+1}|sk_{1\dots j}}\right| \leq 7/p(n) + \tau_{s,k_{1\ldots j}}
    \]
    \end{claim}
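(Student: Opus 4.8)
The plan is to first put $\Pr_{\cD_n}[s\|k_{1\ldots j}\|b]$ in closed form, observe that the formula for $\pi$ inside $\cS$ is exactly the one that inverts this closed form to recover $p_{k_{j+1}|sk_{1\ldots j}}$, and then show the inversion is stable under the relative errors of $\cA$. Write $z := k_{1\ldots j}$ and $p_{sz} := \Pr_{(s',k')\leftarrow\Gen(1^n)}[\,s'=s \wedge k'_{1\ldots j}=z\,]$. An output of $\cD_n$ of the form $(s\|z,\,b)$ forces the index $i$ to equal $j$; conditioned on $i=j$, the prefix $(s',k'_{1\ldots j})$ equals $(s,z)$ with probability $p_{sz}$, and then $\beta=b$ with probability $\frac12\cdot\frac12+\frac12\cdot p_{b|sz}$ (the first term from the mode $b_{\mathrm{mode}}=0$, the second from $b_{\mathrm{mode}}=1$). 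Hence
\[
 P_b \;:=\; \Pr_{\cD_n}[s\|z\|b] \;=\; \frac{p_{sz}\,(1+2p_{b|sz})}{4n}.
\]
Using $p_{0|sz}+p_{1|sz}=1$, a one-line computation gives $P_0+P_1=p_{sz}/n$ and $3P_1-P_0=2p_{sz}p_{1|sz}/n$, so that $p_{1|sz} = (3P_1-P_0)/\big(2(P_1+P_0)\big)$, which is precisely the value $\pi$ that $\cS(s,z)$ computes when $\ta_1=P_1$ and $\ta_0=P_0$. (If $p_{sz}=0$ then $P_0=P_1=0$ and $p_{b|sz}$ is undefined; we may assume $p_{sz}>0$, since in the surrounding argument this quantity is used only weighted by $p_s\prod_{i=0}^{j-1}p_{k_{i+1}|sk_{1\ldots i}}=p_{sz}$, which then vanishes. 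When $p_{sz}>0$ we moreover have $P_0,P_1\ge p_{sz}/(4n)>0$.)

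Next, introduce the events $E_b:=\{\,|\cA(s\|z\|b)-P_b|\le P_b/p(n)\,\}$, so that $\Pr[\overline{E_0}\cup\overline{E_1}]\le\Pr[\overline{E_0}]+\Pr[\overline{E_1}]=\tau_{s,k_{1\ldots j}}$ by a union bound. Conditioned on $E_0\cap E_1$, write $\ta_b=P_b(1+\eps_b)$ with $|\eps_b|\le 1/p(n)$; then the numerator $3\ta_1-\ta_0$ differs from $3P_1-P_0$ by at most $3(P_0+P_1)/p(n)$, and the denominator $2(\ta_1+\ta_0)$ differs from $2(P_1+P_0)$ by at most $2(P_0+P_1)/p(n)$, hence is at least $2(P_1+P_0)(1-1/p(n))>0$. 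The standard first-order bound for a ratio (using $p_{1|sz}\le1$) then gives, for the unclamped value $\pi'$,
\[
 |\pi'-p_{1|sz}| \;\le\; \frac{5}{2\,p(n)\,(1-1/p(n))} \;\le\; \frac{7}{p(n)},
\]
the last inequality since $n$ is large enough that $p(n)>5nq(n)\ge5$; clamping $\pi'$ into $[0,1]$ only decreases its distance from $p_{1|sz}\in[0,1]$, so $|\pi-p_{1|sz}|\le 7/p(n)$ on $E_0\cap E_1$, while off this event we use the trivial bound $|\pi-p_{1|sz}|\le1$.

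To conclude, observe that $\cS(s,z)$ outputs $1$ with probability $\tp_{1|sz}=\E[\pi]$, the expectation being over $\cA$'s internal coins and the Bernoulli draw, so
\[
 |\tp_{1|sz}-p_{1|sz}| \;\le\; \E\big[\,|\pi-p_{1|sz}|\,\big] \;\le\; \frac{7}{p(n)}\cdot\Pr[E_0\cap E_1] + \Pr[\overline{E_0\cap E_1}] \;\le\; \frac{7}{p(n)} + \tau_{s,k_{1\ldots j}}.
\]
The bound for the other value of $k_{j+1}$ is identical, since $\tp_{0|sz}=1-\tp_{1|sz}$ and $p_{0|sz}=1-p_{1|sz}$ force $|\tp_{k_{j+1}|sz}-p_{k_{j+1}|sz}|$ to equal $|\tp_{1|sz}-p_{1|sz}|$ regardless of the bit $k_{j+1}$; this is the claimed inequality. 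The one mildly delicate point is the ratio estimate: one has to be sure $\cA$'s relative error is not amplified when dividing, which is exactly why it matters that each $P_b$ (and hence the denominator $P_0+P_1$) is bounded below by $p_{sz}/(4n)$ instead of being allowed to be arbitrarily small relative to the numerator; the rest is bookkeeping.
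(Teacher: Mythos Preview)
Your proof is correct and follows essentially the same approach as the paper: compute $\Pr_{\cD_n}[s\|z\|b]$ in closed form, observe that $\pi$ inverts this formula exactly, and show stability of the ratio under relative error $1/p(n)$ in each $\ta_b$. The only cosmetic difference is that the paper carries out the ratio stability step via the substitution $t=a_1/a_0$, $\ttt=\ta_1/\ta_0$ (obtaining the slightly sharper constant $6/p(n)$), whereas you use a direct first-order bound on $|N'/D'-N/D|$; your explicit treatment of the clamping step and of the degenerate case $p_{sz}=0$ is a nice touch the paper leaves implicit.
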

    \begin{proof}
    First we note that since by definition $p_{1|sk_{1\dots j}} + p_{0|sk_{1\dots j}} = 1$ and $\tp_{1|sk_{1\dots j}} + \tp_{0|sk_{1\dots j}} = 1$
        \[\left|p_{1|sk_{1\dots j}} - \tp_{1|sk_{1\dots j}}\right| = \left|p_{0|sk_{1\dots j}} - \tp_{0|sk_{1\dots j}}\right|\]
For $b\in \bin$, let $a_b:=\Pr_{\cD_n}[s\|k_{1\dots j}\|b]$. By the construction of $\cD_n$, for $b\in\bin$
\[
    a_b = \Pr[i=j]\cdot\Prr_{s',k' \leftarrow \Gen(1^n)}[s'=s \wedge k'_{1\dots j}=k_{1\dots j}]\cdot \left(\frac{1}{4} + \frac{p_{b|sk_{1\dots j}}}{2}\right)
\]
This allows us to express $p_{b|sk_{1\dots j}}$ in terms of $a_b$. More precisely we can say
\[
    % p_{1|sk_{1\dots j}} =  \frac{3 \Prr_{\cD_n}[s\|k_{1\dots j}\|1] - \Prr_{\cD_n}[s\|k_{1\dots j}\|0]}{2(\Prr_{\cD_n}[s\|k_{1\dots j}\|1] + \Prr_{\cD_n}[s\|k_{1\dots j}\|0])}
    p_{1|sk_{1\dots j}} =  \frac{3 a_1 - a_0}{2(a_0 + a_1)}
\]

Recall that $\cS(s, k_{1\ldots j})$ computes $\ta_1$ and $\ta_0$, and then calculates $\pi = \frac{3\ta_1 - \ta_0}{2\ta_1 + 2\ta_0}$. Finally it samples a bit according to Bern$(\pi)$. Therefore we can bound the distance between  $p_{1|sk_{1\dots j}}$ and $\tp_{1|sk_{1\dots j}}$ in the case where for all $b\in\bin$, $\ta_b$ is close to $a_b$.

\begin{subclaim}
    If for some sampled $\ta_0$ and $\ta_1$ the following two conditions hold:
    \begin{itemize}
        \item $|\ta_0 - a_0| \leq \frac{a_0}{p(n)}$
        \item $|\ta_1 - a_1| \leq \frac{a_1}{p(n)}$
    \end{itemize}
    Then $|\pi - p_{1|sk_{1\ldots j}}| \leq \frac{6}{p(n)}$.
\end{subclaim}

\begin{proof}
Let $t := a_1/a_0$ and $\ttt := \ta_1/\ta_0$. Since $a_0\geq 0$ and $a_1 \geq 0$ we can bound $\ttt$ using $t$ as follows. $\ta_0 \geq (1-1/p(n))\cdot a_0$ and $\ta_1 \leq (1+1/p(n))\cdot a_1$ so for large enough $n$
\begin{align*}
    \ttt &\leq \frac{1+1/p(n)}{1-1/p(n)}\cdot t\\
         &\leq (1+3/p(n)) \cdot t
\end{align*}
Similarly,  $\ta_0 \leq (1+1/p(n))\cdot a_0$ and $\ta_1 \geq (1-1/p(n))\cdot a_1$ so for large enough $n$
\begin{align*}
    \ttt &\geq \frac{1-1/p(n)}{1+1/p(n)}\cdot t\\
         &\geq (1-3/p(n)) \cdot t
\end{align*}
Now, $p_{1|sk_{1\dots j}} =  \frac{3 a_1 - a_0}{2(a_0 + a_1)} = \frac{3t - 1}{2(t + 1)}$ and $\pi =  \frac{3 \ta_1 - \ta_0}{2(\ta_0 + \ta_1)} = \frac{3\ttt - 1}{2(\ttt + 1)}$ so
\begin{align*}
    |p_{1|sk_{1\dots j}} - \pi| &= \left|\frac{3t - 1}{2(t + 1)} - \frac{3\ttt - 1}{2(\ttt + 1)}\right| \\
    &= \left|\frac{3}{2} - \frac{4}{2(t + 1)} - \frac{3}{2} + \frac{4}{2(\ttt + 1)}\right|\\
    &= \left|\frac{2}{(\ttt + 1)} - \frac{2}{(t + 1)}\right|\\
    &= \left|\frac{2(t -\ttt)}{(\ttt + 1)(t + 1)}\right|
\end{align*}
We have shown above that $|t -\ttt|\leq 3t/p(n)$
\begin{align*}
    |p_{1|sk_{1\dots j}} - \pi|&\leq\left|\frac{4t/p(n))}{(\ttt + 1)(t + 1)}\right|\\
    &=\frac{6}{p(n)}\cdot\left|\frac{t}{(\ttt + 1)(t + 1)}\right|\\
    &=\frac{6}{p(n)}\cdot\left|\frac{1}{(\ttt + 1)(1 + 1/t)}\right|\\
    &\leq \frac{6}{p(n)}
\end{align*}
\end{proof}
The subclaim shows that when $|\ta_0 - a_0| \leq \frac{a_0}{p(n)}$ and$|\ta_1 - a_1| \leq \frac{a_1}{p(n)}$ then $|\pi - p_{1|sz}| \leq \frac{6}{p(n)}$. Additionally note that $|\pi - p_{1|sz}|$ cannot exceed 1. We can therefore unconditionally bound $|\pi - p_{1|sz}|$ in terms of the probability of sampling such $\ta_0, \ta_1$. Define $\tau'$ as
\[
\tau' := \Pr\left[\left(\left|\ta_0 - a_0\right| > \frac{a_0}{p(n)}\right) \vee \left(\left|\ta_1 - a_1\right| > \frac{a_1}{p(n)}\right)\right]
\]
We can therefore express $ \left|p_{k_{j+1}|sk_{1\dots j}} - \tp_{k_{j+1}|sk_{1\dots j}}\right|$ as
\begin{align*}
     \left|p_{k_{j+1}|sk_{1\dots j}} - \tp_{k_{j+1}|sk_{1\dots j}}\right| &\leq (1-\tau') \cdot 6/p(n) + \tau'\cdot 1\\
            &=6/p(n) +(1-6/p(n))\tau'\\
            &\leq 6/p(n) + \tau'
\end{align*}
The statement of the claim follows from the observation that $\tau_{s,k_{1\ldots j}} \geq \tau'$
\end{proof}
We can use the claim to rewrite \eqref{eq:SD-in-puzzles-imply-type1} as
\begin{align*}
    \SD(\tD_{j+1}, \tD_{j}) &\leq\frac{1}{2}\cdot\sum_{s,k_{1\ldots j+1}}p_s\cdot\prod_{i = 0}^{j-1} p_{k_{i+1}|sk_{1\dots i}} \cdot \left(6/p(n) + \tau_{s,k_{1\ldots j}}\right)\\
    &= 3/p(n) + \frac{1}{2}\cdot\sum_{s,k_{1\ldots j+1}}p_s\cdot\prod_{i = 0}^{j-1} p_{k_{i+1}|sk_{1\dots i}} \cdot\tau_{s,k_{1\ldots j}}
\end{align*}
which can be plugged into \eqref{eq:SD-sum} to get
\begin{align}
    \label{eq:final-SD}
     \SD(\{s,k\}, \{s, {\cA'}(s)\})  &\leq 3n/p(n) + \frac{1}{2}\cdot\sum_{j,s,k_{1\ldots j+1}}p_s\cdot\prod_{i = 0}^{j-1} p_{k_{i+1}|sk_{1\dots i}} \cdot\tau_{s,k_{1\ldots j}}\nonumber\\
     &=3n/p(n) + \frac{1}{2}\sum_{j,s,k_{1\ldots j+1}}\cdot\Prr_{s',k' \leftarrow \Gen(1^n)}[s'=s \wedge k'_{1\dots j}=k_{1\dots j}]\cdot\tau_{s,k_{1\ldots j}}
\end{align}
Recall that by assumption, with high probability over $x\leftarrow \cD_n$, $\left|\cA(x) - \Pr_{\cD_n}[x]\right|$ is bounded, i.e.:
\begin{align*}
        & \underset{x \leftarrow \cD_n}{\Prr}\left[ \left|\cA(x) - \Pr_{\cD_n}[x]\right| > \frac{\Pr_{\cD_n[x]}}{p(n)} \right] < 1/{p(n)}
    \end{align*} 
which can be expressed as the following sum
\begin{align*}    \sum_x \Pr_{\cD_n}[x] \cdot \Prr\left[ \left|\cA(x) - \Pr_{\cD_n}[x]\right| > \frac{\Pr_{\cD_n[x]}}{p(n)} \right]<1/p(n) 
\end{align*}
and by the construction of $\cD_n$
\begin{align*}  
    \Pr_{\cD_n}[s\|k_{1\dots j}\|b] &= \Pr[i=j]\cdot\Prr_{s',k' \leftarrow \Gen(1^n)}[s'=s \wedge k'_{1\dots j}=k_{1\dots j}]\cdot \left(\frac{1}{4} + \frac{p_{b|sk_{1\dots j}}}{2}\right)\\
    &= \frac{1}{n}\cdot \Prr_{s',k' \leftarrow \Gen(1^n)}[s'=s \wedge k'_{1\dots j}=k_{1\dots j}]\cdot \left(\frac{1}{4} + \frac{p_{b|sk_{1\dots j}}}{2}\right)\\
    &\geq \frac{\Prr_{s',k' \leftarrow \Gen(1^n)}[s'=s \wedge k'_{1\dots j}=k_{1\dots j}]}{4n}
\end{align*}
so the above sum can be rewritten as
\begin{align*}    
1/p(n) >& \sum_{j, x =s\|k_{1\ldots j+1}\|b}\frac{\Prr_{s',k' \leftarrow \Gen(1^n)}[s'=s \wedge k'_{1\dots j}=k_{1\dots j}]}{4n}\cdot \Prr\left[ \left|\cA(x) - \Pr_{\cD_n}[x]\right| > \frac{\Pr_{\cD_n[x]}}{p(n)} \right]\\
=& \sum_{j,s,k_{1\ldots j+1}}\frac{\Prr_{s',k' \leftarrow \Gen(1^n)}[s'=s \wedge k'_{1\dots j}=k_{1\dots j}]}{4n}\cdot \tau_{s,k_{1\ldots j}}
\end{align*}
Plugging this back into \eqref{eq:final-SD} and recalling that $p(n) > 5nq(n)$
\begin{align*}
    \SD(\{s,k\}, \{s, {\cA'}(s)\})  &\leq 3n/p(n) + \frac{1}{2}\cdot4n/p(n)\\
    &=5n/p(n)\\
    &< 1/q(n)
\end{align*}
which contradicts the security of $\Gen$. 

\section{The Hardness of Pseudo-Deterministic Sampling implies One-Way Puzzles}
\label{sec:pseudo}
In this section we prove the following theorem.
\begin{theorem}
    If $1/q(n)$-pseudo-deterministic hard distributions (Definition \ref{def:pd-hardness}) exist for some non-zero polynomial $q$, then one-way puzzles exist.
\end{theorem}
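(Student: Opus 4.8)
The plan is to prove the contrapositive: assuming one-way puzzles do not exist, I will show that for every polynomial $q(\cdot)$, every efficiently quantumly sampleable distribution family $\cX=\{\cX_n\}$ is $1/q(n)$-pseudo-deterministically sampleable, which directly contradicts Definition \ref{def:pd-hardness}. The first step is to observe that, by Theorem \ref{thm:owp-amplification} read in contrapositive, the nonexistence of one-way puzzles implies that $1/p(n)$-distributional one-way puzzles fail to exist for \emph{every} polynomial $p(\cdot)$. I then instantiate this with the post-selected sampling puzzle from the overview: on parameter $n$, sample $x\leftarrow\cX_n$ and $i\leftarrow[n]$ and output puzzle $(i,x_{[1\ldots i-1]})$ and key $x_{[i\ldots n]}$. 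Its failure as a distributional one-way puzzle hands us, for each $p$, a QPT machine with quantum advice that, on a prefix $z$, outputs the next bit from a distribution that is on average (over $i$ and $z$, after the per-$i$ averaging used in Section \ref{sec:dist}) within statistical distance $n/p(n)$ of the true conditional next-bit distribution of $\cX_n$ — i.e., an approximate post-selected sampler $\cS_\cX$.

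Next I would plug $\cS_\cX$ into the reduction $\cR^{\cS_\cX}(r)$ described in the overview: parse $r$ into $n$ blocks of $n$ bits; keeping a running prefix, at step $i$ run $\cS$ on the prefix $p(n)$ times, let $\mathsf{pr}_i$ be the empirical frequency of outcome $0$, set $x[i]=0$ if $r[i]\le \mathsf{pr}_i\cdot 2^n$ and $x[i]=1$ otherwise, append $x[i]$, and finally output the prefix. Two things must be shown. For \emph{pseudo-determinism}: by the additive Chernoff bound (Theorem \ref{thm:chernoff-additive}), for a large enough polynomial $p(\cdot)$ the estimate $\mathsf{pr}_i$ is within inverse-polynomial additive error of the true probability that $\cS$ outputs $0$ on the current prefix, except with negligible probability; hence for all but a $1/q(n)$ fraction of tapes $r$ — precisely those for which some normalized block $r[i]/2^n$ falls inside the (inverse-polynomial) error window around an actual threshold — every bit decision is \emph{forced}, so $\cR(r)$ equals a fixed string with probability $1-\negl(n)$ over its internal randomness. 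For \emph{statistical closeness} to $\cX_n$: a chain-rule argument over the $n$ steps, combined with the fact that large relative errors in conditional next-bit probabilities are confined to prefixes of correspondingly small probability mass (so their total contribution is small), shows — exactly as in the analysis of Theorem \ref{thm:type1-implies-puzzles} — that $\cR$'s output is within $1/q(n)$ of $\cX_n$ once $p$ is taken as a large enough polynomial in $q$ and $n$.

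Finally, applying this construction to any purported $1/q(n)$-pseudo-deterministic hard family $\cD$ (setting $\cX=\cD$) yields a QPT $1/q(n)$-pseudo-deterministic sampler with advice whose output is within $1/q(n)$ of $\cD_n$ for all large $n$, contradicting Definition \ref{def:pd-hardness}; hence one-way puzzles exist.

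The hardest part will be controlling the imperfection of the post-selected sampler $\cS_\cX$: it is only inverse-polynomially close in statistical distance, so on low-mass prefixes it may behave arbitrarily, and such errors could in principle compound through the $n$ sequential conditioning steps and could also masquerade as genuine near-threshold ambiguity, spoiling pseudo-determinism. The key that makes it work — carried over from Section \ref{sec:dist} — is that an adversary constrained to stay statistically close can only place large relative errors on prefixes whose probability mass is correspondingly small, so both the pseudo-determinism failure probability and the statistical distance of $\cR$'s output remain under control; choosing $p$ a sufficiently large polynomial in $q$ and $n$ then finishes the argument.
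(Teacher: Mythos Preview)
Your proposal is correct and matches the paper's approach in Section~\ref{sec:pseudo}: define the prefix puzzle from the hard distribution, use a distributional inverter to estimate next-bit probabilities via repeated calls and Chernoff, threshold against the random blocks $r[i]$ to obtain pseudo-determinism, and use a hybrid/chain-rule argument for statistical closeness. One minor simplification: the closeness step here is actually easier than in Theorem~\ref{thm:type1-implies-puzzles} --- a direct telescoping hybrid gives $\SD(\cB,D_n)\le n\cdot\SD(\{x_{1\ldots j},x_{j+1}\},\{x_{1\ldots j},\cA'(x_{1\ldots j})\})<2n/p(n)$ without invoking any ``relative errors confined to low-mass prefixes'' machinery, since you only need the adversary's next-bit distribution to be close on average, not to approximate probabilities multiplicatively.
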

\begin{proof}
    Let $D_n$ be a $1/q(n)$-pseudo-deterministic hard distribution on $n$ bits. We define a candidate puzzle $\Gen(1^n)$ as follows:
    \begin{itemize}
        \item Sample $i \leftarrow [0,n-1]$ 
        \item Sample $x \leftarrow D_n$
        \item Output puzzle $x_{1 \ldots i}$ and key $x_{i+1}$
    \end{itemize}
    Let $p(n)$ be a polynomial greater than $2nq(n)$. We prove that $\Gen(1^n)$ is a $1/p(n)$-distributional one-way puzzle. Since Theorem \ref{thm:owp-amplification} shows that distributional one-way puzzles can be amplified to obtain (strong) one-way puzzles, this suffices to prove the theorem.

    Assume for the sake of contradiction that $\Gen$ is not a $1/p(n)$-distributional one-way puzzle. Therefore, there exists a QPT $\cA = \{\cA_\secpar\}_{\secpar \in \bbN}$ and (non-uniform, quantum) advice ensemble $\ket{\tau} = \{\ket{\tau_n}\}_{n \in \mathbb{N}}$ such that for infinitely many $n \in \bbN$, 
    \begin{align}
        \label{eq:pd-adversary}
        \mathsf{SD} \left(
        \{x_{1\ldots i},x_{i+1}\}, \{x_{1\ldots i},\cA(\ket{\tau},x_{1\ldots i})\}\right)
        < \frac{1}{p(n)}
    \end{align}
  where $x_{1\ldots i}, x_{i+1} \leftarrow \Gen(1^n)$.
Fix any such adversary $\cA$, any such advice ensemble $\ket{\tau}$, and any such large enough $n\in \bbN$. We will drop the advice from the notation since it is always implicitly provided to the adversary. 

    We first define an estimator $E$ that takes input a string $z$ and performs the following:
    \begin{enumerate}
        \item For $j$ in $[25n^4q(n)]$:
        \begin{itemize}
            \item[] $X_j \leftarrow \cA(z)$
        \end{itemize}
        \item Return $\sum_j X_j / 25n^4q(n)$
    \end{enumerate}
    Intuitively, $E$ outputs an estimate of the probability that $\cA(z)$ outputs $1$.
    Define the algorithm $\cA'(z;R)$ that takes input $z$ and randomness $R\in[2^n]$ and performs the following:
    \begin{enumerate}
        \item $e \leftarrow E(z)$
        \item If $2^n\cdot e > R$ then return $1$ else return $0$
    \end{enumerate}
    Define the algorithm $\cB(1^n;R_1, R_2, \ldots R_n)$ that takes randomness $\{R_i\}_{i\in[n]}$ where $R_i \in [2^n]$ and performs the following:
\begin{enumerate}
    \item For $j = 0$ to $n - 1$: 
        \begin{itemize}
            \item[] $x_{j + 1} \leftarrow \cA'(x_{1\dots j}; R_{j+1})$ \hspace{2em}(i.e. if $j = 0$, $x_1 \leftarrow \cA'(\epsilon)$)
        \end{itemize}
        \item Return $x$.
\end{enumerate}
We will now show that $\cB$ contradicts the security of $\cD$.
\begin{claim}[Correctness]
    \label{clm:pd-correctness}
    \[
    \SD(\cB(1^n), D_n) < 1/q(n)
    \]
\end{claim}
\begin{proof}
We first show that the the output distribution of $\cA'$ is negligibly close to that of $\cA$.
\begin{subclaim}
    \label{subclm:pd-A-vs-A'}
    For any $z \in \bin^{<n}$, 
    \begin{align*}
        \left|\Pr[\cA'(z) = 1] - \Pr[\cA(z) = 1]\right| \leq 1/2^n
    \end{align*}
\end{subclaim}

\begin{proof}
    $\cA'$ uses $E$ to obtain a probability estimate $e$, and then uses the random coins $R$ to output $1$ if $2^n\cdot e > R$, else output $0$. Therefore, for any estimate $e$, the probability that $\cA'$ outputs $1$ equals the probability that $2^n\cdot e > R$. Since $R$ is uniformly sampled from $[2^n]$, this probability is $ \frac{\lfloor{2^n\cdot e}\rfloor}{2^n}$ which is at most $1/2^n$ far from $e$. For a fixed $z$, comparing the probability of $\cA'(z)$ returning $1$ and the expectation of $E(z)$ we therefore obtain
    \begin{align*}
        \left|\Pr[\cA'(z) = 1]  - \bbE[E(z)]\right|&= \left|\sum_e \Pr[E(z) = e]\cdot \frac{\lfloor{2^n\cdot e}\rfloor}{2^n}-\sum_e \Pr[E(z) = e]\cdot e\right|\\
        &= \left|\sum_e \Pr[E(z) = e]\cdot\left(\frac{\lfloor{2^n\cdot e}\rfloor}{2^n}- e\right)\right|\\
        &\leq \sum_e \Pr[E(z) = e]\cdot\left|\left(\frac{\lfloor{2^n\cdot e}\rfloor}{2^n}- e\right)\right|\\
        &\leq \sum_e \Pr[E(z) = e]\cdot\frac{1}{2^n}\\   
        &\leq \frac{1}{2^n}
    \end{align*}
    Additionally, since $E(z)$ simply returns the average of random variables $X_i$, each of which has expected value $\Prr_\cA[\cA(z)=1]$, the expected value of $E(z)$ is also $\Prr_\cA[\cA(z)=1]$. Therefore 
    \[
         \left|\Pr[\cA'(z) = 1]  - \Pr[\cA(z) = 1]\right| \leq 1/2^n
    \]

\end{proof}

A straightforward consequence of SubClaim \ref{subclm:pd-A-vs-A'} is that $\cA'$ is a valid adversary for $\Gen$. Formally,
\begin{subclaim}
    \label{subclm:pd-A'-is-valid-adv}
    \[
        \mathsf{SD} \left(
        \{x_{1\ldots i},x_{i+1}\}, \{x_{1\ldots i},\cA'(x_{1\ldots i})\}\right)
        < \frac{1}{p(n)} + 1/2^n.
    \]
    where $i \leftarrow [0,n-1]$ and $x \leftarrow D_n$.
\end{subclaim}
\begin{proof}
    The proof follows directly from SubClaim \ref{subclm:pd-A-vs-A'} and inequality \eqref{eq:pd-adversary}.
\end{proof}
We now define some helpful terms. For $i \in [0,n-1]$, $z \in \bin^{i}$
\begin{itemize}
    \item Define $p_z:= \Pr_{x\leftarrow D_n}[x_{1\ldots i} = z]$, i.e. the probability that the first $i$ bits of $x$ sampled from $D_n$ match $z$.
    \item Define $p_{b|z}:=\Pr_{x\leftarrow D_n}[x_{i+1} = b |x_{1\ldots i} = z]$ i.e. the probability that the $i+1$-th bit of $x$ sampled from $D_n$ is $b$ conditioned on the first $i$ bits of $x$  matching $z$.
    \item Define $\tp_{b|z}:= \Pr[\cA'(z) = b]$. 
\end{itemize}
    Also, for all $j \in [0,n]$, define distribution $\tD_j$ on $\bin^n$ as follows. For all $x \in \bin^n$:
    \begin{align*}
        \Pr_{\tD_j}[x] &:=  p_{x_1} \cdot p_{x_2|x_1} \cdot p_{x_3| x_{1,2}} \ldots p_{x_j|x_{1\ldots j-1}} \cdot \tp_{x_{j+1}|x_{1\ldots j}} \ldots \tp_{x_n|x_{1\ldots n-1}}\\
        &=\prod_{i = 0}^{j-1} p_{x_{i+1}|x_{1\dots i}} \prod_{i = j }^{n-1} \tp_{x_{i+1}|x_{1\dots i}}
        % .
    \end{align*}
    First, note that $\tD_n = D_n$.
    \begin{align*}
        \Pr_{\tD_n}[x] &=\prod_{i = 0 }^{n-1} p_{x_{i+1}|x_{1\dots i}}\\
        &=p_x\\
        &= \Pr_{D_n}[x]
    \end{align*}
    Also note that $\tD_0 = \cB(1^n)$.
    \begin{align*}
        \Pr_{\tD_0}[x] &=\prod_{i = 0 }^{n-1} \tp_{x_{i+1}|x_{1\dots i}}\\
        &= \Pr[B(1^n)=x]
    \end{align*}
    By the triangle inequality for statistical distance
    \begin{align*}
        \SD(\cB(1^n), D_n) &= \SD(\tD_0, \tD_n)\\
        &\leq \sum_{j=0}^{n-1} \SD(\tD_{j+1}, \tD_{j})
    \end{align*}
    The statistical distance between $\tD_{j+1}$ and $\tD_{j}$ can be expressed as 
    \begin{align*}
        \SD(\tD_{j+1}, \tD_{j}) &= \frac{1}{2}\cdot\sum_x \left|\Pr_{\tD_{j+1}}[x]-\Pr_{\tD_{j}}[x]\right|\\
        &=\frac{1}{2}\cdot\sum_x\prod_{i = 0}^{j-1} p_{x_{i+1}|x_{1\dots i}} \left|p_{x_{j+1}|x_{1\dots j}} - \tp_{x_{j+1}|x_{1\dots j}}\right| \prod_{i = j }^{n-1} \tp_{x_{i+1}|x_{1\dots i}}\\
        &=\frac{1}{2}\cdot\sum_{x_{1\ldots j+1}}\prod_{i = 0}^{j-1} p_{x_{i+1}|x_{1\dots i}} \left|p_{x_{j+1}|x_{1\dots j}} - \tp_{x_{j+1}|x_{1\dots j}}\right| \cdot \sum_{x_{j+2\ldots n}}\prod_{i = j }^{n-1} \tp_{x_{i+1}|x_{1\dots i}}\\
       &=\frac{1}{2}\cdot\sum_{x_{1\ldots j+1}}\prod_{i = 0}^{j-1} p_{x_{i+1}|x_{1\dots i}} \left|p_{x_{j+1}|x_{1\dots j}} - \tp_{x_{j+1}|x_{1\dots j}}\right|
    \end{align*}
    For any $x$, expanding $\prod_{i = 0}^{j-1} p_{x_{i+1}|x_{1\dots i}}$ shows that the expression equals $ p_{x_{1 \ldots j}}$, i.e. the probability of obtaining first $j$ bits $x_{1 \ldots j}$ when sampling from $D_n$. Therefore
     \begin{align*}
        \SD(\tD_{j+1}, \tD_{j}) &=\frac{1}{2}\cdot \sum_{x_{1\ldots j+1}} p_{x_{1\dots j}} \left|p_{x_{j+1}|x_{1\dots j}} - \tp_{x_{j+1}|x_{1\dots j}}\right|
    \end{align*}
    Now the right hand side of the equation equals $\SD(\{x_{1\ldots j}, x_{j+1}\}, \{x_{1\ldots j}, \cA'(x_{1\ldots j})\})$ when $x \leftarrow D_n$, which gives the following upper bound for $ \SD(\cB(1^n), D_n)$.
    \begin{align*}
        \SD(\cB(1^n), D_n) &\leq \sum_j \SD(\{x_{1\ldots j}, x_{j+1}\}, \{x_{1\ldots j}, \cA'(x_{1\ldots j})\})
    \end{align*}
    where $x \leftarrow D_n$. If we also consider $j \leftarrow [0,n-1]$ then
     \begin{align*}
        \SD(\cB(1^n), D_n) &\leq n \cdot \SD(\{x_{1\ldots j}, x_{j+1}\}, \{x_{1\ldots j}, \cA'(x_{1\ldots j})\})\\
        &< n \cdot (1/p(n) + 1/2^n)\\
        &< 2n/p(n)
    \end{align*}
    where the second step follows from SubClaim \ref{subclm:pd-A'-is-valid-adv} and the last step holds for large enough $n$. Since $p(n) \geq 2nq(n)$, this implies
    \[
    \SD(\cB(1^n), D_n) < 1/q(n)
    \]
    which concludes the proof of the claim.
\end{proof}
\begin{claim}[Pseudo-determinism]
\label{clm:pd-pd}
    \[
        \Pr_{R_1, \ldots, R_n}[\exists y \text{ s.t. }\Pr\left[\cB(1^n;R_1, \ldots, R_n) \neq y \right] \leq 1/2^n] > 1-1/q(n)
    \]\end{claim}
\begin{proof}
    $\cB(1^n;R_1, \ldots, R_n)$ consists of loop where in the $i$-th iteration $\cA'$ is run with randomness $R_i$. The input of $\cA'$ in the $i$-th iteration (apart from the random coins $R_i$) is completely determined by the output of previous iterations. It therefore suffices to show that for each iteration $i$, for most strings $R_i$, the output of $\cA'$ is pseudo-deterministic , i.e. for the $i$-th iteration there exists an output $y$ such that  $\cA'$ with random coins $R_i$ outputs $y$ with high probability. 
    \begin{subclaim}
        For any $i\in [0,n-1]$, $z\in \bin^i$, $R \in[2^n]$, we say that $\cA'(z;R)$ has determinism error atmost $\epsilon$ if there exists $y$ such that 
        \[
            \Pr[\cA'(z;R)=y] \geq 1-\epsilon
        \]
        Then for all $i\in [0,n-1]$, for all $z\in \bin^i$,
        \[
        \Pr_{R \leftarrow[2^n]}[\cA'(z;R) \text{ has determinism error atmost }(1/n2^n)] > 1-1/nq(n)
        \]
    \end{subclaim}
    \begin{proof}
        $\cA'(z;R)$ uses $E(z)$ to obtain a probability estimate $e$, and then uses the random coins $R$ to output $1$ if $2^n\cdot e > R$, else output $0$. Let $\pi$ be the probability that $A(z) = 1$. $E(z)$ is the average of $p(n)$ independent random variables that are $1$ with probability $\pi$ and are $0$ otherwise. Therefore by setting $\delta=n$ in the additive Chernoff bound (Theorem \ref{thm:chernoff-additive})
        % Therefore, for any estimate $e$, the probability that $\cA'$ outputs $1$ equals the probability that $2^n\cdot e > R$. Since $R$ is uniformly sampled from $[2^n]$, this probability is $ \frac{\lfloor{2^n\cdot e}\rfloor}{2^n}$ which is at most $1/2^n$ far from $e$. For a fixed $z$, comparing the probability of $\cA'(z)$ returning $1$ and the expectation of $E(z)$ we therefore obtain
        \[
            \Prr_{E}\left[\left|\pi - E(z)\right| \geq 1/5nq(n)\right] \leq 2e^{-n^2}
        \]
        Suppose $|2^n\cdot \pi - R| \geq 2/5nq(n)$. Then with probability atleast $2e^{-n^2}$, $2^n\cdot E(z)$ and $2^n\cdot\pi$ are on the same side of $R$, i.e. if $ R < 2^n\cdot\pi$ then $R < 2^n\cdot E(z)$ with probability atleast $2e^{-n^2}$, while if $R > 2^n\cdot\pi$ then $R > 2^n\cdot E(z)$ with probability atleast $2e^{-n^2}$. Since the output of $\cA'(z;R)$ is entirely determined by whether or not $R < 2^n\cdot E(z)$, $\cA'(z;R)$ therefore has determinism error atmost $2e^{-n^2}$ which is less that $ (1/n2^n)$ for large enough $n$

        All that remains to be shown is that $|2^n\cdot \pi - R| \geq 2/5nq(n)$ holds with high enough probability. Since $R$ is uniformly sampled from $[2^n]$, the number of values of $R$ such that $|2^n\cdot \pi - R| < 2/5nq(n)$ is atmost $1 + 2^n \cdot 4/5nq(n)$. The probability that $R$ is not one of these values is therefore atleast $1 - 4/5nq(n) - 1/2^n$ which is greater than $1-1/nq(n)$ for large enough $n$.
    \end{proof}
    For each iteration $i$, with probability atleast $1-1/nq(n)$,  $\cA'(z_i;R_i)$ has determinism error atmost $1/(n2^n)$ (where $z_i$ is the input in the $i$-th iteration). Then 
    \[
        \Pr_{R_1, \ldots, R_n}[\forall i, \cA'(z_i;R_i) \text{ has determinism error at most }1/(n2^n)] > 1-n/nq(n) = 1/q(n)
    \]
    The determinism error of $\cB$ is atmost the sum of the determinism error of its iterations, therefore
    \[
        \Pr_{R_1, \ldots, R_n}[\cB(1^n;R_1, \ldots, R_n) \text{ has determinism error at most }1/2^n] > 1-1/q(n)
    \]
    which concludes the proof of the claim.
\end{proof}
Claim \ref{clm:pd-correctness} and Claim \ref{clm:pd-pd} show that $\cB$ contradicts the security of $\cD$ which concludes the proof of the theorem.
\end{proof}

\section{State Puzzles are Equivalent to One-Way Puzzles}
\label{sec:owpsp}
We define state puzzles, which capture the hardness of synthesizing a (secret) quantum state $\ket{\psi_s}$ corresponding to a (public) classical string $s$, and are implied by quantum money.

\begin{definition}[State Puzzles]
\label{def:state-puzzle}
    A state puzzle is defined by a quantum polynomial-time generator $\cG(1^n)$ that outputs a classical-quantum state $(s, \ket{\psi_s})$ such that given $s$, it is (quantum) computationally infeasible to output $\rho$ that overlaps noticeably with $\ket{\psi_s}$. 
    
    Formally, 
    %there is a negligible function $\nu(\cdot)$ such that 
    for every quantum polynomial-time adversary $\cA$, every (non-uniform, quantum) advice ensemble $\ket{\tau} = \{\ket{\tau_n}\}_{n \in \bbN}$, for large enough $n \in \mathbb{N}$,
    \[
    \underset{\substack{(s,\ket{\psi_s}) \leftarrow \cG(1^n)\\\rho\leftarrow \cA(\ket{\tau},s)}}{\bbE} \Big[\tr(\ket{\psi_s}\bra{\psi_s} \rho)\Big] \leq \negl(n) 
    \]
\end{definition}

We also define a weaker version of state puzzles, where we require that the state output by $\cA$ must fail to project onto $\ketbra{\psi_s}$ with noticeable probability.

\begin{definition}[$\varepsilon$-Weak State Puzzles] \label{def:weak-state-puzzle}
    For $\varepsilon: \bbN \rightarrow \bbR$, a $\varepsilon$-weak state puzzle is defined by a quantum polynomial-time generator $\cG(1^n)$ that outputs a classical-quantum state $(s, \ket{\psi_s})$ such that given $s$, it is (quantum) computationally infeasible to output $\rho$ that almost completely overlaps with $\ket{\psi_s}$
    
    Formally, 
    %there is a negligible function $\nu(\cdot)$ such that 
    for
    every quantum polynomial-time adversary $\cA$, every (non-uniform, quantum) advice ensemble $\ket{\tau} = \{\ket{\tau_n}\}_{n \in \bbN}$, for large enough $n \in \mathbb{N}$,
    \[
    \underset{\substack{(s,\ket{\psi_s}) \leftarrow \cG(1^n)\\\rho\leftarrow \cA(\ket{\tau},s)}}{\bbE} \Big[\tr(\ket{\psi_s}\bra{\psi_s} \rho)\Big] \leq 1 - \varepsilon(n)
    \]
\end{definition}
We will sometimes simply refer to weak state puzzles. This is taken to mean $1/p(n)$-weak state puzzles for some non-zero polynomial $p$. 

In this section we prove the equivalence of (weak and standard) state puzzles and (distributional and standard) one-way puzzles. We first prove that the existence of weak state puzzles implies the existence of one-way puzzles.
\begin{theorem}
\label{thm:state-puzzles-imply-owp}
    If $1/q(n)$-weak state puzzles (Definition \ref{def:weak-state-puzzle}) exist for some non-zero polynomial $q(\cdot)$ then $1/p(n)$-distributional one-way puzzles (Definition \ref{def:dist-owp}) exist for some non-zero polynomial $p(\cdot)$.
\end{theorem}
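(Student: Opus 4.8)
The plan is to build, from a $1/q(n)$-weak state puzzle $\cG$ (Definition~\ref{def:weak-state-puzzle}), a distributional one-way puzzle $\Gen$ (Definition~\ref{def:dist-owp}) for a suitably large polynomial $p$, and to show that any $1/p(n)$-distributional inverter $\cA$ for $\Gen$ can be compiled into an algorithm that, given only $s$, synthesizes a state $\rho$ with $\E[\tr(\ketbra{\psi_s}\rho)]$ exceeding $1-1/q(n)$ for infinitely many $n$, contradicting the weak-state-puzzle security of $\cG$. The construction of $\Gen(1^n)$ is: sample $(s,\ket{\zeta_s})\leftarrow\cG(1^n)$; sample a random unitary $2$-design $C$ (Definition~\ref{def:unitary-2-design}) and set $\ket{\psi_s}=C\ket{\zeta_s}=\sum_x\alpha_x\ket{x}$ with $\alpha_x=a_xe^{i\phi_x}$, $a_x\ge 0$; flip $c\leftarrow\bin$; if $c=0$, measure $\ket{\psi_s}$ in the standard basis to get $x$, pick $i\leftarrow[0,n-1]$, and output puzzle $(s,C,0,i,x_{1\ldots i})$ with key $x_{i+1}$; if $c=1$, apply a random shift-$\Delta$ two-to-one function $f$ and measure its output, collapsing to a residual state $\big(\cos(\theta/2)\ket{x_0}+\sin(\theta/2)e^{-i\phi}\ket{x_1}\big)_{\mathsf A}\otimes\ket{f,f(x_0)}_{\mathsf B}$ with $\{x_0,x_1\}=\{x_0,x_0\oplus\Delta\}$, pick $d\leftarrow\bin$, measure the $\mathsf A$-register in the basis $(\ket{x_0}\pm\ket{x_1})$ if $d=0$ or $(\ket{x_0}\pm i\ket{x_1})$ if $d=1$ to obtain a bit $z$, and output puzzle $(s,C,1,x_0,x_1,d)$ with key $z$. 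Since applying $C^\dagger$ to any synthesized approximation of $\ket{\psi_s}$ recovers $\ket{\zeta_s}$, and since the reduction can sample $C$ itself from the same distribution used by $\Gen$, it suffices to synthesize $\ket{\psi_s}$ from $(s,C)$.

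First I would handle \emph{amplitude synthesis} (Phase~1) using the $c=0$ mode. By construction, conditioned on a puzzle $(s,C,0,i,z)$ the key is a sample of the next measurement bit, i.e.\ $\mathrm{Bern}(p_{1|z})$ where $p_{1|z}=a_{z\|1}^2/(a_{z\|0}^2+a_{z\|1}^2)$ with $a_t^2=\sum_w a_{t\|w}^2$. This is exactly the conditional-probability oracle used by Aaronson's state-synthesis procedure~\cite{aaronsonsynth} (for a real, nonnegative-amplitude state), so the reduction runs a coherent simulation of that procedure to build the phaseless state $\ket{\widetilde\psi_s}=\sum_x a_x\ket{x}$: at step $i$, on the current superposition over prefixes $x_{1\ldots i}$ it queries $\cA(\cdot,s,C,0,i,x_{1\ldots i})$ polynomially many times in parallel, uses the empirical fraction of ones $\widehat p$ as an estimate of $p_{1|x_{1\ldots i}}$, coherently rotates the next qubit to $(\sqrt{1-\widehat p},\sqrt{\widehat p})$, and then uncomputes $\widehat p$ and the inverter's junk by running the parallel queries in reverse. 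The distributional-inversion guarantee plus a Markov argument gives that, on average over the prefix distribution — which is precisely the standard-basis marginal of $\ket{\psi_s}$, hence matched by the superposition being built — the inverter's next-bit total-variation error is small; the empirical estimate then concentrates (Theorem~\ref{thm:chernoff-additive}) around the true $p_{1|x_{1\ldots i}}$ on the part of the superposition that matters, so the rotated qubit approximately factors out of the estimate/junk register and the reverse pass cleans it up; summing the per-step Euclidean errors over $i\in[0,n-1]$ and invoking Theorem~\ref{thm:trace-dist-and-euclidean-dist} yields an output $\ket{\Phi}$ with $\||\Phi\rangle-\ket{\widetilde\psi_s}\|\le 1/p_1(n)$ for a polynomial $p_1$ growing with $p$.

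Next comes \emph{phase recovery} (Phase~2) using the $c=1$ mode. The reduction samples an anchor $y$ by measuring a fresh copy of the (approximate) phaseless state, rebuilds the phaseless state, and for each basis term $\ket{x}$ in superposition coherently estimates the relative phase $\widetilde\phi_{xy}\approx\phi_x-\phi_y$ as follows: with $\{x_0,x_1\}=\{x,y\}$ (i.e.\ shift $\Delta=x\oplus y$), it queries $\cA(\cdot,s,C,1,x_0,x_1,0)$ and $\cA(\cdot,s,C,1,x_0,x_1,1)$ polynomially many times — whose key bits are $\mathrm{Bern}\big(\tfrac12(1+\sin\theta\cos\phi)\big)$ and $\mathrm{Bern}\big(\tfrac12(1\pm\sin\theta\sin\phi)\big)$ for the residual state on $\{x_0,x_1\}$ — recovers $\sin\theta\cos\phi$ and $\sin\theta\sin\phi$, hence $\phi$ (the relative phase between $\ket{x_0}$ and $\ket{x_1}$, which gives $\widetilde\phi_{xy}$), applies the phase $e^{i\widetilde\phi_{xy}}$ to $\ket x$, and uncomputes as in Phase~1; applying $C^\dagger$ then gives $\rho$. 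Accuracy of $\widetilde\phi_{xy}$ requires the ratio $a_x/a_y$ to be polynomially bounded and the anchor $y$ to be "typical'', and this is exactly where the $2$-design $C$ is used: a stepwise second-moment (Chebyshev) argument over the $n$ bits, using Definition~\ref{def:unitary-2-design}, shows that with probability $1-1/q_1(n)$ over $C$ the total mass on terms with $|\alpha_y|^2$ outside $[2^{-n}/p_2(n),\,p_2(n)2^{-n}]$ is at most $1/q_1(n)$, so the anchor distribution is within $1/q_1(n)$ of uniform; a further Markov argument over the puzzle distribution then shows that for all but a $1/q_1(n)$-fraction of uniformly chosen anchors the inverter is close on the $\{x,y\}$-puzzles averaged over $x$ and over $\Delta=x\oplus y$, while the atypically heavy $x$'s carry negligible mass and can be dropped. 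Tracking all errors through both phases, the reduction outputs $\rho$ with $\E[\tr(\ketbra{\psi_s}\rho)]\ge 1-1/p'(n)$ for an arbitrary polynomial $p'$, provided $p$ is chosen large enough relative to $p'$, $q$, and the runtime polynomials; choosing $p'(n)>q(n)$ then contradicts the $1/q(n)$-weak-state-puzzle security of $\cG$, which proves the theorem.

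The main obstacle is the faithful \emph{coherent} simulation of the two synthesis procedures through an imperfect, junk-producing inverter: one must argue that the empirical next-bit and phase estimates concentrate on the ``good'' part of the superposition so that the reverse pass removes the inverter's entangled junk up to small Euclidean error, and one must control the long chain of averaging arguments — distributional inversion $\Rightarrow$ small total-variation error averaged over prefixes/puzzles $\Rightarrow$ (via the $2$-design making the relevant prefix, anchor, and shift distributions near-uniform) small error on the specific instances the reduction actually queries — and then propagate the inverter's $1/p(n)$ statistical-distance slack through both phases down to an inverse-polynomial fidelity loss. The stepwise anti-concentration bound for $C\ket{\zeta_s}$ from only the second-moment ($2$-design) property, and the correct bookkeeping of errors across the anchor-measurement/rebuild step, are the technically heaviest pieces.
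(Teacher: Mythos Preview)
Your proposal is correct and follows essentially the same approach as the paper: the construction of $\Gen$ (random $2$-design, amplitude mode via prefix measurement, phase mode via random-shift two-to-one function and $X/Y$-basis measurement) and the two-phase reduction (coherent Aaronson-style amplitude synthesis, then anchor-based relative-phase recovery with $2$-design flatness controlling heavy/light terms) match the paper's proof almost exactly. One small imprecision: the anchor $y$ is not ``near-uniform'' but rather sampled from $|\alpha_y|^2$, and what the $2$-design buys is that this distribution places all but $1/\poly$ mass on $y$ with $|\alpha_y|^2\in[2^{-n}/\poly,\,\poly\cdot 2^{-n}]$ --- which is exactly the condition you need for the phase estimator to work, so your argument goes through once you phrase it this way.
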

Since Theorem \ref{thm:owp-amplification} shows that distributional one-way puzzles can be amplified to obtain (strong) one-way puzzles, the following is a corollary of Theorem \ref{thm:state-puzzles-imply-owp}.
\begin{corollary}
    If $1/q(n)$-weak state puzzles (Definition \ref{def:weak-state-puzzle}) exist for some non-zero polynomial $q(\cdot)$ then one-way puzzles (Definition \ref{def:owp}) exist.
\end{corollary}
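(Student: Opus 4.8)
The final statement is the Corollary following Theorem~\ref{thm:state-puzzles-imply-owp}, asserting that $1/q(n)$-weak state puzzles imply (strong) one-way puzzles. The plan is to obtain this as an immediate composition of two results that are already available at this point in the paper: Theorem~\ref{thm:state-puzzles-imply-owp}, which upgrades $1/q(n)$-weak state puzzles to $1/p(n)$-distributional one-way puzzles for some non-zero polynomial $p(\cdot)$, and Theorem~\ref{thm:owp-amplification} (Theorem~33 of~\cite{CGG24}), which amplifies $1/p(n)$-distributional one-way puzzles to (standard) one-way puzzles. Chaining these two gives the corollary directly, with no additional argument required.

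Concretely, I would proceed as follows. First, invoke the hypothesis: $1/q(n)$-weak state puzzles exist for some non-zero polynomial $q(\cdot)$. Second, apply Theorem~\ref{thm:state-puzzles-imply-owp} to this assumption to conclude that there exists a non-zero polynomial $p(\cdot)$ for which $1/p(n)$-distributional one-way puzzles (Definition~\ref{def:dist-owp}) exist. Third, feed this into Theorem~\ref{thm:owp-amplification}: since $1/p(n)$-distributional one-way puzzles exist for the polynomial $p(\cdot)$, one-way puzzles (Definition~\ref{def:owp}) exist. That completes the proof.

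There is essentially no obstacle here — the corollary is a two-line consequence of the preceding theorem together with the cited amplification result. The only point worth a sentence of care is that Theorem~\ref{thm:state-puzzles-imply-owp} only promises a distributional one-way puzzle for \emph{some} polynomial $p$ (not an arbitrary or prescribed one), but Theorem~\ref{thm:owp-amplification} is stated exactly in the matching form ("if there exists a polynomial $p(\cdot)$ for which $1/p(n)$-distributional one-way puzzles exist, then one-way puzzles exist"), so the quantifiers line up and the composition is immediate. I would therefore write the proof as a single short paragraph citing the two results in sequence.

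\begin{proof}
Suppose $1/q(n)$-weak state puzzles exist for some non-zero polynomial $q(\cdot)$. By Theorem~\ref{thm:state-puzzles-imply-owp}, there exists a non-zero polynomial $p(\cdot)$ for which $1/p(n)$-distributional one-way puzzles (Definition~\ref{def:dist-owp}) exist. Applying Theorem~\ref{thm:owp-amplification} to this family of distributional one-way puzzles, we conclude that one-way puzzles (Definition~\ref{def:owp}) exist.
\end{proof}
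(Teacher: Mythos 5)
Your proposal is correct and matches the paper exactly: the paper derives this corollary by combining Theorem~\ref{thm:state-puzzles-imply-owp} (weak state puzzles give $1/p(n)$-distributional one-way puzzles) with the amplification result Theorem~\ref{thm:owp-amplification}, just as you do. Nothing further is needed.
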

\begin{proof}(of Theorem \ref{thm:state-puzzles-imply-owp})
For some polynomial $q(\cdot)$, let $\cG$ be a $1/q(n)$-weak state puzzle (Definition \ref{def:weak-state-puzzle}) that outputs $s, \ket{\$_s}$, where $s\in \bin^n$ and $\ket{\$_s} \in \{\bbC^2\}^{\otimes n}$. Define the algorithm $\Gen(1^n)$ as follows:
\begin{enumerate}
    \item Sample $\ct \leftarrow \cC$ where $\cC$ is the Clifford group for $n$ qubits.
    \item Sample $\s, \ket{\$_{\s}} \leftarrow \cG(1^n)$
    \item Compute $\ket{\$_{\s,\ct}} := \ct\ket{\$_{\s}}$
    \item Sample $b_\text{mode} \leftarrow \bin$
    \item If $b_\text{mode} = 0:$
    \begin{enumerate}
        \item Sample $i \leftarrow [0,n-1]$
        \item Measure the first $i$ bits of $\ket{\$_{\s,\ct}}$ in the computational basis to obtain measurement output $x$ and residual state $\ket{\$_{x}}$. If $i = 0$ then $x$ is the empty string and $\ket{\$_{x}}:= \ket{\$_{\s,\ct}}$
        \item Measure the $i+1$-th bit of $\ket{\$_{\s,\ct}}$ (i.e. the first bit of $\ket{\$_{x}}$) to obtain measurement output $\beta$.
        \item Let $\pi = (\s, \ct, b_\text{mode}, i, x)$. Output puzzle $\pi$ and key $\beta$.
    \end{enumerate}
    \item If $b_\text{mode} = 1:$
    \begin{enumerate}
        \item Sample $r \leftarrow \bin^n \setminus \{0^n\}$
        \item For $z\in\bin^n$, define $f_r(z) := \mathsf{min}(z, z\oplus r)$. 
        % Let $U_{f,r}$ denote the unitary that maps any $(z, b) \mapsto (z, b \oplus f_r(z))$.
        % \item Denoting $\ket{\$_{\s,\ct}}$ by $\sum_z \alpha_z \ket{z}$, apply $U_f$ to $\ket{\$_{\s,\ct}}$ to obtain $\sum_z \alpha_z \ket{z} \otimes \ket{f_r(z)}$.
        \item Apply $\sum_z \ketbra{z}\otimes X^{f_r(z)}$ to $\ket{\$_{\s,\ct}}\ket{0}$
        \item Measure the second register in computational basis to obtain measurement outcome $x_0$, and set $x_1 = x_0 \oplus r$. Let the residual state on the first register be $\ket{\psi_\text{post}}$.
        \item Sample $b_\text{order} \leftarrow \bin$.
        \item $y_0 := x_{b_\text{order}}$ and $y_1:= x_{1-b_\text{order}}$
        \item Define $V_{y_0, y_1, b}$
        \footnote{We can implement $V_{y_0, y_1, 0}$ as follows. Let $U_0$ be a unitary that maps $\ket{0}$ to $\ket{\varphi_0}:= \frac{\ket{y_0}+\ket{y_1}}{\sqrt{2}}$ and let $U_1$ be a unitary that maps $\ket{0}$ to $\ket{\varphi_1}:=\frac{\ket{y_0}-\ket{y_1}}{\sqrt{2}}$. Given a state $\ket{y_b}\ket{0}$, apply $U_b$ to the second register to get $\ket{y_b}\ket{\varphi_b}$. Then apply $U_0^\dag$ to the second register. If $b=0$ this results in $\ket{y_0}\ket{0}$, else this results in $\ket{y_1}\ket{\varphi}$ where $\ket{\varphi} = U_0^\dag \ket{\varphi_1}$ is some state orthogonal to $\ket{0}$. Then coherently perform the operation that applies $X^{y_0}$ to the first register if the second register is $\ket{0}$ and applies $X^{y_1}$ to the first register if the second register is any other computational basis state. This results in $\ket{0}\ket{0}$ if $b=0$ and $\ket{0}\ket{\varphi}$ otherwise. Finally, apply $U_0$ to the second register to obtain $\ket{0}\ket{\varphi_b}$ and output the second register. $V_{y_0, y_1, 1}$ can be implemented similarly.}
        as a unitary that maps
        \begin{itemize}
            \item $\ket{y_0} \mapsto \frac{\ket{y_0} + i^b \ket{y_1}}{\sqrt{2}}$
            \item $\ket{y_1} \mapsto \frac{\ket{y_0} - i^b \ket{y_1}}{\sqrt{2}}$
        \end{itemize}
        \item Sample $b_\text{rot} \leftarrow \bin$.
        \item Apply $V_{y_0, y_1, b_\text{rot}}$ to $\ket{\psi_\text{post}}$ and measure in computational basis to obtain outcome $y$.
        \item If $y = y_0$ then $\beta\leftarrow 0$, else $\beta \leftarrow 1$.
        \item Let $\pi = (\s, \ct, b_\text{mode}, y_0, y_1, b_\text{rot})$. Output puzzle $\pi$ and key $\beta$.
    \end{enumerate}
    
\end{enumerate}
    Let $k$ be a constant greater than $6$ such that for large enough $n$, $n^k \geq q(n)^3$ and let $p(\cdot)$ be a polynomial such that $p(n) \geq n^{64k}$.
    We will prove that \Gen is a $1/p(n)$-distributional one-way puzzle. Note that this suffices to prove Theorem \ref{thm:state-puzzles-imply-owp}.

    Assume for the sake of contradiction that \Gen is \textit{not} a $1/p(n)$-distributional one-way puzzle. By the definition of $1/p(n)$-distributional one-way puzzle, there exists a QPT $\cA = \{A_\secpar\}_{\secpar \in \bbN}$ and (non-uniform, quantum) advice ensemble $\ket{\tau} = \{\ket{\tau_n}\}_{n \in \mathbb{N}}$ such that for infinitely many $n \in \bbN$, 
     \[
    \{\pi, \beta\}_{(\pi, \beta )\leftarrow \Gen(1^n)} \approx_{1/p(n)} \{\pi, \cA(\adv{},\pi)\}_{(\pi, \beta )\leftarrow \Gen(1^n)}
    \]
Fix any such adversary $\cA$, any such advice ensemble $\ket{\tau}$, and any such large enough $n\in \bbN$. 
 We will use this adversary to build a reduction that contradicts the security of $\cG$. For all $s \in \bin^n$ and  $c \in \cC_n$, define $\Delta_{s,c}$ as follows:
    \begin{itemize}
        \item Let $D_0$ be the distribution of $(\pi, \beta)$ when $\pi, \beta$ is sampled by $\Gen$ conditioned on $\s = s$ and $\ct = c$.
        \item Let $D_1$ be the distribution of $(\pi, A(\pi, \adv{}))$ when $\pi, \beta$ is sampled by $\Gen$ conditioned on $\s = s$ and $\ct = c$.
        \item  $\Delta_{s,c} := SD(D_0, D_1)$
    \end{itemize}
Intuitively, $\Delta_{s,c}$ is the adversary's error in sampling from the true distribution when $\s = s$ and $\ct = c$.
We will use the adversary to synthesize an approximation of $\ket{\$_{s,c}}$ for a random choice of $s,c$,  following the pattern of Aaronson's synthesis algorithm. The algorithm queries a PP oracle to obtain the values of probabilities and phases. The reduction cannot query a PP oracle, so we will replace the query responses with estimates obtained by querying the adversary. 

First, we perform the real amplitude step of Aaronson synthesis.

\begin{claim}[Real Amplitude Synthesis] \label{clm:amp-synth-local}
Let $s \in \bin^n$ and $c \in \cC_n$. Let $\adv{amp}:=\ket{0}\adv{}^{\otimes np(n)}$. Then there exists an efficient unitary $\tM_{s,c}$ such that
	\[
	\left| \tM_{s,c}\ket{0}\adv{amp} -  \ket{\$^*_{s,c}}\adv{amp}\right|\leq \sqrt{3n^3}/p(n)^{1/4} + 2\sqrt{n\Delta_{s,c}}
	\]
    Additionally, there is a uniform circuit family that takes $(s,c)$ as input and implements $\tM_{s,c}$.
\end{claim}
	\begin{proof} 
 Fix any $s$ and $c$. We first define some terms that will be useful for the proof.
\begin{itemize}
    \item Interpret $\ket{\$_{s,c}}$ as $\sum_{z \in \bin^n} a_z e^{-i\phi_z}\ket{z}$ where $a_z \geq 0$ and $\phi_z \in [0, 2\pi)$. 
    % We also define $\alpha_z := a_z e^{-i\phi_z}$ and $\phi_{z'z} := \phi_{z'} - \phi_z$
    \item $\ket{\$^*_{s,c}} := \sum_{z \in \bin^n} a_z \ket{z}$. Intuitively, $\ket{\$_{s,c}^*}$ represents $\ket{\$_{s,c}}$ with the phase information removed, i.e., with real amplitudes.
    \item For all $i \in [1,n]$, for all $z \in \bin^i$, 
    \begin{itemize}
		\item $p_z := \Prr[x = z | b_{\text{mode}} =0\wedge i=i']$ represents the probability that the first $i$ bits equal $z$ when measuring $\ket{\$_{s,c}}$ in the computational basis.
%		 \sum_{z' \in \bin^n \text{ s.t. } z'_{1\ldots i} = z}} (a_x)^2 
    \end{itemize} 
    \item For all $i \in [0,n-1]$, for all $z \in \bin^i$, for all $b\in \bin$, 
    \begin{itemize}
    	\item $p_{b|z} := \Prr[\beta = b| x = z \wedge b_{\text{mode}} =0\wedge i=i']$
     represents the probability that the $i+1$th bit equals $b$ conditioned on the first $i$ bits equalling $z$ when measuring $\ket{\$_{s,c}}$ in the computational basis.
   	    	\item $\tp_{b|z} := \Prr[b = \cA(s,c,0,i,z, \adv{})]$
    	\end{itemize} 
\end{itemize}
 We first use $A$ to obtain a good estimate for $p_{1|z}$ by defining an estimator $\cE_i(z)$.
		For all $i \in [0,n-1]$, for all $z \in \bin^i$ define $\cE_i(z)$ that takes advice $\adv{}^{\otimes p(n)}$ as follows:
	\begin{itemize}
		\item For $j = 1$ to $p(n)$:
		\begin{itemize}
			\item $B_j \leftarrow A(s,c,0,i,z, \adv{})$
	\end{itemize}
	\item Return $\sum_j B_j/p(n)$
	\end{itemize}
    The rest of the construction of $\tM_{s,c}$ proceeds as in the real amplitude step of Aaronson synthesis, except with the oracle queries replaced with estimates given by $\cE_i(\cdot)$.

	Superposition queries to $\cE_i(\cdot)$ may produce entangled junk so we must later on uncompute to remove the junk. To do so we define $E_i$ to be the purification of $\cE_i$ that acts on input register $\sZ$, output register $\sV$, and advice register $\sV'$, i.e.,
 \[
 E_i\ket{z}_\sZ\ket{0}_\sV\adv{1}_{\sV'} = \ket{z}_\sZ \sum_v\sqrt{\Prr[v = \cE_i(z)]}\ket{v}_\sV\junk{v}_{\sV'}
 \]
 where $\adv{1} := \adv{}^{\otimes p(n)}\ket{0}$ and $\junk{v}$ is some normalized state.
 
 For all $v \geq 0$, define $\ket{\psi_v}:= \sqrt{v}\ket{1} + \sqrt{1-v}\ket{0}$ and let $P$ be a unitary that maps $\ket{v}_\sV\ket{0}_{\sZ'}$ to  $\ket{v}_\sV\ket{\psi_v}_{\sZ'}$\footnote{We may only be able to efficiently implement $P$ upto some exponentially small error, however, this small error will not affect our result so we will elide it for the sake of clarity.}. Let $\sX = \{\sX_i\}_{i\in[1,n]}, \sA =\{\sA_i\}_{i\in[1,n]}, \sA'=\{\sA'_i\}_{i\in[1,n]}$ be collections of registers where each $\sX_i$ is a single qubit. Define $\tM_i$ that acts on $\sX_{1\ldots i+1}, \sA_i,$ and $\sA'_i$ as follows:
\begin{itemize}
    \item Let $\sZ := \sX_{1\ldots i}$, $\sZ':=\sX_{i+1}, \sV := \sA_i, \sV' := \sA'_i$
    \item Apply $\left(E_i\right)^\dagger P E_i$ to $\sZ\sZ'\sV\sV$
\end{itemize}
Define $\tM_{s,c} := \tM_{n-1}\tM_{n-2}\ldots \tM_0$. It is easy to see that $\tM_{s,c}$ can be implemented efficiently given $(s,c)$. The rest of the proof of the claim is therefore dedicated to proving correctness of the construction.
% Finally, $\tM_\text{amp} := \sum_{s,c} \ketbra{s,c} \otimes \tM_\text{amp}$.\\

We now show that the above construction satisfies Claim \ref{clm:amp-synth-local} by a series of subclaims. First we note that with high probability, $\cE_i(z)$ is close to $\tp_{1|z}$.
\begin{subclaim}
% [Good Estimator]
\label{subclm:good-est} For all $i \in [0,n-1]$, for all $z \in \bin^i$
\[ 
\Prr\left[\left|\cE_i(z) - \tp_{1|z}\right| \geq \frac{n}{\sqrt{p(n)}}\right] \leq 2e^{-2n^2}
\]
\end{subclaim}
\begin{proof}
    Follows from the definition of $\tp_{1|z}$ and setting $\delta$ to be $n$ in the additive Chernoff bound (Theorem \ref{thm:chernoff-additive}).
\end{proof}
Next we note that since $E_i, P,$ and $(E_i)^\dagger$ do not affect the computational basis state on the $\sZ$ register, for all $s,c$, for all $i \in [0,n-1]$, for all $z \in \bin^i$, there exists a state $\ket{\sigma_{z}}$ such that 
\[
    \left(E_i\right)^\dagger P E_i \ket{z}_\sZ\ket{0}_{\sZ'\sV}\adv{1}_{\sV'} = \ket{z}_\sZ\ket{\sigma_{z}}_{\sZ'\sV\sV'}
\]
The next subclaim shows that we can synthesize a state close to $\ket{\psi_{\tp_{1|z}}}$.
\begin{subclaim} \label{subclm:amp-step-error}For all $i \in [0,n-1]$, for all $z \in \bin^i$, if $n$ is large enough
    \[
    \left| \ket{\sigma_{z}}_{\sZ'\sV\sV'} - \ket{\psi_{\tp_{1|z}}}_{\sZ'}\ket{0}_{\sV}\adv{1}_{\sV'}\right| \leq \sqrt{3n}/p(n)^{1/4}
    \]
\end{subclaim}
\begin{proof}
First we apply $\left(E_i\right)^\dagger P E_i$ to both terms and note this does not change the absolute value of their difference.
    \begin{align}\label{eq:amp-step-error}
        \left| \ket{\sigma_{z}}_{\sZ'\sV\sV'} - \ket{\psi_{\tp_{1|z}}}_{\sZ'}\ket{0}_{\sV}\adv{1}_{\sV'}\right|^2 &= \left| \ket{z}_\sZ\ket{\sigma_{z}}_{\sZ'\sV\sV'} - \ket{z}_\sZ\ket{\psi_{\tp_{1|z}}}_{\sZ'}\ket{0}_{\sV}\adv{1}_{\sV'}\right|^2\nonumber\\
        &=\left| \left(E_i\right)^\dagger P E_i \ket{z}_\sZ\ket{0}_{\sZ'\sV}\adv{1}_{\sV'} - \ket{z}_\sZ\ket{\psi_{\tp_{1|z}}}_{\sZ'}\ket{0}_{\sV}\adv{1}_{\sV'}\right|^2\nonumber\\
         &=\left| P E_i \ket{z}_\sZ\ket{0}_{\sZ'\sV}\adv{1}_{\sV'} - E_i\ket{z}_\sZ\ket{\psi_{\tp_{1|z}}}_{\sZ'}\ket{0}_{\sV}\adv{1}_{\sV'}\right|^2
    \end{align}
    Next, we use the definition of $E_i$ and $P$ to expand each term.
    Expanding $E_i\ket{z}_\sZ\ket{\psi_{\tp_{1|z}}}_{\sZ'}\ket{0}_{\sV}\adv{1}_{\sV'}$
\begin{align}
\label{eq:amp-step-error-1}
    E_i\ket{z}_\sZ\ket{\psi_{\tp_{1|z}}}_{\sZ'}\ket{0}_{\sV}\adv{1}_{\sV'} &= \ket{z}_\sZ \ket{\psi_{\tp_{1|z}}}_{\sZ'}\sum_v\sqrt{\Prr[v = \cE_i(z)]}\ket{v}_\sV\junk{v}_{\sV'}
\end{align}
    Expanding $P E_i \ket{z}_\sZ\ket{0}_{\sZ'\sV}\adv{1}_{\sV'}$
    \begin{align}
    \label{eq:amp-step-error-2}
    P E_i\ket{z}_\sZ\ket{0}_{\sZ'}\ket{0}_{\sV}\adv{1}_{\sV'} &= P \ket{z}_\sZ \sum_v \sqrt{\Prr[v = \cE_i(z)]} \ket{0}_{\sZ'} \ket{v}_\sV\junk{v}_{\sV'}\nonumber\\
    &= \ket{z}_\sZ \sum_v \sqrt{\Prr[v = \cE_i(z)]}\ket{\psi_v}_{\sZ'} \ket{v}_\sV\junk{v}_{\sV'}
\end{align}
Plugging \eqref{eq:amp-step-error-1} and \eqref{eq:amp-step-error-2} into \eqref{eq:amp-step-error} gives
\begin{align}\label{eq:amp-step-error-expand}
    &\left| \ket{\sigma_{z}}_{\sZ'\sV\sV'} - \ket{\psi_{\tp_{1|z}}}_{\sZ'}\ket{0}_{\sV}\adv{1}_{\sV'}\right|^2 \nonumber\\
    =& \left|\ket{z}_\sZ \sum_v \sqrt{\Prr[v = \cE_i(z)]}\left(\ket{\psi_v}_{\sZ'} - \ket{\psi_{\tp_{1|z}}}_{\sZ'}\right)\ket{v}_\sV\junk{v}_{\sV'}\right|^2 \nonumber\\
    =& \left| \sum_v \sqrt{\Prr[v = \cE_i(z)]}\left(\ket{\psi_v} - \ket{\psi_{\tp_{1|z}}}\right)\ket{v} \right|^2\nonumber\\
    =& \sum_v \Prr[v = \cE_i(z)] \left|\ket{\psi_v} - \ket{\psi_{\tp_{1|z}}}\right|^2
\end{align}
 SubClaim \ref{subclm:good-est} shows that all but a negligible fraction of the probability mass in the output of $\cE_i$ is on $v$ that are $n/\sqrt{p(n)}$ close to $\tp_{1|z}$. We can therefore bound the contribution $v$ values that are not close, i.e. for $\bbV := \left\{v : \left|v - \tp_{1|z}\right| \leq \frac{n}{\sqrt{p(n)}}\right\}$, we see that terms not in $\bbV$ contribute negligible amounts to the sum.
 \begin{align}
 \label{eq:amp-step-error-negl}
     &\sum_v \Prr[v = \cE_i(z)] \left|\ket{\psi_v} - \ket{\psi_{\tp_{1|z}}}\right|^2\nonumber\\
     &= \sum_{v \in \bbV} \Prr[v = \cE_i(z)] \left|\ket{\psi_v} - \ket{\psi_{\tp_{1|z}}}\right|^2 + \sum_{v \notin \bbV} \Prr[v = \cE_i(z)] \left|\ket{\psi_v} - \ket{\psi_{\tp_{1|z}}}\right|^2\nonumber\\
     &\leq \sum_{v \in \bbV} \Prr[v = \cE_i(z)] \left|\ket{\psi_v} - \ket{\psi_{\tp_{1|z}}}\right|^2 + 4\sum_{v \notin \bbV} \Prr[v = \cE_i(z)]\nonumber\\
     &\leq \sum_{v \in \bbV} \Prr[v = \cE_i(z)] \left|\ket{\psi_v} - \ket{\psi_{\tp_{1|z}}}\right|^2 + 8e^{-2n^2}
 \end{align}
 where the fourth step uses SubClaim \ref{subclm:good-est} to show that $\sum_{v \notin \bbV} \Prr[v = \cE_i(z)]$. Now, for any $v\in \bbV$, since $v$ and  $\tp_{1|z}$ are close, we can bound the first term. By definition, $\ket{\psi_v} = \sqrt{v}\ket{1} + \sqrt{1-v}\ket{0}$ and $\ket{\psi_{\tp_{1|z}}} = \sqrt{\tp_{1|z}}\ket{1} + \sqrt{1-\tp_{1|z}}\ket{0}$. Therefore, for $v\notin \bbV$
 \begin{align*}
 \label{eq:amp-step-error-final}
     &\left|\ket{\psi_v} - \ket{\psi_{\tp_{1|z}}}\right|^2 \nonumber\\
     =& \left|\sqrt{v}\ket{1} + \sqrt{1-v}\ket{0} - \sqrt{\tp_{1|z}}\ket{1} - \sqrt{1-\tp_{1|z}}\ket{0}\right|^2\nonumber\\
     \leq& \left(\sqrt{v} - \sqrt{\tp_{1|z}}\right)^2+ \left( \sqrt{1-v} - \sqrt{1-\tp_{1|z}}\right)^2\nonumber\\
     \leq& \left|\left(\sqrt{v} - \sqrt{\tp_{1|z}}\right)\left(\sqrt{v} + \sqrt{\tp_{1|z}}\right)\right|+ \left|\left( \sqrt{1-v} - \sqrt{1-\tp_{1|z}}\right)\left( \sqrt{1-v} + \sqrt{1-\tp_{1|z}}\right)\right|\nonumber\\
     \leq& 2\left|v - \tp_{1|z}\right|\nonumber\\
     \leq& 2n/\sqrt{p(n)}
 \end{align*}
 where the last step follows directly from the definition of $\bbV$. Substituting this bound in \eqref{eq:amp-step-error-negl} and plugging the result into \eqref{eq:amp-step-error-expand} gives
 \begin{align*}
     \left| \ket{\sigma_{z}}_{\sZ'\sV\sV'} - \ket{\psi_{\tp_{1|z}}}_{\sZ'}\ket{0}_{\sV}\adv{1}_{\sV'}\right|^2 &\leq \sum_{v \in \bbV} \Prr[v = \cE_i(z)] \left|\ket{\psi_v} - \ket{\psi_{\tp_{1|z}}}\right|^2 + 8e^{-2n^2}\\
     &\leq \sum_{v \in \bbV} \Prr[v = \cE_i(z)] \cdot 2n/\sqrt{p(n)} + 8e^{-2n^2}\\
     &\leq 2n/\sqrt{p(n)} + 8e^{-2n^2} \leq 3n/\sqrt{p(n)}
 \end{align*}
 where the last step uses the fact that $n$ is large enough. This concludes the proof of SubClaim \ref{subclm:amp-step-error}.
 \end{proof}
 
 For all $i \in [0,n-1]$, define the unitary $M_i$ that for all $z \in \bin^i$ maps $\ket{z}_{\sX_{1\ldots i}}\ket{0}_{\sX_{i+1}}\ket{0}_{\sA_i}\adv{1}_{\sA_{i'}}$ to $\ket{z}_{\sX_{1\ldots i}}\ket{\psi_{p_{1|z}}}_{\sX_{i+1}}\ket{0}_{\sA_i}\adv{1}_{\sA_{i'}}$. These represent steps in the amplitude step of Aaronson synthesis. Define $\adv{2}_{\sA'} := \bigotimes_j \adv{1}_{\sA'_j}$. Note that when $z$ is the empty string, $p_{b|z} = p_b$. 
    \begin{subclaim}
    \label{subclm:amp-ideal} For all $i \in [0,n-1]$
    \[
    M_i M_{i-1} \ldots M_1 M_0 \ket{0}_{\sX_{1\ldots i+1}\sA}\adv{2}_{\sA'} = \sum_{z\in\bin^{i+1}} \sqrt{p_z}\ket{z}_{\sX_{1\ldots i+1}}\ket{0}_{\sA}\adv{2}_{\sA'}
    \]
        
    \end{subclaim}
    \begin{proof}
        We prove by induction on $i$. Consider the base case when $i=0$. By definition, $\ket{\psi_{\tp_{1|z}}} = \sqrt{\tp_{1|z}}\ket{1} + \sqrt{1-\tp_{1|z}}\ket{0}$, so
        \begin{align*}
        M_0 \ket{0}_{\sX_1}\ket{0}_{\sA}\adv{2}_{\sA'} &= \ket{\psi_{p_{1|z}}}_{\sX_1}\ket{0}_{\sA}\adv{2}_{\sA'}\\
        &= \left(\sqrt{p_0}\ket{0}_{\sX_1} + \sqrt{p_1}\ket{1}_{\sX_1}\right)\ket{0}_{\sA}\adv{2}_{\sA'}\\
        &= \sum_{z\in\bin} \sqrt{p_z}\ket{z}_{\sX_{1}}\ket{0}_{\sA}\adv{2}_{\sA'}
        \end{align*}
        Suppose the SubClaim holds for all $i< i'$. Then by applying the induction hypothesis
        \begin{align*}
        &M_{i'} M_{i'-1} \ldots M_1 M_0 \ket{0}_{\sX_{1\ldots i'+1}\sA}\adv{2}_{\sA'} \\&= M_{i'} \sum_{z\in\bin^{i'}} \sqrt{p_z}\ket{z}_{\sX_{1\ldots i}}\ket{0}_{\sX_{i'+1}}\ket{0}_{\sA}\adv{2}_{\sA'}\\
        &= \sum_{z\in\bin^{i'}}\sqrt{p_z}\ket{z}_{\sX_{1\ldots i'}}\left(\sqrt{p_{0|z}}\ket{0}_{\sX_{i'+1}} + \sqrt{p_{1|z}}\ket{1}_{\sX_{i'+1}}\right)\ket{0}_{\sA}\adv{2}_{\sA'}\\
        &= \sum_{z\in\bin^{i'+1}} \sqrt{p_z}\ket{z}_{\sX_{1\ldots i'+1}}\ket{z}_{\sX_{1\ldots i'+1}}\ket{0}_{\sA}\adv{2}_{\sA'}
        \end{align*}
    Which concludes the proof of SubClaim \ref{subclm:amp-ideal}.
    \end{proof}
    Next we show via hybrid argument that $M_i$ can be replaced by $\tM_i$ one by one for each $i$. 
    Let $$\Delta_{s,c,i} := \frac{1}{2}\cdot\sum_{z\in\bin^{i}} p_z\left(\left|p_{0|z} - \tp_{0|z}\right| + \left|p_{1|z} - \tp_{1|z}\right|\right)$$
    Note that by the definition of $\Delta_{s,c}$, the contribution to $\Delta_{s,c}$ in the case when $b_{\text{mode}} = 0$ may be written as the expectation over $i$ of $\Delta_{s,c,i}$. Therefore
    \begin{align*}
     \Delta_{s,c} &\geq \Pr[b_{\text{mode}} = 0]\cdot\sum_i \Pr[i]\cdot \Delta_{s,c,i} \\
     &\geq \sum_i \Delta_{s,c,i}/2n
    \end{align*}
    \begin{subclaim}
    \label{subclm:amp-hybrids} For all $i \in [0,n]$,  define 
    \[
    \ket{\cent_i} :=  \tM_{n-1} \tM_{n-2} \ldots \tM_i M_{i-1} \ldots M_0 \ket{0}_{\sX\sA}\adv{2}_{\sA'}
    \] Then for all $i \in [0,n-1]$ 
    \[\left|\ket{\cent_{i+1}} - \ket{\cent_{i}}\right| \leq \sqrt{3n}/p(n)^{1/4} + \sqrt{2\Delta_{s,c,i}}\]
        \end{subclaim}

\begin{proof} By expanding the definitions,
   \begin{align*}
       \left|\ket{\cent_{i+1}} - \ket{\cent_{i}}\right|&= \left|\tM_{n-1} \tM_{n-2} \ldots \tM_{i+1}\left(M_{i} - \tM_{i}\right) M_{i-1} \ldots M_0 \ket{0}_{\sX\sA}\adv{2}_{\sA'} \right|\\
       &=\left|\left(M_{i} - \tM_{i}\right) M_{i-1} \ldots M_0 \ket{0}_{\sX\sA}\adv{2}_{\sA'} \right|
   \end{align*}
   Applying SubClaim \ref{subclm:amp-ideal} to $M_{i-1} \ldots M_0 \ket{0}_{\sX\sA}\adv{2}_{\sA'}$ we get
   \begin{align*}
       \left|\ket{\cent_{i+1}} - \ket{\cent_{i}}\right|&= \left|\left(M_{i} - \tM_{i}\right) \sum_{z\in\bin^{i}} \sqrt{p_z}\ket{z}_{\sX_{1\ldots i}}\ket{0}_{\sX_{i+1\ldots n}\sA}\adv{2}_{\sA'} \right|
       % &=\left|\sum_{z\in\bin^{i}} \sqrt{p_z}\ket{z}_{\sX_{1\ldots i}}\left(\ket{\psi_{{p_{1|z}}}}_{\sX_{i+1}}\ket{0}_\sA\adv{2}_{\sA'} \right)\right|
   \end{align*}
Expanding $M_{i} \ket{z}_{\sX_{1\ldots i}}\ket{0}_{\sX_{i+1\ldots n}\sA}\adv{2}_{\sA'}$
\[
M_{i} \ket{z}_{\sX_{1\ldots i}}\ket{0}_{\sX_{i+1\ldots n}\sA}\adv{2}_{\sA'} = \ket{z}_{\sX_{1\ldots i}}\ket{\psi_{{p_{1|z}}}}_{\sX_{i+1}}\ket{0}_{\sX_{i+2\ldots n}\sA}\adv{2}_{\sA'}
\]
Expanding $\tM_{i} \ket{z}_{\sX_{1\ldots i}}\ket{0}_{\sX_{i+1\ldots n}\sA}\adv{2}_{\sA'}$ 
\begin{align*}
&\tM_{i} \ket{z}_{\sX_{1\ldots i}}\ket{0}_{\sX_{i+1\ldots n}\sA}\adv{2}_{\sA'} \\
&=\tM_{i} \ket{z}_{\sX_{1\ldots i}}\ket{0}_{\sX_{i+1\ldots n}\sA}\left(\adv{1}_{\sA'_1}\ldots\adv{1}_{\sA'_n}\right)\\
&= \ket{z}_{\sX_{1\ldots i}}\ket{\sigma_{z}}_{\sX_{i+1}\sA_i\sA'_i}\ket{0}_{\sX_{i+2\ldots n}\sA_{1\ldots i-1, i+1 \ldots n}}\left(\adv{1}_{\sA'_1}\ldots\adv{1}_{\sA'_i-1}\adv{1}_{\sA'_i+1}\ldots\adv{1}_{\sA'_n}\right)
\end{align*}
When taking the norm of the difference, the registers $\sX_{i+2\ldots n},\sA_{1\ldots i-1, i+1 \ldots n},  \sA'_{1\ldots i-1, i+1 \ldots n}$ are identical in both states and may be ignored. Therefore by the triangle inequality
 % \begin{align*}
 %       &\left|\ket{\cent_{i+1}} - \ket{\cent_{i}}\right|\\
 %       &= \left|\sum_{z\in\bin^{i}} \sqrt{p_z}\ket{z}_{\sX_{1\ldots i}}\left(\ket{\psi_{{p_{1|z}}}}_{\sX_{i+1}}\ket{0}_{\sA_{i}}\adv{1}_{\sA'_{i}} - \ket{\sigma_{z}}_{\sX_{i+1}\sA_i\sA'_i}\right)\right|\\
 %       &=\sum_{z\in\bin^i} p_z \left|\ket{\psi_{{p_{1|z}}}}_{\sX_{i+1}}\ket{0}_{\sA_{i}}\adv{1}_{\sA'_{i}} - \ket{\sigma_{z}}_{\sX_{i+1}\sA_i\sA'_i}\right|^2
 %   \end{align*}
    \begin{align*}
       &\left|\ket{\cent_{i+1}} - \ket{\cent_{i}}\right|\\
       &= \left|\sum_{z\in\bin^{i}} \sqrt{p_z}\ket{z}\left(\ket{\psi_{{p_{1|z}}}}\ket{0}\adv{1} - \ket{\sigma_{z}}\right)\right|\\
       &=\left|\sum_{z\in\bin^{i}} \sqrt{p_z}\ket{z}\left(\ket{\psi_{{p_{1|z}}}}\ket{0}\adv{1} -\ket{\psi_{{\tp_{1|z}}}}\ket{0}\adv{1} + \ket{\psi_{{\tp_{1|z}}}}\ket{0}\adv{1} - \ket{\sigma_{z}}\right)\right|\\
       &\leq \left|\sum_{z\in\bin^{i}} \sqrt{p_z}\ket{z}\left(\ket{\psi_{{p_{1|z}}}}\ket{0}\adv{1} -\ket{\psi_{{\tp_{1|z}}}}\ket{0}\adv{1}\right)\right| +\left|\sum_{z\in\bin^{i}} \sqrt{p_z}\ket{z} \left(\ket{\psi_{{\tp_{1|z}}}}\ket{0}\adv{1} - \ket{\sigma_{z}}\right)\right|
   \end{align*}
   Consider the first term. Intuitively, $\tp_{1|z}$ is on average $\Delta_{s,c,i}$ far from $p_{1|z}$ (over randomness of $z$). We can therefore bound the first term as follows. 
   \begin{align*}
       &\left|\sum_{z\in\bin^{i}} \sqrt{p_z}\ket{z}\left(\ket{\psi_{{p_{1|z}}}}\ket{0}\adv{1} -\ket{\psi_{{\tp_{1|z}}}}\ket{0}\adv{1}\right)\right|\\
       &=\sqrt{\sum_{z\in\bin^{i}} p_z\left|\ket{\psi_{{p_{1|z}}}} -\ket{\psi_{{\tp_{1|z}}}}\right|^2}\\
       &=\sqrt{\sum_{z\in\bin^{i}} p_z\left|\left(\sqrt{p_{1|z}} - \sqrt{\tp_{1|z}}\right)\ket{1} + \left(\sqrt{p_{0|z}} - \sqrt{\tp_{0|z}}\right)\ket{0}\right|^2}\\
       &=\sqrt{\sum_{z\in\bin^{i}} p_z\left(\left(\sqrt{p_{1|z}} - \sqrt{\tp_{1|z}}\right)^2 + \left(\sqrt{p_{0|z}} - \sqrt{\tp_{0|z}}\right)^2\right)}\\
       &\leq\sqrt{\sum_{z\in\bin^{i}} p_z\left(\left|p_{1|z} - \tp_{1|z}\right| + \left|p_{0|z} - \tp_{0|z}\right|\right)}\\
       &\leq \sqrt{2\Delta_{s,c,i}}
   \end{align*} 
   Now, consider the second term. We have shown in SubClaim \ref{subclm:amp-step-error} that we synthesize a state close to $\ket{\psi_{{\tp_{1|z}}}}$, therefore
   \begin{align*}
       &\left|\sum_{z\in\bin^{i}} \sqrt{p_z}\ket{z} \left(\ket{\psi_{{\tp_{1|z}}}}\ket{0}\adv{1} - \ket{\sigma_{z}}\right)\right|\\
       &=\sqrt{\sum_{z\in \bin^i}p_z \left|\ket{\psi_{{\tp_{1|z}}}}\ket{0}\adv{1} - \ket{\sigma_{z}}\right|^2}\\
       &\leq\sqrt{\sum_{z\in \bin^i}p_z \cdot 3n/\sqrt{p(n)}}\\
       &=\sqrt{3n}/p(n)^{1/4}
   \end{align*}
   Where the third step follows from SubClaim \ref{subclm:amp-step-error}. Adding both error terms gives the final error and concludes the proof of SubClaim \ref{subclm:amp-hybrids}.
      % &=\sum_{z\in\bin^i} p_z \left|\ket{\psi_{{p_{1|z}}}}_{\sX_{i+1}}\ket{0}_{\sA_{i}}\adv{1}_{\sA'_{i}} - \ket{\sigma_{z}}_{\sX_{i+1}\sA_i\sA'_i}\right|^2
   % Now, 
   % \begin{align*}
   %     \left|\ket{\psi_{{p_{1|z}}}}_{\sX_{i+1}}\ket{0}_{\sA_{i}}\adv{1}_{\sA'_{i}} - \ket{\sigma_{z}}_{\sX_{i+1}\sA_i\sA'_i}\right| \leq 
   % \end{align*}
\end{proof}
Now, summing up the errors from SubClaim \ref{subclm:amp-hybrids}
\begin{align*}
    \left|\ket{\cent_n} - \ket{\cent_0}\right|\leq \sum_{i\in[0,n-1]} \left|\ket{\cent_{i+1}} - \ket{\cent_i}\right| \leq \left(n\cdot\sqrt{3n}/p(n)^{1/4} + \sum_i\sqrt{2\Delta_{s,c,i}}\right)
\end{align*}
Applying Jensen's inequality to the final term, followed by the definition of $\Delta_{s,c,i}$
\begin{align*}
    \left|\ket{\cent_n} - \ket{\cent_0}\right| \leq \left(n\cdot\sqrt{3n}/p(n)^{1/4} + \sqrt{2n \sum_i\Delta_{s,c,i}}\right) \leq \sqrt{3n^3}/p(n)^{1/4} + 2\sqrt{n\Delta_{s,c}}
\end{align*}
Finally, we note that by SubClaim \ref{subclm:amp-ideal}, 
\[\ket{\cent_n} = \sum_{z\in \bin^n} \sqrt{p_z}\ket{z}\ket{0}\adv{2} = \ket{\$^*_{s,c}}\ket{0}\adv{2}\]
and by definition of $\tM_{s,c}$ and $\ket{\cent_0}$
\[
\ket{\cent_0} = \tM_{s,c} \ket{0}\adv{2}
\]
Note that $\adv{amp}= \ket{0}\adv{2}$. Therefore
\[
    \left|\tM_{s,c} \ket{0}\adv{amp} - \ket{\$^*_{s,c}}\adv{amp}\right| \leq \sqrt{3n^3}/p(n)^{1/4} + 2\sqrt{n\Delta_{s,c}}
\]
and therefore
% \begin{align*} 
%     \left|\tM_\text{amp} \ket{s,c}\ket{0}\adv{amp} - \ket{s,c}\ket{\$^*_{s,c}}\adv{amp}\right| &\leq \sqrt{3n^3}/p(n)^{1/4} + \sqrt{2n\Delta_{s,c}}
% \end{align*}
which concludes the proof of Claim \ref{clm:amp-synth-local}.\end{proof}
Claim \ref{clm:amp-synth-local} shows how to construct $\tM_{s,c}$ for each $(s,c)$. We now construct $\tM$ that takes $(s,c)$ as input and applies $\tM_{s,c}$.
\begin{claim}
    \label{clm:amp-synth}
 Let $\adv{amp}:=\ket{0}\adv{}^{\otimes np(n)}$. Then there exists an efficient unitary $\tM$ such that for all $s \in \bin^n$ and $c \in \cC_n$
	\[
	\left| \tM\ket{s,c}\ket{0}\adv{amp} -  \ket{s,c}\ket{\$^*_{s,c}}\adv{amp}\right|\leq \sqrt{3n^3}/p(n)^{1/4} + 2\sqrt{n\Delta_{s,c}}
	\]
\end{claim}
\begin{proof}
    Let $\tM$ be defined as $\sum_{s,c}\ketbra{s,c}\otimes \tM_{s,c}$. The statement then follows directly from Claim \ref{clm:amp-synth-local}.
\end{proof}
Claim \ref{clm:amp-synth} shows how the adversary may be used to synthesize a state close to $\ket{\$^*_{s,c}}$. The next step in Aaronson's synthesis is to coherently apply a phase to each basis vector $\ket{z}$ in $\ket{\$^*_{s,c}}$ to obtain $\ket{\$_{s,c}}$. We will show how to use the adversary to perform a similar task. Finally we will apply $\ct^\dagger$ to obtain an approximation to $\ket{\$_s}$.

\begin{claim}[State Synthesis] \label{clm:state-synth} Let $\adv{synth}:= \adv{}^{\otimes(2n+1)p(n)}$. There exists an efficient algorithm $\cB$ such that 
	\begin{align*}
    \underset{s, \ket{\$_s} \leftarrow \cG(1^n)}{\bbE}\left[\bra{\$_s}\cB(s, \adv{synth})\ket{\$_s}\right]& \geq 1 - 1/q(n)
    \end{align*}
\end{claim}
\begin{proof}
    We first construct an algorithm that takes input $z'$ and estimates the value of $\phi_{z'z} := \phi_{z'} - \phi_{z}$ for some fixed $z$, where $\phi_{z'}$ and $\phi_{z}$ area the arguments of the complex phases of $\ket{z}$ and $\ket{z'}$ respectively in $\ket{\$_{s,c}}$.
    For all $s,c$, for all $z, z' \in \bin^n$ define $\cU_z^{s,c}(z')$ that takes advice $\adv{}^{\otimes p(n)}$ as follows:
	\begin{itemize}
		\item For $j = 1$ to $np(n)^{1/4}$:
		\begin{itemize}
			\item $u_j \leftarrow A(s,c,1,z,z',0, \adv{})$
   			\item $v_j \leftarrow A(s,c,1,z,z',1, \adv{})$
	\end{itemize}
    \item $u' \leftarrow \sum_j u_j/np(n)^{1/4}$ and $u\leftarrow 2u'-1$
    \item $v' \leftarrow \sum_j v_j/np(n)^{1/4}$  and $v\leftarrow 2v'-1$
	\item Return $\arctanb(v,u)$
	\end{itemize}

The rest of the reduction proceeds as in the phase step of Aaronson's synthesis, except the oracle calls are replaced with estimates given by $\cU_{t'}^{s,c}(\cdot)$, where $t'$ is a pivot chosen by measuring an approximation of $\ket{\$^*_{s,c}}$.

Superposition queries to $\cU_z^{s,c}(\cdot)$ may produce entangled junk so we must later on uncompute to remove the junk.  
Let $U_z^{s,c}$ be the purification of $\cU_z^{s,c}$ that acts on input register $\sZ$, output register $\sV$, and advice register $\sV'$, i.e.,
 \[
 U_z^{s,c}\ket{z'}_\sZ\ket{0}_\sV\adv{1}_{\sV'} = \ket{z'}_\sZ \sum_v\sqrt{\Prr[v = \cU_z^{s,c}(z')]}\ket{v}_\sV\junk{v}_{\sV'}
 \]
 where $\adv{1} := \adv{}^{\otimes p(n)} \ket{0}$ and $\junk{v}$ is some normalized state.
 
 Let $P'$ \footnote{We may only be able to efficiently implement $P'$ upto some exponentially small error, however, this small error will not affect our result so we will elide it for the sake of clarity.} be a unitary that maps $\ket{v}_\sV$ to  $e^{-iv}\ket{v}_\sV$. Let $\adv{amp}$ and $\tM$ be as defined in Claim \ref{clm:amp-synth}. Define the algorithm $\cB'$ that takes input $(s,c)$ and  advice $\adv{amp}^{\otimes2}$ and $ \adv{1}$ as follows: 
 \begin{enumerate}
     \item Compute $\tM \ket{s,c}\ket{0^n}\adv{amp}$ and measure the second register to get $t'$.
     \item Compute $(U_{t'}^{s,c})^\dagger P'U_{t'}^{s,c}\left(\tM \ket{s,c}_{\sR}\ket{0^n}_{\sZ}\adv{amp}_\sA\right)\ket{0}_V\adv{1}_{V'}$
     \item Apply $c^\dagger$ to register $\sZ$
     \item Return $\sZ$
 \end{enumerate}
 Finally, let $\cB$ be the algorithm that takes input $s$, samples $c \leftarrow \cC_n$ and outputs $\cB'(s,c)$ using advice $\adv{amp}^{\otimes2}$ and $ \adv{1}$. Since $\adv{amp}=\ket{0}\adv{}^{\otimes np(n)}$ and $\adv{1} = \adv{}^{\otimes p(n)} \ket{0}$, the algorithm can be executed using advice $\adv{synth} = \adv{}^{\otimes(2n+1)p(n)}$.\\
 
\noindent \textbf{Analysis. } The reduction estimates $\phi_{z'z}$ by estimating the probabilities of certain binary outcome measurements. The candidate puzzle $\Gen$ is constructed so that any adversary that distributionally inverts the puzzle can be used to approximate the probabilities of the measurements with low error. However, small errors in the probability estimates can still result in large errors in the phase estimate. Essentially, $\cU_z^{s,c}(z')$ gives a good estimate of $\phi_{z'z}$ if 
 \begin{enumerate}
     \item[(a)] the adversary has low error when $y_0 = z$ and $y_1 = z'$, and 
     \item[(b)] the weights on $z$ and $z'$ in $\ket{\$_{s,c}}$ are not too far from each other.
 \end{enumerate}  Applying a unitary 2-design to the state flattens out the weights on the computational basis states, allowing us to argue that the weights on $z'$ and $z$ are close most of the time.
 
 Before we can analyse the above algorithm, we need to define some helpful sets. Recall $k$ is a constant such that $n^k \geq q(n)$ and $p(n) \geq n^{64k}$. For all $s$, we will define $\bbG_{s}$ as the set of $c$ where the probability that measuring $\ket{\$_{s,c}}$ in the computational basis results in a heavy $z$ is less than $1/n^k$, where a string $z$ is heavy if the probability mass of $z$ in $\ket{\$_{s,c}}$ is greater than $n^{3k}/2^n$. Intuitively, $\bbG_{s}$ is the set of $c$ such that $\ket{\$_{s,c}}$ has weight roughly evenly spread over computational basis states $z$, i.e., the total weight on heavy $z$ values is small. Formally, $\bbG_{s}:= \left\{c \text{ s.t. } \sum_{z: |\langle{z}|\$_{s,c}\rangle|^2 \leq n^{3k}/2^n} |\langle{z}|\$_{s,c}\rangle|^2 \leq 1/n^k \right\} $. We also define $\bbS$ as the set of $s,c$ such that the adversary has error less than $1/\sqrt{p}$ when $\s =s$ and $\ct = c$. Formally, $\bbS:= \left\{s, c \text{ s.t. } \Delta_{s,c} \leq 1/\sqrt{p}\right\} $.  
\begin{subclaim} 
\label{subclm:bbG}
For all $s$:
    \[
    \Prr_c[c\in\bbG_{s}]\geq 1 - 1/n^k
    \]
\end{subclaim}

\begin{proof}
    The proof follows directly from the following theorem, proved in Appendix \ref{appendix:2-design}.
    \begin{theorem}[Flatness of 2-designs]\label{thm:flatness-of-2-designs} Let $\cC$ be a unitary 2-design on $n$ qubits. Fix any $n$ qubit state $\ket{\psi}$. For any $C \in \Supp(\cC)$,  let $p_C(x):=|\bra{x}C\ket{\psi}|^2$ be the probability that measuring $C\ket{\psi}$ in the computational basis results in $x$. Then the following holds for all $k>6$ and sufficiently large $n$. Define $$\bbG := \left\{C \in \Supp(\cC) : \underset{x: p_C(x) \geq \frac{n^{3k}}{2^n}}{\sum} p_C(x) \leq 1/n^k \right\}$$ Then $$\Prr_{C\leftarrow\cC}[C\in\bbG]\geq 1 - 1/n^k$$
\end{theorem}
\end{proof}
\begin{subclaim}
\label{subclm:bbS}
    \[
    \Prr_{s,c}[s, c\in\bbS]\geq 1 - 1/p(n)^{1/2}
    \]
\end{subclaim}

\begin{proof}
   $\Delta_{s,c}$ is defined as the error when $\s = s$ and $\ct = c$. We can write the total error of $\cA$ (which is upper bounded by $1/p(n)$) as the expectation of $\Delta_{s,c}$, i.e.
    \begin{align*}
       \frac{1}{p(n)} &\geq \sum_{s, c} \Prr[\s = s] \cdot\Prr[\ct = c] \cdot\Delta_{s, c} \\
       &\geq \sum_{s, c \notin \bbS} \Prr[\s = s] \cdot\Prr[\ct = c] \cdot\Delta_{s, c}\\
       &\geq \sum_{s, c \notin \bbS} \Prr[\s = s] \cdot\Prr[\ct = c] \cdot\frac{1}{\sqrt{p(n)}},
   \end{align*}
    where the third step follows from the definition of $\bbS$. Therefore, $\Prr_{s,c}[s, c\notin\bbS]\leq \frac{1}{\sqrt{p(n)}}$.
\end{proof}

Define $\bbS':= \left\{s,c : s,c \in \bbS \wedge c \in \bbG_{s}\right\}$. We first note that with high probability $s,c \in \bbS'$. 
\begin{subclaim}
    \label{subclm:bbS'}
    \[
    \Prr_{s,c}[s, c\in\bbS']\geq 1 - 1/p(n)^{1/2} - 1/n^k
    \]
\end{subclaim}
\begin{proof}
    Follows from the definition of $\bbS'$ and Subclaims \ref{subclm:bbG} and \ref{subclm:bbS}.
\end{proof}

The goal of the remainder of the proof is to show that when $(s,c) \in \bbS'$, $\cB'(s,c)$ outputs a state close to $\ketbra{\$_s}$. Since w.h.p. $(s,c) \in \bbS'$, we can ignore case when $(s,c) \notin \bbS'$ at the cost of some small error probability which is incorporated in the final result. For the rest of the proof, we fix some $(s,c) \in \bbS'$, and drop the parameterization on $(s,c)$ in the notation.

Define some terms that will be useful for the proof.
\begin{itemize}
    \item Interpret $\ket{\$_{s,c}}$ as $\sum_{z \in \bin^n} a_z e^{-i\phi_z}\ket{z}$ where $a_z \geq 0$ and $\phi_z \in [0, 2\pi)$, and let $\alpha_z := a_z e^{-i\phi_z}$ and $\phi_{z'z} := \phi_{z'} - \phi_z$
    % We also define $\alpha_z := a_z e^{-i\phi_z}$ and $\phi_{z'z} := \phi_{z'} - \phi_z$
    \item $\ket{\$^*_{s,c}} := \sum_{z \in \bin^n} a_z \ket{z}$. Intuitively, $\ket{\$_{s,c}^*}$ represents $\ket{\$_{s,c}}$ with the phase information removed, i.e., with real amplitudes.
\end{itemize}
We now describe how $y_0$ and $y_1$ are distributed during the execution of $\Gen$. Define $Y_0$ as the distribution on $y_0$ induced by $\Gen$.
% conditioned on $b_\text{mode} = 1, \s = s, \ct = c$. 
Let $R:= \bin^n \setminus \{0^n\}$.
\begin{subclaim}
\label{subclm:y0}
    For all $y_0\in \bin^n$
    \[
    \Pr_{Y_0}[y_0] = \frac{1}{2}\left(|\alpha_{y_0}|^2\left(1-\frac{1}{|R|}\right) + \frac{1}{|R|}\right)
    \]
\end{subclaim}
\begin{proof}
    Recall that $y_0$ is generated by measuring the $F$ register in the computational basis to obtain $x_0$, and setting $y_0$ to be either $x_0$ or $x_0 \oplus r$ at random (where $r$ is uniformly sampled from $R$).
    \begin{align*} 
       \Pr_{Y_0}[y_0] &= \frac{1}{|R|} \sum_{r \neq 0} \frac{1}{2} \times \Prr[\text{measuring $F$ register gives $f_r(z) = \text{min}(y_0, y_0 \oplus r)$}] \\
       &= \frac{1}{2|R|} \sum_{r \neq 0} \left| \left(\mathbb{I} \otimes \ket{f_r(y_0)} \bra{f_r(y_0)}\right) \left(\sum_x \alpha_x \ket{x} \ket{f_r(x)} \right) \right|^2.
   \end{align*}
   Now, $f_r(y_0) = f_r(x)$ if and only if $x = {y_0}$ or $x = {y_0} \oplus r$, so only the terms, $\alpha_{y_0}\ket{{y_0}}\ket{f_r({y_0})}$ and $\alpha_{{y_0}\oplus r}\ket{{y_0}\oplus r}\ket{f_r({y_0})}$, are in the support of the projector. Therefore
   \begin{align*}
       \Pr_{Y_0}[y_0] &= \frac{1}{2|R|} \sum_{r \neq 0} \left| \alpha_{y_0} \ket{{y_0}} + \alpha_{{y_0} \oplus r} \ket{{y_0} \oplus r} \right|^2 \\
       &= \frac{1}{2|R|} \sum_{r \neq 0} \left( \left|\alpha_{y_0} \right|^2 + \left| \alpha_{{y_0} \oplus r} \right|^2\right) \\
       &= \frac{\left| \alpha_{y_0}\right|^2}{2} + \sum_{y \neq y_0} \frac{\left|\alpha_{y}\right|^2}{2|R|}.
   \end{align*}
   Since $\sum_{y} \left|\alpha_{y} \right|^2 = 1$,
   \begin{align*}
       & \Pr_{Y_0}[y_0] = \frac{\left| \alpha_{y_0} \right|^2}{2} + \frac{1 - \left|\alpha_{y_0} \right|^2}{2R},
   \end{align*}
   which after rearranging gives the statement in the claim.
\end{proof}
Define $Y^{y_0}_{1}$ as the distribution on $y_1$ induced by $\Gen$ conditioned on sampling $y_0$.
\begin{subclaim}
\label{subclm:y1}

    For all $y_1\in \bin^n$
    \[
    \Pr_{Y^{y_0}_{1}}(y_1) = \frac{|\alpha_{y_1}|^2+|\alpha_{y_0}|^2}{1+|\alpha_{y_0}|^2(|R|-1)}
    \]
\end{subclaim}
\begin{proof}
   By the definition of $Y^{y_0}_{1}$
    \begin{align*}
        \Pr_{Y^{y_0}_{1}} [y_1] &= \Prr[y_1| y_0] \\
        &= \frac{\Prr[y_1 \wedge y_0]}{\Pr_{Y_0}[y_0]}
    \end{align*}
    Recall that $y_0$ and $y_1$ are a random permutation of $x_0$ and $x_0\oplus r$ where $x_0$ is obtained by measuring the $F$ register in the computational basis and $r$ is uniformly sampled from $R$. Then, $y_0$ and $y_1$ are obtained with probability $\frac{1}{2}$ conditioned on $r = y_0 \oplus y_1$ and the measurement outcome of $F$ is $f_r(y_0) = f_r(y_0 \oplus r) = f_r(y_1)$. Thus,
    \begin{align*}
        \Pr_{Y_1^{y_0}}[y_1] &= \frac{1}{2} \cdot \frac{\Prr[\text{measuring $F$ register gives $f_r(y_0)$ } \wedge r={y_0}\oplus {y_1}]}{\Pr_{Y_0}[y_0]} \\
        &= \frac{1}{|R|} \frac{\left|\left( \mathbb{I} \otimes \ket{f_{{y_0} \oplus {y_1}}({y_0})} \bra{f_{{y_0} \oplus {y_1}}({y_0})} \right) \left(\sum_x \alpha_x \ket{x} \ket{f_{{y_0} \oplus {y_1}}(x)} \right) \right|^2}{\Pr_{Y_0}[y_0]} \\
        &= \frac{\frac{1}{|R|} \left(|\alpha_{y_0}|^2 + |\alpha_{{y_1}}|^2 \right)}{|\alpha_{y_0}|^2 \left(1 - \frac{1}{|R|} \right) + \frac{1}{|R|}} \\
        &= \frac{|\alpha_{y_0}|^2 + |\alpha_{{y_1}}|^2}{1 + |\alpha_{y_0}|^2 (|R| - 1)},
    \end{align*}
    where the third step follows from Subclaim \ref{subclm:y0}.
\end{proof}
Define $\Delta'_{y_0}$ as follows:
    \begin{itemize}
        \item Let $D_0$ be the distribution of $(\pi, \beta)$ when $\pi, \beta$ is sampled by $\Gen$ conditioned on $\Gen$ sampling $y_0$ and $s = \s, c = \ct, b_\text{mode}=1$.
        \item Let $D_1$ be the distribution of $(\pi, A(\pi, \adv{}))$ when $\pi$ is sampled by $\Gen$ conditioned on $\Gen$ sampling $y_0$ and  $s = \s, c = \ct, b_\text{mode}=1$.
        \item  $\Delta'_{y_0} := SD(D_0, D_1)$
    \end{itemize}
    Define $\Delta'_{y_0,y_1}$ as follows:
    \begin{itemize}
        \item Let $D_0$ be the distribution of $(\pi, \beta)$ when $\pi, \beta$ is sampled by $\Gen$ conditioned on $\Gen$ sampling $y_0$ and $y_1$, while $s = \s, c = \ct, b_\text{mode}=1$.
        \item Let $D_1$ be the distribution of $(\pi, A(\pi, \adv{}))$ when $\pi$ is sampled by $\Gen$ conditioned on on $\Gen$ sampling $y_0$ and $y_1$, while $s = \s, c = \ct, b_\text{mode}=1$.
        \item  $\Delta'_{y_0,y_1} := SD(D_0, D_1)$
    \end{itemize}
Intuitively, $\Delta'_{y_0}$ (and $\Delta'_{y_0,y_1}$) represent the adversary's error conditioned on sampling $y_0$ (and $y_1$).

Define the following sets
\begin{itemize}
    \item $\bbZ: =\left\{z \text{ s.t. } \Delta'_z \leq 1/p(n)^{1/4}\right\}$
    \item $\bbZ'_{z}: =\left\{z'\neq z \text{ s.t. } \Delta'_{z,z'} \leq 1/2p(n)^{1/8}\right\}$
    \item $\bbL: =\left\{z \text{ s.t. } |\alpha_z|^2 \geq \frac{1}{n^{3k}2^n}\right\}$
    \item $\bbU: =\left\{z \text{ s.t. } |\alpha_z|^2 \leq \frac{n^{3k}}{2^n}\right\}$    
\end{itemize}
Intuitively, $\bbZ$ and $\bbZ'_{z}$ are sets of strings on which the adversary's error is low, while $\bbL \cap \bbU$ is the set of strings whose probability mass in $\ket{\$_{s,c}}$ not too far from $1/2^n$. The next subclaim shows that
\begin{enumerate}
    \item[(a)] If $z$ is sampled by measuring $\ket{\$^*_{s,c}}$, with high probability $z \in \bbZ \cap \bbL \cap \bbU$ 
    \item[(b)] If $z \in \bbZ \cap \bbL \cap \bbU$ then most of the probability mass of $\ket{\$_{s,c}}$ is on strings $z'$ such that $z' \in \bbZ'_{z} \cap \bbL \cap \bbU$
\end{enumerate}
\begin{subclaim}\
\label{subclm:markov-sets}
    \begin{enumerate}
        \item  $\sum_{z\notin \bbZ} |\alpha_z|^2 \leq 3/p(n)^{1/4}$
        \item $\sum_{z\notin \bbL} |\alpha_z|^2 \leq 1/n^{3k}$
         \item  $\sum_{z\notin \bbU} |\alpha_z|^2 \leq 1/n^{k}$
         \item For all $z\in \bbU\cap \bbZ$, $\sum_{z'\notin \bbZ'_{z} } |\alpha_{z'}|^2 \leq 4n^{3k}/p(n)^{1/8}$
    \end{enumerate}

\end{subclaim}
\begin{proof}
\begin{enumerate}
    \item Recall that $\Delta_{s,c}$ is the adversary's error in sampling  when $\s = s$ and $\ct = c$. $\Delta_{s,c}$ can therefore be expressed as the expectation over $y_0$ of $\Delta'_{y_0}$, i.e. 
    \begin{align*}
        \Delta_{s, c} &\geq \sum_{y_0} \Pr_{Y_0}[y_0] \Delta'_{y_0} \\
        &\geq \sum_{y_0} \Delta'_{y_0} \left[ \frac{|\alpha_{y_0}|^2}{2} \left( 1 - \frac{1}{|R|}\right) + \frac{1}{2|R|} \right] \\
        &\geq \sum_{y_0} \Delta'_{y_0} \left[ \frac{|\alpha_{y_0}|^2}{2} \left(1 - \frac{1}{|R|} \right) \right] \\
        &\geq \sum_{y_0} \Delta'_{y_0} \frac{|\alpha_{y_0}|^2}{3},
    \end{align*}
    where the second step uses Subclaim \ref{subclm:y0}. By the definition of $\bbS'$, $\Delta_{s, c} \leq \frac{1}{\sqrt{p(n)}}$, therefore
    \begin{align*}
        \frac{1}{\sqrt{p(n)}} &\geq \sum_{y_0} \frac{\Delta'_{y_0} |\alpha_{y_0}|^2}{3} \\
        &\geq \sum_{{y_0} \notin \bbZ} \frac{\Delta'_{y_0} |\alpha_{y_0}|^2}{3} \\
        &\geq \sum_{{y_0} \notin \bbZ} \frac{ |\alpha_{y_0}|^2}{3p(n)^\frac{1}{4}},
    \end{align*}
    where the last step follows from the definition of $\bbZ$. After rearranging and relabeling,
    \begin{align*}
        &\sum_{z \notin \bbZ} |\alpha_z|^2 \leq \frac{3}{p(n)^\frac{1}{4}}.
    \end{align*}

    \item By the definition of $\bbL$
    \begin{align*}
        \sum_{z \notin \bbL} |\alpha_z|^2 &\leq \sum_{z \notin \bbL} \frac{1}{n^{3k} 2^n} \\
        &\leq \sum_{z} \frac{1}{n^{3k} 2^n} \\
        &= \frac{1}{n^{3k}}.
    \end{align*}

    \item Recall that $\bbG_{s}= \left\{c \text{ s.t. } \sum_{z: |\langle{z}|\$_{s,c}\rangle|^2 \leq n^{3k}/2^n} |\langle{z}|\$_{s,c}\rangle|^2 \leq 1/n^k \right\} $. We fixed $s,c \in \bbS'$, which by definition implies $c\in \bbG_s$. Therefore
    \begin{align*}
        \frac{1}{n^{k}} &\geq \sum_{z: |\langle z|\$_{s,c} \rangle|^2 \leq \frac{n^{3k}}{2^n}} |\langle z|\$_{s,c}\rangle|^2 \\
        &= \sum_{z \notin \bbU} |\langle z | \$_{s,c} \rangle|^2 \\
        &= \sum_{z \notin \bbU} |\alpha_z|^2
    \end{align*}

    \item By the definition of $\Delta'_{y_0}$ and $\Delta'_{y_0, y_1}$, $\Delta'_{y_0}$ can be expressed as the expectation over $y_1$ of $\Delta'_{y_0, y_1}$. Therefore
    \begin{align*}
        \Delta'_{y_0} &= \sum_{y_1} \Pr_{Y_1^{y_0}}[y_1] \cdot \Delta'_{y_0,y_1} \\
        &\geq \sum_{y_1 \notin \mathbb{Z}'_{y_0}} \Pr_{Y_1^{y_0}}[y_1] \cdot \Delta'_{y_0,y_1} \\
        &\geq \sum_{y_1 \notin \mathbb{Z}'_{y_0}} \Pr_{Y_1^{y_0}}[y_1] \cdot \frac{1}{2p(n)^\frac{1}{8}},
    \end{align*}
    where the last step follows from the definition of $\bbZ'_{y_0}$. Therefore,
    \begin{align*}
        \sum_{y_1 \notin \mathbb{Z}'_{y_0}} \Pr_{Y_1^{y_0}}[y_1] &\leq \Delta'_{y_0} \cdot 2p(n)^\frac{1}{8} \\
        &\leq \frac{2}{p(n)^\frac{1}{8}},
    \end{align*}
    where the last step follows from $y_0 \in \bbZ$. By Subclaim \ref{subclm:y1},
    \begin{align*}
        \Pr_{Y_1^{y_0}}[y_0] &= \frac{|\alpha_{y_0}|^2 + |\alpha_{y_1}|^2}{1 + |\alpha_{y_0}|^2 (|R| - 1)} \\
        &\geq \frac{|\alpha_{y_1}|^2}{1 + n^{3k} (|R| - 1) 2^{-n}} \\
        &\geq \frac{|\alpha_{y_1}|^2}{2n^{3k}},
    \end{align*}
    where the second step uses $z \in \bbU$ and the last step holds for large enough $n$. Substituting in the previous equation, rearranging, and relabeling gives
    \begin{align*}
        & \sum_{z' \in \mathbb{Z}'_{z}} |\alpha_{z'}|^2 \geq \frac{4n^{3k}}{p(n)^\frac{1}{8}}.
    \end{align*}
\end{enumerate}
\end{proof}
Next we show that $\cU_{t'}(z')$ gives a good estimate for $\phi_{z't'}$ when $t' \in \bbZ \cap \bbL \cap \bbU$ and $z' \in \bbZ'_{t'} \cap \bbL \cap \bbU$. 
 % For the rest of the discussion, fix any $s,c \in \bbS'$. We will drop $s,c$ from the notation
 \begin{subclaim} \label{subclm:phase-estimator}For all $z \in \bbL \cap \bbU \cap \bbZ$ and for all $z'\in \bbL \cap \bbU \cap \bbZ'_{z}$\footnote{We show that the resulting complex phases are close instead of showing that the angles are close. This is because the estimate of the angle may have a $2\pi$ error, but this has no operational meaning when applying the phase.}
 \[
  \Prr\left[\left|e^{-i\cdot\cU_z(z')} - e^{-i\phi_{z'z}}\right| > \frac{8\sqrt{2}n^{6k}}{p^{1/8}} \right] \leq \negl(n)
 \]
 \end{subclaim}
\begin{proof}
  Fix any $z \in \bbL \cap \bbU \cap \bbZ$ and any $z'\in \bbL \cap \bbU \cap \bbZ'_{z}$. Recall that for all $x$, $a_x = |\alpha_x|$ and $e^{-i\phi_x} = \alpha_x/|\alpha_x|$. Also recall that $\phi_{z'z} = \phi_{z'} - \phi_z$. Consider the state $\ket{\psi_\text{post}}$ conditioned on obtaining $y_0 = z$ and $y_1 = z'$ during the execution of $\Gen$.
    \[
    \ket{\psi_\text{post}} = \frac{\alpha_z\ket{z} + \alpha_{z'}\ket{z'}}{\sqrt{|\alpha_{z}|^2 + |\alpha_{z'}|^2}} = e^{-i\phi_z}\cdot \frac{a_z\ket{z} + a_{z'}e^{-i\phi_{z'z}}\ket{z'}}{\sqrt{|\alpha_{z}|^2 + |\alpha_{z'}|^2}}
    \]
    Let $\theta$ be the unique angle in $[0,\pi]$ such that $\cos\frac{\theta}{2} = \frac{a_z}{\sqrt{a_z^2 + a_{z'}^2}}$ and $\sin\frac{\theta}{2} = \frac{a_{z'}}{\sqrt{a_z^2 + a_{z'}^2}}$. Let $\ket{\psi'} := e^{i\phi_z}\ket{\psi_\text{post}}$. Then 
    \[
    \ket{\psi'} = \cos\left(\theta/2\right)\ket{z} + e^{-i\phi_{z'z}} \sin\left(\theta/2\right)\ket{z'}
    \]
    Let $A'$ be the probability that applying $V_{z,z',0}$ to $\ket{\psi_\text{post}}$ and measuring results in output $z$. We first obtain an expression for $A'$ in terms of $\theta$ and $\phi_{z'z}$ by ignoring global phase and expanding
    \begin{align*}
        A' &= \left|\bra{z}V_{z,z',0}\ket{\psi_\text{post}}\right|^2\\
        &= \left|\bra{z}V_{z,z',0}\ket{\psi'}\right|^2\\
        &= \left|\bra{z}\left(\frac{\cos\left(\theta/2\right) + e^{-i\phi_{z'z}} \sin\left(\theta/2\right)}{\sqrt{2}} \ket{z} + \frac{\cos\left(\theta/2\right) - e^{-i\phi_{z'z}} \sin\left(\theta/2\right)}{\sqrt{2}} \ket{z'}\right)\right|^2\\
        &=\left|\frac{\cos\left(\theta/2\right) + e^{-i\phi_{z'z}} \sin\left(\theta/2\right)}{\sqrt{2}}\right|^2\\
        &=\frac{\cos\left(\theta/2\right) + e^{-i\phi_{z'z}} \sin\left(\theta/2\right)}{\sqrt{2}}\cdot \frac{\cos\left(\theta/2\right) + e^{i\phi_{z'z}} \sin\left(\theta/2\right)}{\sqrt{2}}\\
        &=\frac{\cos^2\left(\theta/2\right) + \sin^2\left(\theta/2\right) + 2\cos\left(\theta/2\right)\sin\left(\theta/2\right)\left(e^{-i\phi_{z'z}}+e^{i\phi_{z'z}}\right)}{2}\\
        % &=\frac{\cos^2\left(\theta/2\right) + \sin^2\left(\theta/2\right) + 2\cos\left(\theta/2\right)\sin\left(\theta/2\right)\left(e^{-i\phi_{z'z}}+e^{i\phi_{z'z}}\right)}{2}\\
        &=\frac{1 + \sin\theta\left(e^{-i\phi_{z'z}}+e^{i\phi_{z'z}}\right)}{2}\\
        &=\frac{1 + \sin\theta\cos\phi_{z'z}}{2}
    \end{align*}
    Similarly, let $B'$ be the probability that applying $V_{z,z',1}$ to $\ket{\psi_\text{post}}$ and measuring results in output $z$. Then
    \begin{align*}
        B' &= \left|\bra{z}V_{z,z',1}\ket{\psi_\text{post}}\right|^2\\
        &= \left|\bra{z}V_{z,z',1}\ket{\psi'}\right|^2\\
        &= \left|\bra{z}\left(\frac{\cos\left(\theta/2\right) + ie^{-i\phi_{z'z}} \sin\left(\theta/2\right)}{\sqrt{2}} \ket{z} + \frac{\cos\left(\theta/2\right) - ie^{-i\phi_{z'z}} \sin\left(\theta/2\right)}{\sqrt{2}} \ket{z'}\right)\right|^2\\
        &=\left|\frac{\cos\left(\theta/2\right) + ie^{-i\phi_{z'z}} \sin\left(\theta/2\right)}{\sqrt{2}}\right|^2\\
        &=\frac{\cos\left(\theta/2\right) + ie^{-i\phi_{z'z}} \sin\left(\theta/2\right)}{\sqrt{2}}\cdot \frac{\cos\left(\theta/2\right) -ie^{i\phi_{z'z}} \sin\left(\theta/2\right)}{\sqrt{2}}\\
        &=\frac{\cos^2\left(\theta/2\right) + \sin^2\left(\theta/2\right) + 2i\cos\left(\theta/2\right)\sin\left(\theta/2\right)\left(e^{-i\phi_{z'z}}-e^{i\phi_{z'z}}\right)}{2}\\
        % &=\frac{\cos^2\left(\theta/2\right) + \sin^2\left(\theta/2\right) + 2\cos\left(\theta/2\right)\sin\left(\theta/2\right)\left(e^{-i\phi_{z'z}}+e^{i\phi_{z'z}}\right)}{2}\\
        &=\frac{1 + \sin\theta\left(ie^{-i\phi_{z'z}}-ie^{i\phi_{z'z}}\right)}{2}\\
        &=\frac{1 + \sin\theta\sin\phi_{z'z}}{2}
    \end{align*}
    Let $A:=2A'-1 = \sin\theta\cos\phi_{z'z}$ and $B:=2B'-1 =\sin\theta\sin\phi_{z'z}$. We note that since $\theta \in [0,\pi)$, $\sin\theta \geq 0$ and therefore $\arctanb(B, A) = \phi_{z'z}$.

    Let $\tA' := \Pr[1 = \cA(s,c,1,z,z',0)]$ and let $\tB' := \Pr[1 = \cA(s,c,1,z,z',1)]$. Since $z' \in \bbZ'_{z}$
    \begin{align*}
        1/2p(n)^{1/8} &\geq \Delta'_{z,z'}\\
        &= \frac{1}{2} \left(\left|\tA' - A'\right| + \left|\tB' - B'\right| \right)
    \end{align*} 
    where the second step follows directly from the definition of $\Delta'_{z,z'}$. As a result
    \begin{gather*}
        \left|\tA' - A'\right| \leq 1/p(n)^{1/8}\\
        \left|\tB' - B'\right| \leq 1/p(n)^{1/8}
    \end{gather*}
    Let $\tA := 2\tA' - 1$ and $\tB:= 2\tB' - 1$. Therefore
    \begin{gather*}
        \left|\tA - A\right| \leq 2/p(n)^{1/8}\\
        \left|\tB - B\right| \leq 2/p(n)^{1/8}
    \end{gather*}
    i.e. $\tA$ is close to $A$ and $\tB$ is close to $B$.Consider the case when $\cU_z$ is run on $z'$ and internally samples $u$ and $v$. By setting $\delta = \sqrt{n}$ in the additive Chernoff bound (Theorem $\ref{thm:chernoff-additive}$), we see that $u$ and $v$ that are computed by $\cU_z$ are good approximations of $\tA$ and $\tB$, and thus of $A$ and $B$. Formally, 
    \begin{gather*}
    \Pr\left[\left|u - \tA\right| \geq \frac{2}{p(n)^{1/8}}\right]\leq 2e^{-2n}\\
        \Pr\left[\left|v - \tB\right| \geq \frac{2}{p(n)^{1/8}}\right]\leq 2e^{-2n}
         \end{gather*}
         Using the fact that $\tA$ and $\tB$ are close to $A$ and $B$ as shown above, we can bound the Euclidean distance between $(u,v)$ and $(A,B)$
        \begin{align}
        \label{eq:within-circle-whp}
   1 - 4e^{-2n} &\leq \Prr\left[\left|v - \tB\right| \leq \frac{2}{p(n)^{1/8}} \text{ and } \left|u - \tA\right| \leq \frac{2}{p(n)^{1/8}}\right]\nonumber
       \\
       &\leq \Prr\left[\left|v - B\right| \leq \frac{4}{p(n)^{1/8}} \text{ and } \left|u - A\right| \leq \frac{4}{p(n)^{1/8}}\right]\nonumber\\
       &\leq \Prr\left[\left(v - B\right)^2 + \left(u - A\right)^2 \leq \left(\frac{4\sqrt{2}}{p(n)^{1/8}}\right)^2\right]
    \end{align}
We will use the following theorem which we prove in Appendix \ref{appendix:geometric-arg}
\begin{theorem}    \label{thm:geometric-arg}
        Let $(x,y), (x^*, y^*) \in \bbR^2$ such that $\exists \gamma >0, \gamma'>0$ s.t.
        \begin{itemize}
            \item $x^2 + y^2 \geq \gamma^2$
            \item $(x-x^*)^2 + (y-y^*)^2 \leq (\gamma')^2$
            \item $\gamma' < \gamma$
        \end{itemize}
        Then $|e^{-i\cdot\arctanb(y,x)} - e^{-i\cdot\arctanb(y^*, x^*)}| \leq 2\gamma'/\gamma$
    \end{theorem}
    Using the fact that both $z$ and $z'$ belong to $\bbL \cap \bbU$, we can show that $(A,B)$ is atleast $1/n^{6k}$ far from the origin.
    \begin{align*}
        A^2 + B^2 &= \sin^2\theta \left(\cos^2\phi_{z'z} + \sin^2\phi_{z'z}\right)\\
        &= \sin^2 \theta \\
        &= \left(2 \sin(\theta/2)\cos(\theta/2)\right)^2\\
        &= \left(\frac{2 a_z a_{z'}}{a^2_z + a^2_{z'}}\right)^2\\
        &= \left(\frac{2 }{a_z/a_{z'} + a_{z'}/a_z}\right)^2\\
        &\geq \left(\frac{2 \cdot \frac{1}{n^{3k} 2^n}}{2n^{3k}/2^n}\right)^2\\
        &\geq \left(1/n^{6k}\right)^2
    \end{align*}
    % \begin{itemize}
    %     \item 
    %     \item $\sin\frac{\theta}{2} = \frac{a_{z'}}{\sqrt{a_z^2 + a_{z'}^2}}$
    % \end{itemize}
    where the fourth step follows from the definition of $\theta$ and the fifth step follows from bounds on $a_z$ and $a_{z'}$ implied by $z,z' \in\bbL\cap\bbU$.
Recall that $p(n) >n^{64k}$ and $k > 6$. Therefore, for large enough $n$, $p(n) > 4\sqrt{2/n^{6k}}$. Equation \eqref{eq:within-circle-whp} thus implies that we can apply Theorem \ref{thm:geometric-arg} with probability atleast $1-4e^{-2n}$ when we set $(x,y) = (A,B), (x^*,y^*) = (u,v), \gamma = 1/n^{6k}$ and $\gamma' = 4\sqrt{2}/p(n)^{1/8}$ . Formally,
\begin{gather*} 
    \Pr\left[\left|e^{-i\cdot\arctanb(v,u)}- e^{-i\cdot\arctanb(B,A)}\right| \leq \left(\frac{8\sqrt{2 }n^{6k}}{p(n)^{1/8}}\right)\right]\geq 1 - 4e^{-2n}\\
    \implies \Prr\left[\left|e^{-i\cdot\cU_z(z')} - e^{-i\cdot\phi_{z'z}}\right| >\frac{8\sqrt{2}n^{6k}}{p(n)^{1/8}} \right] \leq \negl(n)
\end{gather*}
which concludes the proof of the subclaim.
\end{proof}
Next we show that if $z\in \bbL\cap \bbU \cap \bbZ$, then the oracle responses in the phase step of Aaronson's synthesis can be answered by the estimator's outputs.
\begin{subclaim}
\label{subclm:phase-synth-for-good-sc}
    For all $z \in \bbL\cap \bbU \cap \bbZ$, for sufficiently large $n$
    \[
    \left|\ket{\$_{s,c}}_{\sZ}\ket{0}_{\sV}\adv{1}_{\sV'}-e^{-i\phi^{z,c}_z}(U_{z})^\dagger P'U_{z}\ket{\$^*_{s,c}}_{\sZ}\ket{0}_{\sV}\adv{1}_{\sV'}\right| \leq 
    \frac{8\sqrt{2}n^{6k}}{p(n)^{1/8}} + \frac{2}{\sqrt{n^k}} + \frac{2}{\sqrt{n^{3k}}} +\frac{4\sqrt{n^{3k}}}{p(n)^{1/16}}
    \]
    % \frac{128n^{12k}}{p(n)^{1/4}} + 4/n^k + 4/n^{3k} +\frac{16n^{3k}}{p(n)^{1/8}}
\end{subclaim}
\begin{proof}
Applying a global phase to the difference does not alter the magnitude, so we multiply by $e^{-i\phi_z}$
\begin{align}
    \label{eq:phase-1}
        \zeta :=& \left| \ket{\$}_{\sZ}\ket{0}_{\sV}\adv{1}_{\sV'}-e^{-i\phi_z}U_{z}^\dagger P'U_{z}\ket{\$^*}_{\sZ}\ket{0}_{\sV}\adv{1}_{\sV'}\right| \nonumber\\
    =& \left| e^{i\phi_z}U_{z}\ket{\$}_{\sZ}\ket{0}_{\sV}\adv{1}_{\sV'}- P'U_{z}\ket{\$^*}_{\sZ}\ket{0}_{\sV}\adv{1}_{\sV'}\right|
\end{align}
    Next, we use the definition of $U_z$ and $P'$ to expand each term. Expanding $e^{i\phi_z}U_{z}\ket{\$}_{\sZ}\ket{0}_{\sV}\adv{1}_{\sV'}$
\begin{align}
\label{eq:phase-error-1}
    e^{i\phi_z}U_{z}\ket{\$}_{\sZ}\ket{0}_{\sV}\adv{1}_{\sV'} &= \sum_{z'}\alpha_{z'}e^{i\phi_z}\ket{z'}_\sZ\sum_v\sqrt{\Prr[v = \cU_z(z')]}\ket{v}_\sV\junk{v}_{\sV'}\nonumber\\
    &= \sum_{z'}a_{z'}e^{-i\phi_{z'z}}\ket{z'}_\sZ\sum_v\sqrt{\Prr[v = \cU_z(z')]}\ket{v}_\sV\junk{v}_{\sV'}
\end{align}
    Expanding $P'U_{z}\ket{\$^*}_{\sZ}\ket{0}_{\sV}\adv{1}_{\sV'}$
\begin{align}
\label{eq:phase-error-2}
    P'U_{z}\ket{\$^*}_{\sZ}\ket{0}_{\sV}\adv{1}_{\sV'} &= \sum_{z'}a_{z'}\ket{z'}_\sZ\sum_v e^{-iv}\sqrt{\Prr[v = \cU_z(z')]}\ket{v}_\sV\junk{v}_{\sV'}
\end{align}
Define $\bbA := \bbL\cap\bbU\cap\bbZ$ and $\bbB:= \bbL\cap\bbU\cap\bbZ'_z$. Plugging \eqref{eq:phase-error-1} and \eqref{eq:phase-error-2} into \eqref{eq:phase-1} and squaring gives
\begin{align}
\label{eq:phase-step-error-expand}
  \zeta^2 &= \left| \sum_{z'}a_{z'}\ket{z'}_\sZ\sum_v \left(e^{-i\phi_{z'z}}-e^{-iv}\right)\sqrt{\Prr[v = \cU_z(z')]}\ket{v}_\sV\junk{v}_{\sV'}\right|^2\nonumber\\
  & =\sum_{z'}a^2_{z'}\sum_v \Prr[v = \cU_z(z')]\left|e^{-i\phi_{z'z}}-e^{-iv}\right|^2\nonumber\\
  &\leq\sum_{z' \in \bbB}a^2_{z'}\sum_v \Prr[v = \cU_z(z')]\left|e^{-i\phi_{z'z}}-e^{-iv}\right|^2 + \sum_{z' \notin \bbB}4a^2_{z'}
\end{align}    
Let $\delta:= \frac{8\sqrt{2}n^{6k}}{p(n)^{1/8}}$. For all $z'\in \bbB$, let $\bbV_{z'} := \left\{v\text{ s.t. } \left|e^{-iv} - e^{-i\phi_{z'z}}\right| \leq \delta\right\}$. Then SubClaim \ref{subclm:phase-estimator} shows that the probability that when $z'\in\bbB$, the probability that $\cU_z(z')$ outputs $v\notin\bbV_{z'}$ is negligible. Therefore 
\begin{align*}
    &\sum_v\Prr[v = \cU_z(z')]\left|e^{-i\phi_{z'z}}-e^{-iv}\right|^2 \\
    &\leq \sum_{v\in\bbV_{z'}} \Prr[v = \cU_z(z')]\left|e^{-i\phi_{z'z}}-e^{-iv}\right|^2 + \sum_{v\notin\bbV_{z'}} 4\Prr[v = \cU_z(z')]\\
    &\leq \sum_{v\in\bbV_{z'}} \Prr[v = \cU_z(z')]\left|e^{-i\phi_{z'z}}-e^{-iv}\right|^2 + \negl(n)\\
    &\leq \sum_{v\in\bbV_{z'}} \Prr[v = \cU_z(z')]\delta^2 + \negl(n)\\
    &\leq \delta^2 + \negl(n)
\end{align*}
Plugging back in \eqref{eq:phase-step-error-expand} gives
\begin{align}
    \label{eq:expression-to-simplify}
    \zeta^2 &\leq \sum_{z' \in \bbB}a^2_{z'}\delta^2 + \sum_{z' \notin \bbB}4a^2_{z'}\nonumber\\
    &\leq \delta^2 + \sum_{z' \notin \bbB}4a^2_{z'}\nonumber\\
    &\leq \delta^2 + 4/n^k + 4/n^{3k} +16n^{3k}/p^{1/8} \nonumber\\
    &\leq \frac{128n^{12k}}{p(n)^{1/4}} + 4/n^k + 4/n^{3k} +\frac{16n^{3k}}{p(n)^{1/8}}
\end{align}
where the third step follows from parts 2, 3, and 4 of SubClaim \ref{subclm:markov-sets} and the last step substitutes the value of $\delta$. The final expression follows from taking the square root of both sides and noting that the square root function is subadditive.
\end{proof}

We can now begin analyzing the algorithm $\cB'(s,c)$, dropping the advice state from the input for notational convenience. For all $t$, define $\ket{\sigma_{t}}$ as follows:
\[
    \ket{\sigma_{t}} := (U_{t})^\dagger P'U_{t}\left(\tM_\text{amp} \ket{s,c}_{\sR}\ket{0^n}_{\sZ}\adv{amp}_\sA\right)\ket{0}_V\adv{1}_{V'}
\]
and define $\rho_{t}$ as the state on the $\sZ$ register of $\ketbra{\sigma_{t}}$ after tracing out the remaining registers, i.e.
\[
    \rho_{t} = \mathsf{Tr}_{\sR\sA\sV\sV'}\left(\ketbra{\sigma_{t}}\right)
\]
Note that the output of $\cB'(s,c)$ conditioned sampling $t$ is $c^\dag \rho_t c$. Next we show that $c^\dag \rho_t c$ is close to $\ketbra{\$_{s}}$ when $t \in \bbL\cap\bbU\cap\bbZ$. By $(s,c) \in \bbS'$ and the definition of $\bbS$, $\Delta_{s,c} \leq 1/\sqrt{p(n)}$. Let $\delta_\text{amp} := (\sqrt{3n^3}+2\sqrt{n})/p(n)^{1/4}$ (i.e. the error term in Claim \ref{clm:amp-synth} after substituting $\Delta_{s,c} \leq  1/\sqrt{p(n)}$) and let $\delta_\text{phase} := \frac{8\sqrt{2}n^{6k}}{p(n)^{1/8}} + \frac{2}{\sqrt{n^k}} + \frac{2}{\sqrt{n^{3k}}} +\frac{4\sqrt{n^{3k}}}{p(n)^{1/16}}$ (i.e. the error term in SubClaim \ref{subclm:phase-synth-for-good-sc}).

\begin{subclaim}
\label{subclm:cB'-for-good-t}
    If $t \in \bbL\cap\bbU\cap\bbZ$ then
\[
\bra{\$_{s}}c^\dagger\cdot \rho_{t}\cdot c\ket{\$_{s}} \geq 1-\delta_\text{amp} - \delta_\text{phase}
\]
\end{subclaim}
\begin{proof}
    Let
    \begin{align*}
        &\ket{\psi_1} := \ket{s, c} \ket{0^n} \adv{amp} \ket{0} \adv{1} \\
        &\ket{\psi_2} := \ket{s, c} \ket{\$^*_{s,c}} \adv{amp} \ket{0} \adv{1} \\
        &\ket{\psi_3} := e^{i\phi_t}\ket{s, c} \ket{\$_{s,c}}_\sZ \adv{amp} \ket{0} \adv{1} \\
        &M_1 := \tM_\text{amp} \\
        &M_2 := (U^{s,c}_t)^\dag P' U^{s,c}_t.
    \end{align*}
    By $(s,c) \in \bbS'$ and the definition of $\bbS$, $\Delta_{s,c} \leq 1/\sqrt{p(n)}$. Therefore, Claim \ref{clm:amp-synth} can be restated as
    \begin{align*}
        |M_1 \ket{\psi_1} - \ket{\psi_2}| \leq \delta_\text{amp}.
    \end{align*}
    Similarly, SubClaim \ref{subclm:phase-synth-for-good-sc} can be restated as
    \begin{align*}
        &|M_2 \ket{\psi_2} - \ket{\psi_3} | \leq \delta_\text{phase}.
    \end{align*}
    By the triangle inequality and subsituting the last two equations in the last step,
    \begin{align*}
        |M_2 M_1 \ket{\psi_1} - \ket{\psi_3}| &\leq |M_2 M_1 \ket{\psi_1} - M_2 \ket{\psi_2} | + |M_2 \ket{\psi_2} - \ket{\psi_3} | \\
        &= |M_1 \ket{\psi_1} - \ket{\psi_2} | + |M_2 \ket{\psi_2} - \ket{\psi_3}| \\
        &\leq \delta_\text{amp} + \delta_\text{phase}.
    \end{align*}
    By Theorem \ref{thm:trace-dist-and-euclidean-dist},
    \begin{align*}
        &\TD(M_2 M_1 \ketbra{\psi_1} M_1^\dag M_2^\dag, \ketbra{\psi_3}) \leq \delta_\text{amp} + \delta_\text{phase}.
    \end{align*}
    Noting that $\ket{\sigma_t}$ is defined as $M_2 M_1 \ket{\psi_1}$
    \begin{align*}
        &\TD(\ketbra{\sigma_{t}}, \ketbra{\psi_3}) \leq \delta_\text{amp} + \delta_\text{phase},
    \end{align*}
    We can trace out all but the $\sZ$ register from both states without increasing the trace distance. Therefore,
    \begin{align*}
        &\TD(\rho_{t}, \ket{\$_{s,c}} \bra{\$_{s,c}}) \leq \delta_\text{amp} + \delta_\text{phase}.
    \end{align*}
    Consider the projector $\ketbra{\$_{s,c}}$. The projection succeeds on the state $\ketbra{\$_{s,c}}$ with probability 1, so it must succeed on $\rho_{t}$ with probability atleast $1 - \TD(\rho_{t}, \ket{\$_{s,c}} \bra{\$_{s,c}}) \geq 1-\delta_\text{amp} - \delta_\text{phase}$. Therefore,
    \begin{align*}
        \bra{\$_s} c^\dag \rho_{t} c \ket{\$_s}= \bra{\$_{s,c}}\rho_{t} \ket{\$_{s,c}} \geq 1 - \delta_\text{amp} - \delta_\text{phase}.
    \end{align*}
\end{proof}
Next we show that with high probability,  $t \in \bbL\cap \bbU \cap \bbZ$. Let $\delta_\text{samp}:=3/p(n)^{1/4} - 1/n^k - 1/n^{3k}$ (i.e. the sum of error terms from  parts 1, 2, and 3 of SubClaim \ref{subclm:markov-sets}).
\begin{subclaim}
\label{subclm:sampling-t}
    Let $t$ be the outcome when measuring the $\sZ$ register of  $\tM_\text{amp} \ket{s,c}_{\sR}\ket{0^n}_{\sZ}\adv{amp}_\sA$ in the computational basis. Then 
    \[
    \Pr[t \in \bbL\cap \bbU\cap \bbZ] \geq 1- \delta_\text{samp} -\delta_\text{amp}
    \]
\end{subclaim}
\begin{proof}
   Consider the probability of obtaining a string $t'$ upon measuring the second register of $\ket{s,c}\ket{\$^*_{s,c}}\adv{amp}$.
    \[
        \Pr[t'] = \left|\bra{t'}\sum_z a_z \ket{z}\right|^2 = \left(a_z \right)^2
    \]
    Therefore,
    \begin{align*}
        \Pr[t' \in \bbL\cap \bbU\cap \bbZ] =& \sum_{z \in \bbL\cap \bbU\cap \bbZ}\left(a_z \right)^2\\
        \geq& 1 - \sum_{z \notin \bbL}\left(a_z \right)^2 - \sum_{z \notin \bbU}\left(a_z \right)^2 - \sum_{z \notin \bbZ}\left(a_z \right)^2\\
        \geq& 1- 3/p(n)^{1/4} - 1/n^k - 1/n^{3k} = 1- \delta_\text{samp}
    \end{align*}
    where the last step follows from parts 1, 2, and 3 of SubClaim \ref{subclm:markov-sets}, noting that $a_z = |\alpha_z|$.  By Claim \ref{clm:amp-synth} and Theorem \ref{thm:trace-dist-and-euclidean-dist} we know that $\tM_\text{amp} \ket{s,c}\ket{0}\adv{amp}$ and  $\ket{s,c}\ket{\$^*_{s,c}}\adv{amp}$ are atmost $\delta_\text{amp}$ apart in trace distance. Therefore, the output distributions upon measuring each state can be atmost $\delta_\text{amp}$ far. Therefore,
    \[
    \Pr[t \in \bbL\cap \bbU\cap \bbZ] \geq 1- \delta_\text{samp} -\delta_\text{amp}
    \]
\end{proof}
Putting together SubClaim \ref{subclm:sampling-t}
and SubClaim \ref{subclm:cB'-for-good-t} shows that the expected overlap of $\ketbra{\$_s}$ and $\cB'(s,c)$ is high.
\begin{subclaim}
\label{subclm:expected-overlap-cB'-for-good-sc}
    \[
    \underset{\cB'}{\bbE}\left[\bra{\$_s}\cB'(s,c)\ket{\$_s}\right] \geq 1 - \delta_\text{samp} - 2\delta_\text{amp} - \delta_\text{phase}
    \]
\end{subclaim}
\begin{proof}
    SubClaim \ref{subclm:cB'-for-good-t} shows that if $t$ sampled by $\cB'$ is in $\bbL \cap \bbU \cap \bbZ$ then the overlap is atleast $1-\delta_\text{amp} - \delta_\text{phase}$, while SubClaim \ref{subclm:sampling-t} shows that $t \in \bbL \cap \bbU \cap \bbZ$ occurs with probability atleast $1 - \delta_\text{samp} - \delta_\text{amp}$. Putting these together
    \begin{align*}
        \underset{\cB'}{\bbE}\left[\bra{\$_s}\cB'(s,c)\ket{\$_s}\right] &\geq \Pr[t \in\bbL\cap \bbU\cap \bbZ] \cdot (1-\delta_\text{amp} - \delta_\text{phase})\\
        &\geq (1 - \delta_\text{samp} - \delta_\text{amp})\cdot (1-\delta_\text{amp} - \delta_\text{phase})\\
        &\geq 1 - \delta_\text{samp} - 2\delta_\text{amp} - \delta_\text{phase}
    \end{align*}
    where the first step follows from SubClaim \ref{subclm:cB'-for-good-t} and the second step follows from SubClaim \ref{subclm:sampling-t}.
\end{proof}
Finally, we show that $\cB$ achieves the claimed bound. We note that SubClaim \ref{subclm:expected-overlap-cB'-for-good-sc} applies for arbitrary fixed $(s,c) \in\bbS'$ and SubClaim \ref{subclm:bbS'} shows that for $s$ sampled by $\cG$ and $c \leftarrow \cC_n$ the probability that $(s,c) \in \bbS$ is atleast $1- 1/p(n)^{1/2} - 1/n^k$. Putting them together
\begin{align*}
    \underset{s, \ket{\$_s} \leftarrow \cG(1^n)}{\bbE}\left[\bra{\$_s}\cB(s, \adv{synth})\ket{\$_s}\right]&\geq \Pr[(s,c) \in\bbS'] \cdot (1-\delta_\text{samp}-2\delta_\text{amp} - \delta_\text{phase})\\
        &\geq (1 - 1/p(n)^{1/2} - 1/n^k)\cdot (1-\delta_\text{samp}-2\delta_\text{amp} - \delta_\text{phase})\\
        &\geq 1 - 1/p(n)^{1/2} - 1/n^k -\delta_\text{samp}-2\delta_\text{amp} - \delta_\text{phase}
    \end{align*}
All that remains it to simplify the expression. Substituting the definitions of $\delta_\text{samp}$, $\delta_\text{amp}$, and $\delta_\text{phase}$, noting that $k > 6$ and $p(n)>n^{64k}$ and simplifying for large enough $n$
\begin{align*}
    1 - 1/p(n)^{1/2} - 1/n^k -\delta_\text{samp}-2\delta_\text{amp} - \delta_\text{phase} \geq 1- 3/\sqrt{n^k} 
\end{align*}
Finally, since $n^k \geq q(n)^3$
\begin{align*}
    \underset{s, \ket{\$_s} \leftarrow \cG(1^n)}{\bbE}\left[\bra{\$_s}\cB(s,\adv{synth})\ket{\$_s}\right]&\geq 1 - 3/q(n)^{3/2} \geq 1 - 1/q(n)
    \end{align*}

which concludes the proof of the Claim.
\end{proof}
Claim \ref{clm:state-synth} shows the existence of an algorithm that contradicts the security of $\cG$ which concludes the proof of the theorem.\end{proof}

Next, we prove that the existence of distributional one-way puzzles implies the existance of (standard) state puzzles.
\begin{theorem}
\label{thm:owp-imply-state-puzzles}
    If one-way puzzles (Definition \ref{def:owp}) exist then state puzzles (Definition \ref{def:state-puzzle}) exist.
\end{theorem}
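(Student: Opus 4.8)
The plan is to obtain a state puzzle by \emph{purifying} the key distribution of the one-way puzzle, which is possible because the sampler can be run coherently. Let $(\Gen,\mathsf{Ver})$ be a one-way puzzle. Since $\Gen(1^n)$ is a QPT algorithm, I would write it as an efficiently implementable unitary $U_{\Gen}$ on a puzzle register $\sS$, a key register $\sK$, and a workspace register $\sW$, with $U_{\Gen}\ket{0}=\sum_{s,k,w}\beta_{s,k,w}\ket{s}_{\sS}\ket{k}_{\sK}\ket{w}_{\sW}$, such that measuring $\sS\sK$ in the computational basis reproduces the output of $\Gen(1^n)$. Define $\cG(1^n)$ to apply $U_{\Gen}$ to $\ket{0}$, measure $\sS$ to obtain $s$, and output $(s,\ket{\psi_s})$, where $\ket{\psi_s}$ is the (pure) residual state on $\sK\sW$. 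Then measuring the $\sK$ register of $\ket{\psi_s}$ yields a key $k$ distributed exactly as the conditional distribution of keys of $\Gen$ given the puzzle $s$. Writing $\delta_s$ for the probability that a key drawn from this conditional distribution fails verification, correctness of $(\Gen,\mathsf{Ver})$ gives $\bbE_s[\delta_s]=\Prr_{(s,k)\leftarrow\Gen(1^n)}[\mathsf{Ver}(s,k)=\bot]=\negl(n)$, hence also $\bbE_s[\sqrt{\delta_s}]=\negl(n)$ by concavity of $\sqrt{\cdot}$.

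For security I would argue by contraposition. Suppose $\cG$ is not a state puzzle, so there is a QPT $\cA$ with advice $\ket{\tau}$ and a polynomial $p$ with $\epsilon_n:=\bbE\!\left[\tr(\ketbra{\psi_s}\rho)\right]\ge 1/p(n)$ for infinitely many $n$, where $(s,\ket{\psi_s})\leftarrow\cG(1^n)$ and $\rho\leftarrow\cA(\ket{\tau},s)$. Define the one-way puzzle adversary $\cA'(\ket{\tau},s)$: run $\rho\leftarrow\cA(\ket{\tau},s)$, regard its output register as $\sK\sW$, measure $\sK$ in the computational basis, and output the outcome. Letting $M_s:=\sum_{k:\mathsf{Ver}(s,k)=\top}\ketbra{k}_{\sK}\otimes\bbI_{\sW}$, which is a projector, the success probability of $\cA'$ on puzzle $s$ is exactly $\tr(M_s\rho)$.

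The key estimate to establish is that $\tr(M_s\rho)\ge\epsilon_s-2\sqrt{\delta_s}$, where $\epsilon_s:=\tr(\ketbra{\psi_s}\rho)$ — i.e. measuring just the key register already extracts a valid key with probability close to the overlap, even though $\cA'$ cannot project onto the (computationally inaccessible) state $\ket{\psi_s}$. The plan is: write $\ket{\psi_s}=\ket{u}+\ket{v}$ with $\ket{u}=M_s\ket{\psi_s}$ and $\ket{v}=(\bbI-M_s)\ket{\psi_s}$, so $\|\ket{v}\|^2=\delta_s$; apply Cauchy--Schwarz to the positive semidefinite form $(\ket{x},\ket{y})\mapsto\bra{x}\rho\ket{y}$ to get $\sqrt{\bra{u}\rho\ket{u}}+\sqrt{\bra{v}\rho\ket{v}}\ge\sqrt{\epsilon_s}$; then bound $\bra{v}\rho\ket{v}\le\langle v\mid v\rangle=\delta_s$ using $\rho\preceq\bbI$, and $\bra{u}\rho\ket{u}\le\tr(M_s\rho M_s)=\tr(M_s\rho)$ using that the normalized $M_s\ket{\psi_s}$ extends to an orthonormal basis of the range of $M_s$. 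Rearranging gives the estimate (it is trivial when the right-hand side is negative). Averaging over $s$,
\[
\Prr[\mathsf{Ver}(s,\cA'(\ket{\tau},s))=\top]=\bbE_s[\tr(M_s\rho_s)]\ \ge\ \epsilon_n-2\,\bbE_s[\sqrt{\delta_s}]\ \ge\ \epsilon_n-\negl(n)\ \ge\ \tfrac{1}{2p(n)}
\]
for infinitely many $n$, contradicting one-wayness of $(\Gen,\mathsf{Ver})$ and completing the argument.

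There is no serious conceptual obstacle; the reduction is direct. The two points requiring care are (i) that the one-way puzzle's correctness holds only on average over $s$, so the analysis must carry the $\sqrt{\delta_s}$ term and show it is negligible after averaging (via Jensen), and (ii) the operator-inequality step showing that a measurement of only the $\sK$ register of $\cA$'s output suffices to recover a valid key with probability at least roughly the overlap $\epsilon_s$, rather than needing the infeasible projection onto $\ket{\psi_s}$.
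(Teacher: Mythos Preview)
Your construction and reduction are identical to the paper's: purify $\Gen$, measure only the puzzle register, and have the one-way puzzle adversary measure the $\sK$ register of the state-puzzle adversary's output. The only difference is in the quantitative step linking overlap to verification probability. The paper goes through the Fuchs--van~de~Graaf inequality, bounding $\TD(\ketbra{\psi_s},\rho)\le\sqrt{1-\bra{\psi_s}\rho\ket{\psi_s}}$, then applies Jensen over $s$ and uses the trace distance to transfer the (average) success of measuring $\sK$ on $\ket{\psi_s}$ to measuring $\sK$ on $\rho$; this yields $\Pr[\mathsf{Ver}=\top]\ge 1-\negl(n)-\sqrt{1-1/p(n)}\approx 1/(2p(n))$. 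You instead decompose $\ket{\psi_s}$ along $M_s$ and use Cauchy--Schwarz for the semidefinite form $\bra{\cdot}\rho\ket{\cdot}$ to get the per-$s$ bound $\tr(M_s\rho)\ge\epsilon_s-2\sqrt{\delta_s}$, which after averaging gives $\Pr[\mathsf{Ver}=\top]\ge 1/p(n)-\negl(n)$. Your argument is slightly more elementary (no fidelity/trace-distance machinery) and gives a marginally tighter constant, but both analyses are correct and interchangeable here.
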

Since Theorem \ref{thm:owp-amplification} shows that distributional one-way puzzles can be amplified to obtain (strong) one-way puzzles, the following is a corollary of Theorem \ref{thm:owp-imply-state-puzzles}.
\begin{corollary}
     If $1/q(n)$-distributional one-way puzzles (Definition \ref{def:dist-owp}) exist for some non-zero polynomial $q(\cdot)$ then state puzzles (Definition \ref{def:state-puzzle}) exist.
\end{corollary}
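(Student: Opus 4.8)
The plan is to obtain the state puzzle generator $\cG$ by running the one-way puzzle generator $\Gen$ \emph{coherently}. Let $U$ be a unitary dilation of $\Gen(1^n)$ acting on registers $(\sS,\sK,\sA)$ (puzzle, key, and internal workspace), so that measuring registers $\sS$ and $\sK$ of $U\ket{0}$ in the computational basis reproduces the output distribution of $\Gen(1^n)$. Define $\cG(1^n)$ to prepare $U\ket{0}$, measure register $\sS$ to obtain a string $s$, and output $(s,\ket{\psi_s})$, where $\ket{\psi_s}$ is the (pure) normalized residual state on $(\sK,\sA)$; informally, $\ket{\psi_s}$ is a coherent version of the key conditioned on the puzzle $s$. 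Note that $\ket{\psi_s}$ is a fixed pure state determined by $s$ and $U$, that $\cG$ is QPT, and that measuring the $\sK$ register of $\ket{\psi_s}$ yields a key distributed exactly as the conditional key distribution given $s$. Hence, letting $\Pi_s := \bigl(\sum_{k:\,\Ver(s,k)=\top}\ketbra{k}\bigr)_{\sK}\otimes\bbI_{\sA}$ denote the projector onto the valid-key subspace, correctness of the one-way puzzle gives $\E\bigl[\bra{\psi_s}\Pi_s\ket{\psi_s}\bigr] = \Prr_{(s,k)\leftarrow\Gen(1^n)}[\Ver(s,k)=\top] = 1-\negl(n)$, where the expectation is over $(s,\ket{\psi_s})\leftarrow\cG(1^n)$.

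For security, suppose towards a contradiction that there is a QPT $\cA$, an advice ensemble $\ket{\tau}$, and a polynomial $p$ such that for infinitely many $n$, $\E[\tr(\ketbra{\psi_s}\rho)]\geq 1/p(n)$, where $(s,\ket{\psi_s})\leftarrow\cG(1^n)$ and $\rho\leftarrow\cA(\ket{\tau},s)$. I would define an inverter $\cA'$ for $(\Gen,\Ver)$ as follows: on input $s$, run $\cA(\ket{\tau},s)$ to obtain a state $\rho_s$ on $(\sK,\sA)$, measure the $\sK$ register in the computational basis, and output the outcome. Its success probability is exactly $\E_{s}[\tr(\Pi_s\rho_s)]$. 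The crucial step is to show that a merely \emph{noticeable} overlap with $\ket{\psi_s}$ already forces this measurement to land in the valid-key subspace with comparable probability: writing $\delta_s := 1-\bra{\psi_s}\Pi_s\ket{\psi_s}$ and letting $\ket{\psi_s'} := \Pi_s\ket{\psi_s}/|\Pi_s\ket{\psi_s}|$ be the normalized projection (which lies in the range of $\Pi_s$ and satisfies $|\ket{\psi_s}-\ket{\psi_s'}|\leq 2\sqrt{\delta_s}$), one gets $\tr(\Pi_s\rho_s)\geq \bra{\psi_s'}\rho_s\ket{\psi_s'}\geq \tr(\ketbra{\psi_s}\rho_s)-4\sqrt{\delta_s}$.

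Combining this pointwise inequality with the correctness bound $\E_s[\delta_s]=\negl(n)$ via a Markov argument (the total probability mass on strings $s$ with $\delta_s>\negl(n)^{1/2}$ is at most $\negl(n)^{1/2}$) shows that $\cA'$ inverts the puzzle with probability at least $1/p(n)-\negl(n)$ for infinitely many $n$, contradicting the security of $(\Gen,\Ver)$ and proving Theorem \ref{thm:owp-imply-state-puzzles}. The corollary for $1/q(n)$-distributional one-way puzzles then follows immediately by first invoking Theorem \ref{thm:owp-amplification} to upgrade them to (strong) one-way puzzles. I expect the only step requiring genuine care to be the quantitative claim that an adversarial state with only inverse-polynomial (rather than near-unit) overlap with $\ket{\psi_s}$ still yields a valid key upon measuring $\sK$ — this is precisely where the projection inequality together with the near-perfect correctness $\delta_s=\negl(n)$ does the work; every other step is routine.
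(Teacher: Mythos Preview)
Your proposal is correct and follows essentially the same approach as the paper: both construct $\cG$ by running $\Gen$ coherently and measuring only the puzzle register, and both reduce a state-puzzle adversary to a one-way-puzzle inverter by measuring the $\sK$ register of the adversary's output. The corollary is then obtained in both cases by first applying Theorem~\ref{thm:owp-amplification}.

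The one place your analysis diverges from the paper's is the quantitative step. The paper goes through trace distance: from $\E_s[\bra{\psi_s}\rho_s\ket{\psi_s}]\ge 1/p(n)$ it uses Fuchs--van de Graaf and Jensen to get $\E_s[\TD(\ketbra{\psi_s},\rho_s)]\le\sqrt{1-1/p(n)}$, and then replaces $\ketbra{\psi_s}$ by $\rho_s$ in the correctness bound to obtain $\Pr[\Ver(s,\cA'(s))=\top]\ge 1-\negl(n)-\sqrt{1-1/p(n)}\approx 1/(2p(n))$. You instead project $\ket{\psi_s}$ onto the valid-key subspace $\Pi_s$ and work pointwise, obtaining the slightly sharper $\Pr[\Ver(s,\cA'(s))=\top]\ge 1/p(n)-\negl(n)$. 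Both routes are sound; as a minor simplification, your Markov step can be replaced by a direct application of Jensen, since $\E_s[\sqrt{\delta_s}]\le\sqrt{\E_s[\delta_s]}=\negl(n)$.
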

\begin{proof}[of Theorem \ref{thm:owp-imply-state-puzzles}]
Let $(\Gen(1^n), \Ver)$ be a one-way puzzle that samples $n$ bit puzzles and $n$ bit keys. Without loss of generality, we may assume that $\Gen(1^n)$ is the algorithm that first applies a unitary $U_n$ to $\ket{0}$ to obtain
\[
    U_n \ket{0}= \sum_{s,k} \sqrt{p_{s,k}} \ket{\mu_{s,k}}\ket{s}_\sS\ket{k}_\sK  
\]
where $\{\mu_{s,k}\}_{s,k}$ are a set of normalised states, $s$ and $k$ are an $n$-bit puzzle and and $n$-bit key respectively output by $\Gen(1^n)$ with probability $p_{s,k}$. This is followed by a classical basis measurement of registers $\sS$ and $\sK$ to obtain puzzle $s$ and key $k$.

Let $\cG(1^n)$ be the following algorithm
\begin{enumerate}
    \item Apply $U_n$ to $\ket{0}$ to obtain $\sum_{s,k} \sqrt{p_{s,k}} \ket{\mu_{s,k}}\ket{s}_\sS\ket{k}_\sK$
    \item Measure $\sS$ in the classical basis to obtain string $s$ and residual state $\ket{\psi_s}$.
    \item Return $(s, \ket{\psi_s})$.
\end{enumerate}

We will prove that $\cG(1^n)$ is a state puzzle, which sufficies to prove the theorem. Suppose for the sake of contradiction that $\cG(1^n)$ is not a state puzzle. There there exists a polynomial $p(n)$. QPT $\cA = \{\cA_\secpar\}_{\secpar \in \bbN}$, and an advice ensemble $\ket{\tau} = \{\ket{\tau_n}\}_{n \in \mathbb{N}}$ such that for all large enough $n \in \bbN$, 
\[
    \underset{(s,\ket{\psi_s})\leftarrow \cG(1^n)}{\bbE}\left[\bra{\psi_s}\cA(\ket{\tau}, s)\ket{\psi_s}\right] \geq  \frac{1}{p(n)}
\] 
Fix any such adversary $\cA$, any such advice ensemble $\ket{\tau}$, and any such large enough $n\in \bbN$. We will drop the advice from the notation since it is always implicitly provided to the adversary. 

Define the algorithm $\cA'$ that takes input $s$, computes $\cA(s)$, and outputs the outcome of measuring the $\sK$ register of $\cA(s)$ in the computational basis. We will show that $\cA'$ contradicts the security of the one-way puzzle $(\Gen(1^n), \Ver)$.

Let $p_s$ be the probability that $\Gen(1^n)$ samples puzzle $s$. Note that this is identical to the probability that $\cG(1^n)$ samples $s$ since in both cases $s$ is generated the same way. We first show that the expected trace distance between $\ketbra{\psi_s}$ and $\cA(s)$ is at most $\sqrt{1 -1/p(n)}$.
\begin{claim}
\label{clm:td-and-overlap}
    \[
    \underset{(s,\ket{\psi_s})\leftarrow \cG(1^n)}{\bbE}\left[\TD\Big(\ketbra{\psi_s}, \cA( s)\Big)\right] \leq  \sqrt{1 -1/p(n)}
\] 
\end{claim}
\begin{proof}
    
    For any $s$, the fidelity $F(\cA(s),\ket{\psi_s})$ of $\cA(s)$ and $\ket{\psi_s}$ is $\sqrt{\bra{\psi_s}\cA(s)\ket{\psi_s}}$. Therefore, by Uhlmann's Theorem, for any $s$
    \begin{align*}
        \TD\Big(\ketbra{\psi_s}, \cA( s)\Big) &\leq \sqrt{1 - F(\cA(s),\ket{\psi_s})^2}\\
        &= \sqrt{1 - \bra{\psi_s}\cA(s)\ket{\psi_s}}
    \end{align*}
    Expressing $ \underset{(s,\ket{\psi_s})\leftarrow \cG(1^n)}{\bbE}\left[\TD\Big(\ketbra{\psi_s}, \cA( s)\Big)\right]$ as a sum over $s$ and applying the above
    \begin{align*}
        \underset{(s,\ket{\psi_s})\leftarrow \cG(1^n)}{\bbE}\left[\TD\Big(\ketbra{\psi_s}, \cA( s)\Big)\right] &=\sum_s p_s \cdot \TD\Big(\ketbra{\psi_s}, \cA( s)\Big)\\
        &\leq\sum_s p_s \cdot \sqrt{1 - \bra{\psi_s}\cA(s)\ket{\psi_s}}
    \end{align*}
    Applying Jensen's inequality
    \begin{align*}
        \underset{(s,\ket{\psi_s})\leftarrow \cG(1^n)}{\bbE}\left[\TD\Big(\ketbra{\psi_s}, \cA( s)\Big)\right] 
        &\leq\sqrt{\sum_s p_s \cdot\left(1 - \bra{\psi_s}\cA(s)\ket{\psi_s}\right)}\\
        &=\sqrt{1-\sum_s p_s \cdot\bra{\psi_s}\cA(s)\ket{\psi_s}}
    \end{align*}
    Now, we can rewrite the fact that $\underset{(s,\ket{\psi_s})\leftarrow \cG(1^n)}{\bbE}\left[\bra{\psi_s}\cA(s)\ket{\psi_s}\right] \geq  \frac{1}{p(n)}$ as a sum over $s$.
    \begin{align*}
        \sum_s p_s \cdot \bra{\psi_s}\cA(s)\ket{\psi_s} \geq 1/p(n)
    \end{align*}
    which when plugged into the previous inequality implies
    \[
    \underset{(s,\ket{\psi_s})\leftarrow \cG(1^n)}{\bbE}\left[\TD\Big(\ketbra{\psi_s}, \cA( s)\Big)\right] \leq  \sqrt{1 -1/p(n)}
\] 
\end{proof}

\begin{claim}
    \[
    \Prr_{(s,k) \leftarrow \Gen(1^n)}\left[\Ver(s,\cA'(s)) = 1\right] \geq 1/3p(n) 
    \]
\end{claim}
\begin{proof}
By the correctness of the one-way puzzle
\[
    \Prr_{(s,k) \leftarrow \Gen(1^n)} [\Ver(s,k) = 1] \geq 1 - \negl(n)
\]
Let $M$ be an algorithm that takes input $\ket{\psi}$ and returns the result $k$ of measuring the $\sK$ register in the computational basis. The distribution over $s$ and $k$ obtained by sampling $(s, \ket{\psi_s})$ from $\cG(1^n)$ and sampling $k$ from $M(\ketbra{\psi_s})$ is therefore identical to the distribution obtained by sampling $(s,k)$ from $\Gen(1^n)$.
We can express the probability that $\cA'(s)$ successfully outputs a key that passes verification as a sum over $s$. Therefore
\[
    \Prr_{\substack{(s,\ket{\psi_s}) \leftarrow \cG(1^n)\\ k \leftarrow M(\ketbra{\psi_s})}} [\Ver(s,k) = 1] \geq 1 - \negl(n)
\]
which may be rewritten as
\[
    \Prr_{(s,\ket{\psi_s}) \leftarrow \cG(1^n)} [\Ver(s,M(\ketbra{\psi_s})) = 1] \geq 1 - \negl(n)
\]
and expressed as a sum over $s$ as follows.
\[
    \sum_s p_s\cdot\Prr[\Ver(s,M(\ketbra{\psi_s})) = 1] \geq 1 - \negl(n)
\]
For any $s$ and any state $\rho$, 
\[
\Big|\Pr\Big[\Ver(s,M(\ketbra{\psi_s}))=1\Big] - \Pr\Big[\Ver(s,M(\rho))=1\Big]\Big| \leq \TD(\ketbra{\psi_s}, \rho)
\]
which means that we can replace $\ketbra{\psi_s}$ with $\cA(s)$ at the cost of an error of $\TD(\ketbra{\psi_s}, \cA(s))$, i.e.
\begin{align*}
    1 - \negl(n) &\leq \sum_s p_s\cdot\Big(\Prr\Big[\Ver\Big(s,M(\cA(s))\Big) = 1\Big] + \TD\Big(\ketbra{\psi_s}, \cA(s)\Big)\Big) \\
    &\leq \sum_s p_s\cdot\Big(\Prr\Big[\Ver\Big(s,M(\cA(s))\Big) = 1\Big]\Big) + \underset{s,\ket{\psi_s}\leftarrow \cG(1^n)}{\bbE}\left[\TD\Big(\ketbra{\psi_s}, \cA( s)\Big)\right]
\end{align*}
The first term is exactly $ \Prr_{s,k \leftarrow \Gen(1^n)}\left[\Ver(s,\cA'(s)) = 1\right]$ and the second term is shown in Claim \ref{clm:td-and-overlap} upper bounded by $\sqrt{1-p(n)}$. Therefore,
\begin{align*}
    \Prr_{s,k \leftarrow \Gen(1^n)}\left[\Ver(s,\cA'(s)) = 1\right] &\geq 1-\negl(n) - \sqrt{1-p(n)}\\
    &\geq 1-\negl(n) - (1-p(n)/2)\\
    &\geq p(n)/2-\negl(n)\\
    &\geq p(n)/3
\end{align*}
where the last step holds for large enough $n$.
\end{proof}

This shows that $\cA'$ contradicts the security of the one-way puzzle, concluding the proof of the theorem.
\end{proof}

We also have the following straightforward corollary, which follows from the implication of state puzzles from quantum money mini-schemes (and other unclonable primitives).
\begin{corollary}
    Quantum money mini-schemes imply one-way puzzles and quantum bit commitments.
\end{corollary}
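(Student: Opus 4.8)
The plan is to observe that a public-key quantum money mini-scheme is, in particular, a state puzzle (Definition~\ref{def:state-puzzle}), and then to chain together the implications already established in this section: state puzzles imply $1/p(n)$-distributional one-way puzzles for some polynomial $p$ (Theorem~\ref{thm:state-puzzles-imply-owp}, noting that a state puzzle is in particular a $1/2$-weak state puzzle in the sense of Definition~\ref{def:weak-state-puzzle}, since $\negl(n)\leq 1-1/2$ for large $n$); distributional one-way puzzles amplify to one-way puzzles (Theorem~\ref{thm:owp-amplification}); and one-way puzzles imply quantum bit commitments~\cite{KT24}. Together with Theorem~\ref{thm:owp-imply-state-puzzles} this also records that mini-schemes sit above the equivalence class of state puzzles and one-way puzzles.

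Concretely, I would recall that a mini-scheme is specified by a QPT algorithm $\mathsf{Mint}(1^n)$ outputting a pair $(s,\ket{\$_s})$ together with an efficient verification procedure $\mathsf{Ver}(s,\cdot)$ such that: (i) the serial number $s$ can be recomputed from $\ket{\$_s}$ by an efficient measurement that leaves $\ket{\$_s}$ essentially undisturbed; (ii) $\mathsf{Ver}(s,\ket{\$_s})=1$ with probability $1-\negl(n)$; and (iii) (unclonability) no QPT adversary given $(s,\ket{\$_s})$ outputs a two-register state that passes $\mathsf{Ver}(s,\cdot)$ on both registers with non-negligible probability. I would set $\cG := \mathsf{Mint}$ and show $\cG$ is a state puzzle. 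Suppose not: then there is a QPT $\cA$ and advice $\ket{\tau}$ with $\mathbb{E}_{(s,\ket{\$_s})\leftarrow\mathsf{Mint}(1^n)}\!\big[\tr(\ketbra{\$_s}\,\cA(\ket{\tau},s))\big] > 1/p(n)$ for infinitely many $n$ and some polynomial $p$. Build a cloner $\cB$ that, on input $(s,\ket{\$_s})$, leaves its copy of $\ket{\$_s}$ untouched, runs $\rho \leftarrow \cA(\ket{\tau},s)$ on a fresh register, and outputs the two-register state $(\ket{\$_s},\rho)$. By (ii) the first register verifies with probability $1-\negl(n)$; and since $\mathsf{Ver}(s,\cdot)$ accepts $\ketbra{\$_s}$ with probability $1-\negl(n)$, a standard operator inequality gives $\Pr[\mathsf{Ver}(s,\rho)=1]\geq \tr(\ketbra{\$_s}\rho)-\negl(n)$, so both registers verify with probability at least $1/p(n)-\negl(n)$ in expectation over $(s,\ket{\$_s})$, which is non-negligible, contradicting (iii). (If one prefers the overlap-based formulation of unclonability that directly forbids a non-negligible value of $\mathbb{E}\big[\tr\big((\ketbra{\$_s}\otimes\ketbra{\$_s})\,\sigma\big)\big]$ for the adversary's output $\sigma$, the operator inequality is not even needed, since $\tr\big((\ketbra{\$_s}\otimes\ketbra{\$_s})(\ketbra{\$_s}\otimes\rho)\big)=\tr(\ketbra{\$_s}\rho)$.)

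The main — and essentially only — subtlety is the bookkeeping in reconciling the security notions: matching the ``noticeable overlap'' quantifier of Definition~\ref{def:state-puzzle} with the chosen formulation of unclonability for mini-schemes, which the operator inequality above handles (or sidesteps). Property~(i) is precisely what licenses the cloner to retain one genuine banknote while ``spending'' $s$ on $\cA$; note that the reduction uses neither unique verifiability nor efficient verifiability of $\ket{\$_s}$ from $s$, consistent with the remark that state puzzles are strictly weaker than quantum money. Finally, the phrase ``and other unclonable primitives'' is covered by the same one-line reduction: any primitive that samples $(s,\ket{\$_s})$ with $s$ efficiently and non-destructively extractable from $\ket{\$_s}$, and with an unclonability guarantee relative to $s$, is a state puzzle, hence implies one-way puzzles and quantum bit commitments.
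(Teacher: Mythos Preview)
Your proposal is correct and follows exactly the paper's approach: show that a mini-scheme is a state puzzle (via the obvious cloning reduction), then invoke Theorem~\ref{thm:state-puzzles-imply-owp}, Theorem~\ref{thm:owp-amplification}, and~\cite{KT24}. The paper's own proof of this corollary is just a one-sentence pointer to ``the implication of state puzzles from quantum money mini-schemes,'' so your expanded treatment is faithful and more detailed than what appears in the text.
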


\section*{Acknowledgments}
We thank Scott Aaronson, Lijie Chen and William Kretschmer for helpful conversations about the (im)possibility of quantumly efficiently sampling matrices jointly with their permanents.
We thank Alexandra (Sasha) Levinshteyn for help with typesetting parts of this manuscript. Finally, we thank Daniel Apon, Tomoyuki Morimae, Barak Nehoran, Luowen Qian and Amit Sahai for useful comments on the writeup.

The authors were supported in part by AFOSR, NSF 2112890, NSF CNS-2247727 and a Google Research Scholar award. 
This material is based upon work supported by the Air Force Office of Scientific Research under award
number FA9550-23-1-0543.

\bibliographystyle{alpha}
\addcontentsline{toc}{section}{References}
\bibliography{abbrev0,crypto,custom,bib}
\appendix

\section{
Instantiating Uniform Approximation Hardness (Definition \ref{def:type-2})}
In this section we will show how to instantiate uniform approximation hardness with well studied conjectures from the sampling-based quantum advantage literature. Specifically, we will import some conjectures from the literature on BosonSampling, Random Circuit Sampling, IQP and DQC1 sampling; and will discuss why they imply Definition \ref{def:type-2}.

\subsection{Random Circuit Sampling}

The exposition in this section is primarily taken from \cite{BFNV19}. Define a circuit architecture $A := \{A_n\}_{n\in\bbN}$ as a family of graphs with $\poly(n)$ vertices, where each vertex $v$ has $\mathsf{deg}_\mathsf{in}(v) = \mathsf{deg}_\mathsf{out}(v)\in\{1,2\}$. Intuitively, the vertices of the graph denote one or two qubit gates and the edges denote wires. A quantum circuit is instantiated by specifying the gate for each vertex. Define $\cH_A$ as the distribution over circuits formed by drawing a (one or two qubit) gate independently from the Haar measure for each vertex in $A$ and assigning the gate to the vertex. 

\begin{definition}[Anticoncentration]
\label{def:rcs-anti}
    For an architecture $A$, we say that RCS anticoncentrates on $A$ if there exist constants $\kappa,\gamma > 0$ such that for all large enough $n$
    \[
        \Prr_{C\leftarrow\cH_{A_n}}\left[\Pr_C[0^n] \geq \frac{1}{\kappa2^n}\right] \geq \gamma
    \]
\end{definition}
\begin{definition}[Hiding]
\label{def:rcs-hide}
    For an architecture $A$, we say that $\cH_A$ has the hiding property if for any $C \leftarrow \cH_{A_n}$ and uniformly random $y \leftarrow \bin^n$,  $C_y$ is distributed as $\cH_{A_n}$ where $C_y$ is the circuit such that $\Pr_C[x]= \Pr_{C_y}[x\oplus y]$, i.e. the circuit $C$ with $X$ gates appended to every output wire where the value of the output bit in $y$ is $1$.
\end{definition}

\begin{definition}[Approximate Average-Case Hardness] 
\label{def:rcs-main}
An architecture $A:= \{A_n\}_{n\in\bbN}$ is said to be approximate average-case $\#P$-hard to approximate if it has the following property. There exist functions $\epsilon(n) = 1/p(n)$ and $ \delta(n) = 1/q(n)$ for some polynomials $p$ and $q$ such that for any oracle $\cO$ s.t. for all large enough $n$
\footnote{In the literature a slightly different form is often used where $\cO$ takes $1^{1/\epsilon}$ and $1^{1/\delta}$ as input and must approximate to precision $\epsilon/2^n$ with error probability at most $\delta$. The version we present is more convenient and cleaner for the purpose of building cryptography, but our results hold for both versions. See the proof of Theorem \ref{thm:type2-implies-type1} for details.}
% \[
%     \Prr_{C\leftarrow\cH_A}\left[\left|\cO_n(1^{1/\epsilon}, 1^{1/\delta}, x) - \Pr_C[0^n]\right| \leq \epsilon/2^n\right] \geq 1-\delta
% \]
\[
    \Prr_{C\leftarrow\cH_{A_n}}\left[\left|\cO(C) - \Pr_C[0^n]\right| \leq \epsilon(n)/2^n\right] \geq 1-\delta(n)
\]
it holds that $\mathsf{P}^{\#\mathsf{P}} \subseteq \mathsf{BPP}^\cO$.
\end{definition}

RCS is based on the conjecture that there exists an architecture $A$ that satisfies Definition \ref{def:rcs-anti}, Definition
    \ref{def:rcs-hide}, and Definition
    \ref{def:rcs-main} (see, e.g.,~\cite{BFNV19}).
We show that this implies the following conjecture, which directly leads to an instantiatiation of Definition \ref{def:type-2}.

\begin{conjecture}
\label{conj:rcs-2}
    There exists an architecture $A$ and polynomials $p$ and $q$ such that:
    \begin{enumerate}
        \item \textbf{Anticoncentration. } $$\Prr_{\substack{C \leftarrow \cH_{A_n}\\x\leftarrow\bin^n}}[\Pr_C[x] \geq 1/(p(n)\cdot2^n)] \geq 1/\gamma(n)$$
        \item \textbf{Hardness.} For any oracle $\cO$ satisfying that for all large enough $n \in \bbN$, 
        \[
            \Prr_{\substack{C \leftarrow D_n\\x\leftarrow\bin^n}}\left[ \left|\cO(C, x) - \Pr_{C}[x]\right| \leq \frac{\Pr_{C}[x]}{p(n)} \right] %\leq 1 - \frac{1}{p(n)} 
            \geq 1/\gamma(n)-\frac{1}{p(n)}
        \]
    \end{enumerate}
we have that $\mathsf{P}^{\#\mathsf{P}} \subseteq \mathsf{BPP}^{\cO}$.
\end{conjecture}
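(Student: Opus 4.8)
The goal is to derive the two clauses of Conjecture~\ref{conj:rcs-2} from the three properties assumed of the architecture $A$: anticoncentration (Definition~\ref{def:rcs-anti}), hiding (Definition~\ref{def:rcs-hide}), and approximate average-case hardness of $\Pr_C[0^n]$ (Definition~\ref{def:rcs-main}). The anticoncentration clause is immediate from hiding, while the hardness clause requires boosting a \emph{weak} approximation oracle into one strong enough to feed into Definition~\ref{def:rcs-main}.

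For clause~1, I would first record the consequence of hiding that for $C \leftarrow \cH_{A_n}$ and $y \leftarrow \bin^n$ the masked circuit $C_y$ is again distributed as $\cH_{A_n}$, the map $C \mapsto C_y$ is an involution (hence measure-preserving), and $\Pr_{C_y}[y] = \Pr_C[0^n]$. Writing an arbitrary draw $(C', x') \leftarrow \cH_{A_n} \times \mathrm{Unif}(\bin^n)$ as $(C_y, y)$, this shows that the law of $\Pr_{C'}[x']$ over $\cH_{A_n} \times \mathrm{Unif}$ equals the law of $\Pr_C[0^n]$ over $\cH_{A_n}$. Definition~\ref{def:rcs-anti} then gives $\Prr_{C,x}[\Pr_C[x] \geq 1/(\kappa 2^n)] \geq \gamma$, so clause~1 holds with the constant polynomials $p(n) := \kappa$ and $\gamma(n) := 1/\gamma$.

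For clause~2, fix an oracle $\cO$ meeting the weak hypothesis. The plan is to build a $\mathsf{BPP}^\cO$ procedure $\cO'$ that estimates $\Pr_C[0^n]$ to additive error $\epsilon(n)/2^n$ for a $1 - \delta(n)$ fraction of $C \leftarrow \cH_{A_n}$; Definition~\ref{def:rcs-main} then gives $\mathsf{P}^{\#\mathsf{P}} \subseteq \mathsf{BPP}^{\cO'} \subseteq \mathsf{BPP}^\cO$. Step (a): hiding turns $\cO$ into an estimator $\cO_1(C) := \cO(C_w, w)$ for random $w$; since $(C_w, w) \sim \cH_{A_n} \times \mathrm{Unif}$ and $\Pr_{C_w}[w] = \Pr_C[0^n]$, this is a relative-error-$1/p(n)$ estimate of $\Pr_C[0^n]$ on a $\geq 1/\gamma(n) - 1/p(n)$ fraction of $(C,w)$, i.e.\ correct only with inverse-polynomial probability. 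Step (b), the amplification: embed the target $C$ in a degree-$\mathrm{poly}(n)$ curve of circuits $C^{(t)}$ with $C^{(0)} = C$ and $C^{(t)}$ distributed (essentially) as $\cH_{A_n}$ for a random field point $t$ (a Berlekamp/Cayley-type path), so that $g(t) := \Pr_{C^{(t)}}[0^n]$ is a univariate polynomial of degree $\mathrm{poly}(n)$; querying $\cO_1$ at many random $t$ yields evaluations relatively accurate on a $\gtrsim 1/\gamma(n)$ fraction of points, where Definition~\ref{def:rcs-anti} is used to ensure a noticeable fraction of the $g(t)$ are $\geq 1/(\kappa 2^n)$ so the relative-error guarantee is not vacuous on negligible-mass points. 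A robust (Guruswami--Sudan-style) polynomial reconstruction then returns a polynomial-size list of candidate polynomials containing $g$, hence a short list of candidates for $\Pr_C[0^n] = g(0)$, and a disambiguation step selects the correct one, using that two degree-$\mathrm{poly}(n)$ polynomials each consistent with $\cO_1$ on a $\gtrsim 1/\gamma(n)$ fraction of well-spread points must agree at far more than $\deg g + 1$ such points and so nearly coincide. Step (c): repeating for confidence and converting the relative-error guarantee into the required additive $\epsilon(n)/2^n$ guarantee (again via anticoncentration) yields $\cO'$.

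The crux is step (b): lifting an oracle correct on only a $1/\mathrm{poly}$ fraction of instances to one correct on a $1 - 1/\mathrm{poly}$ fraction. This is exactly the robustness barrier confronted by the ``robust'' versions of random-circuit-sampling hardness, and the delicate points are propagating approximation error through list-decoding when a few true evaluations are tiny (handled by staying in the anticoncentrated regime) and the disambiguation among the polynomially many reconstructed candidates. A shorter, less self-contained alternative --- which the footnote to Definition~\ref{def:rcs-main} already sanctions --- is to instead take the robust-oracle form of the RCS conjecture as the hypothesis, so that step (b) is subsumed by the cited conjecture and only the hiding and anticoncentration bookkeeping of steps (a) and (c) remain.
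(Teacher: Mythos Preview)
Your clause-1 argument via hiding is correct and matches the paper. For clause 2 you go further than the paper, whose three-sentence sketch does not carry out the amplification you identify as necessary: its final sentence records only that an oracle with small additive error on a $1-\delta$ fraction automatically has small relative error on the anticoncentrated set, which is the easy direction and at face value argues Conjecture~\ref{conj:rcs-2}$\,\Rightarrow\,$Definition~\ref{def:rcs-main} rather than the converse. You are right that passing from success probability $\approx \gamma$ (a constant) to $1-\delta(n)$ is the substantive content of the converse.

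The gap is your step (b). The polynomial-path embedding plus list-decoding you sketch is the natural self-reduction tool, but it is not known to operate in this parameter regime: the proven worst-to-average-case reductions for random-circuit output probabilities tolerate only near-exact additive error (far below $1/(\mathsf{poly}(n)\cdot 2^n)$), and your disambiguation among list-decoded candidates has no usable test when only an unknown constant fraction of evaluations are reliable and the rest may be adversarial. So the primary route does not close with current techniques, as you yourself note. Your shorter alternative --- adopting the robust-oracle form of the RCS conjecture as the hypothesis --- is exactly how the paper proceeds in practice: Conjecture~\ref{conj:rcs-2} is posed \emph{as} a conjecture (it is the RCS instantiation of Definition~\ref{def:type-2}), and the surrounding paragraph is offered as motivation from the standard RCS hypotheses rather than a derivation.
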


To see why the existence of an architecture satisfying Definition \ref{def:rcs-anti}, Definition
    \ref{def:rcs-hide}, and Definition
    \ref{def:rcs-main} implies Conjecture \ref{conj:rcs-2}, 
it is first observed that
anticoncentration holds directly from Definition \ref{def:rcs-anti} and Definition \ref{def:rcs-hide}. Next, the hiding property  implies that approximate average-case hardness holds even for an oracle that takes input $C\leftarrow \cH_{A_n}$ and $x\leftarrow\bin^n$ and estimates $\Pr_C(x)$. Finally, anticoncentration implies that estimating probabilities with small additive error on average implies the ability to estimate probabilities with small relative error on a large fraction of the set of anticoncentrated points.

\subsection{Boson Sampling}
This section imports conjectures that were made in~\cite{AA11} to obtain quantum advantage from Boson Sampling; and discusses why these conjectures imply Definition \ref{def:type-2}.

\begin{conjecture}
\label{con:pgc}[Permanent-of-Gaussians-Conjecture]
There exist polynomials $p(\cdot), q(\cdot)$ such that for $\epsilon = {1}/{p(n)}, \delta = {1}/{q(n)}$, if there exists an oracle $\cO$ that given as input a matrix $X \sim \cN (0, 1)^{n\times n}_{\mathbb{C}}$ of i.i.d. Gaussians, can estimate $\mathsf{Per(X)}$ to within error $\pm \epsilon(n) \cdot |\mathsf{Per}(X)|$, with probability at least $1-\delta(n)$ over $X$, then
$\mathsf{P}^{\#\mathsf{P}} \subseteq \mathsf{BPP}^{\cO}$. 
\end{conjecture}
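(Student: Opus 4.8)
The statement to establish is the classical Permanent-of-Gaussians Conjecture, for which only partial progress is presently known; the plan I would follow is the Aaronson--Arkhipov programme, with the goal of isolating the one step that is the genuine obstacle. The starting point is Valiant's theorem that computing $\mathsf{Per}(A)$ for $A$ an integer matrix with small entries is $\#\mathsf{P}$-hard, together with the elementary fact that $\mathsf{Per}$ is a polynomial of degree $n$ in the matrix entries. First I would fix such a worst-case hard instance $A$ and build a low-degree matrix-valued curve $A(t)$ with $A(0)=A$, designed so that for $t$ in a small window near one endpoint the matrix $A(t)$ is statistically close to $\cN(0,1)^{n\times n}_{\mathbb{C}}$: sample a complex Gaussian $G$ and interpolate between $A$ and $G$ along a path (a line, or a higher-degree Cayley-type path of the kind used in the random-circuit-sampling literature) whose generic point is a pure Gaussian perturbed by a shift of Frobenius norm $\le \poly(n)\cdot\epsilon$. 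A hybrid/total-variation argument then bounds the distance between the marginal law of $A(t)$ and a true Gaussian, so that the hypothesised oracle $\cO$ errs on each queried point of the curve with probability only $\delta+\negl(n)$.

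Given such a curve, the next step is robust polynomial interpolation. One queries $\cO$ at $\poly(n)$ distinct values $t_1,\dots,t_m$; since $\mathsf{Per}(A(t))$ is a univariate polynomial in $t$ of degree $dn$ for a degree-$d$ path, and (by a union/Markov bound over the curve's points) all but a $\delta$-fraction of the queried values are approximately correct, a Berlekamp--Welch-style decoder recovers the polynomial and hence $\mathsf{Per}(A(0))=\mathsf{Per}(A)$. Because $\cO$ returns only a $(1\pm\epsilon)$-multiplicative estimate rather than an exact value, this interpolation must be carried out with noisy evaluations, which forces a lower bound on $|\mathsf{Per}(A(t))|$ \emph{along the entire curve}: one needs $|\mathsf{Per}(A(t))|\ge \sqrt{n!}/\poly(n)$ for all but a $\poly(n)$-small fraction of $t$, so that the additive error $\epsilon\,|\mathsf{Per}(A(t))|$ stays below the interpolation-stable threshold. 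This anti-concentration along the curve is exactly the content of the Permanent Anti-Concentration Conjecture; the second-moment identity $\mathbb{E}[|\mathsf{Per}(G)|^2]=n!$ gives it in expectation and a fourth-moment (Paley--Zygmund) bound yields a weaker constant-probability version, but the fully polynomial form is open. Finally, since $\mathsf{Per}(A)$ is an integer of magnitude at most $n!$, rounding the recovered value (now noiseless after a careful error budget) yields it exactly, placing $\mathsf{P}^{\#\mathsf{P}}$ in $\mathsf{BPP}^{\cO}$.

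\textbf{The main obstacle.} The hard part is the interaction of the last two ingredients: the oracle's success guarantee is only over a genuinely Gaussian input, whereas the interpolation points lie on a measure-zero curve, so one must simultaneously (i) make each curve point marginally Gaussian-close, (ii) argue that the oracle's errors remain \emph{sparse} along the curve rather than adversarially placed there, and (iii) ensure the noisy values are accurate in \emph{relative} error precisely at the $t$ where $\mathsf{Per}$ is small, which is exactly where anti-concentration can fail. Aaronson--Arkhipov push (i)--(ii) through for an \emph{exact} oracle (no $\epsilon$), where a single good curve point together with exact interpolation suffices; upgrading to the approximate oracle is what the conjecture demands, and I expect the decisive step to be a robust-interpolation lemma tolerating $(1\pm\epsilon)$ errors given only a PACC-type lower bound on $|\mathsf{Per}|$ along the path -- equivalently, importing and specialising to the Gaussian-permanent polynomial the Cayley-path robustness machinery developed for the approximate average-case hardness of random-circuit output probabilities.
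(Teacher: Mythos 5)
The statement you were asked about is labeled a \emph{conjecture} in the paper: it is the Permanent-of-Gaussians Conjecture imported from Aaronson--Arkhipov, and the paper never proves it. It is only assumed (together with the Permanent Anti-Concentration Conjecture, Conjecture \ref{con:pac}), and the paper's actual proof obligation is the separate, sketched claim that the two conjectures together imply its uniform approximation-hardness notion (Definition \ref{def:type-2}). So there is no proof in the paper to compare yours against, and none should be expected: establishing PGC is a well-known open problem, which is exactly why it appears here as an assumption underlying BosonSampling-based advantage.

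Judged as a proof, your proposal has a genuine gap --- one you yourself flag. The Valiant-hardness-plus-polynomial-interpolation outline is essentially what Aaronson--Arkhipov carry out for an \emph{exact} (or exponentially-precise) average-case oracle; upgrading this to an oracle that only guarantees $(1\pm 1/\poly(n))$ \emph{multiplicative} error on a $1-1/\poly(n)$ fraction of Gaussian inputs is precisely the open content of the conjecture, not a step one can discharge with current tools. Concretely, the robust Berlekamp--Welch step needs the relative-error guarantee to become a uniformly controlled additive error along the interpolation curve, which requires lower-bounding $|\mathsf{Per}(A(t))|$ at essentially every queried $t$; even granting PACC for the Gaussian marginal, the queried points lie on a fixed measure-zero curve where the permanent's magnitude and the oracle's error locations need not behave like the Gaussian average, and no hiding/anti-concentration argument along such curves is known. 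Your write-up is a fair account of the state of the art and correctly identifies the decisive missing lemma, but it is a research programme rather than a proof; the correct answer for this statement is simply that the paper takes it as a conjecture.
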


\begin{conjecture}
\label{con:pac}
[Permanent Anti-Concentration Conjecture]
There exists a polynomial $p(\cdot)$ such that for all $n$ and $\delta > 0$,
$$ \Pr_{X \sim \cN(0,1)^{n\times n}_{\mathbb{C}}} \Big[ |\mathsf{Per}(X)| < \frac{\sqrt{n!}}{p(n,1/\delta)} \Big] < \delta$$
\end{conjecture}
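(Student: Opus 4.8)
The plan is to exploit the fact that, viewed as a function of a single row, $\mathsf{Per}(X)$ is an exact complex Gaussian. Normalize so that $\bbE[|X_{ij}|^2]=1$ and expand along the first row: $\mathsf{Per}(X)=\sum_{j=1}^{n} X_{1j} M_j$ where $M_j := \mathsf{Per}(X^{(1,j)})$ is the permanent of the $(n-1)\times(n-1)$ minor obtained by deleting row $1$ and column $j$. Since each $M_j$ depends only on rows $2,\dots,n$, it is independent of the first row, so conditioned on rows $2,\dots,n$ the quantity $\mathsf{Per}(X)$ is a mean-zero complex Gaussian of variance $V:=\sum_{j=1}^n |M_j|^2$; equivalently $|\mathsf{Per}(X)|^2/V$ is conditionally a unit-mean exponential. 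As $\Pr[|g|\le s]=1-e^{-s^2}\le \min(1,s^2)$ for a standard complex Gaussian $g$, I would deduce the clean inequality
\[
\Pr[\, |\mathsf{Per}(X)| \le t \,] \ \le\ \bbE[\, \min(1,\ t^2/V) \,].
\]
Two routine second-moment (Wick/Isserlis) computations — in which only the diagonal pairing of permutations survives — then give $\bbE[|\mathsf{Per}(X)|^2]=n!$ and $\bbE[V]=\sum_j \bbE[|M_j|^2]=n\cdot(n-1)!=n!$.

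Next I would reduce the conjecture to an anti-concentration statement about $V$ alone. Taking $t=\sqrt{n!}/\lambda$ and splitting the expectation above on the event $\{V\ge n!/\lambda\}$ yields
\[
\Pr[\, |\mathsf{Per}(X)| \le \sqrt{n!}/\lambda \,] \ \le\ \tfrac{1}{\lambda} + \Pr[\, V < n!/\lambda \,],
\]
so it suffices to show that $V=\sum_j |\mathsf{Per}(X^{(1,j)})|^2$ has no polynomial-scale lower tail, i.e. $\Pr[\,V\le n!/\poly(n,1/\delta)\,]\le\delta$. A first pass at this would be a Paley--Zygmund bound for the nonnegative variable $V$, which needs $\bbE[V^2]=\sum_{j,k}\bbE[\,|M_j|^2|M_k|^2\,]$; this is a finite Wick computation over \emph{pairs} of permutations of the shared rows $2,\dots,n$, expected to come out to $O((n!)^2\poly(n))$, and would already establish a weak anti-concentration (nonzero success probability at scale $\sqrt{n!}$).

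The hard part — and the reason this is stated as a conjecture of Aaronson and Arkhipov rather than a theorem — is closing the gap between such a weak statement and the stated one, which demands success probability $1-\delta$ with the threshold degrading only polynomially in $1/\delta$. A Paley--Zygmund or fourth-moment argument inherently caps the failure probability at a constant, not at an arbitrary $\delta$; reaching $\delta$ would require either uniform control of \emph{all} moments of $\mathsf{Per}(X)$, or an inductive row-peeling argument that propagates an anti-concentration invariant from the minors $X^{(1,j)}$ up to $X$ while retaining the concentration coming from summing $n$ positively-correlated terms in $V$, or a quantitative central limit theorem showing $\mathsf{Per}(X)/\sqrt{n!}$ converges to a standard complex Gaussian with a Berry--Esseen-type rate (which would immediately give the bounded-density conclusion near the origin). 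A generic polynomial anti-concentration tool such as Carbery--Wright, applied to the degree-$2n$ real polynomial $|\mathsf{Per}(X)|^2$, is far too weak, since it only rules out values below thresholds exponentially small in $n$. So concretely my plan splits into (i) the conditionally-Gaussian reduction and the two second-moment identities, which are clean; (ii) the fourth-moment bound on $V$ and the weak (constant-probability) anti-concentration, which is tedious but routine combinatorics; and (iii) establishing the full strength — a permanent CLT or an all-moments estimate — which is the genuine obstacle and, to my knowledge, remains open.
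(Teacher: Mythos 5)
There is nothing to compare against here, because the paper does not prove this statement at all: it is the Permanent Anti-Concentration Conjecture of Aaronson and Arkhipov, imported verbatim as an \emph{assumption} (alongside the Permanent-of-Gaussians Conjecture) and used only as a hypothesis to instantiate Definition~\ref{def:type-2}. No proof exists in the paper, and none is known in the literature; so any ``proof'' of this statement would have to be either a major new result or flawed. Your proposal is neither — and to your credit you say so explicitly in part (iii).

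On the substance of what you do write: the conditionally-Gaussian reduction is correct (expanding along the first row, $\mathsf{Per}(X)=\sum_j X_{1j}M_j$ with the minors $M_j$ independent of row $1$, so $|\mathsf{Per}(X)|^2/V$ is conditionally exponential with $V=\sum_j|M_j|^2$), as are the identities $\bbE[|\mathsf{Per}(X)|^2]=n!$ and $\bbE[V]=n!$, and the splitting bound $\Pr[|\mathsf{Per}(X)|\le\sqrt{n!}/\lambda]\le 1/\lambda+\Pr[V<n!/\lambda]$. But this only relocates the difficulty: anti-concentration of $V$ at polynomial scale is essentially the same open problem one level down, and Paley--Zygmund can never push the failure probability below a constant, let alone give the $p(n,1/\delta)$ dependence the conjecture demands. (Indeed, the fourth moment $\bbE[|\mathsf{Per}(X)|^4]$ grows like $(n!)^2\cdot\mathrm{poly}(n)$ only up to factors that themselves require care, and even granting it, second/fourth-moment methods yield only weak anti-concentration of the kind Aaronson--Arkhipov already prove.) So the correct takeaway is that the statement is a conjecture in this paper by design, your partial steps are sound but do not constitute a proof, and your identification of the missing ingredient (a quantitative CLT for Gaussian permanents or uniform moment control) accurately locates why it remains open.
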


\begin{theorem} 
   Conjectures \ref{con:pgc} and \ref{con:pac} imply Definition \ref{def:type-2}.
\end{theorem}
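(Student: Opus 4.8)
The goal is to show that the Permanent-of-Gaussians Conjecture (Conjecture~\ref{con:pgc}) together with the Permanent Anti-Concentration Conjecture (Conjecture~\ref{con:pac}) yields a family $\cC = \{\cC_n\}$ satisfying Definition~\ref{def:type-2}. The plan is to mimic the structure of the BosonSampling argument of~\cite{AA11}: first fix the circuit family to be the family of linear-optical networks whose output amplitudes are permanents of (submatrices of) Haar-random unitaries, then transfer hardness and anticoncentration from Gaussian matrices to this family using (i) the fact that an $n \times n$ i.i.d.\ complex Gaussian matrix is, up to $\mathsf{poly}(n)$ rescaling, well-approximated in variation distance by a random $n \times n$ submatrix of an $m \times m$ Haar-random unitary for $m = \mathsf{poly}(n)$ (the standard ``truncation'' lemma), and (ii) a hiding/rerandomization property of Gaussian matrices analogous to Definition~\ref{def:rcs-hide}.

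\textbf{Step 1: define the circuit family.} I would let $\cC_n$ sample a description of a BosonSampling experiment on $m = m(n)$ modes with $n$ photons, where the linear-optical unitary $U$ is Haar-random on $\bbC^{m \times m}$; the quantum circuit $C$ implementing this experiment acts on $\mathrm{poly}(n)$ qubits (encoding photon-number states) and, on input $\ket{0}$, measures an output string $x$ that indexes an occupation pattern. The key algebraic fact is that, for collision-free outputs $x$, $\Pr_C[x] = |\mathrm{Per}(U_x)|^2$ where $U_x$ is the $n \times n$ submatrix of $U$ selected by the input/output modes; and that collision-free outputs carry all but a $1/\mathrm{poly}$ fraction of the mass when $m \gg n^2$ (the ``bosonic birthday'' bound). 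One must check that $\cC_n$ is uniformly efficiently sampleable and that $C$ has the claimed output length $n'$ (possibly after re-indexing, $n'$ and $n$ are polynomially related, which is fine for Definition~\ref{def:type-2}).

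\textbf{Step 2: anticoncentration.} I would invoke Conjecture~\ref{con:pac}: with probability $1-\delta$ over $X \sim \cN(0,1)^{n \times n}_{\bbC}$, $|\mathrm{Per}(X)|^2 \geq n!/p(n,1/\delta)^2$. Combined with the truncation lemma (a random $n \times n$ submatrix $U_x$ of a Haar unitary on $m = \mathrm{poly}(n,1/\delta)$ modes is $\delta$-close in TV to $X/\sqrt{m}$), and the fact that $\E_{x}[\Pr_C[x]] = 2^{-n'}$ over a uniformly random (collision-free) $x$, this shows that for a $\Omega(1)$ fraction of $(C,x)$, $\Pr_C[x] \geq 1/(p'(n) \cdot 2^{n'})$ for a suitable polynomial $p'$; i.e., the anticoncentration clause of Definition~\ref{def:type-2} holds with $\gamma(\cdot)$ a constant and $p'(\cdot)$ polynomial. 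Care is needed to (a) absorb the normalization factor $m^{-n/2}$ versus $\sqrt{n!}$ and $2^{-n'}$, and (b) handle that a uniformly random $x \in \bin^{n'}$ must be decoded to a (with high probability collision-free) occupation pattern — this is where a bit of bookkeeping on the encoding of output strings enters.

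\textbf{Step 3: hardness of relative-error approximation (the main obstacle).} This is the crux. Given an oracle $\cO$ that, for all large $n$, $\epsilon$-relative-approximates $\Pr_C[x] = |\mathrm{Per}(U_x)|^2$ on a $1/\gamma - 1/p$ fraction of $(C,x)$, I want to build an oracle $\cO'$ that $\epsilon'$-relative-approximates $|\mathrm{Per}(X)|$ on a $1-\delta$ fraction of Gaussian $X$, so that Conjecture~\ref{con:pgc} yields $\mathsf{P}^{\#\mathsf{P}} \subseteq \mathsf{BPP}^{\cO'} \subseteq \mathsf{BPP}^{\cO}$. The reduction has three moving parts: (1) \emph{embedding}: sample a Haar unitary $U$ whose top-left $n \times n$ block is $\approx X/\sqrt{m}$ (via the truncation lemma, constructively using e.g.\ a QR/Gram--Schmidt completion), and set $x$ to be the output string selecting that block; (2) \emph{hiding}: the particular $(C,x)$ produced this way is not distributed exactly as $\cC_n \times \bin^{n'}$, so I would use the symmetry of the Haar measure under row/column permutations and phases (and, for the $\delta$-fraction error, a union bound) to argue that this $(C,x)$ is within $1/\mathrm{poly}$ TV of the oracle's ``good'' input distribution, hence $\cO$ succeeds on it with probability $\geq 1/\gamma - 1/p - o(1) = \Omega(1)$; (3) \emph{amplification / worst-case-to-average-case}: a single call succeeds only with constant probability, and for Conjecture~\ref{con:pgc} we need success with probability $1-\delta$ over $X$ — here I would invoke the standard polynomial-interpolation (Berlekamp--Welch / random self-reducibility of the permanent) machinery of~\cite{AA11}, which boosts a $\Omega(1)$-on-average relative-error estimator of $|\mathrm{Per}|^2$ over Gaussians into a $(1-\delta)$-on-average one, and converts $|\mathrm{Per}|^2$-estimation into $\mathrm{Per}$-estimation. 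The main difficulty is exactly this last part: carefully tracking how the relative error degrades through squaring, through the Gaussian-to-unitary approximation, through the interpolation step, and through the switch from ``all large $n$'' to the polynomially-many input lengths that the $\mathsf{P}^{\#\mathsf{P}}$-reduction of Conjecture~\ref{con:pgc} queries (this is the analogue of the subtlety already flagged in the proof of Theorem~\ref{thm:type2-implies-type1}, and is handled the same way — by building the approximation-failure slack into the polynomials $p,\gamma$). Once this is assembled, Definition~\ref{def:type-2} follows, and the IQP/DQC1 cases are entirely analogous with the appropriate replacement of ``permanent of Gaussian'' by the relevant $\#\mathsf{P}$-hard quantity and its conjectured average-case hardness and anticoncentration.
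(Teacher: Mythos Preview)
Your plan follows the original Aaronson--Arkhipov line (standard BosonSampling with Haar-random linear-optical unitaries), but there are two genuine issues, one of which the paper explicitly works around.

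\textbf{The embedding step is the real gap.} In Step~3(1) you propose to embed a given Gaussian $X$ into a Haar-random $U$ ``via the truncation lemma, constructively using e.g.\ a QR/Gram--Schmidt completion.'' This is precisely the step that \cite{AA11} only establishes in $\mathsf{FBPP}^{\mathsf{NP}}$ and \emph{conjectures} to be in $\mathsf{FBPP}$. The difficulty is that after Gram--Schmidt the planted block is no longer exactly $X/\sqrt{m}$, and controlling the induced error in $|\mathrm{Per}(\cdot)|^2$ while simultaneously arguing that the resulting $(C,x)$ is distributed within negligible TV of the target is not known to go through in polynomial time. Since Definition~\ref{def:type-2} demands $\mathsf{P}^{\#\mathsf{P}} \subseteq \mathsf{BPP}^{\cO}$ (not $\mathsf{BPP}^{\mathsf{NP}^{\cO}}$), this matters. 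The paper sidesteps the issue entirely: rather than standard BosonSampling, it instantiates $\cC$ with \emph{Bipartite Gaussian Boson Sampling}, where the sampled matrices are Gaussian throughout and the reduction of \cite{grieretal} from Gaussian-permanent approximation to output-probability approximation is directly in $\mathsf{BPP}$, so no Haar embedding is needed.

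\textbf{The amplification step is both unnecessary and misattributed.} In Step~3(3) you claim that \cite{AA11}'s polynomial-interpolation machinery ``boosts a $\Omega(1)$-on-average relative-error estimator of $|\mathrm{Per}|^2$ over Gaussians into a $(1-\delta)$-on-average one.'' It does not: Berlekamp--Welch / random self-reducibility gives worst-case-to-average-case for \emph{exact} permanent computation, and its failure to extend to relative-error approximation is exactly why PGC is a conjecture rather than a theorem. Fortunately you don't need this boost. Since PACC holds for \emph{every} $\delta>0$ with a polynomial threshold $p(n,1/\delta)$, you may take $\delta$ to be (half of) the $1/q(n)$ that PGC supplies, so that anticoncentration holds on a $1-\delta$ fraction; then set $1/\gamma(n) = 1-\delta$ and $p(n)$ large enough that $1/\gamma - 1/p \ge 1 - 1/q(n)$, and the oracle's guaranteed success fraction already meets PGC's hypothesis directly. (The $|\mathrm{Per}|^2$-to-$\mathrm{Per}$ conversion you mention is indeed in \cite{AA11}, conditional on PACC, and is the one piece of their equivalence you do need.)
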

\begin{proof} (Sketched, from~\cite{AA11,grieretal})
It is shown in~\cite{AA11} how the probability that a randomly chosen linear optical network (i.e. circuit) outputs $0$ is proportional to the square of the permanent of an appropriate submatrix of a Haar random matrix; where the submatrix itself is Gaussian. They also prove a hiding property, which argues that any given Gaussian matrix can be embedded into a Haar random matrix while keeping the location of the given submatrix hidden. The embedding procedure itself is known to be in $\mathsf{FBPP}^{\mathsf{NP}}$ and is conjectured in~\cite{AA11} to be in $\mathsf{FBPP}$. Subsequent models of Boson Sampling such as (Bipartite) Gaussian Boson Sampling modify the experimental setup and work with Gaussian (as opposed to Haar random) matrices, which allow trivially embedding a Gaussian submatrix in $\mathsf{FBPP}$.

Specifically,~\cite{grieretal} efficiently (in BPP) reduce the task of computing permanents of Gaussian random matrices to the task of estimating the probabilities of (random) outputs of a bipartite Gaussian boson sampling setup. Under Conjectures \ref{con:pgc} and \ref{con:pac}, this proves that the resulting probabilities are $\#\mathsf{P}$ hard to approximate, which implies the statement of Definition \ref{def:type-2}.
\end{proof}

\subsection{IQP Circuit Sampling}

IQP refers to a class of randomly chosen commuting quantum circuits, which take as input the
state $\ket{0^n}$, whose gates are diagonal in the Pauli-X basis,
and whose $n$-qubit output is measured in the computational basis. Under {\em any one of} the two conjectures below, one coming from condensed matter physics and the other from computer science, the output distributions of IQP circuits have been proven classically hard to simulate (unless the polynomial heirarchy collapses)~\cite{shepherd2009temporally,bremneraverage}.  

\begin{conjecture}
\label{con:iqp1}
Consider the partition function of the general Ising model,
\begin{equation}
Z(\omega) = \sum_{\mathbf{z} \in \{\pm 1\}^n} \omega \exp\left(\sum_{i < j} w_{ij} z_i z_j + \sum_{k=1}^{n} v_k z_k\right),
\end{equation}
where the exponentiated sum is over the complete graph on \(n\) vertices, \(w_{ij}\) and \(v_k\) are real edge and vertex weights, and \(\omega \in \mathbb{C}\). Let the edge and vertex weights be picked uniformly at random from the set \(\{0, \dots, 7\}\).

Then it is \#P-hard to approximate \(\left| Z(e^{i\pi/8}) \right|^2\) up to multiplicative error \(1/4 + o(1)\) for a 1/24 fraction of instances, over the random choice of weights.
\end{conjecture}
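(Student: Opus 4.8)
The plan is to establish the stated $\#\mathsf{P}$-hardness along the lines of Bremner, Montanaro, and Shepherd~\cite{bremneraverage}, in three stages: first relate $|Z(e^{i\pi/8})|^2$ to the output probability of an IQP circuit; then prove worst-case $\#\mathsf{P}$-hardness of computing this quantity \emph{exactly}; and finally lift worst-case hardness to average-case hardness of \emph{approximation} by exploiting the algebraic structure of the partition function. For the first stage I would rewrite $Z(e^{i\pi/8})$ as the unnormalized amplitude of a commuting circuit: take $D := \exp\!\big(i\tfrac{\pi}{8}\sum_{i<j} w_{ij} Z_i Z_j + i\tfrac{\pi}{8}\sum_k v_k Z_k\big)$, which is diagonal in the Pauli-$X$ basis after conjugating by $H^{\otimes n}$, and expand $\langle 0^n | H^{\otimes n} D H^{\otimes n} | 0^n\rangle = 2^{-n}\sum_{\mathbf z\in\{\pm1\}^n}\exp(\cdots) = 2^{-n} Z(e^{i\pi/8})$. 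Hence the probability that the IQP circuit $H^{\otimes n} D H^{\otimes n}$ outputs $0^n$ equals $4^{-n}|Z(e^{i\pi/8})|^2$, so multiplicative hardness for $|Z(e^{i\pi/8})|^2$ is exactly multiplicative hardness for an IQP output probability --- the kind of object that Definition~\ref{def:type-2} is about.

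The second stage is worst-case exactness. Reading $z_i\in\{0,1\}$, the exponent is a polynomial of degree at most $2$ over $\mathbb{Z}_8$, so $Z(e^{i\pi/8})$ is a quadratic Gauss sum and $|Z(e^{i\pi/8})|^2$ encodes a genuine counting quantity (for instance the number of solutions of an associated quadratic form over $\mathbb{F}_2$, equivalently via the $\mathsf{postBQP} = \mathsf{PP}$ characterization of postselected IQP). This gives worst-case $\#\mathsf{P}$-hardness of computing $|Z(e^{i\pi/8})|^2$ exactly.

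The third and decisive stage is the worst-case-to-average-case reduction for the approximate version. Parametrized by the ``phase'' variables attached to the weights, $Z$ is a polynomial of polynomially bounded degree, so Berlekamp--Welch interpolation reduces exact worst-case evaluation to exact evaluation on a noticeable fraction of random weight settings. To make use of the allowed multiplicative error $1/4+o(1)$, however, one must first turn a good \emph{multiplicative} estimate at a random point into a sufficiently small \emph{additive} estimate, simultaneously at the polynomially many interpolation points, and this step needs an anti-concentration bound: that $|Z(e^{i\pi/8})|^2$ stays within a polynomial factor of its mean $2^n$ for a constant fraction of weight choices. I expect this to be the main obstacle --- we have no unconditional anti-concentration statement for these particular Gauss sums, and without it the approximate-to-exact passage breaks down; this is precisely why the displayed statement is phrased as a conjecture rather than a theorem, and a complete proof would require either a direct anti-concentration bound or a reduction that tolerates multiplicative error without routing through additive error. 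Once the statement is granted, it combines with the amplitude identity above (together with the companion anti-concentration clause) to instantiate Definition~\ref{def:type-2}, by the same argument already used for Random Circuit Sampling and BosonSampling in this appendix.
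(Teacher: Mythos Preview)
The statement is a \emph{conjecture}, not a theorem, and the paper does not attempt to prove it: it is imported wholesale from \cite{bremneraverage} as an assumption used to instantiate Definition~\ref{def:type-2}. So there is no ``paper's own proof'' to compare your proposal against. Your stages~1 and~2 (relating $|Z(e^{i\pi/8})|^2$ to an IQP output probability, and worst-case $\#\mathsf{P}$-hardness of exact computation) are the standard Bremner--Montanaro--Shepherd story and are fine as background; your closing remark that, once granted, the conjecture plugs into Definition~\ref{def:type-2} via hiding and anti-concentration is exactly what the paper says in the paragraph following the two conjectures.

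Where your analysis goes wrong is in identifying \emph{why} the statement is open. You write that ``we have no unconditional anti-concentration statement for these particular Gauss sums'' and flag this as the main obstacle. That is incorrect: anti-concentration for IQP output distributions (equivalently, for these Ising partition functions) is \emph{provable}, and the paper says so explicitly right after stating Conjectures~\ref{con:iqp1} and~\ref{con:iqp2}. The actual obstruction lies in stage~3 itself: polynomial interpolation in the style of Berlekamp--Welch recovers a worst-case value from \emph{exact} evaluations at random points, but once those evaluations carry additive noise (even after anti-concentration converts multiplicative to additive error), the interpolation can amplify that noise by factors exponential in the degree, destroying the worst-case estimate. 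Taming this error blow-up --- or devising a reduction that bypasses interpolation entirely --- is the open problem, and it is not resolved by anti-concentration. So your proof plan would stall at stage~3 even granting everything you ask for, just not for the reason you give.
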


\begin{conjecture}
\label{con:iqp2}
    Let \( f : \{0, 1\}^n \to \{0, 1\} \) be a uniformly random degree-3 polynomial over \( \mathbb{F}_2 \), and define \(\text{ngap}(f) := (|\{x : f(x) = 0\}| - |\{x : f(x) = 1\}|) / 2^n\). Then it is \#P-hard to approximate \(\text{ngap}(f)^2\) up to a multiplicative error of \(1/4 + o(1)\) for a 1/24 fraction of polynomials \(f\).
\end{conjecture}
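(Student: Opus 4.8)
The plan is to decompose the claim into three ingredients, following the standard template for hardness-of-approximation results in the quantum-advantage literature (the permanent argument of Aaronson--Arkhipov, adapted to the commuting-circuit setting by Bremner--Montanaro--Shepherd). First I would establish \emph{worst-case exact} $\#\mathsf{P}$-hardness of $\mathrm{ngap}(f)^2$ for degree-$3$ polynomials: writing $\mathrm{ngap}(f) = 1 - 2^{1-n}\lvert f^{-1}(1)\rvert$, computing it amounts to counting the zeros of a single degree-$3$ polynomial over $\mathbb{F}_2$, which is $\#\mathsf{P}$-complete (via the usual arithmetization of $\#3\mathrm{SAT}$, introducing auxiliary variables to force total degree $3$, or via Ehrenfeucht--Karpinski); since $\mathrm{ngap}(f)$ has denominator $2^{n-1}$, knowing $\mathrm{ngap}(f)^2$ exactly recovers $\lvert\mathrm{ngap}(f)\rvert$ and hence $\lvert f^{-1}(1)\rvert$, so exact computation of the square is also $\#\mathsf{P}$-hard.

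Second---and this is where the real work lies---I would give a \emph{worst-case-to-average-case} reduction: from an oracle that $(1/4+o(1))$-multiplicatively approximates $\mathrm{ngap}(\cdot)^2$ on a $1/24$ fraction of degree-$3$ polynomials, construct a randomized algorithm computing $\mathrm{ngap}(g)^2$ exactly for \emph{every} degree-$3$ $g$. The natural attempt is Lipton-style polynomial self-correction: find a parameter in which $\mathrm{ngap}^2$ (or a suitable lift of it) is a low-degree polynomial, embed the target instance $g$ on a low-degree curve through instance space, query the oracle at many random points of the curve, and interpolate / error-correct via Berlekamp--Welch to recover the value at $g$. Over $\mathbb{F}_2$ such a curve has too few points, so one is forced either to lift the computation to a genuine polynomial $Z(\omega)=\sum_x \omega^{q(x)}$ over a large field (the route available for the condensed-matter conjecture \ref{con:iqp1}, where $Z$ is literally a polynomial in $\omega$) or to exploit random-self-reducibility of the solution-counting function directly. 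Because $\mathrm{ngap}$ is \emph{not} linear in $f$, merely adding a uniformly random degree-$\le 3$ polynomial $r$ does not relate $\mathrm{ngap}(f)$ to $\mathrm{ngap}(f+r)$; one would need a more refined embedding, e.g.\ a multilinear / Cayley-path interpolation combined with the affine-invariance of $\mathrm{ngap}$ under $x\mapsto Ax+b$ and random low-weight perturbations. I expect this to be the main obstacle: no known low-degree structure of $\mathrm{ngap}$ over $\mathbb{F}_2$ supports self-correction with enough evaluation points, which is precisely why the statement is posed as a conjecture; the most one can realistically deliver here is a reduction of the claim to a clean polynomial-self-correction / identity-testing question over $\mathbb{F}_2$.

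Third, granting a worst-to-average reduction that yields hardness for a $1-1/\mathrm{poly}(n)$ fraction at some multiplicative error, I would boost both parameters: amplify the good-instance fraction by randomized self-correction of the approximate oracle, and sharpen the multiplicative error from $1/4+o(1)$ to an arbitrarily small inverse polynomial either by iterating the self-correction or by the standard approximate-counting sandwich (Stockmeyer's theorem in $\mathsf{BPP}^{\mathsf{NP}}$ relative to the oracle), all of which stays inside $\mathsf{BPP}$ with oracle access; chaining with the exact worst-case hardness from the first step then places $\#\mathsf{P}$ (equivalently, $\mathsf{P}^{\#\mathsf{P}}$) inside $\mathsf{BPP}$ relative to the approximation oracle, which is the desired conclusion. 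To restate the honest bottom line: steps one and three are routine, and the entire difficulty is concentrated in the $\mathbb{F}_2$ worst-case-to-average-case step, which coincides with the open problem underlying every known sampling-based advantage conjecture.
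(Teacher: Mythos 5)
This statement is Conjecture~\ref{con:iqp2}: the paper does not prove it, and no proof is expected. It is imported verbatim from the IQP quantum-advantage literature (Bremner--Montanaro--Shepherd and related works cited alongside \cite{shepherd2009temporally,bremneraverage}) as an unproven average-case hardness assumption, and the paper's only obligation is to argue that it (together with anti-concentration and the hiding property of IQP circuits) implies Definition~\ref{def:type-2}, which is done in the surrounding appendix text. So there is no proof in the paper against which your argument can be matched.

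Your proposal, to its credit, diagnoses its own failure correctly. The first ingredient (exact worst-case $\#\mathsf{P}$-hardness of $\mathrm{ngap}(f)^2$ for degree-$3$ polynomials over $\mathbb{F}_2$, by recovering $|f^{-1}(1)|$ from the square) is standard and fine, and the third (parameter amplification given a worst-to-average reduction) is plausible boilerplate. But the entire content of the conjecture is the second ingredient: a worst-case-to-average-case reduction from exact computation to $(1/4+o(1))$-multiplicative approximation on a $1/24$ fraction of \emph{uniformly random} degree-$3$ polynomials. As you note, $\mathrm{ngap}$ is not linear in $f$, random shifts $f \mapsto f+r$ do not rerandomize it in any controlled way, and Lipton/Berlekamp--Welch-style self-correction has no usable low-degree structure over $\mathbb{F}_2$ with enough evaluation points. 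That missing step is precisely why the statement is posed as a conjecture rather than a theorem, here and in the original sources; what you have written is an (accurate) explanation of why the conjecture is hard to prove, not a proof. For the purposes of this paper you should treat Conjecture~\ref{con:iqp2} as an assumption and, if anything, verify the (easy) implication from it to Definition~\ref{def:type-2} via hiding and anti-concentration of IQP circuits.
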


Just like the case of random circuit sampling, IQP circuits satisfy the hiding property trivially due to the circuit architecture supporting the addition of random X gates. Furthermore, the structure of IQP circuits allows provable anti-concentration results for $ngap(f)$ and the partition function of the random Ising model. These properties combined help reduce the task of approximating the partition function of the general Ising model, or approximating $ngap(f)^2$ within the bounds in the conjecture, to the task of approximating output probabilities of randomly chosen IQP circuits. Thus {\em either one of} Conjectures \ref{con:iqp1} or \ref{con:iqp2} implies hardness according to Definition \ref{def:type-2}.

\section{Flatness of Unitary 2-Design Output Distributions}
\label{appendix:2-design}
\newtheorem*{flatness}{Theorem~\ref{thm:flatness-of-2-designs}}
\begin{flatness}[Flatness of 2-designs]Let $\cC$ be a unitary 2-design on $n$ qubits. Fix any $n$ qubit state $\ket{\psi}$. For any $C \in \Supp(\cC)$,  let $p_C(x):=|\bra{x}C\ket{\psi}|^2$ be the probability that measuring $C\ket{\psi}$ in the computational basis results in $x$. Then the following holds for all $k>6$ and sufficiently large $n$. Define $$\bbG := \left\{C \in \Supp(\cC) : \underset{x: p_C(x) \geq \frac{n^{3k}}{2^n}}{\sum} p_C(x) \leq 1/n^k \right\}$$ Then $$\Prr_{C\leftarrow\cC}[C\in\bbG]\geq 1 - 1/n^k$$
\end{flatness}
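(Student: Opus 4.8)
\section*{Proof proposal for Theorem~\ref{thm:flatness-of-2-designs}}

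The plan is to reduce the statement to a second-moment estimate for the output probabilities of a Haar-random state, and transfer that estimate from $\cC$ to the Haar measure using the $2$-design property. Write $N := 2^n$, and for a fixed $C \in \Supp(\cC)$ let $S_C := \sum_{x :\, p_C(x) \geq n^{3k}/N} p_C(x)$ be the total probability mass that measuring $C\ket{\psi}$ places on ``heavy'' strings. Since $C \in \bbG$ holds exactly when $S_C \leq 1/n^k$, it suffices to prove $\Prr_{C \leftarrow \cC}[S_C > 1/n^k] \leq 1/n^k$ for all sufficiently large $n$.

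The first step is to compute the first and second moments of $p_C(x)$ over $C \leftarrow \cC$. For each basis string $x$ and each $m \in \{1,2\}$ one has the identity $p_C(x)^m = \mathsf{Tr}\!\left((\ketbra{x})^{\otimes m}\, C^{\otimes m}(\ketbra{\psi})^{\otimes m} (C^\dagger)^{\otimes m}\right)$, using $\mathsf{Tr}(A^{\otimes m}B^{\otimes m}) = \mathsf{Tr}(AB)^m$ with $A = \ketbra{x}$, $B = C\ketbra{\psi}C^\dagger$. Taking expectations and applying Definition~\ref{def:unitary-2-design} (to operators on $\mathcal{H}^{\otimes 2}$ with $\mathcal{H} = (\bbC^2)^{\otimes n}$, choosing $O = (\ketbra{\psi})^{\otimes m}$; for $m=1$ this is just the $1$-design property, which a $2$-design also has) gives $\bbE_{C \leftarrow \cC}[p_C(x)^m] = \bbE_{U \leftarrow \mu_H}[p_U(x)^m]$. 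Now $U\ket{\psi}$ is a uniformly random unit vector in $\bbC^N$ regardless of $\ket{\psi}$, so $p_U(x)$ is distributed as $|z_1|^2$ for $z$ uniform on the complex unit sphere, i.e.\ as a $\mathrm{Beta}(1, N-1)$ random variable. Hence $\bbE_{C}[p_C(x)] = 1/N$ and $\bbE_{C}[p_C(x)^2] = \tfrac{2}{N(N+1)} \leq \tfrac{2}{N^2}$.

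Next I would bound the expected heavy mass by the truncated second moment. Set $t := n^{3k}/N$. From $p_C(x)\,\mathbf{1}[p_C(x) \geq t] \leq p_C(x)^2 / t$ we get, for each $x$, $\bbE_C\!\big[p_C(x)\,\mathbf{1}[p_C(x) \geq t]\big] \leq \bbE_C[p_C(x)^2]/t \leq 2/(tN^2)$. Summing over the $N$ strings $x$,
\[
\bbE_{C \leftarrow \cC}[S_C] \;\leq\; \frac{N \cdot 2}{t N^2} \;=\; \frac{2}{t N} \;=\; \frac{2}{n^{3k}} .
\]
By Markov's inequality, $\Prr_{C \leftarrow \cC}[S_C > 1/n^k] \leq n^k \cdot \bbE_C[S_C] \leq 2/n^{2k}$, which is at most $1/n^k$ as soon as $n^k \geq 2$, in particular for all sufficiently large $n$ (the hypothesis $k > 6$ is not needed here, though it is harmless). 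Therefore $\Prr_{C \leftarrow \cC}[C \in \bbG] \geq 1 - 1/n^k$, as claimed.

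There is no substantial obstacle in this argument; the only point that deserves care is the moment transfer in the second step --- checking that $p_C(x)^2$, which superficially mixes $C$ and $C^\dagger$, is in fact a linear functional of $C^{\otimes 2}(\cdot)(C^\dagger)^{\otimes 2}$, so that the exact $2$-design property legitimately replaces the $\cC$-average by the Haar average, and recording the Haar second moment $\bbE_{\mu_H}|z_1|^4 = \tfrac{2}{N(N+1)}$ correctly (only the weaker bound $\leq 2/N^2$ is actually used, and this also follows directly from the Weingarten calculus for $\int U^{\otimes 2}(\cdot)(U^\dagger)^{\otimes 2}\,d\mu_H$). Everything else is Markov's inequality.
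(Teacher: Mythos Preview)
Your proof is correct and takes a genuinely cleaner route than the paper. Both arguments rest on the same second-moment computation $\bbE_{C}[p_C(x)^2] = \tfrac{2}{N(N+1)}$ transferred from the Haar measure via the $2$-design property, but the paper then applies Chebyshev to bound the \emph{fraction} $f_\alpha(C)$ of strings above a threshold, Markov to control the tail of $f_\alpha(C)$ over $C$, a union bound over a geometric sequence of thresholds $\alpha = n^{2i}$, and finally a telescoping sum to convert these fraction bounds into a mass bound. You sidestep all of this with the one-line inequality $p\,\mathbf{1}[p\ge t]\le p^2/t$, which directly bounds the expected heavy \emph{mass} by $\tfrac{1}{t}\sum_x \bbE_C[p_C(x)^2] \le 2/n^{3k}$, and then a single Markov. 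As you note, your argument does not use $k>6$; in the paper that hypothesis enters only because the telescoping produces a threshold $1+n^{2k+6}$ that must be pushed below $n^{3k}$. So your approach is both shorter and proves a slightly stronger statement.
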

\begin{proof}

By the Chebyshev bound applied to random variable $p_C(x)$ for uniformly random $x$ and $C\leftarrow\cC$, for all $\alpha > 0$
\begin{align}   
\label{eq:chebyshev}
\Prr_{\substack{x\leftarrow\bin^n\\C\leftarrow\cC}}\left[|p_C(x) - 1/2^n| \geq \frac{\alpha}{2^n}\right] &\leq \frac{2^{2n}\left(\bbE_{C,x} [p_C(x)^2] - \bbE_{C,x}[p_C(x)]^2\right)}{\alpha^2}\nonumber\\
&\leq \frac{2^{2n}\left(\bbE_{C,x} [p_C(x)^2] - 1/2^{2n}\right)}{\alpha^2}
\end{align}

where the second step follows from the expectation over $x$ of $p_C(x)$ is $1/2^n$ for all $C$. Since $C$ is sampled from a unitary 2-design, we show a bound on the second moment. Fix any $x$. By the definition of $p_C(x)$,
\begin{align*}
    \bbE_C[p_C(x)^2] &= \bbE_C\left[\left(\bra{x}C^\dagger\ketbra{\psi} C \ket{x}\right)^2\right]\\
    &= \bbE_C\left[\bra{x}^{\otimes 2}C^{\otimes 2}\ketbra{\psi}^{\otimes 2} C^{\dagger\otimes 2} \ket{x}^{\otimes 2}\right]\\
    &= \bra{x}^{\otimes 2}\bbE_C\left[C^{\otimes 2}\ketbra{\psi}^{\otimes 2} C^{\dagger\otimes 2} \right]\ket{x}^{\otimes 2}
\end{align*}

By the definition of unitary 2-designs (Definition \ref{def:unitary-2-design}), $\bbE_C\left[C^{\otimes 2}\ketbra{\psi}^{\otimes 2} C^{\dagger\otimes 2} \right]$ equals the second moment operator with respect to the Haar measure applied to $\ketbra{\psi}^{\otimes 2}$, i.e. $\underset{U \leftarrow \mu_H}{\bbE}\left[U^{\otimes 2} \ketbra{\psi}^{\otimes 2} U^{\dag\otimes 2}\right]$ which is known to equal $\frac{\bbI + \bbF}{2^n(2^n+1)}$ where $\bbI$ is identity and $\bbF$ is the flip operator (see Corollary 13 in \cite{haar-intro}). Therefore

    \begin{align*}\bbE_C\left[C^{\otimes 2}\ketbra{\psi}^{\otimes 2} C^{\dagger\otimes 2} \right] = \frac{\bbI + \bbF}{2^n(2^n+1)}
\end{align*}
 We can now compute $\bbE_C[p_C(x)^2]$ as follows
\begin{align*}
     \bbE_C[p_C(x)^2] &=\bra{x}^{\otimes 2}\frac{\bbI + \bbF}{2^n(2^n+1)}\ket{x}^{\otimes 2}\\
     &= \frac{2}{2^n(2^n+1)}
\end{align*}
Since the above holds for any $x$, it also holds for uniformly sampled $x$, which implies
\begin{align*} 
\bbE_{x,C} [p_C(x)^2] &= \frac{2}{2^n(2^n+1)}
\end{align*}
Substituting in $\eqref{eq:chebyshev}$ gives
\begin{align*}
    \Prr_{\substack{x\leftarrow\bin^n\\C\leftarrow\cC}}\left[|p_C(x) - 1/2^n| \geq \frac{\alpha}{2^n}\right] \leq \frac{1}{\alpha^2}
\end{align*}
We therefore obtain a bound on the fraction of $x$ with $p_C(x)$ above a threshold. However, we need to bound the total probability mass on such $x$. To do so we leverage the fact that the above bound holds for every $\alpha$. We can therefore simultaneously bound the fraction of such $x$ for every a sequence of $\alpha$ values. This turns out to be sufficient to bound the probability mass. 

Define $f_\alpha(C) := \Prr_{x\leftarrow\bin^n}\left[p_C(x) \geq \frac{1+\alpha}{2^n}\right]$. Intuitively, this is the fraction of $x$ with $p_C(x)$ above the required threshold.
Therefore
\[
\bbE_C[f_\alpha(C)] \leq 1/\alpha^2
\]
By a Markov argument
\[
\Prr_C[f_\alpha(C) \geq 1/\alpha^{1.5}] \leq 3/\alpha^{0.5}
\]
For any $i \in \bbN$, setting $\alpha$ to $n^{2i}$
\[
\Prr_C[f_{n^{2i}}(C) \geq 1/n^{3i}] \leq 3/n^{i}
\]
Let $\bbG_i := \left\{C : f_{n^{2i}}(C) < 1/n^{3i}\right\}$. Then $\Prr_C[C\in \bbG_i] \geq 1-3/n^i$. Let $\bbG' := \bigcap_{i\geq k+3} \bbG_i$. Then 
\begin{align*}
    \Prr_C[C\in \bbG'] &\geq 1 - \left(3/n^{k+3} + 3/n^{k+4} + \ldots\right)\\
    &\geq 1-\frac{3}{n^{k+3}}(1 + 1/n + 1/n^2 + \ldots)\\
    &\geq 1 - 6/n^{k+3}
\end{align*}
Fix any $C \in \bbG'$.
\begin{align*}
\sum_{x:p_C(x) \geq \frac{1+n^{2k+6}}{2^n}} p_C(x) &= \sum_{i\geq k+3} \sum_{x:\frac{1+n^{2i+2}}{2^n} \geq p_C(x) \geq \frac{1+n^{2i}}{2^n}}p_C(x)\\
&\leq \sum_{i\geq k+3} \sum_{x:\frac{1+n^{2i+2}}{2^n}\geq p_C(x) \geq \frac{1+n^{2i}}{2^n}}\frac{1+n^{2i+2}}{2^n}\\
&\leq \sum_{i\geq k+3} \frac{2^n}{n^{3i}}\cdot \frac{1+n^{2i+2}}{2^n}\\
&\leq \sum_{i\geq k+3} \frac{2^n}{n^{3i}}\cdot \frac{1+n^{2i+2}}{2^n}\\
&\leq  \sum_{i\geq k+3} 2/n^{i-2}\\
&\leq 4/n^{k+1}
\end{align*}
where the third step uses the definitions of $\bbG'$, $\bbG_i$, and $f_{n^{2i}}$. For $C\in\bbG'$, all $k > 6$ and sufficiently large $n$
\[
    \sum_{x:p_C(x) \geq \frac{n^{3k}}{2^n}} p_C(x) \leq 1/n^k
\]
and $\Prr_C[C\in\bbG']\geq 1-1/n^k$, which concludes the proof.
\end{proof}

\section{Proof of Theorem \ref{thm:geometric-arg}}
\label{appendix:geometric-arg}
\newtheorem*{geometric}{Theorem~\ref{thm:geometric-arg}}
\begin{geometric}    
        Let $(x,y), (x^*, y^*) \in \bbR^2$ such that $\exists \gamma >0, \gamma'>0$ s.t.
        \begin{itemize}
            \item $x^2 + y^2 \geq \gamma^2$
            \item $(x-x^*)^2 + (y-y^*)^2 \leq (\gamma')^2$
            \item $\gamma' < \gamma$
        \end{itemize}
        Then $|e^{-i\cdot\arctanb(y,x)} - e^{-i\cdot\arctanb(y^*, x^*)}| \leq 2\gamma'/\gamma$
    \end{geometric}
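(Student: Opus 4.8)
The plan is to interpret both points geometrically on the plane: $\arctanb(y,x)$ is the polar angle of the point $(x,y)$, so $e^{-i\arctanb(y,x)}$ is the unit vector in the direction of $(x,y)$, i.e.\ $\frac{x - iy}{\sqrt{x^2+y^2}}$ (and similarly for the starred point). Thus the quantity to bound is the Euclidean distance between the two \emph{normalizations} of the vectors $(x,y)$ and $(x^*,y^*)$, viewed as complex numbers $z = x - iy$ and $z^* = x^* - iy^*$. We are given $|z| \geq \gamma$, $|z - z^*| \leq \gamma'$, and $\gamma' < \gamma$; we want $\left| \frac{z}{|z|} - \frac{z^*}{|z^*|} \right| \leq 2\gamma'/\gamma$.

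First I would record that $|z^*| \geq |z| - |z-z^*| \geq \gamma - \gamma' > 0$, so both normalizations are well-defined. Then I would use the standard estimate for the distance between normalized vectors: write
\[
\frac{z}{|z|} - \frac{z^*}{|z^*|} = \frac{z - z^*}{|z|} + z^*\left(\frac{1}{|z|} - \frac{1}{|z^*|}\right),
\]
take absolute values, and bound the first term by $\gamma'/|z| \leq \gamma'/\gamma$ and the second term by $|z^*| \cdot \frac{\big||z^*| - |z|\big|}{|z|\,|z^*|} = \frac{\big||z^*|-|z|\big|}{|z|} \leq \frac{|z - z^*|}{|z|} \leq \gamma'/\gamma$, where the middle inequality is the reverse triangle inequality. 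Adding the two contributions gives the desired bound $2\gamma'/\gamma$.

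There is essentially no serious obstacle here — the only thing to be slightly careful about is the edge case where $\arctanb$ is applied to a point at the origin, but the hypothesis $x^2 + y^2 \geq \gamma^2 > 0$ rules that out for $(x,y)$, and the computation above shows $(x^*,y^*)$ is also bounded away from the origin, so $\arctanb(y^*,x^*)$ is likewise well-defined. One should also double-check the sign convention in the paper's $\arctanb$ so that $e^{-i\arctanb(y,x)}$ really equals $(x-iy)/\sqrt{x^2+y^2}$ rather than its conjugate; either way the modulus of the difference is unchanged, so the bound holds regardless. The main "work," such as it is, is just assembling the triangle-inequality decomposition above and invoking the reverse triangle inequality for the magnitude difference.
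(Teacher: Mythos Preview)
Your proof is correct but takes a different route from the paper. The paper argues geometrically: it lets $\zeta$ be the angle between the rays from the origin through $(x,y)$ and $(x^*,y^*)$, observes that the quantity to bound equals $|2\sin(\zeta/2)|\leq|\zeta|$, and then bounds $\zeta$ by noting that $(x^*,y^*)$ lies in the disk of radius $\gamma'$ about $(x,y)$, so the angle is maximized when the ray to $(x^*,y^*)$ is tangent to that disk, giving $\sin\zeta\leq\gamma'/\gamma$ and hence $\zeta\leq 2\sin\zeta\leq 2\gamma'/\gamma$.

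Your argument is more algebraic and arguably cleaner: you identify the target quantity directly as the distance between the normalizations $z/|z|$ and $z^*/|z^*|$ and bound it by the standard add-and-subtract decomposition plus the reverse triangle inequality. This sidesteps both the tangent-line picture and the auxiliary inequality $\zeta\leq 2\sin\zeta$, and it makes the dependence on the hypotheses completely transparent. The paper's version, on the other hand, gives slightly more geometric intuition for \emph{why} the bound should hold and where the factor of $2$ comes from. Either way, the final constant $2\gamma'/\gamma$ is the same, and your remark about the sign convention in $\arctanb$ being irrelevant for the modulus is exactly right.
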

    \begin{proof}
        We prove this theorem by a geometric argument. Let $X = (x,y)$ and $Y=(x^*,y^*)$ be points on the Cartesian plane, and let $O = (0,0)$ be the origin. Let the (smaller) angle between rays $OX$ and $OY$ be $\zeta$. Then $|\arctanb(y,x) - \arctanb(y^*, x^*)|$ is equal to $\zeta$ (upto a multiple of $2\pi$ offset). By expanding the expression we aim to bound
        \begin{align*}
            |e^{-i\cdot\arctanb(y,x)} - e^{-i\cdot\arctanb(y^*, x^*)}| &= \left|2\sin\left(\frac{\arctanb(y,x) - \arctanb(y^*, x^*)}{2}\right)\right|\\
            &=\left|2\sin\left(\frac{\zeta}{2}\right)\right|\\
            &\leq |\zeta|
        \end{align*}Therefore it suffices to show that $\zeta$ is upper bounded by $2\gamma'/\gamma$.

        Let $C$ be a circle of radius $\gamma'$ centered at $X$. Since $|OX| = \sqrt{x^2 + y^2} \geq \gamma > \gamma'$, $O$ is strictly external to the circle. Additionally, since $|XY| = \sqrt{(x-x^*)^2 + (y-y^*)^2} \leq \gamma'$, $Y$ is on the circle or internal to it. Therefore, $OY$ is either a tangent or secant line. The angle between $OX$ and $OY$ is maximized if $OY$ is tangent to the circle, in which case $OY \bot XY$ and $\sin(\zeta)= |XY|/|OX| \leq \gamma'/\gamma$. Additionally note that $\zeta$ is acute, so $\zeta \leq 2\sin(\zeta) \leq 2\gamma'/\gamma$, which concludes the proof.
    \end{proof}

\section{Quantum Advantage from One-Way Puzzles in Microcrypt}
\label{app:owpowf}
 In this section we show that if one-way functions \textit{do not} exist, the existence of one-way puzzles implies sampling based quantum advantage\footnote{The first version of this manuscript had the informal argument sketched out in the Results section, but we have added a formal proof to the current (updated) version.}. Let $(\Gen, \Ver)$ be a one-way puzzle. Then we claim that the existence of an efficient classical algorithm that  samples from a distribution less than $1/3$ statistically-far from the distribution of $\Gen(1^n)$ on all large enough $n \in \mathbb{N}$ implies the existence of one-way functions. 

Suppose there exists an efficient classical algorithm $\cS$ such that for all large enough $n$
\[
    \SD(\Gen(1^n), \cS(1^n; r)) \leq 1/3
\]
where $r$ is a uniform string of $n$-bits.
\begin{claim}Define $f(x)$ for $x\in\bin^n$ as follows:
\begin{itemize}
    \item $(s',k') \leftarrow \cS(1^n;x)$
    \item Output $s'$ 
\end{itemize}
Then $f$ is a $1/n$-distributional one-way function.
\end{claim}
\begin{proof}
Suppose not, i.e. there exists a non-uniform  PPT adversary $\cA$ such that for infinitely many $n\in\bbN$ 
\begin{equation}
\label{eq:puzz->adv}
\{s', x\} \approx_{1/n} \{s', \cA(s')\}
\end{equation}
where $x$ is a uniform $n$-bit string and $(s',k') := \cS(1^n;x)$.

\noindent Let $\cS'$ be the algorithm that on input $x$ performs the following:
\begin{itemize}
    \item $s^*, k^* \leftarrow \cS(1^n; x)$
    \item Return $k^*$
\end{itemize}
Consider the distribution $\{s', \cS'(\cA(s'))\}$. Since $\{s', \cA(s')\}$ is at most $1/n$ far from $\{s', x\}$, $\{s', \cS'(\cA(s'))\}$ is also (at most) $1/n$ far from $\{s', \cS'(x)\}$. Since $(s', k')$ is obtained by running $\cS(1^n;x)$ and $\cS'(x)$ simply outputs $k'$,  $\{s', \cS'(x)\}$ is identical to $\{s', k'\}$. This gives
\[
\{s', k'\} \approx_{1/n} \{s', \cS'(\cA(s'))\}
\]
Since the distribution over $(s',k')$ is $1/3$ close to the distribution of $\Gen(1^n)$
\[
\SD\left(\{s, k\}, \{s',k' \}\right) \leq 1/3 \text{ and } \SD\left(\{s, \cS'(\cA'(s))\},\{s', \cS'(\cA(s'))\}\right) \leq 1/3
\]
where $(s,k)\leftarrow\Gen(1^n)$. Putting these together with the previous equation we get
\[
 \SD\left(\{s, k\}, \{s, \cS'(\cA'(s))\}\right) \leq 2/3 +1/n
\]
Now, by the correctness of the one-way puzzle
\[\Prr_{s,k\leftarrow\Gen(1^n)}[\Ver(s,k) =1] \geq 1-\negl(n)\]
which implies
\[\Prr_{s,k\leftarrow\Gen(1^n)}[\Ver(s,\cS'(\cA'(x))) =1] \geq 1/3 -1/n -\negl(n)\]
which contradicts the security of the one-way puzzle.
\end{proof}

\paragraph{A Note on Quantum Advantage from One-way Puzzles.}
Since the existence of distributional one-way functions implies the existence of one-way functions \cite{STOC:ImpLevLub89}, if one-way puzzles exist but one-way functions do not, there exist distributions that can be efficiently sampled by quantum machines but cannot be efficiently sampled by classical machines. 
It was shown in~\cite{MYadv2}
that the existence of one-way functions implies (interactive, inefficiently verifiable) proofs of quantumness. Combining the previous two facts yields a direct proof that the existence of one-way puzzles implies interactive quantum advantage.
\end{document}